\DeclareSymbolFontAlphabet{\mathbb}{AMSb}
\newtheorem{theorem}{Theorem}[section]
\newtheorem{definition}[theorem]{Definition}
\newtheorem{lemma}[theorem]{Lemma}
\newtheorem{corollary}[theorem]{Corollary}
\newtheorem{algorithmm}[theorem]{Algorithm}
\newcommand{\ceil}[1]{\left\lceil #1 \right\rceil}
\newcommand{\floor}[1]{\left\lfloor #1 \right\rfloor}
\newcommand{\union}{\cup}
\newcommand{\inter}{\cap}
\newcommand{\Union}{\bigcup}
\newcommand{\Inter}{\bigcap}
\newcommand{\card}[1]{\left|#1\right|}
\newcommand{\RT}{\mathcal{T}}
\newcommand{\Alg}{\mathcal{A}}
\newcommand{\N}{\mathbb{N}}
\newcommand{\R}{\mathbb{R}}
\newcommand{\Z}{\mathbb{Z}}
\newcommand{\Out}{\mathcal{N}}
\newcommand{\parts}{\mathcal{P}}
\newcommand{\Proba}{\mathbb{P}}
\newcommand{\expect}{\mathbb{E}}
\newcommand{\1}{\mathbbm{1}}
\renewcommand{\texorpdfstring}[1]{}
\newcommand{\F}{\mathcal F}
\author{Antoine El-Hayek\\
ISTA\\Austria
\and
Monika Henzinger\\
ISTA\\Austria
\and
Stefan Schmid\\
TU Berlin\\ Fraunhofer SIT \\Germany
}
\date{}
\begin{document}

\title{Time Complexity of Broadcast and Consensus for Randomized Oblivious Message Adversaries \thanks {This project has received funding from the European Research Council (ERC) under the European Union's Horizon 2020 research and innovation programme (MoDynStruct, No. 101019564)  \includegraphics[scale=0.4]{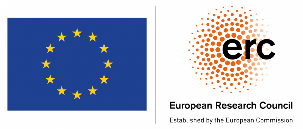} and the Austrian Science Fund (FWF) grant DOI 10.55776/Z422, grant DOI 10.55776/I5982, and grant DOI 10.55776/P33775 with additional funding from the netidee SCIENCE Stiftung, 2020–2024. This project has received further funding from the German Research Foundation (DFG), SPP 2378 (project ReNO), 2023-2027}}

\maketitle \sloppy 
\begin{abstract}
Broadcast and Consensus are most fundamental tasks in distributed computing. These tasks are particularly challenging in dynamic networks where communication across the network links may be unreliable, e.g., due to mobility or failures.  Over the last years, researchers have derived several impossibility results and high time complexity lower bounds  for these tasks.
Specifically for the setting where in each round  of communication the adversary is allowed to choose one rooted tree along which the information is disseminated,  there is a  lower as well as an upper bound that is linear in the number $n$  of nodes for Broadcast and for $n\ge 3$ the adversary can guarantee that Consensus never happens. This setting is called the \emph{oblivious message adversary for rooted trees}.
Also note that if the adversary is allowed to choose a graph that does not contain a rooted tree, then it can guarantee that Broadcast and Consensus will never happen.

However, such deterministic adversarial models may be overly pessimistic, 
 as many processes in real-world settings are stochastic in nature rather than worst-case.

This paper studies Broadcast on \emph{stochastic} dynamic networks and shows that the situation is very different to the deterministic case.
In particular, we show that if information dissemination occurs along random rooted trees and directed Erdős–Rényi graphs, Broadcast completes in $O(\log n)$ rounds of communication with high probability.
The fundamental insight in our analysis is that key variables are mutually independent.

We then study two adversarial models, (a) one with Byzantine nodes and (b) one where an adversary controls the edges. 
(a)~Our techniques without Byzantine nodes are general enough so that they can be extended to Byzantine nodes.
(b)~In the spirit of smoothed analysis, we introduce the notion of \emph{randomized oblivious message adversary}, where in each round, an adversary picks $k \le 2n/3$ edges to appear in the communication network, and then a graph (e.g. rooted tree or directed Erdős–Rényi graph) is chosen uniformly at random among the set of all such graphs that include these edges. We show that Broadcast completes in a finite number of rounds, which is, e.g., $O(k+\log n)$ rounds in rooted trees.

 We then extend these results to All-to-All Broadcast, and Consensus, and give lower bounds that show that most of our upper bounds are tight.

\end{abstract}
\section{Introduction}

Broadcast and Consensus are two of most fundamental operations in distributed computing which, in large-scale systems, typically have to be performed over a \emph{network}. These networks are likely to be dynamic and change over time due, e.g.,  to link failures, interference, or mobility.
Understanding how information disseminates in such dynamic networks is hence important for developing and analyzing efficient distributed systems. 

Over the last years, researchers have derived several important insights into information dissemination in dynamic networks. A natural and popular model assumes an  \emph{oblivious\footnote{Note that the term oblivious here refers to the property that nodes are oblivious to who their neighbors are. However, our adversary is actually adaptive.} message adversary} which controls the information flow between a set of $n$ nodes, by dropping an arbitrary set of messages sent by some nodes in each round~\cite{coulouma2015characterization}. Specifically, the adversary is defined by a set of directed communication graphs, one per round, whose edges determine which node can successfully send a message to which other node in a given round. 
Based on this set of graphs, the oblivious message adversary chooses a sequence of graphs over time, one per round with repetitions allowed, in such a way that the time complexity of the information dissemination task at hand is maximized. 
This model is appealing because it is conceptually simple and still provides a highly dynamic network model: The set of allowed graphs can be arbitrary, and the nodes that can communicate with one another can vary greatly from one round to the next. It is, thus, well-suited for settings where significant transient message loss occurs, such as in wireless networks.
As information dissemination is faster on dense networks, most literature studies oblivious message adversaries on sparse networks, in particular, on rooted trees~\cite{itcs23broadcast, schwarz2017linear, coulouma2015characterization,itcs23Consensus,galeana2023topological}. In fact, it is easy to see that rooted trees are a minimal necessary requirement for a successful Broadcast and Consensus: if an adversary may choose a graph that does not contain a rooted tree, then it may forever prevent the dissemination of a piece of information.

Unfortunately, information dissemination can be slow in trees: Broadcast can take time linear in the number of nodes under the oblivious message adversary~\cite{itcs23broadcast, schwarz2017linear}, even for constant-height trees (as we show in  Appendix~\ref{appendix:lowerbound}); 
and Consensus can even take super-polynomial time until termination, if it completes at all~\cite{coulouma2015characterization,itcs23Consensus}. 
Although this is bad news, one may argue that while the deterministic adversary model is useful in malicious environments, in real-word applications, the dynamics of communication networks is often more stochastic in nature. Accordingly, the worst-case model considered in existing literature may be overly conservative.

This motivates us, in this paper, to  study information dissemination, and in particular  Broadcast and Consensus tasks, in a scenario where the communication network is stochastic. Initially, we study a purely stochastic scenario where in each round, the communication network is chosen uniformly at random among all rooted trees. We then study several fundamental 
extensions of this model where the adversary has some limited control. In a first extension, we consider the case where some nodes (up to $\frac {2n} 3$) may be Byzantine, that is, they may deviate arbitrarily from the protocol (and stop forwarding messages, for example). 
In a second extension, in the spirit of smoothed analysis, we study a setting where an adversary has some limited control over the communication network; we call this adversary the {\em randomized oblivious message adversary}. 
More specifically, we study the setting where first a worst-case adversary chooses $k$ directed edges  in the dynamic $n$-node network for some fixed $k$ with $0 \le k < \frac {2n} 3 -1$\footnote{We can relax this condition to $k \le (1-\epsilon)n$ for a fixed parameter $\epsilon$, which results in a multiplicative factor of $\frac 1 \epsilon$ in the running time.}, and then a rooted tree is chosen uniformly at random among the set of all rooted trees that include these edges.

We show that Broadcast completes within time $O(\log n)$ with high probability. 
We then show that this result even holds with Byzantine nodes. Under our randomized oblivious message adversary, Broadcast completes in $O(k + \log n)$ time with high probability. 

It is useful to put our model into perspective with the  SI (Susceptible-Infectious) model in epidemics \cite{durrett2022susceptible}: while in the SI model interactions occur on a network that equals a clique, our model revolves around trees which are chosen by an adversary. This tree structure renders the analytical understanding of the information dissemination process harder, due to the lack of independence between the edges in the network in a particular round. 
A key insight from our paper is that we can  prove the independence of a key variable, namely the increase in the number of ``informed'' nodes, which is crucial for our analysis. 
Our proof further relies on stochastic dominance, which makes it robust to the specific adversarial objective, and applies to any adversary definition (e.g., whether it aims to maximize the minimum or the expected number of rounds until the process completes). %

We then extend our study to adversaries which are not limited to trees. In particular, we are interested in how the time complexity of Broadcast and Consensus depends on the density of the network. To this end, we consider \emph{directed Erdős–Rényi graphs}, a directed version of the classic and well-studied random graphs. This graph family is parameterized by the number of edges $m$ and hence allows us to shed light on the impact of the density. 
Specifically in this model, in each round the network is formed by sampling $m$ edges. We again study two extensions: in the first extension some nodes behave as Byzantine nodes, while in the second extension, up to $k \le m$ edges are chosen by an adversary, and then the remaining edges are sampled.
While results for this model can be found in some cases where $m$ is chosen so that the graph is an expander w.h.p. in each round by using the results from Augustine et al~\cite{DBLP:conf/soda/AugustinePRU12}, in the case where $m$ is small, our results are novel.

We show that all our results extend to multiple other problems, namely All-to-All Broadcast, Byzantine Consensus and Reliable Broadcast.

\subsection{Model} 
Let $n$ be the number of nodes, and let each node %
have a unique identifier from $[n]$. Time proceeds in a sequence of rounds $t = 1, 2, \dots, $ such that in each round $t$ the communication network is chosen according to one of the models defined below. In each round, every honest node sends a message to all of its out-neighbors before receiving one from its in-neighbor. There is no message size restriction. We will study the following models of communication:

\paragraph*{Uniformly Random Trees.} In the \emph{Uniformly Random Trees} model, let $\RT_n$ be the set of all directed rooted trees on $n$ nodes (where all edges are pointed away from the root).  In each round, the communication network is chosen uniformly at random among graphs in $\RT_n$, independently from other rounds. All nodes are honest.

\paragraph*{Uniformly Random Trees with Byzantine Nodes.} In the \emph{Uniformly Random Trees with Byzantine Nodes} model, in each round, the communication network is chosen uniformly at random among graphs in $\RT_n$, independently from other rounds. We have $n-f$ honest nodes, and $f$ nodes are Byzantine, that is, they might behave arbitrarily (and even coordinate to make the protocol fail). We assume access to cryptographic tools that allow nodes to sign and encrypt messages. We restrict $f \le \frac {2n} 3 -1$. 

\paragraph*{Uniformly Random Trees with Adversarial Edges.} In the \emph{Uniformly Random Trees with Adversarial Edges} model, in each round, the communication network is chosen as follows: A randomized oblivious message adversary chooses $k$ directed edges, then a graph is chosen uniformly at random among all graphs in $\RT_n$ that include those $k$ edges, and the choise is independent from other rounds. All nodes are honest. We restrict $k \le \frac {2n} 3 -1$.

\paragraph*{Directed Erdős–Rényi graphs.} In the \emph{directed Erdős–Rényi graphs} model, let $m \in [n^2]$. In each round, the communication network is chosen by uniformly sampling without replacement $m$ edges out of the possible $n^2$ edges of the graph, independently from other rounds. All nodes are honest. 

\paragraph*{Directed Erdős–Rényi graphs with Byzantine Nodes.} In the \emph{directed Erdős–Rényi graphs with Byzantine nodes} model, let $m \in [n^2]$. In each round, the communication network is chosen by uniformly sampling without replacement $m$ edges out of the possible $n^2$ edges of the graph, independently from other rounds. We have $n-k$ honest nodes, and $k$ nodes are Byzantine, that is, they might behave arbitrarily (and even coordinate to make the protocol fail). We assume access to cryptographic tools that allow nodes to sign and encrypt messages. We restrict $k < \frac {2n} 3$.

\paragraph*{Directed Erdős–Rényi graphs with Adversarial Edges.} In the \emph{directed Erdős–Rényi graphs with Adversarial Edges} model, let $0 \le k \le m \le n^2$.  In each round, the communication network is chosen as follows: A randomized oblivious message adversary chooses $k$ edges, $m-k$ edges are sampled without replacement out of the remaining $n^2 - k$ edges. All nodes are honest. We restrict $k < \frac {3} 4 n^2$.

In those models, we will study the following problems:

\paragraph*{Broadcast.}
For the \emph{Broadcast}\footnote{The Broadcast problem can also be seen as computing the \emph{dynamic eccentricity} of the source node. Other flavors of Broadcast have also been studied under the name \emph{dynamic radius}~\cite{fugger2020radius}. } problem, we start by giving a message to \emph{one} (honest) node. Each honest node that received the message will replicate it as many times as needed, and start forwarding it to its neighbors\footnote{This is known as "flooding" or "rumor passing"}. Then Broadcast \emph{completes} when the message has been forwarded to all other nodes. 

\paragraph*{All-to-All Broadcast.} In the \emph{All-to-All Broadcast} problem, we start by giving a distinct message to \emph{each} node. Each honest node that received a message will replicate it as many times as needed, and start forwarding it as well. Then All-to-All Broadcast \emph{completes} when each honest node receives a copy of every message. In each round, each honest node forwards all the messages it has received in previous rounds to all its out-neighbors. 

\paragraph*{Consensus.}

In the \emph{Consensus} problem, we start by giving a value $v_p \in \{0, 1\}$ to each node $p$, and Consensus completes when each honest node decided %
on a value in $\{0,1\}$. This should satisfy the following conditions:

\begin{itemize}
    \item \textbf{Agreement:} No two honest nodes decide differently.
    \item \textbf{Termination:} Every honest node eventually decides.
    \item \textbf{Validity:} The value the honest nodes agree on should be one of the input values $v_p$.
\end{itemize}

\subsection{Our Results}

We study Broadcast in the above mentioned models, then apply those results to All-to-All broadcast and Consensus. We prove the following theorems:

\begin{restatable}{theorem}{Broadcast}\label{thm:Broadcast}
For any $c\ge 1$ and $n \ge 5$, Broadcast on Uniformly Random Trees completes within $32\cdot c \cdot \ln n$ rounds with probability $p>1-\frac 1 {n^c}$.    
\end{restatable}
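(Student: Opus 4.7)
I would first establish a mutual-independence property for uniformly random rooted trees that reduces Broadcast exactly to pull rumor-spreading on $K_n$, and then carry out a two-phase concentration analysis.

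The key lemma I would prove is: in a uniformly random rooted tree $T$ on $[n]$, for any $S \subseteq [n]$ and distinct $v_1, \ldots, v_j \in [n] \setminus S$, the events $A_{v_i} := \{ v_i \ne \mathrm{root}(T),\ \mathrm{parent}_T(v_i) \in S \}$ are mutually independent Bernoulli$(|S|/n)$ random variables. This reduces to the counting statement that for every acyclic partial parent function $\phi : V' \to [n]$ (with $\phi(v) \ne v$), the number of rooted trees on $[n]$ extending $\phi$ equals $n^{\,n - |V'| - 1}$. One proof contracts each component of the $\phi$-forest to a super-node and counts directed rooted spanning trees of the resulting multigraph via a Cayley / directed Matrix-Tree calculation; dividing by $n^{n-1}$ gives probability $1/n^{|V'|}$ per acyclic extension. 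Because images in $S$ from a domain in $[n]\setminus S$ cannot create a cycle, summing over the $|S|^j$ target assignments yields $P\bigl(\bigcap_i A_{v_i}\bigr) = (|S|/n)^j = \prod_i P(A_{v_i})$. Taking $S = S_t$ (the informed set at round $t$) and $V' = [n]\setminus S_t$, this implies that conditional on $X_t = k$ the increment $Y_t := X_{t+1} - X_t$ is distributed as a Binomial$(n-k,\, k/n)$, independent of the past.

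From here I would use standard rumor-spreading analysis in two phases. In the \emph{growth phase} $X_t \le n/2$: $E[Y_t \mid X_t = k] \ge k/2$, so once $X_t = \Omega(c \log n)$ a Chernoff bound gives $X_{t+1} \ge (5/4) X_t$ with failure probability $\le n^{-(c+1)}$; reaching that threshold from $X_0 = 1$ uses the uniform bound $P(Y_t \ge 1 \mid X_t \ge 1) \ge 1 - 1/e$ together with a second Chernoff bound on $O(c\log n)$ independent rounds (the trees across rounds are independent by the model). In the \emph{shrinkage phase} $X_t > n/2$, writing $U_t = n - X_t$, the dual identity $U_{t+1} \sim \mathrm{Bin}(U_t,\, U_t/n)$ gives $E[U_{t+1}] = U_t^2/n$ and a Chernoff-driven factor-$3/4$ decay while $U_t = \Omega(c\log n)$; for smaller $U_t$ the exact formula $P(U_{t+1} = 0 \mid U_t = u) = (1 - u/n)^u$ combined with the iterated Markov bound $P(U_{t+1} \ne 0 \mid U_t = u) \le u^2/n$ drives $U_t$ to $0$ in $O(c)$ further rounds with failure $\le n^{-(c+1)}$. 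A union bound over all $32\,c \ln n$ rounds then gives total failure probability $\le n^{-c}$.

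The hardest part is handling the two boundary regimes (small $X_t$ at the start and small $U_t$ at the end) where Chernoff alone is too weak because the expectations involved are not large. Both are managed by exploiting the across-round independence of the trees together with, respectively, the uniform success probability $1 - 1/e$ for gaining at least one informed node in the growth phase, and the self-similar super-quadratic decay $E[U_{t+1}] = U_t^2/n$ in the shrinkage phase, which when iterated a few times squeezes the remaining failure probability below $n^{-c}$ within the $O(c \log n)$ buffer available on each end.
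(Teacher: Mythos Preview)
Your approach is correct and establishes the same key structural fact as the paper---that conditional on the informed set having size $k$, the increment $N_{t+1}-N_t$ is $\mathrm{Bin}(n-k,\,k/n)$---via an equivalent counting argument (the paper invokes Pitman's formula that the number of rooted trees extending a forest with $e$ edges is $n^{\,n-1-e}$ directly, rather than contracting components and applying Matrix--Tree, but the content is identical). Your dual observation $U_{t+1}\sim\mathrm{Bin}(U_t,\,U_t/n)$ is also correct.

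Where you diverge is in the concentration step. The paper does \emph{not} run a two-phase Chernoff analysis with separate boundary treatments. Instead it introduces a deterministic yardstick process $X_t\le N_t$ that advances to its next checkpoint value exactly when $N_{t+1}-N_t$ meets or exceeds its conditional mean, and invokes the Greenberg--Mohri bound that a binomial reaches its mean with probability at least $1/4$ (extended slightly to cover the regime $mp<1$). Since $2\ln n$ such advances already push $X_t$ past $n-1$, a single Hoeffding bound on $32c\ln n$ Bernoulli$(1/4)$ trials yields the stated constant directly. The two boundary regimes you flag as the hardest part---small $N_t$ at the start and small $U_t$ at the end---are precisely what the yardstick trick sidesteps: it handles every value of $N_t$ with one uniform inequality. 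Your route gives $O(c\ln n)$ rounds without difficulty, but matching the specific constant $32$ would take extra work; in particular the ``gain at least one per round with probability $1-1/e$'' sub-phase you use to reach the Chernoff threshold $\Theta(c\ln n)$ already consumes a budget comparable to $32c\ln n$ by itself before the geometric phases even begin.
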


We also show that these results are asymptotically tight. Indeed, we cannot hope for a similar probability for a number of rounds that is $o(\ln n)$:

\begin{restatable}{theorem}{lowerbounded}
If $n\ge 2$, then the probability that Broadcast (and All-to-All Broadcast) on Uniformly Random Trees fails to complete within $\log n$ rounds is at least $\frac 1 4 $.
\end{restatable}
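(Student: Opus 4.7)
The plan is to track $U_t := n - |I_t|$, the number of uninformed nodes after $t$ rounds (where $I_t$ is the set of informed nodes), and to show $\expect[U_t] \ge n/4$ for $t = \lfloor \log_2 n \rfloor$; a Markov-type bound then gives $\Proba[U_t \ge 1] \ge 1/4$. For the All-to-All case, I will use that completion of All-to-All Broadcast requires, in particular, that the distinguished source's message reach every node, so the single-source lower bound immediately implies the All-to-All one.

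The key ingredient is a symmetry statement: in a uniformly random $T \in \RT_n$, for any $v \in V$ and any $u \ne v$, $\Proba[\mathrm{parent}(v) = u] = 1/n$. This follows because the distribution on $\RT_n$ is invariant under any permutation of $V$ fixing $v$, so conditional on $v$ not being the root (an event of probability $1 - 1/n$ by symmetry over all $n$ vertices), the parent of $v$ is uniform on $V \setminus \{v\}$. Since the round-$(t{+}1)$ tree is independent of $I_t$, each uninformed $v \notin I_t$ satisfies $\Proba[v \notin I_{t+1} \mid I_t] = 1 - |I_t|/n = U_t/n$; summing over uninformed nodes gives
\begin{align*}
\expect[U_{t+1} \mid I_t] \;=\; \frac{U_t^{\,2}}{n}.
\end{align*}

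Taking expectations and applying Jensen's inequality to $x \mapsto x^2$ yields the unconditional recursion $\expect[U_{t+1}] \ge (\expect[U_t])^2 / n$, with base $\expect[U_0] = n - 1$. A routine induction then gives $\expect[U_t] \ge n (1 - 1/n)^{2^t}$. For $t = \lfloor \log_2 n \rfloor$ one has $2^t \le n$, so $\expect[U_t] \ge n(1 - 1/n)^n \ge n/4$; the last inequality uses that $(1-1/n)^n$ is increasing in $n$ and equals exactly $1/4$ at $n = 2$.

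To conclude, since $0 \le U_t \le n$ deterministically,
\begin{align*}
\frac{n}{4} \;\le\; \expect[U_t] \;\le\; n \cdot \Proba[U_t \ge 1],
\end{align*}
and the event $\{U_t \ge 1\}$ is exactly the event that Broadcast has not yet completed after $t$ rounds, closing the argument. The only non-trivial step is the symmetry identity $\Proba[\mathrm{parent}(v) = u] = 1/n$; once that is in place the remainder reduces to a one-line recursion plus a single application of Jensen, so no serious obstacle is expected. The tightness at $n=2$ is a useful sanity check: the bound $1/4$ is attained by the direct computation on $\RT_2$.
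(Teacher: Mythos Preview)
Your proof is correct and follows essentially the same approach as the paper: both track the expectation, use the marginal probability $|I_t|/n$ for an uninformed node to become informed, apply Jensen's inequality (equivalently, $\text{Var} \ge 0$) to pass from the conditional recursion to the unconditional one, obtain the closed form $n(1-1/n)^{2^t}$, and finish with Markov. Your phrasing in terms of $U_t = n - N_t$ makes the recursion $\expect[U_{t+1}] \ge (\expect[U_t])^2/n$ slightly cleaner than the paper's equivalent $\expect[N_{t+1}] \le 2\expect[N_t] - \expect[N_t]^2/n$, but the argument is the same.
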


We have similar results for all the combinations of model and problem, which we summarize in Table~\ref{tab:results}.

\begin{figure}
    \centering
    \scriptsize
\begin{center}
\resizebox{\linewidth}{!}{
\begin{tabular}{|c |c |c |c|}
	\hline
	             & Broadcast                                                                     & All-to-All Broadcast                                                        & Consensus                                                                   \\ \hline
	 Uniformly   &                                                                               &                                                                             &                                                                             \\
	   Random    & $O(c \cdot \log n), q \le n^{-c}$                                             & $O(c \cdot \log n), q \le n^{1-c}$                                          & $O(c \cdot \log n), q \le n^{-c}$                                           \\
	Trees (URT)  & $\Omega(\log n)$                                                              & $\Omega(\log n)$                                                            &                                                            \\ \hline
	  URT with   &                                                                               &                                                                             &                                                                             \\
	 Byzantine   & $O(c \cdot \log n), q \le n^{-c}$                                             & $O(c \cdot \log n), q \le n^{1-c}$                                          & $O(f\cdot c \cdot \log n), q \le n^{-c}$                                           \\
	   Nodes     & $\Omega(\log n)$                                                              & $\Omega(\log n)$                                                            &                                       \\ \hline
	  URT with   &                                                                               &                                                                             &                                                                             \\
	Adversarial  & $O(c \cdot (\log n+k)), q \le n^{-c}$                                         & $O(c \cdot (\log n+k)), q \le n^{1-c}$                                      & $O(c \cdot  (\log n+k)), q \le n^{-c}$                                      \\
	   Edges     & $\Omega(\log n+k)$                                                            & $\Omega(\log n + k)$                                                        &                                           \\ \hline
	  Directed   & $O\left(\ceil{\frac{c}{m/n}} \log n\right)$,$q \le n^{-c}\log n$              & $O\left(\ceil{\frac{c}{m/n}} \log n\right)$, $q \le n^{1-c}\log n$          & $O\left(\ceil{\frac{c}{m/n}} \log n\right)$, $q \le n^{-c}\log n$           \\
	Erdős–Rényi  & $O\left(\frac{c\log n}{\log (1+\frac m n)}\right)$ 
 if $\frac m n \ge \ln n$ & $O\left(\frac{c\log n}{\log (1+\frac m n)}\right)$ if $\frac m n \ge \ln n$ & $O\left(\frac{c\log n}{\log (1+\frac m n)}\right)$ if $\frac m n \ge \ln n$ \\
	             & with $q \le n^{-c}\log n$                                                     & with $q \le n^{1-c}\log n$                                                  & with $q \le n^{-c}\log n$                                                   \\
	graphs (DER) & $\Omega\left(\frac{\log n}{\log (1+m/n)}\right)$                              & $\Omega\left(\frac{\log n}{\log (1+m/n)}\right)$                            &                          \\ \hline
	  DER with   &                                                                               &                                                                             &                                                                             \\
	 Byzantine   & $O\left(\ceil{\frac{c}{m/n}} \log n\right)$, $q \le n^{-c}\log n$             & $O\left(\ceil{\frac{c}{m/n}} \log n\right)$, $q \le n^{1-c}\log n$          & $O\left(f\cdot\ceil{\frac{c}{m/n}} \log n\right)$, $q \le n^{-c}\log n$           \\
	   Nodes     & $\Omega\left(\frac{\log n}{\log (1+m/n)}\right)$                              & $\Omega\left(\frac{\log n}{\log (1+m/n)}\right)$                            &                     \\ \hline
	   DER with   &  $O\left(\ceil{\frac{c \cdot (n^2-k)}{(m-k)n}} \log n\right)$            &  $O\left(\ceil{\frac{c \cdot (n^2-k)}{(m-k)n}} \log n\right)$           &   $O\left(\ceil{\frac{c \cdot (n^2-k)}{(m-k)n}} \log n\right)$           \\
	Adversarial  &    with                                           $q \le n^{-c}\log n$        & with $    q \le n^{1-c}\log n$                                              & with $    q \le n^{-c}\log n$                                                 \\
	   Edges     & $\Omega\left(\frac{\log n}{\log (1+m/n)}\right)$                              & $\Omega\left(\frac{\log n}{\log (1+m/n)}\right)$                            &                              \\ \hline
\end{tabular}}
\end{center}
\normalsize
    \caption{Our main results, where $c >0$ is any constant and $q$ is the failure probability.}
    \label{tab:results}
\end{figure}

\paragraph*{Applications.}
Our results have some interesting applications.
In an idea similar to Ghaffari, Kuhn and Su's work~\cite{simulation},
All-to-All Broadcast allows us, e.g.,~to implement algorithms that run on a clique in a synchronous setting in our sparser graphs.
Indeed, if All-to-All Broadcast needs $R$ rounds to complete with high probability, then each round of communication of a clique can be simulated by $R$ rounds of Uniformly Random Trees with high probability.
Essentially, if an algorithm runs in $T$ rounds, with $T \le n^{c-1}$, %
in a clique network, we can implement it with high probability in $R\cdot T$ rounds in the Uniformly Random Trees network, which is essentially a logarithmic overhead. In particular, in the Uniformly Random Trees with Byzantine Nodes model, we have:

\begin{restatable}{theorem}{cliquereduc}\label{sec.5.thm:reduction}
    Let $\Alg$ be a distributed synchronous algorithm that runs on a static clique in $T$ rounds, where $T \le \alpha n^x$ for some constant $\alpha, x\in \R_+$, and has a probability of success $p$. Assume $\Alg$ is robust to $f$ Byzantine nodes, and $f \le \frac 2 3 n -1$.
    Then, assuming standard cryptographic tools\footnote{Specifically, our approach requires authenticated messages. Encryption may also be needed, only if the protocol $\Alg$ is vulnerable to eavesdropping. Both can be implemented using standard cryptographic tools.},  there exists a distributed algorithm $\Alg'$ that runs on Uniformly Random Trees in $T \cdot 144\cdot \log n \cdot c$ rounds, and has a probability of success $p' \ge p(1-\alpha n^{1+x-c})$, for any $c\ge 1+x$. Moreover, $\Alg'$ is robust to $f$ Byzantine nodes.
\end{restatable}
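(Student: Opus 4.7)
The plan is to use All-to-All Broadcast as the simulation primitive, invoking it once per round of $\Alg$. For each simulated round $t \in [T]$, every honest node $p$ would package the message it would transmit in round $t$ of $\Alg$ (one entry per recipient, concatenated into a single payload) into a tuple $m_p^t$, sign it with its private key, and feed it as input to a fresh All-to-All Broadcast on the Uniformly Random Trees network. Once that invocation terminates, every honest node has received and authenticated $m_q^t$ for every other honest $q$, and therefore has exactly the information it would have had at the end of round $t$ of $\Alg$ on a clique, so it can apply $\Alg$'s local transition function and move on to the next super-round.

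By the ``URT with Byzantine Nodes'' entry of Table~\ref{tab:results}, a single All-to-All Broadcast terminates within $144 \log n \cdot c$ rounds with failure probability at most $n^{1-c}$; this bound is essentially what comes out of Theorem~\ref{thm:Broadcast} applied to each of the $n$ sources in parallel, together with a union bound. Fix $c \ge 1 + x$ and union-bound over the $T \le \alpha n^x$ simulated rounds: the probability that every invocation succeeds is at least $1 - T \cdot n^{1-c} \ge 1 - \alpha n^{1+x-c}$, while the total running time is $T \cdot 144 \log n \cdot c$, as claimed.

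Conditional on all $T$ invocations succeeding, the joint transcript observed by the honest nodes in $\Alg'$ is distributed exactly as their joint transcript in a single execution of $\Alg$ on a clique. Authenticated signatures prevent Byzantine nodes from impersonating honest ones, so any signed payload a Byzantine node causes honest nodes to accept in a simulated round is a payload it could equally well have produced as a Byzantine move in the clique model (where Byzantine behaviour is unconstrained anyway). Hence the same $f \le \tfrac{2n}{3} - 1$ Byzantine-tolerance threshold transfers to $\Alg'$, and since the network randomness is independent of $\Alg$'s internal coins, the overall success probability is at least $p\,(1 - \alpha n^{1+x-c})$.

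The part I expect to be most delicate is formalising the simulation equivalence under Byzantine faults. A Byzantine node might equivocate during a single All-to-All Broadcast by signing different ``round-$t$'' payloads that propagate along different branches of the flooding process. The argument must show that whatever residual inconsistency the honest nodes see is still realisable as a legitimate Byzantine strategy in the clique model (where such a node could simply transmit different messages on different outgoing links), so that the correctness and Byzantine robustness guarantees of $\Alg$ really do carry over to $\Alg'$ round by round.
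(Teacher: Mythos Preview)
Your proposal is correct and follows essentially the same approach as the paper: simulate each clique round by one invocation of All-to-All Broadcast on Uniformly Random Trees with Byzantine nodes (Theorem~\ref{sec.5.thm:alltoall}), sign messages, and union-bound the per-round failure probability $n^{1-c}$ over the $T\le \alpha n^x$ rounds. Your discussion of the equivocation issue is in fact more careful than the paper's, which simply asserts that signing and encrypting suffice and defers correctness to the Byzantine-robustness of $\Alg$; note only that the Broadcast bound you should cite is the Byzantine version (Theorem~\ref{sec.5.thm:Broadcast}) rather than Theorem~\ref{thm:Broadcast}.
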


In particular, we can apply known results on reliable Broadcast and Byzantine Consensus to show the following results:

\begin{restatable}{corollary}{reliableBroadcast}
    For any $c\ge 1$, and $f \le \frac 2 3 n -1$, in the Uniformly Random Trees with $f$ Byzantine nodes, there exists an algorithm for Reliable Broadcast, that is robust to $f$ Byzantine nodes, that runs in $(f+1)\cdot 144 \cdot c \cdot \log n$ rounds, and succeeds with probability $p \ge 1-n^{2-c}$.
\end{restatable}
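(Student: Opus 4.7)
The plan is to derive this corollary directly from the clique-reduction result (Theorem~\ref{sec.5.thm:reduction}) by plugging in a well-known deterministic Reliable Broadcast algorithm designed for synchronous cliques with authenticated messages. The classical choice is the Dolev--Strong protocol, which on a synchronous clique with $n$ nodes, in the presence of $f < n$ Byzantine nodes and assuming digital signatures, solves Reliable Broadcast deterministically in exactly $f+1$ rounds (so it has success probability $p=1$ and round complexity $T=f+1$). Because the assumption in the corollary is $f \le \tfrac{2}{3}n-1$, the hypotheses of Theorem~\ref{sec.5.thm:reduction} on Byzantine resilience are satisfied.

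Next I would instantiate the parameters of Theorem~\ref{sec.5.thm:reduction} explicitly. Set the base clique algorithm $\Alg$ to Dolev--Strong, giving $T = f+1 \le \tfrac{2}{3}n$, so the condition $T \le \alpha n^x$ is satisfied with $\alpha = \tfrac{2}{3}$ and $x=1$. Feeding this into the theorem yields a simulated algorithm $\Alg'$ on Uniformly Random Trees with $f$ Byzantine nodes, running in
\[
T \cdot 144 \cdot c \cdot \log n \;=\; (f+1)\cdot 144 \cdot c \cdot \log n
\]
rounds, and whose success probability is at least
\[
p \cdot \bigl(1-\alpha n^{1+x-c}\bigr) \;=\; 1 \cdot \bigl(1-\tfrac{2}{3} n^{2-c}\bigr) \;\ge\; 1 - n^{2-c},
\]
provided $c \ge 1+x = 2$; for $c\in[1,2)$ the stated bound $1-n^{2-c}$ is vacuous (non-positive) and thus trivially holds, so the statement is valid for all $c\ge 1$.

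The only non-trivial point is to make sure the reduction preserves the Byzantine-robustness guarantees of Dolev--Strong. This is exactly what Theorem~\ref{sec.5.thm:reduction} asserts in its last sentence (``Moreover, $\Alg'$ is robust to $f$ Byzantine nodes'') under the assumption of authenticated/signed messages, which Dolev--Strong requires anyway. Hence the resulting $\Alg'$ is a Reliable Broadcast algorithm on Uniformly Random Trees with $f$ Byzantine nodes, completing in $(f+1)\cdot 144 \cdot c \cdot \log n$ rounds with probability at least $1 - n^{2-c}$, which is the statement of the corollary. The main (and essentially only) obstacle is selecting a clique-based Reliable Broadcast protocol that (i) is deterministic (so $p=1$), (ii) is tolerant to $f \le \tfrac{2}{3}n-1$ Byzantine faults under the cryptographic assumptions already present in the model, and (iii) has round complexity polynomial in $n$; Dolev--Strong satisfies all three.
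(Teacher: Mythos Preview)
The proposal is correct and follows essentially the same approach as the paper: both apply Theorem~\ref{sec.5.thm:reduction} to the Dolev--Strong protocol with $x=1$, obtaining the stated round count and success probability. Your choice of $\alpha=\tfrac{2}{3}$ versus the paper's $\alpha=1$ is immaterial, and your explicit handling of the range $c\in[1,2)$ (where the bound is vacuous) is a nice touch that the paper omits.
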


\begin{restatable}{corollary}{ConsensusByzantine}
    For any $c\ge 1$ and $f < \frac n 3 $, in the Uniformly Random Trees with $f$ Byzantine nodes, there exists an algorithm for Byzantine Consensus, that is robust to $f$ Byzantine nodes, that runs in $3(f+1)\cdot 144 \cdot c \cdot \log n$ rounds, and succeeds with probability $p \ge 1-2n^{2-c}$.
\end{restatable}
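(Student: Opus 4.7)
The plan is a direct instantiation of the clique-simulation result (Theorem~\ref{sec.5.thm:reduction}). First, I would take an off-the-shelf deterministic synchronous Byzantine Consensus protocol $\Alg$ that runs on the $n$-node clique in $T = 3(f+1)$ rounds, tolerates any $f < n/3$ Byzantine nodes, and has success probability $p = 1$; the canonical choice is an optimally-resilient phase-king style protocol, whose three rounds per phase over $f+1$ phases give exactly this $3(f+1)$-round bound at resilience $n/3$.

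Next I would verify that this $\Alg$ satisfies the hypotheses of Theorem~\ref{sec.5.thm:reduction}. Since $f < n/3$ we have $T = 3(f+1) \le n + 3 \le 2n$ for $n \ge 3$, so the bound $T \le \alpha n^x$ holds with $\alpha = 2$ and $x = 1$, and the Byzantine tolerance condition $f \le \tfrac{2n}{3} - 1$ is automatically satisfied. The statement is only meaningful for $c \ge 2$, since otherwise $1 - 2n^{2-c} \le 0$ makes the conclusion vacuous; for $c \ge 1 + x = 2$, the theorem yields an algorithm $\Alg'$ on Uniformly Random Trees with $f$ Byzantine nodes running in
\[
T \cdot 144 \cdot c \cdot \log n \;=\; 3(f+1) \cdot 144 \cdot c \cdot \log n
\]
rounds, robust to $f$ Byzantine nodes, and succeeding with probability
\[
p' \;\ge\; 1 \cdot \bigl(1 - 2\, n^{1 + 1 - c}\bigr) \;=\; 1 - 2\, n^{2-c}.
\]
Agreement, Validity, and Termination transfer from $\Alg$ to $\Alg'$ because, on this high-probability event, $\Alg'$ faithfully simulates every round of $\Alg$ via the All-to-All Broadcast primitive underlying the reduction.

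The main obstacle I anticipate is not conceptual but bibliographical: identifying a clean reference for a deterministic $3(f+1)$-round Byzantine Consensus protocol at the optimal resilience $f < n/3$. The textbook phase king tolerates only $n/4$, so one needs either an optimally-resilient phase-king variant from the literature or a short composition of a Bracha-style primitive with a constant per-phase overhead. Once such a protocol is fixed, everything else is a mechanical plug-in, with the $2n^{2-c}$ failure term coming from the union bound over the $T$ simulated clique rounds inside Theorem~\ref{sec.5.thm:reduction}.
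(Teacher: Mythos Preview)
Your proposal is correct and matches the paper's proof essentially verbatim: the paper also applies Theorem~\ref{sec.5.thm:reduction} with $\alpha=2$, $x=1$ to a $3(f+1)$-round clique protocol, citing Berman, Garay and Perry's King algorithm for the required optimally-resilient ($f<n/3$) Byzantine Consensus, which resolves the bibliographical concern you raised.
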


Throughout the paper, the filtration of the process is denoted as $\{\F_t\}_{t \in \N}$, that is, $\F_t$ is the amount of information available after timestep $t$.

\paragraph*{Organization}
The paper is organized as follows. First, we give a new result on counting rooted trees in Section~\ref{sec:crt}, which will be useful in our analysis. Afterwards, we explore the Uniformly Random Trees model in Section~\ref{sec:broadcast}. Then, in Section~\ref{sec:Byzantine}, we expand our analysis to the Uniformly Random Trees with Byzantine Nodes model. In Section~\ref{sec:adversary}, we explore the case where the adversary controls $k$ edges in each round. We study the directed Erdős–Rényi graphs model and its adversarial variants in Section~\ref{sec:erdos}. We review related work in Section~\ref{sec:relwork}. Appendix~\ref{appendix:lowerbound} gives a lower bound for deterministic Broadcast in constant-height trees. Appendix~\ref{sec:tree} gives the full details of Section~\ref{sec:crt}. In Appendix~\ref{appendix:prob}, we give some probability theory results that are useful throughout the paper. Finally, in Appendix~\ref{sec:apptrees} and~\ref{appendix:adversary}, we include omitted proofs from Sections~\ref{sec:broadcast} and~\ref{sec:adversary} respectively, while Appendix~\ref{app:Byzantine} and~\ref{app:erdos} give the full details of Sections~\ref{sec:Byzantine} and~\ref{sec:erdos}.

\section{Counting Rooted Trees}\label{sec:crt}

Given a graph consisting of $n$ vertices together with a  directed rooted forest $F$ of $e$ edges on them, Pitman~\cite{pitman1999coalescent} showed in 1999 that there are $n^{n-1-e}$ many directed rooted trees over these vertices that contain $F$. While useful, this result is not sufficient for our purposes as we need to count the number of trees with a given node $v$ as root.

Thus, we show the following extended result:

\begin{restatable}{theorem}{treecountroot}\label{thm:treecountroot}
    Let us be given a directed rooted forest $F$ on $n$ vertices, let $v \in [n]$ be the root of a component in $F$, and $f$ be the number of vertices of that component (note that we can have $f=1$ if $v$ is an isolated vertex). Then the number of directed rooted trees $T$ on $n$ vertices, such that $F$ is contained in $T$, and such that $v$ is the root of $T$, is $f n^{n-2-\card E}$. 
\end{restatable}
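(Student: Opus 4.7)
The plan is to reduce the count to a weighted sum over rooted trees on the set of \emph{component roots} of $F$, then evaluate that sum via a classical weighted variant of Cayley's formula. Write the components of $F$ as $C_1, \ldots, C_c$ with roots $r_1, \ldots, r_c$ and sizes $f_1, \ldots, f_c$, where $v = r_k$, $f = f_k$, and $c = n - \card E$. A preliminary observation is that, since every non-root vertex of $F$ already has in-degree $1$ in $F$, the root of any rooted tree $T \supseteq F$ must be one of the $r_i$'s; so restricting to trees rooted at $v$ is a meaningful refinement of Pitman's count.

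\textbf{Bijection.} Given such a tree $T$ rooted at $v$, each $r_j$ with $j \neq k$ has a unique in-neighbour $\phi(r_j)$ in $T$. A short argument shows $\phi(r_j)$ must lie in some component $C_{p_j}$ with $p_j \neq j$: otherwise $\phi(r_j)$ is already reachable from $r_j$ through $F$, and the edge $(\phi(r_j), r_j)$ would complete a directed cycle in $T$. The rule $j \mapsto p_j$ defines a ``super-tree'' $T^*$ on $\{r_1, \ldots, r_c\}$ with edges $r_{p_j} \to r_j$ for $j \neq k$, and acyclicity of $T$ forces $T^*$ to be a rooted tree on $[c]$ with root $v$. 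Conversely, any pair $(T^*, \phi)$ of the above form reconstructs a unique $T = F \cup \{(\phi(r_j), r_j) : j \neq k\}$, so the correspondence $T \leftrightarrow (T^*, \phi)$ is a bijection.

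\textbf{Counting via weighted Cayley.} For each fixed $T^*$, the attachment $\phi(r_j)$ can be any of the $f_{p_j}$ vertices of $C_{p_j}$, giving $\prod_{j \neq k} f_{p_j}$ choices. Hence the quantity we want is
\[ N_v(F) \;=\; \sum_{T^* \text{ rooted at } v \text{ on } \{r_1,\ldots,r_c\}} \; \prod_{j \neq k} f_{p_j}. \]
I would then invoke the identity
\[ \sum_{T^* \text{ rooted at } k} \prod_{j \neq k} w_{p_j} \;=\; w_k \cdot (w_1 + \cdots + w_c)^{c-2} \qquad (c \ge 2), \]
a standard weighted variant of Cayley's formula, provable by a Prüfer-style bijection or a short induction on $c$. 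Plugging in $w_i = f_i$ and using $\sum_i f_i = n$, the right-hand side becomes $f \cdot n^{c-2} = f \cdot n^{\,n - \card E - 2}$, matching the claim. The degenerate case $c = 1$ (where $F$ is already a tree on $[n]$, forcing $f = n$ and $N_v(F) = 1$) is handled separately and checks out.

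\textbf{Main obstacle.} The delicate step is the bijection rather than the counting: one must carefully verify that $j \mapsto p_j$ is well-defined (i.e.\ $\phi(r_j) \notin C_j$) and yields an acyclic structure on $\{r_1, \ldots, r_c\}$, so that $T^*$ really is a rooted tree on $[c]$. Once this is in place, the application of the weighted Cayley identity is routine. A secondary (but easy) point is justifying the cited identity itself via a Prüfer-sequence bijection, or by direct induction deleting a leaf of $T^*$.
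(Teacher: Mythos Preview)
Your proof is correct and takes a genuinely different route from the paper's. The paper proceeds by first counting all trees rooted at $v$ whose \emph{undirected} edge set contains $u(F)$ (using the Lu--Mohr--Sz\'ekely formula, giving $(\prod_i f_i)\,n^{n-2-\card E}$), then defining a group action of $R_{f_2}\times\cdots\times R_{f_m}$ on this set and showing that each orbit has size $\prod_{i\ge 2} f_i$ and contains exactly one tree whose edge orientations agree with $F$. The desired count is then the number of orbits, $f_1\,n^{n-2-\card E}$. Your approach instead contracts each component of $F$ to a single super-vertex, sets up a direct bijection $T\leftrightarrow(T^*,\phi)$, and evaluates the resulting sum via the weighted Cayley identity $\sum_{T^*\text{ rooted at }k}\prod_{j\neq k} w_{p_j}=w_k(\sum_i w_i)^{c-2}$ (which follows in one line from Pr\"ufer coding, since $\prod_i w_i^{\mathrm{children}(i)}=w_k\prod_i w_i^{\deg(i)-1}$). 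Your argument is more elementary and self-contained, needing neither the Lu--Mohr--Sz\'ekely lemma nor any group-theoretic machinery; the paper's orbit method is more inventive but less direct. Both are fully rigorous once the bijection is checked, and your sketch of that check (that $\phi(r_j)\notin C_j$ and that a cycle in $T^*$ would lift to a cycle in $T$) is accurate.
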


To show our result, we develop techniques which differ significantly from Pitman's proof. Indeed, Pitman relies on the symmetry of the vertices in the rooted tree. However, for our result, the symmetry is broken as one vertex is different from the others with the new requirement that it is the root. We hence make use of another type of symmetry in the trees in our analysis that is based on group actions.

We first ignore the orientations of the edges in $F$ and find the set $A_F$ of all undirected trees that contain $F$. We can compute the cardinality of that set with a result by Lu, Mohr and Székely~\cite{lu2012quest}. We then root each of those trees at $v$. This will give a direction to every edge that might or might not agree with its direction in $F$.  We now want to partition $A_F$ into subsets such that all subsets have the same size and only one tree from each subset has edges that agree with the direction of $F$. The number we are looking for is then the number of subsets, which is the ratio between the cardinality of $A_F$ and the size of the subsets.

To create the subsets, we introduce a specific group tailored to $F$, and an action of that group on $A_F$. It is known that the set of all orbits of the action partition $A_F$, and we show that exactly one element in each orbit has edges in the same direction as $F$. To see unicity, we take an element $T$ of $A_F$ that has edges in the same direction as $F$, and take an element $T' \neq T$ in its orbit, that is there exists a nontrivial group element $g$ such that $T'$ is obtained from $T$ by applying the action of $g$ to $T$. We show that this action must change the direction of at least one edge of $F$, and thus $T'$ does not have edges in the same direction as $F$. For existence, we show that for every $T \in A_F$, we can find a group element $g$ such that, if applied to $T$, yields a tree that  has edges in the same direction as $F$.
We then show how to compute the size of each orbit. This allows us to deduce the number of orbits, which equals the number of trees that we want to count.

The full details of the proof can be found in Appendix~\ref{sec:tree}.

\section{The Uniformly Random Trees Model}
\label{sec:broadcast}
We now give a precise description of how information flows in Uniformly Random Trees over time. In this section, we will use Theorem~\ref{thm:treecount}, which states that the number of rooted trees on $n$ nodes containing a given directed rooted forest $F$ with $e$ edges is $n^{n-1-e}$. Since all nodes are equivalent, we will at each step, divide the nodes into two sets: the set $I$ of nodes that have received the message, called \emph{informed} nodes, and the set $S$ of remaining nodes, called \emph{uninformed} nodes.
We study how $I$ grows over time.

For the rest of the section, $I_t$ and $S_t$ will, respectively, be the set of nodes that are informed and uninformed after round $t$. 
We set $I_0=\{v_0\}$ and $S_0=[n]-\{v_0\}$, where $v_0$ is the node that initially holds the message, $N_t=\card{I_t}$ to be the number of informed nodes after $t$ rounds, and $T_t$ to be the tree chosen at random in round $t$. 
For a tree $T$, for each node $p$, $P_T(p)$ is the (unique) parent of node $p$ in $T$, unless $p$ is the root of $T$, in which case $P_T(p)=p$. 
Simplifying the notation, we also use $P_t(p)$ to denote $P_{T_t}(p)$. All skipped proofs can be found in Appendix~\ref{sec:apptrees}.

The central claim of the proof is the following lemma, which characterizes how many new nodes get informed in each round, depending on how many were informed after the previous round. This lemma shows that uninformed nodes get informed independently from each other.
\begin{lemma} \label{lem:binom}
    For any $t > 0$,
    $N_{t+1}-N_t$ follows a binomial distribution with parameters $\left( n-N_t, \frac {N_t} n\right)$.
\end{lemma}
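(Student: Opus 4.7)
The plan is to recast $N_{t+1} - N_t$ as a sum of parent-indicator variables over uninformed nodes, and to derive both their marginals and their joint independence from a single counting step based on Pitman's rooted-tree formula.

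Concretely, I condition on $\F_t$, so that $I_t$, $S_t$, and $N_t$ are deterministic while $T_{t+1}$ is an independent uniform sample from $\RT_n$. For each $p \in S_t$, set $X_p = \1[P_{t+1}(p) \in I_t]$. A node $p \in S_t$ becomes informed in round $t+1$ precisely when it receives the message along an in-edge from an already-informed node; the convention $P_{t+1}(\text{root}) = \text{root}$ correctly forces $X_p = 0$ when the random root happens to land on an uninformed $p$ (such a root has no in-edge and learns nothing). Hence $N_{t+1} - N_t = \sum_{p \in S_t} X_p$, and the lemma reduces to showing that the $X_p$'s are i.i.d.\ Bernoulli$(N_t/n)$.

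The heart of the argument is a single counting step. Fix any subset $A \subseteq S_t$ and any map $q : A \to I_t$. The directed edges $\{(q(p), p) : p \in A\}$ form a rooted forest with exactly $|A|$ edges: the heads $p$ are distinct elements of $S_t$, the tails $q(p)$ lie in the disjoint set $I_t$, so no head is repeated and no cycle can arise. Pitman's formula (stated at the start of Section~\ref{sec:crt}, and the base case of Theorem~\ref{thm:treecountroot}) then yields $n^{n-1-|A|}$ rooted trees containing this forest, out of $n^{n-1}$ total, giving
\[
\Proba\!\left(\bigcap_{p \in A} \{P_{t+1}(p) = q(p)\}\right) = n^{-|A|}.
\]
Summing this identity over the $N_t^{|A|}$ choices of $q$, using that the events indexed by different $q$ are disjoint, yields $\Proba(\bigcap_{p \in A} \{X_p = 1\}) = (N_t/n)^{|A|}$. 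Taking $A = \{p\}$ identifies the marginal as $N_t/n$, and the general formula shows that these ``all-ones'' joint probabilities factorize.

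To finish, I invoke the standard fact that for Bernoulli variables, factorization of $\Proba(\bigcap_{p \in A} \{X_p = 1\})$ across every finite subset $A$ is equivalent to full mutual independence (a short inclusion-exclusion on the complementary events). Hence $\{X_p\}_{p \in S_t}$ are i.i.d.\ Bernoulli$(N_t/n)$ and $N_{t+1} - N_t$ is Binomial$(n-N_t, N_t/n)$, as claimed. The main subtlety lies not in the counting but in keeping the reduction clean: it is essential that the counted edges have all tails in $I_t$ and all heads in $S_t$ so that the edge set is automatically a rooted forest (and Pitman applies directly), and the root-as-self convention is what spares us from treating separately the case where the random root of $T_{t+1}$ is an uninformed node.
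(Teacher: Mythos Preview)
Your proof is correct and follows essentially the same approach as the paper: both condition on $\F_t$, express $N_{t+1}-N_t$ as a sum of parent-indicator variables over $S_t$, use Pitman's formula (Theorem~\ref{thm:treecount}) to count trees containing the star forest $\{(q(p),p):p\in A\}$, deduce $\Proba(\bigcap_{p\in A}\{P_{t+1}(p)\in I_t\})=(N_t/n)^{|A|}$, and conclude mutual independence and hence the binomial law. Your explicit remark on why the edge set is automatically a rooted forest, and your handling of the root-as-self convention, make the reduction slightly cleaner than the paper's write-up; one minor quibble is that Pitman's formula is Theorem~\ref{thm:treecount} rather than a ``base case'' of Theorem~\ref{thm:treecountroot}.
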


The proof of this lemma shows that every uninformed node has probability $\frac {N_t} n$ of having an informed parent in round $t+1$, independently of whether the other uninformed nodes have an uninformed parent.

\begin{proof}
    Let $I_t=\{i_1, \dots , i_{N_t}\}$ and $S_t=\{s_1, \dots , s_{n-N_t}\}$. We then have, for any integer $x$:

$$
        \Proba(N_{t+1}-N_t=x| \F_t) =\sum_{J\in {A(S_t, x)}}\Proba\left(\Inter_{y \in J} (P_{t+1}(y) \in I_t) \Inter_{y \in S_t\setminus J}(P_{t+1}(y) \notin I_t)\middle| \F_t \right),
$$
where
${A(S, x)}$ with $S$ being a set and $x$ an integer denotes the set of subsets of $S$ of size $x$.
Our goal is to show that the events $P_t(y) \in I_t$ for different 
$y \in S_t$ are mutually independent.
Let us look at the event $\Inter_{y \in J} (P_t(y) \in I_t)$ for any $J\subseteq S_t$ (note that we do not require that $J$ has a specific size here). 
We can then write, indexing $a$ on $J$:

\begin{multline*}
\Proba\left(\Inter_{y \in J} (P_{t+1}(y) \in I_t)\middle| \F_t\right)=\sum_{a\in[N_t]^{\card J}} \Proba\left(\Inter_{y \in J} (P_{t+1}(y) =i_{a_y})\middle| \F_t\right)\\= \sum_{a\in[N_t]^{\card J}} \frac{\card{\{ T\in \RT_n:P_T(y) =i_{a_y}, \forall y \in J\}}}{\card{\RT_n}}
\end{multline*}

Now consider the forest that is composed of stars whose centers are the $i_{a_y}$ and whose leaves are the nodes $y \in J$. More specifically, consider the forest that contains the edges
 $(i_{a_y}, y), \forall y \in J$. Note that $\card{\{ T\in \RT_n:P_T(y) =i_{a_y}, \forall y \in J\}}$ equals the number of rooted trees that are compatible with this forest.
By Theorem~\ref{thm:treecount}, we have that $\card{\{ T\in \RT_n:P_T(y) =i_{a_y}, \forall y \in J\}}=n^{n-1-\card{J}}$.
This allows us to compute the above probability as follows:

$$
\Proba\left(\Inter_{y \in J} (P_{t+1}(y) \in I_t)\middle| \F_t\right)= \sum_{a\in[N_t]^{\card J}}\frac{n^{n-1-\card{J}}}{n^{n-1}}=\left(\frac {N_t} n\right)^{\card J}
$$

This proves that the events $P_{t+1}(y) \in I_t$ for any two $y \in S_t$ are mutually independent (Definition~\ref{def:mutually}),
each having probability $\frac {N_t} n$. Going back to the first equation of this proof, we can now compute with Lemma~\ref{lem:mutually} and using $\Delta_t := N_{t+1}-N_t$ as a shorthand:

\begin{align*}
        \Proba(\Delta_t=x | \F_t) &=\sum_{J\in {A(S_t, x)}}\prod_{y \in J}\Proba\left(P_{t+1}(y) \in I_t\middle| \F_t\right)\prod_{y \in S_t\setminus J}\Proba \left(P_{t+1}(y) \notin I_t) \middle| \F_t\right) \\&={{n-N_t} \choose x} \left(\frac {N_t} n\right)^{x}\left(1-\frac {N_t} n\right)^{n-N_t-x}
\end{align*}
\end{proof}

Our next goal is to show that $N_t = n$ with high probability for all $t \ge 32\cdot c\cdot \ln n$.
To do so we introduce a random variable $X_t$ that we use to lower bound $N_t$.
\begin{definition}
    Let $X_t$ be the random variable defined recursively: $X_0=1$, and 
    \small
    \begin{align*}
        X_{t+1}&=X_t+(n-X_t) \cdot \frac {X_t} n &\text{if} \quad N_{t+1}-N_t \geq (n-N_t) \cdot \frac {N_t} n\\
        X_{t+1}&=X_t & \text{if} \quad N_{t+1}-N_t < (n-N_t) \cdot \frac {N_t} n
    \end{align*}
    \normalsize
\end{definition}

Intuitively, $X_t$ is a lower bound for $N_t$ that increases if and only if $N_{t+1}-N_t$ exceeds its expectation. Therefore, we always have $n \ge N_t \ge X_t \ge 1$ (full proof in Appendix~\ref{sec:apptrees}), and we can claim that $N_t = n$ as soon as $X_t > n-1$. Moreover, we can compute the values of $X_t$ after each increase:

\begin{restatable}{lemma}{valuesofX}
\label{lem:valueofX}
    Let $u_t \in \N$ be the $t$-th round such that $X_{u_t+1}>X_{u_t}$ and let $u_0 = 0$. Then $X_{u_t}=n-n\left(\frac {n-1} n \right) ^{2^t}$. Moreover, we have that $X_{u_{t+1}}=X_{u_t}+(n-X_{u_t})\cdot \frac {X_{u_t}} n$.
    \end{restatable}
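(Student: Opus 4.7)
The plan is to establish the two assertions in reverse order. The recursion
\[
X_{u_{t+1}} = X_{u_t} + (n - X_{u_t})\cdot \frac{X_{u_t}}{n}
\]
is essentially a matter of unpacking the piecewise definition of $X$. By construction, $X_r$ only changes at the rounds $u_1, u_2, \ldots$; consequently, on any interval strictly between two consecutive such rounds, $X$ is constant. In particular, the value of $X$ immediately before the $(t+1)$-th jump equals $X_{u_t}$, and applying the \emph{increasing} branch of the recurrence at that jump produces exactly the claimed recursion.

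For the closed form, I would then proceed by induction on $t$. The base case $t = 0$ is immediate: $X_{u_0} = X_0 = 1 = n - n\cdot\frac{n-1}{n}$, which matches the formula with $2^0 = 1$. For the inductive step, set $\alpha_t := \left(\frac{n-1}{n}\right)^{2^t}$, so that the hypothesis reads $X_{u_t} = n(1-\alpha_t)$ and $n - X_{u_t} = n\alpha_t$. Substituting into the recursion just proved yields
\[
X_{u_{t+1}} = n(1-\alpha_t) + n\alpha_t\cdot(1-\alpha_t) = n(1-\alpha_t)(1+\alpha_t) = n(1-\alpha_t^2) = n - n\left(\frac{n-1}{n}\right)^{2^{t+1}},
\]
which is the formula at level $t+1$. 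The essential algebraic ingredient is the identity $(1-\alpha)(1+\alpha) = 1 - \alpha^2$, and this is precisely what doubles the exponent and drives the telescoping $2^t \to 2^{t+1}$.

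The only subtlety — more bookkeeping than a genuine obstacle — is to treat the indexing of the $u_t$ carefully, so as to justify that $X$ really is constant on the interval between two successive jumps and hence that the value used to compute the $(t+1)$-th jump is exactly $X_{u_t}$. Once this is granted, both parts of the lemma reduce to short calculations, and I do not foresee further difficulty.
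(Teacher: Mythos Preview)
Your proposal is correct and follows essentially the same route as the paper: establish the recursion $X_{u_{t+1}} = X_{u_t} + (n-X_{u_t})\cdot X_{u_t}/n$ from the constancy of $X$ between consecutive increase rounds, then derive the closed form by induction on $t$. Your substitution $\alpha_t = ((n-1)/n)^{2^t}$ and the identity $(1-\alpha_t)(1+\alpha_t)=1-\alpha_t^2$ package the exponent-doubling step a bit more transparently than the paper's direct expansion, but the argument is the same.
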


This allows us to estimate when $N_t$ reaches $n$, as when $s \ge 2 \ln n$, $X_{u_s} > n-1$.

\begin{restatable}{lemma}{Nequalsn}\label{lem:Nequalsn}
    If $t \ge u_{2\ln n}$, then $N_{t}=n$.
\end{restatable}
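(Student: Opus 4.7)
The plan is to reduce the claim to a single numeric inequality involving $X_{u_{2\ln n}}$ and then verify it using the closed form from Lemma~\ref{lem:valueofX}.

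First I would exploit two structural facts already established in the text: $N_t \ge X_t$ at every time $t$, and $X_t$ is non-decreasing in $t$ (the only non-trivial branch of its recursive definition adds a nonnegative quantity whenever $X_t \le n$). Combined with the fact that $N_t$ is integer-valued and bounded above by $n$, these imply that whenever $X_t > n-1$ we must have $N_t = n$. Because $N_t$ is also non-decreasing (informed nodes stay informed), it is enough to prove $X_{u_{2\ln n}} > n-1$ in order to conclude $N_t = n$ for every $t \ge u_{2\ln n}$.

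Next I would plug in Lemma~\ref{lem:valueofX} to obtain
\[
X_{u_{2\ln n}} \;=\; n - n\left(1 - \tfrac{1}{n}\right)^{2^{2\ln n}},
\]
so the target becomes $n\bigl(1-\tfrac{1}{n}\bigr)^{2^{2\ln n}} < 1$. Using the strict inequality $(1-1/n) < e^{-1/n}$, it suffices to check $n\cdot e^{-2^{2\ln n}/n}\le 1$, i.e.\ $2^{2\ln n}\ge n\ln n$. Rewriting $2^{2\ln n} = e^{(2\ln 2)\ln n} = n^{2\ln 2}$ and using $2\ln 2\approx 1.386 > 1$, this reduces to $n^{2\ln 2 - 1}\ge \ln n$. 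The exponent $2\ln 2-1\approx 0.386$ is positive, so the left-hand side is a fixed positive power of $n$, which dominates $\ln n$ for all $n\ge 5$ (the borderline case $n=5$ can be checked directly: $2^{2\ln 5}\approx 9.33\ge 5\ln 5\approx 8.05$).

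I do not foresee a genuinely hard step; once the closed form of Lemma~\ref{lem:valueofX} is available, the whole argument is a one-line estimate. The only mild technicality is that $2\ln n$ is not generally an integer while the indices $u_s$ are defined for integer $s$; I would simply read the statement as $t\ge u_{\lceil 2\ln n\rceil}$, which only strengthens the bound and leaves the calculation unchanged.
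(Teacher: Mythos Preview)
Your proposal is correct and follows essentially the same route as the paper: reduce to $X_{u_{2\ln n}} > n-1$, invoke the closed form from Lemma~\ref{lem:valueofX}, and bound $(1-1/n)^{2^{2\ln n}}$ via $(1-1/n)<e^{-1/n}$ (the paper phrases this equivalently as $\log n - \log(n-1) > 1/n$). Your write-up is in fact a bit more careful about the base of the logarithm and the integrality of the index than the paper's own proof.
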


All we need to show now is that $X_{t+1} - X_t$ exceeds its expectation often enough. For that, we use a result due to Greenberg and Mohri~\cite{greenberg2014tight}, that will give us an estimate of the probability of $X_t$ strictly increasing in a given round.

\begin{restatable}[Theorem 1 of \cite{greenberg2014tight}]{theorem}{binestimate}\label{thm:binestimate}

For any positive integer $m$ and any probability $p$ such that $p > \frac 1 m$, let $B$ be a binomial random variable of parameters $(m,p)$. Then, the following inequality holds:
$$
\Proba(B \ge mp) > \frac 1 4
$$
\end{restatable}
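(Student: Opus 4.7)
My plan is to prove this via a case analysis based on the median of the binomial distribution, combined with its log-concavity. Let $\mu$ denote the median of $B \sim \text{Bin}(m,p)$. A classical theorem of Kaas and Buhrmann (1980) guarantees that $\mu$ always lies in $\{\lfloor mp \rfloor, \lceil mp \rceil\}$, which would be my starting point.

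First, I would dispose of two easy cases. If $mp$ is an integer, then $\mu = mp$ and so $\Proba(B \ge mp) \ge 1/2 > 1/4$. If $mp$ is not an integer and $\mu = \lceil mp \rceil$, then $\Proba(B \ge mp) = \Proba(B \ge \lceil mp \rceil) \ge 1/2 > 1/4$. This leaves the main case where $mp \notin \Z$ and $\mu = \lfloor mp \rfloor$, i.e., the median sits just to the left of $mp$. Writing $k = \lceil mp \rceil = \mu + 1$, the goal reduces to proving $\Proba(B \ge k) > 1/4$.

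For this remaining case I would exploit the log-concavity of the binomial PMF. Setting $p_j := \Proba(B = j)$, the ratio $p_{j+1}/p_j = \frac{(m-j)p}{(j+1)(1-p)}$ is monotonically decreasing in $j$, and it crosses $1$ at the mode $j^\star = \lfloor (m+1)p \rfloor$, which in this regime lies in $\{\mu, \mu+1\}$. The plan is to pair each term $p_{\mu - i}$ on the left of the median with a corresponding term $p_{k+i-1}$ or $p_{k+i}$ on the right, and use the explicit ratio formula to show that the paired right-hand term is a constant fraction of the left-hand one. Summing the pairings and combining with $\Proba(B \le \mu) \ge 1/2$ then yields the desired lower bound on $\Proba(B \ge k)$. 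The hypothesis $p > 1/m$, equivalent to $mp > 1$, enters crucially here: it ensures that $j^\star \ge 1$, so $B$ is not degenerate at $0$ and the right tail has something to work with.

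The main obstacle will be the pairing argument in this last case: establishing that even when both the median and the mode coincide at $\lfloor mp \rfloor$, the mass at $\lceil mp \rceil$ and beyond does not fall below $1/4$. This has to be sharp, because the constant $1/4$ is known to be tight in the limit $m \to \infty$ with $p = 1/m$, where $B$ converges in distribution to $\mathrm{Poisson}(1)$ and $\Proba(B \ge 1) = 1 - 1/e \approx 0.632$ while the corresponding $\Proba(B \ge mp)$ at intermediate scales can be pushed arbitrarily close to $1/4$. The pairing analysis must therefore be quantitative and cannot afford any logarithmic slack.
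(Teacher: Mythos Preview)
The paper does not prove this theorem: it is quoted as Theorem~1 of Greenberg and Mohri and invoked as a black box, so there is no proof in the paper to compare your attempt against.

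As to your sketch itself, the median case split is sound and the two easy cases are dispatched correctly. But the remaining case---median equal to $\lfloor mp\rfloor$ with $mp\notin\Z$---is essentially the whole theorem, and here you offer only a plan (``pair terms, use log-concavity'') without a single concrete inequality. Because the constant $1/4$ is sharp, a soft argument cannot work; you would need to exhibit precisely which ratio bounds you are using and why they hold uniformly. Your remark about where the bound is tight is also wrong: in the Poisson$(1)$ limit you cite, $\Proba(B\ge 1)=1-1/e\approx 0.632$, nowhere near $1/4$. The actual extremal regime is $m=2$ with $p\to\tfrac12^{+}$, where $\Proba(B\ge 2)=p^2\to\tfrac14$. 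Misidentifying the tight case is a real warning sign, since that is exactly the configuration against which any candidate pairing must be tested.
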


In fact, in Lemma~\ref{prob:bin}, we are able to relax the condition to $p > \frac 1 {3m}$ while keeping the same inequality. We use this lemma to lower bound the probability of $X_t$ increasing in any round: 

\begin{restatable}{lemma}{xbinomial}
    If $n> 4$, for every $t \in \N$, we have that $\Proba\left(X_{t+1} > X_{t}\right) \geq \frac{1}{4}$.
\end{restatable}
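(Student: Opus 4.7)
The plan is to reduce the event $\{X_{t+1} > X_t\}$ to the event that a binomial random variable attains at least its mean, and then apply Lemma~\ref{prob:bin}.

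First I would unwind the recursive definition of $X_t$: in the only non-trivial regime $X_t < n$, one has $X_{t+1} > X_t$ if and only if $N_{t+1} - N_t \geq (n - N_t)\cdot N_t/n$, directly from the case split in the definition. Conditional on $\F_t$, Lemma~\ref{lem:binom} gives that $N_{t+1} - N_t$ is distributed as $\mathrm{Bin}(m,p)$ with $m = n - N_t$ and $p = N_t/n$, whose expectation is precisely $mp = (n-N_t)N_t/n$. Hence
\[
\Proba\bigl(X_{t+1} > X_t \,\bigm|\, \F_t\bigr) \;=\; \Proba\bigl(B \geq \mathbb{E}[B]\bigr), \qquad B \sim \mathrm{Bin}(m,p).
\]

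Next I would invoke Lemma~\ref{prob:bin}, the strengthened Greenberg--Mohri bound, which yields $\Proba(B \geq \mathbb{E}[B]) \geq 1/4$ under the hypothesis $p > 1/(3m)$. Translated back, this reads $3 N_t (n - N_t) > n$. Since $x \mapsto 3x(n-x)$ is concave on $[0,n]$, its minimum over the integers $\{1, \dots, n-1\}$ is attained at the endpoints with value $3(n-1)$, so the hypothesis reduces to $3(n-1) > n$, which holds comfortably for $n > 4$. The degenerate boundary $N_t = n$ is handled separately: the binomial collapses to $0$, the inequality $0 \geq 0$ is automatic, and therefore $X_{t+1} > X_t$ occurs with probability $1$ on $\{X_t < n\}$. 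Taking total expectation over $\F_t$ then gives the claim.

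The main subtlety — more bookkeeping than a genuine obstacle — is making sure that the hypothesis of Lemma~\ref{prob:bin} covers every value $N_t \in \{1,\dots, n-1\}$. This is precisely where the relaxation from Theorem~\ref{thm:binestimate} (whose hypothesis $p > 1/m$ would fail at $N_t = 1$, since $1/n < 1/(n-1)$) down to $p > 1/(3m)$ earns its keep. Once this relaxation is in place, what remains is a purely algebraic verification.
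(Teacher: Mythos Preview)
Your proposal is correct and follows essentially the same approach as the paper: reduce $X_{t+1}>X_t$ to a binomial exceeding its mean via Lemma~\ref{lem:binom}, handle $N_t=n$ trivially, and check the hypothesis $p>1/(3m)$ by minimizing $N_t(n-N_t)$ at the endpoints. The only cosmetic difference is that the paper's appendix proof splits the range explicitly, citing Theorem~\ref{thm:binestimate} when $pm\ge 1$ and Lemma~\ref{prob:bin} (whose formal statement there carries the extra assumption $p\le 1/m$) when $pm\le 1$, whereas you invoke the unified ``relaxed to $p>1/(3m)$'' version as described in the main text; either way the verification is identical.
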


    We now show that, for any $c\ge 1$, over $32\cdot c\cdot \ln n$ rounds, $X_t$ increases at least $2\ln n$ times with high probability. For that, let $(B_t)_{t \in \N}$ be Bernoulli independent random variables of parameter $\frac{1}{4}$. Let $Z^B_{\le t}=\sum_{z \in [t]} B_z$ and $Z_{\le t}=\sum_{z \in [t]} \1 \left( X_{z+1} > X_z\right)$. 
    By Lemma 1.8.5 of~\cite{DBLP:series/ncs/Doerr20} the following trivial corollary follows.
 \begin{corollary}  \label{cor:upbound} 
    For any $\ell \in \N$, we have that $\Proba (Z_{\le t} \le \ell) \le \Proba (Z_{\le t}^B \le \ell)$.
\end{corollary}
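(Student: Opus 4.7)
The plan is to invoke the standard stochastic domination principle for sums of conditionally lower-bounded Bernoulli indicators. Let $Y_z := \1(X_{z+1} > X_z)$, so that $Z_{\le t} = \sum_{z \in [t]} Y_z$ is a sum of $\{0,1\}$-valued random variables. The $Y_z$ are not independent (consecutive increments of $X$ are coupled through the process $(N_s)$), so the claim cannot be reduced to a Chernoff-style bound on independent Bernoullis directly; we have to argue via coupling/domination instead.

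The first step is to upgrade the unconditional bound $\Proba(X_{z+1} > X_z) \ge 1/4$ (from the previous lemma) into a conditional bound $\Proba(Y_z = 1 \mid \F_z) \ge 1/4$. This comes for free from the way that bound was established: conditionally on $\F_z$, Lemma~\ref{lem:binom} says $N_{z+1} - N_z$ is binomial with parameters $(n - N_z, N_z/n)$, so Theorem~\ref{thm:binestimate} (and its relaxation in Lemma~\ref{prob:bin}) applies pointwise to that binomial and yields
\[
\Proba\bigl(N_{z+1} - N_z \ge \expect[N_{z+1} - N_z \mid \F_z] \,\bigm|\, \F_z\bigr) \ge \tfrac{1}{4},
\]
which is exactly the event $\{Y_z = 1\}$ by definition of $X_z$.

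The second step is to apply Lemma 1.8.5 of~\cite{DBLP:series/ncs/Doerr20}: given a sequence of Bernoulli random variables $Y_1, \dots, Y_t$ whose conditional success probabilities $\Proba(Y_z = 1 \mid Y_1, \dots, Y_{z-1})$ are almost surely at least $p$, the sum $\sum_z Y_z$ stochastically dominates the sum $\sum_z B_z$ of i.i.d.\ Bernoulli$(p)$ variables. Since $\{Y_1, \dots, Y_{z-1}\}$ is measurable with respect to $\F_z$, our conditional bound on $\Proba(Y_z = 1 \mid \F_z)$ implies the hypothesis of that lemma with $p = 1/4$. Stochastic domination by definition yields $\Proba(Z_{\le t} \le \ell) \le \Proba(Z_{\le t}^B \le \ell)$ for every $\ell \in \N$.

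There is no real obstacle here; the only subtlety worth flagging is the move from the marginal inequality stated in the preceding lemma to the conditional one needed by Doerr's result, but this is immediate from how that lemma was proved, since the binomial concentration bound holds pointwise in $\F_z$. Everything else is a direct invocation of the cited domination lemma.
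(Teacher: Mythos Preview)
Your proposal is correct and follows the same route as the paper, which simply cites Lemma~1.8.5 of~\cite{DBLP:series/ncs/Doerr20} without further comment. Your write-up usefully makes explicit the one point the paper glosses over: that the preceding lemma, though stated as a marginal bound $\Proba(X_{t+1}>X_t)\ge\frac14$, is actually proved conditionally on $\F_t$ (via the binomial distribution of $N_{t+1}-N_t$), which is precisely the hypothesis Doerr's domination lemma requires.
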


\begin{lemma}\label{lem:manyincreases}
    Let $t=32\cdot c\cdot \ln n$ for any $c \ge 1$. Then $\Proba (Z_{\le t} \le 2\ln n) \le \frac 1 {n^c}$.
\end{lemma}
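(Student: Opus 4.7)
The plan is to reduce the problem to a standard large-deviation bound for a sum of independent Bernoulli variables via the stochastic dominance already established in Corollary~\ref{cor:upbound}. Indeed, that corollary yields $\Proba(Z_{\le t} \le 2\ln n) \le \Proba(Z_{\le t}^B \le 2\ln n)$, and since $Z_{\le t}^B$ is a sum of $t=32c\ln n$ independent $\mathrm{Bernoulli}(1/4)$ random variables, it is $\mathrm{Binomial}(t,1/4)$ with mean $\mu = 8c\ln n$. The target threshold $2\ln n = \mu/(4c)$ lies well below the mean, so a multiplicative lower-tail Chernoff bound should suffice.

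Concretely, I would apply the standard multiplicative lower-tail Chernoff inequality $\Proba(Z_{\le t}^B \le (1-\delta)\mu) \le \exp(-\delta^2\mu/2)$ (available via the probability appendix) with the choice $1-\delta = \frac{2\ln n}{\mu} = \frac{1}{4c}$. For every $c \ge 1$ this gives $\delta = 1 - \frac{1}{4c} \ge \frac{3}{4}$, hence $\delta^2 \ge \frac{9}{16}$. Plugging in, the exponent satisfies
$$\frac{\delta^2 \mu}{2} \;\ge\; \frac{9}{16}\cdot \frac{8c\ln n}{2} \;=\; \frac{9c}{4}\ln n \;\ge\; c\ln n,$$
so $\Proba(Z_{\le t}^B \le 2\ln n) \le \exp(-c\ln n) = n^{-c}$. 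Combined with Corollary~\ref{cor:upbound}, this yields the claimed bound.

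The only thing to verify is that the constant $32$ in $t=32c\ln n$ is large enough for the Chernoff exponent to dominate $c\ln n$ uniformly in $c\ge 1$; the calculation above shows that it does with room to spare, so no sharpening of the classical Chernoff bound is needed. There is no substantive obstacle in the computation itself — all the delicate work has been done upstream in Lemma~\ref{lem:binom} (the mutual independence of informed parents in a uniformly random rooted tree) and in Corollary~\ref{cor:upbound} (which replaces the conditionally correlated indicators $\1(X_{z+1}>X_z)$ with genuinely independent $\mathrm{Bernoulli}(1/4)$ variables that admit Chernoff-style concentration).
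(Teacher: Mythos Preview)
Your proposal is correct and follows essentially the same approach as the paper: reduce to $Z_{\le t}^B$ via Corollary~\ref{cor:upbound} and then apply a standard lower-tail bound for the binomial. The only cosmetic difference is that the paper invokes Hoeffding's inequality (Lemma~\ref{lem:hoeffding}, which is what the appendix actually supplies) rather than the multiplicative Chernoff bound; both yield the $n^{-c}$ estimate with margin to spare.
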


\begin{proof}

Note that $Z_{\le t}^B$ is a binomial distribution of parameters $(t, \frac 1 4)$. Using Hoeffding's inequality (Lemma~\ref{lem:hoeffding}), we have that:
$$\Proba (Z_{\le t}^B \le 2\ln n)\leq \exp \left(-2t\left(\frac 1 4-{\frac {2\ln n}{t}}\right)^{2}\right)\le \exp \left(-2\cdot 32 c\ln n \left(\frac 1 4-{\frac {2}{16}}\right)^{2}\right)=n^{-c}$$
    Corollary~\ref{cor:upbound} then gives the desired result.
\end{proof}

We now have all the tools to prove Theorem~\ref{thm:Broadcast}, which we recall here:

\Broadcast*

\begin{proof}

    By Lemma~\ref{lem:manyincreases}, we have that, with probability $p \le 1-\frac 1 {n^c}$, $X_{t+1}>X_t$ for at least $2\ln n$ many rounds within the $32\cdot c\cdot \ln n$ first rounds. Recall that $u_{2\ln n}$ is the $2\ln n$-th round where $X_{t+1}>X_t$. We thus have that  $\Proba \left( u_{2\ln n} \le 32 \cdot c \cdot \ln n \right)\ge 1-n^{-c}$. But, by Lemma~\ref{lem:Nequalsn} the event $u_{2\ln n} \le 32\cdot c\cdot \ln n$ 
    implies the event $N_{32\cdot c\cdot \ln n}=n$, therefore $\Proba \left( N_{32\cdot c\cdot \ln n}=n\right) \ge 1-n^{-c}$.
\end{proof}

We now show that this result is asymptotically tight. Indeed, we can show that if at most $\log n$ rounds are allowed, then with probability $q\ge \frac 1 4$, Broadcast does not complete:

\lowerbounded*

\begin{proof}
    We will only show the result for Broadcast, as the result for All-to-All Broadcast follows immediately. We will first show by induction that $\expect (N_t) \le X_{u_t}$ for every $t \in \N$. We will then conclude using Markov's inequality.

    The induction basis is clear as $N_0=X_0=1$. For the induction step, assume that for some $t \in \N$, we have that $\expect (N_t) \le X_{u_t}$. Let us show that this implies that $\expect (N_{t+1}) \le X_{u_{t+1}}$. 
    Indeed, by Lemma~\ref{lem:binom}, $N_{t+1}-N_t$ has a binomial distribution of parameters $n-N_t$ and $\frac {N_t} n$. This implies that:
    $$ \expect [N_{t+1}| \F_t]=N_t + \frac {N_t} n \cdot (n-N_t) = 2N_t-\frac {N_t^2} n$$
    Therefore:
    $$\expect [N_{t+1}]=\expect\left[\expect [N_{t+1}| \F_t]\right] = 2 \expect [N_t] - \frac {\expect[N_t^2]} n$$
    As $Var(N_t) = \expect [N_t^2] - \expect [N_t]^2 \ge 0$, we have that $- \expect [N_t^2] \le - \expect [N_t]^2$. This implies:
$$\expect [N_{t+1}] \le 2 \expect [N_t] - \frac {\expect[N_t]^2} n$$

Note that we have that, by Lemma~\ref{lem:valueofX}:
$$X_{u_{t+1}} = 2 X_{u_{t}} - \frac {X_{u_{t}}^2} n$$

Since $x\mapsto 2x-\frac{x^2}n$ is strictly increasing between $0$ and $n$, with both $X_t$ and $\expect [N_t]$ falling in that range (Lemmata~\ref{lem:couple} and~\ref{lem:couple2}), the induction hypothesis implies that $2 \expect [N_t] - \frac {\expect[N_t]^2} n \le 2 X_{u_{t}} - \frac {X_{u_{t}}^2} n$. This implies $\expect[N_{t+1}] \le X_{u_{t+1}}$.   

We know the value of $X_{u_t}$ from Lemma~\ref{lem:valueofX}. We can thus give the upper bound $\expect[N_{\log n}] \le X_{u_{\log n}} = n (1 - ((n-1)/n)^n ) \le  n(1-\frac 1 4)$, since $n \ge 2$. Using Markov's inequality, we thus have:
$$
\Proba(N_{\log n}\ge n)\le \frac{\expect [N_{\log n}]} {n} = 1-\frac 1 4
$$
\end{proof}

We now use Theorem~\ref{thm:Broadcast} result to get a similar result for All-to-All Broadcast. Using a union-bound, we obtain:

\begin{restatable}{theorem}{alltoallBroadcast}\label{thm:alltoallBroadcast}
For any $c \ge 1$ and $n\ge 5$, All-to-All Broadcast on Uniformly Random Trees completes within $32\cdot c\cdot \ln n$ rounds with probability $p>1-\frac 1 {n^{c-1}}$.    
\end{restatable}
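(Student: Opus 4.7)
The plan is to reduce All-to-All Broadcast to $n$ simultaneous instances of single-source Broadcast and apply a union bound. Observe that All-to-All Broadcast completes in round $t$ if and only if, for every node $v \in [n]$, the Broadcast instance whose source is $v$ has completed by round $t$. Crucially, although these $n$ instances are run in parallel on the \emph{same} sequence of random trees $T_1, T_2, \ldots$, each individual instance on its own is exactly the process analyzed in Theorem~\ref{thm:Broadcast}: the tree sequence is still i.i.d.\ uniform over $\RT_n$, and Theorem~\ref{thm:Broadcast} makes no assumption about the identity of the starting node (the argument via Lemma~\ref{lem:binom} is symmetric in the source, since the distribution of $T_t$ is invariant under relabeling of vertices).

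Concretely, for each fixed node $v$, let $\mathcal{E}_v$ be the event that the Broadcast instance initiated at $v$ fails to complete within $32 \cdot c \cdot \ln n$ rounds. By Theorem~\ref{thm:Broadcast}, applied with the same constant $c$,
\[
\Proba(\mathcal{E}_v) \le \frac{1}{n^c}.
\]
All-to-All Broadcast fails to complete within $32 \cdot c \cdot \ln n$ rounds exactly when $\bigcup_{v\in[n]} \mathcal{E}_v$ occurs, so a union bound yields
\[
\Proba\!\left(\bigcup_{v\in[n]} \mathcal{E}_v\right) \le \sum_{v\in[n]} \Proba(\mathcal{E}_v) \le n \cdot \frac{1}{n^c} = \frac{1}{n^{c-1}}.
\]
Taking complements gives the claimed probability bound.

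No independence between the $n$ Broadcast instances is needed; the union bound is insensitive to their joint distribution. The only real subtlety, and what I would flag explicitly in the write-up, is that Theorem~\ref{thm:Broadcast} must be reusable for any starting source. This follows because the distribution of a uniformly random rooted tree is invariant under vertex permutations, so the law of $(N_t)_{t\ge 0}$ is identical regardless of which node is designated as the initial informed node. This is the only nontrivial point; everything else is a direct quantitative application of the single-source result.
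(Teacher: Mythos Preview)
Your proof is correct and is essentially identical to the paper's own argument: apply Theorem~\ref{thm:Broadcast} to each of the $n$ source nodes, then union-bound the $n$ failure events to lose one factor of $n$ in the probability. The paper's proof is slightly terser, but the content is the same.
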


We now finally show a result on Consensus, which uses the following algorithm:

\begin{algorithmm}\label{alg:Consensus}
    The protocol works as follows: each node waits for $32\cdot c\cdot \ln n$ rounds, during which if it receives the initial value of node $1$, it starts forwarding it as well.
    After the $32\cdot c\cdot \ln n$ rounds have passed, it outputs that value, or $\bot$ if it hasn't received it.
\end{algorithmm}
\begin{restatable}{theorem}{Consensus}\label{thm:Consensus}
For any $c\ge 1$ and $n\ge 5$, There exists a protocol for Consensus on Uniformly Random Trees that satisfies Agreement and Validity, terminates within $32\cdot c\cdot \ln n$ rounds with probability $p>1-\frac 2 {n^c}$, and only requires messages of 1 bit over each edge in each round.  
\end{restatable}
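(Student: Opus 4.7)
The plan is to show that Algorithm~\ref{alg:Consensus} satisfies all three Consensus properties together with the 1-bit message bound, with the work essentially reducing to a single invocation of Theorem~\ref{thm:Broadcast} applied to node $1$ as the source. The algorithm has a very simple structure, so most of the proof is bookkeeping; the probabilistic content has already been established.

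First, I would establish Validity and Agreement, which hold deterministically regardless of whether Broadcast completes in time. By construction, the only value ever injected into the network is $v_1$, so the only non-$\bot$ output any honest node can produce is $v_1 \in \{0,1\}$. This is an input value (that of node $1$), yielding Validity. Moreover, since every honest node that decides (i.e., outputs a value in $\{0,1\}$) decides on $v_1$, no two such nodes can disagree, yielding Agreement. Nodes that output $\bot$ are treated as not having decided and therefore do not break Agreement.

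Next, for Termination, I would invoke Theorem~\ref{thm:Broadcast} with the broadcast source equal to node $1$. Theorem~\ref{thm:Broadcast} guarantees that, under the Uniformly Random Trees model with $n \ge 5$, Broadcast of $v_1$ from node $1$ completes within $32 c \ln n$ rounds with probability at least $1 - 1/n^c$. On this event every honest node has $v_1$ in its state by round $32 c \ln n$, and hence outputs $v_1 \in \{0,1\}$ instead of $\bot$. Since $1 - 1/n^c > 1 - 2/n^c$, this gives the bound in the theorem, with slight slack absorbing any boundary issues. For the 1-bit property I would observe that the only information ever forwarded is the single bit $v_1$: uninformed nodes send only a default/empty message, while informed nodes forward precisely one element of $\{0,1\}$, so each edge carries at most one bit per round.

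The main obstacle is almost entirely contained in Theorem~\ref{thm:Broadcast}, and there is little new work required here. The only subtle point is the convention that $\bot$ represents ``no decision,'' which must be adopted so that Agreement is not broken on the low-probability event that Broadcast fails to complete in time; given this convention, the Consensus requirements and the 1-bit message bound follow immediately from the structure of Algorithm~\ref{alg:Consensus} and the broadcast guarantee.
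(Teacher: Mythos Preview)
Your proposal is correct and follows essentially the same approach as the paper: you invoke Theorem~\ref{thm:Broadcast} with node~$1$ as the source to argue that with probability at least $1 - n^{-c} > 1 - 2n^{-c}$ every node receives $v_1$ within $32c\ln n$ rounds, and you correctly note that Agreement and Validity hold by design since only $v_1$ is ever forwarded and $\bot$ counts as no decision. The paper's proof is terser but identical in substance.
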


\begin{proof}
    Algorithm~\ref{alg:Consensus} is an algorithm where everyone agrees on $v_1$, the input to node $1$, and where only $v_1$ is passed along. Thus every node outputs either $v_1$ or $\bot$. However, if $v_1$ has Broadcast within the first $32\cdot c\cdot \ln n$ rounds, then everyone outputs $v_1$. This happens with probability $p\ge 1-n^{-c}$, by Theorem~\ref{thm:Broadcast}.
\end{proof}

Note that Algorithm~\ref{alg:Consensus} can be adapted to different variants of Consensus. To keep our presentation concise, we do not explore them further in detail.  For example, the version given here satisfies the condition that no node continues to communicate after it has decided on a value, but Consensus does not complete with probability $1$ after everyone has decided as some nodes might output $\bot$. A different definition of Consensus could allow each node to send messages after it decides on a value, in which case a different version of the algorithm could be given, where each node can decide as soon as it receives the value $v_1$.

\section{Adversarial Nodes: Trees with Byzantine Nodes}\label{sec:Byzantine}
In this section, we will discuss the case where some nodes are \emph{Byzantine}, that is, nodes that can arbitrarily deviate from the protocol. 
These nodes can stop functioning, send wrong messages, and coordinate to make the protocol fail. We will rely on cryptographic tools so that each node can sign and encrypt the message it sends. Then nodes can be confident about the sender of each message and its content and can forward the message along with its unchanged signature to other nodes. We will assume that there are up to $f$ Byzantine nodes, out of a total of $n$ nodes. We require that $f \le \frac 2 3 n-1$. Nodes that are not Byzantine are called honest.

We begin by analyzing Broadcast in this setting. We first give a message to a fixed honest node, and ask the node to forward it to all other honest nodes.
Note the difference between this model and the reliable Broadcast model, where the initial message could be from an honest node or a Byzantine node, and where if the initial message is from a Byzantine node, then the message accepted by each honest node must be the same. 

In our
setting, the best strategy for the Byzantine node is not to forward any message at all. Indeed, they cannot modify the content of a message because they cannot forge any signature, and, thus, their power is limited. Hence, we will  analyze this problem as if Byzantine nodes are just defunct
but the process that chooses the communication network, i.e., the random tree, does not know which nodes are Byzantine and, thus, they are part of the network as before, i.e., the tree still consists of $n$ nodes.

As most of the analysis resembles the one of the previous section, all details are delayed to Appendix~\ref{app:Byzantine}. We get the main theorem of this section:

\begin{restatable}{theorem}{secfivethmbroadcast}\label{sec.5.thm:Broadcast}
    For any $c \ge 1$, and $f \le \frac 2 3 n -1$, Broadcast on Uniformly Random Trees with $f$ Byzantine nodes completes within
    $144\cdot c \cdot \log n$ rounds with probability
    $p> 1 - \frac 1 {n^c}$.
\end{restatable}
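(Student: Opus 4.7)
The plan is to follow the template of Section~\ref{sec:broadcast}. Since cryptographic signatures prevent a Byzantine node from forging or tampering with messages, its optimal strategy is simply to refuse to forward, so I would treat the $f$ Byzantine nodes as defunct while still leaving them in the random tree (the process that samples the tree is oblivious to which nodes are Byzantine). Writing $I_t$ for the set of informed honest nodes and $N_t=\card{I_t}$, the set of honest uninformed nodes has size $n-f-N_t$, and such a node $y$ becomes informed in round $t+1$ iff $P_{t+1}(y)\in I_t$.

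First I would prove the direct analog of Lemma~\ref{lem:binom}: conditional on $\F_t$, the increment $N_{t+1}-N_t$ follows $\mathrm{Binomial}(n-f-N_t,\,N_t/n)$. The tree-counting argument is unchanged: for any subset $J$ of honest uninformed nodes, Theorem~\ref{thm:treecount} gives
\[
\Proba\Bigl(\Inter_{y\in J}(P_{t+1}(y)\in I_t)\,\Big|\,\F_t\Bigr)=\sum_{a\in[N_t]^{\card{J}}}\frac{n^{n-1-\card{J}}}{n^{n-1}}=\Bigl(\frac{N_t}{n}\Bigr)^{\card{J}},
\]
so these events are mutually independent and Lemma~\ref{lem:mutually} yields the binomial.

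Next I would define $X_t$ exactly as in Section~\ref{sec:broadcast} but with target $n-f$, so that the update on a ``good'' round becomes $X_{u_{t+1}}=X_{u_t}+(n-f-X_{u_t})\,X_{u_t}/n$. Setting $D_t:=n-f-X_{u_t}$ yields the one-line recursion $D_{t+1}=D_t(f+D_t)/n$, which replaces the clean squaring of Lemma~\ref{lem:valueofX}. I would analyse this in two regimes. While $X_{u_t}\le(n-f)/2$, the ratio $X_{u_{t+1}}/X_{u_t}\ge 1+(n-f)/(2n)\ge 7/6$ (using $f\le 2n/3-1$), so $O(\log n)$ good rounds lift $X_{u_t}$ above $(n-f)/2$. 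Once $D_t\le(n-f)/2$, the ratio $D_{t+1}/D_t\le (f+D_t)/n\le(1+f/n)/2\le 5/6$, so another $O(\log n)$ good rounds drive $D_t$ below $1$, at which point $N_t\ge\lceil X_{u_t}\rceil=n-f$.

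For the final step I would combine the relaxed binomial-median estimate (Lemma~\ref{prob:bin}), which gives $\Proba(X_{t+1}>X_t\mid\F_t)\ge 1/4$ whenever $N_t(n-f-N_t)\ge n/3$ (the two boundary cases $N_t\in\{1,n-f-1\}$ handled by hand), with the same stochastic-domination/Hoeffding argument as in Lemma~\ref{lem:manyincreases}, to conclude that the number of good rounds in $144c\log n$ trials meets the $\Theta(\log n)$ threshold with probability at least $1-n^{-c}$. The main obstacle is precisely the loss of the squaring recursion for $D_t$: the ``$2\ln n$ doubling steps'' bookkeeping of Theorem~\ref{thm:Broadcast} must be replaced by the two-regime geometric analysis above, and the larger constant $144$ (versus $32$ in the Byzantine-free case) is chosen to absorb the slower geometric rates $7/6$ and $5/6$ while still giving Hoeffding concentration at failure probability $n^{-c}$.
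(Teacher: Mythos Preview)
Your proposal is correct and follows essentially the same approach as the paper: the paper introduces $\tau=\log n/\log(1+(n-f)/(2n))$, proves $N_{u_{2\tau}}=n-f$ via exactly your two-regime geometric analysis (growth rate $\ge 1+(n-f)/(2n)$ while $X_{u_t}\le(n-f)/2$, decay rate $\le 1-(n-f)/(2n)$ for $n-f-X_{u_t}$ afterward), and then uses the same $\tfrac14$ bound and Hoeffding to show $32c\tau$ rounds suffice, with $\tau\le 4.5\log n$ (from $f\le\tfrac23 n-1$) yielding the constant $144$. Your identification of the obstacle---the loss of the exact squaring recursion of Lemma~\ref{lem:valueofX}---and its resolution match the paper precisely.
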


We now use this result to get a similar result for All-to-all Broadcast. Using a union-bound, we obtain:

\begin{restatable}{theorem}{byzalltoall}\label{sec.5.thm:alltoall}
    For any $c \ge 1$, and $f \le \frac 2 3 n -1$, All-to-all Broadcast on Uniformly Random Trees with $f$ Byzantine nodes completes within $144 \cdot c\cdot \log n$ rounds with probability $p>1-n^{1-c}$.%
\end{restatable}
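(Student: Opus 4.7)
The plan is a direct union bound reduction to Theorem~\ref{sec.5.thm:Broadcast}. In the All-to-All Broadcast problem, each honest node is the source of its own message, and completion means that every honest node receives every such message. Since the model imposes no message-size restriction and prescribes that each honest node forwards, in each round, every message it has received so far to all of its out-neighbors, the set of honest nodes that possess message $v$ after round $t$ is exactly the set that would possess $v$ if $v$ were the only message being disseminated in the system (Byzantine nodes can at worst suppress messages, and a single Byzantine node's behavior toward one message does not help it corrupt another). Thus the dissemination of each message reduces to an instance of single-source Broadcast on the same random-tree sequence, starting from its originator.

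There are at most $n$ honest sources to consider. For each fixed honest source, Theorem~\ref{sec.5.thm:Broadcast} guarantees that its Broadcast completes within $144\cdot c\cdot \log n$ rounds with probability at least $1-n^{-c}$; the bound is uniform over the identity of the source because the distribution over $\RT_n$ is invariant under relabelings of the nodes, so the analysis is unchanged no matter which honest node we fix as the initial holder. By subadditivity (a union bound) over the at most $n$ honest sources, the probability that \emph{some} honest source's Broadcast fails to complete within $144\cdot c\cdot \log n$ rounds is at most $n\cdot n^{-c}=n^{1-c}$. Consequently, with probability at least $1-n^{1-c}$, every honest node receives every honest node's message within $144\cdot c\cdot \log n$ rounds, which is exactly the statement of the theorem.

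There is essentially no obstacle here beyond recording two small observations: (i) the $n$ single-source Broadcast events are not probabilistically independent since they all depend on the same sequence of random trees, but this is immaterial for a union bound, which only needs subadditivity; and (ii) although formally the bound in Theorem~\ref{sec.5.thm:Broadcast} is for a fixed distinguished source, symmetry of the tree distribution makes it applicable to each honest source in turn, which is what legitimizes the union-bound step.
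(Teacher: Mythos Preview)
Your proposal is correct and matches the paper's own proof, which is precisely a union bound over the at most $n$ single-source Broadcast instances using Theorem~\ref{sec.5.thm:Broadcast}. Your explicit remarks about symmetry of the tree distribution and about only needing subadditivity (not independence) are valid clarifications, and your restriction to honest sources is in fact slightly more careful than the paper's presentation, which formally writes the union over all $i\in[n]$ but relies on the same bound.
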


This allows us, e.g.,~to implement algorithms that run on a clique %
in a synchronous setting in our sparser graph.
Indeed, each round of communication of a clique can be simulated by $32\cdot c \cdot \tau$ rounds of Uniformly Random Trees with high probability, since All-to-All Broadcast needs $32\cdot c \cdot \tau$ rounds to complete with high probability.
Essentially, if an algorithm runs in $T$ time, with $T \le n^{c-1}$, in a clique network, we can implement it with high probability in $32\cdot c \cdot T \cdot \tau$ rounds in the Uniformly Random trees network, which is essentially a logarithmic overhead.
The only caveat is that if $T$ is too large, i.e. $T > n^{c-1}$, the probability of at least one of the $T$ All-to-All Broadcast rounds failing can become close to 1. To circumvent this, we restrict ourselves to the case where $T$ is a small enough polynomial in $n$.

\cliquereduc*

We now give two applications of this theorem, namely Reliable Broadcast and Byzantine Consensus.

\reliableBroadcast*

\begin{proof}
    Dolev and Strong~\cite{reliable} have given an algorithm that solves reliable Broadcast, is robust to $f$ Byzantine nodes, and runs in $T=f+1$ rounds. Since $T\le n$, we can apply Theorem~\ref{sec.5.thm:reduction} with $x=1, \alpha = 1$, and we get the desired result.
\end{proof}

\ConsensusByzantine*

\begin{proof}
   Berman, Garay and Perry~\cite{king} have given an algorithm (known as the King's algorithm) that solves Byzantine Consensus, is robust to $f$ Byzantine nodes, and runs in $T=3(f+1)$ rounds. Since $T\le 2n$, we can apply Theorem~\ref{sec.5.thm:reduction} with $x=1, \alpha = 2$, and we get the desired result.
\end{proof}

\section{Adversarial Edges: Trees with Adversarial Topology}\label{sec:adversary}
In this section, we consider a more general model where a parametrized adversary controls a certain number of edges in every round, and the others are chosen randomly. More specifically, in each round, the adversary $A$ chooses $k$ edges such that the resulting graph is a directed rooted forest $F$, and then a tree is chosen uniformly at random among the rooted trees that are compatible with $F$. We consider the model where the adversary has access to the randomly chosen trees of all previous rounds, but has no information on the random coin flips of the current and future rounds.

All the proofs missing in this section can be found in Appendix~\ref{appendix:adversary}. 

We will prove the following theorem:
\begin{theorem}
    If the adversary controls $k$ edges in each round, for $k \le \frac 2 3 n -1$, then for any $c\ge 1$, with probability $p\ge 1-n^{-c}$, Broadcast completes within $O(k+\log n)$ rounds.%
\end{theorem}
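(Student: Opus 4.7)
The plan is to extend the per-round analysis of Section~\ref{sec:broadcast} in the presence of the adversary's forest $F_{t+1}$. Conditional on $F_{t+1}$, I split the uninformed nodes into non-root nodes of components of $F_{t+1}$, whose parent in $T_{t+1}$ is forced by $F_{t+1}$, and component roots, whose parent is genuinely random. Combining Pitman's formula with Theorem~\ref{thm:treecountroot}, I would show that for any component root $y$ with component $C_y$ and any $u \notin C_y$, $\Proba(P_{t+1}(y) = u \mid \F_t) = 1/n$, and then by a counting argument analogous to the proof of Lemma~\ref{lem:binom}, applied to $F_{t+1}$ enlarged by several parent-assignment edges, establish that these events are essentially independent (in a stochastic-domination sense) across distinct component roots. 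Since $|C_y| \le k+1$ for every component of $F_{t+1}$, the count of newly-informed component roots stochastically dominates a binomial random variable with parameters $(n - N_t - k,\,(N_t - k - 1)/n)$ whenever $N_t \ge k+2$.

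I would then split the analysis into two regimes. In the \emph{warm-up regime} $N_t \le 3(k+1)$, the binomial lower bound is too weak to drive geometric growth. Instead, I exploit the constraint $k \le \tfrac{2n}{3}-1$, which leaves at least $n - k \ge n/3 + 1$ component roots in $F_{t+1}$; at least a constant fraction of these are uninformed and satisfy $v_0 \notin C_y$, each hitting $v_0$ as a random parent with probability $1/n$. A Bernoulli trial calculation then shows that at least one new node is informed in each warm-up round with probability bounded below by an absolute constant. A Hoeffding argument analogous to Lemma~\ref{lem:manyincreases} yields that $O(k + \log n)$ rounds suffice to either push $N_t$ above $3(k+1)$ or to complete Broadcast (if $3(k+1) \ge n$), with failure probability at most $n^{-c}$. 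In the \emph{main regime} $N_t > 3(k+1)$, the factor $(N_t - k - 1)/n$ is at least $N_t/(2n)$, so the expected multiplicative growth per round matches the pure random-tree case up to a constant. I would then replay the arguments of Lemmata~\ref{lem:valueofX}, \ref{lem:Nequalsn}, and~\ref{lem:manyincreases} with a correspondingly adjusted lower-bound variable $X_t$, concluding that $O(\log n)$ further rounds bring $N_t$ to $n$ with high probability. A union bound over the two regimes yields the claimed $O(k + \log n)$ bound.

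The main obstacle will be establishing the per-round distributional lower bound in the presence of $F_{t+1}$. The vertex symmetry underlying Lemma~\ref{lem:binom} is broken: non-root nodes have pinned parents, distinct component roots have parents supported on different sets $[n]\setminus C_y$, and joint parent assignments must avoid creating cycles. Rather than proving exact mutual independence, I expect the cleanest route is a stochastic-domination argument that couples the component-root informings with an independent sequence of Bernoulli$((N_t-k-1)/n)$ trials, using Theorem~\ref{thm:treecountroot} to handle the edges where $y$ is the tree root. A secondary subtlety is the warm-up regime when $k$ is close to $\tfrac{2n}{3}$, where the ``free'' fraction of the tree is only a constant; the threshold $N_t > 3(k+1)$ in the phase split is chosen precisely so that the subtractive $O(k)$ slack in the main regime is absorbed into a constant multiplicative factor while the warm-up still terminates in $O(k)$ rounds.
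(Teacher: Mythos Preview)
Your phase split is inverted relative to the paper's: you run a linear warm-up followed by a geometric main phase, whereas the paper runs a geometric Phase~1 (from $N_t=1$ up to $N_t=n-k-1$) followed by a linear Phase~2 (the last $k$ nodes). The paper's key move, which you bypass, is to first pin down the worst-case adversary exactly: a chain of stochastic-domination reductions (Lemma~\ref{lem:domination}, the ``correction'' and ``merging'' lemmas, and Corollary~\ref{cor:eta}) shows the optimal adversary places all $k$ edges in a \emph{single} non-increasing tree with as few informed nodes as possible. Under that adversary, Lemma~\ref{lem:binomialadv} gives $N_{t+1}-N_t$ as a genuine sum of independent Bernoullis, so no stochastic-domination argument over arbitrary forests is needed; Phase~1 is then literally the Byzantine analysis with $f=k$, and Phase~2 is a coin-flipping argument.

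There is a concrete gap in your main regime. Your binomial$\bigl(n-N_t-k,\,(N_t-k-1)/n\bigr)$ lower bound has zero trials once $N_t > n-k$, so any $X_t$-style lower-bound process you build from it stalls near $n-k$, not $n$. The claim that ``$O(\log n)$ further rounds bring $N_t$ to $n$'' is therefore unsupported; you still need a third phase covering $n-k < N_t < n$, which is precisely the paper's Phase~2 and costs another $O(k)$ rounds. The headline bound survives, but the two-regime decomposition as written is incomplete.

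The obstacle you flag is real and not easily sidestepped. By Lemma~\ref{lem:iinformed}, the joint probability that $x$ specified uninformed roots all receive informed parents is $\eta\,(N_{t-1})^{x-1}/n^x$, which does not factor over the roots; the events are genuinely dependent, and negative correlation alone does not yield domination of the sum by an independent binomial with a smaller parameter. Establishing your per-round lower bound for an \emph{arbitrary} adversarial forest is essentially equivalent in content to the paper's reduction to the single-tree adversary---you have relocated the hard step rather than avoided it.
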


We will in fact show that Broadcast completes within $O(k+\tau)$ rounds, where $\tau = \frac{\log n}{\log (1+\frac{n-k}{2n})} = \Theta(\log n)$.

For the rest of the section, $I_t$ and $S_t$ will, respectively, be the set of nodes that are informed and uninformed after round $t$. 
We set $I_0=\{1\}$ and $S_0=[n]-\{1\}$, $N_t=\card{I_t}$ to be the number of informed nodes after $t$ rounds, and $T_t$ to be the tree chosen at random in round $t$. 
For a tree $T$, for each node $p$, $P_T(p)$ is the (unique) parent of node $p$ in $T$, unless $p$ is the root of $T$, in which case $P_T(p)=p$. 
Simplifying the notation, we also use $P_t(p)$ to denote $P_{T_t}(p)$.

We start by finding the best strategy $A$ could use and then analyze that strategy.

\subsection{\texorpdfstring{Best Strategy for the Adversary $A$}{Best Strategy for the Adversary A}}

\begin{figure}
    \centering
    \includegraphics[width=0.6\linewidth]{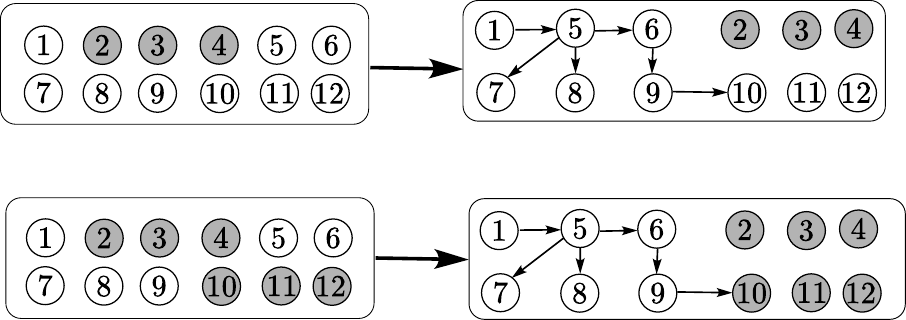}
    \caption{The best strategy for the adversary $A$, with $k=6$. Shaded nodes are informed nodes. In the top example, nodes $5, 6, 7, 8, 9$ and $10$ are safe from being informed, whereas node $1$ can still be informed. In the bottom example, nodes $5, 6, 7, 8,$ and $9$ are safe, whereas node $1$ can still be informed. However, node $1$ is safe from being informed by node $10$.}
    \label{fig:beststrategy}
\end{figure}

In this subsection, we will show that the best strategy for the adversary is to use all the edges to form one tree, with as many uninformed nodes as possible. The main idea is that an uninformed node with an uninformed parent is ``protected" in the round, that is, it cannot have an informed parent. Hence the adversary will try to protect as many uninformed nodes as possible by creating a tree connecting them. If the adversary still has edges left after it protected all the nodes it could, they use the remaining edges to connect as many informed nodes as possible to the tree such that there is no edge from an informed to an uninformed node
to prevent that any of them becomes a parent of  the root of the tree. 

An illustration of this strategy is shown in Figure~\ref{fig:beststrategy}. This section is dedicated to formalizing and proving these ideas. We will use the notion of stochastic dominance. Intuitively, if a strategy yields more informed nodes than another one, then the adversary will choose the latter one. Stochastic dominance is the tool we use to formalize this. Note that we define stochastic dominance for two types of random variables, namely random variables that are real numbers and random variables that are sets. Thus, for any set $S$, let $\parts(S)$ be the set of all subsets of $S$.

\begin{definition}[Stochastic Dominance]
    (1) A real random variable $Y_1$ stochastically dominates another real random variable $Y_2$, if, for every $x \in \R$, we have that $\Proba(Y_1 \ge x) \ge \Proba (Y_2 \ge x)$.
    
    (2) A random variable $Y_1$ with values in $\parts([n])$ stochastically dominates another random variable $Y_2$ with values in $\parts([n])$, if, for every $x \in \N$, we have that $\Proba(\card {Y_1} \ge x) \ge \Proba (\card{Y_2} \ge x)$.
\end{definition}

With stochastic dominance, we will use a related notion, that is coupling. Coupling is a useful tool to compare two random variables, and in particular, it helps translate probabilistic events into deterministic ones, which are easier to analyze. 

\begin{definition}[Coupling]
   A coupling of two random variables $Y_1, Y_2$ is a third random variable $(\hat{Y_1}, \hat{Y_2})$ such that $Y_1$ has the same distribution as $\hat{Y_1}$, and $Y_2$ has the same distribution as $\hat{Y_2}$.
\end{definition}

Next we state two coupling theorems, namely one for random variables that are real numbers and one for random variables that are sets.
 
\begin{theorem}[Stochastic Dominance and Coupling, Theorem 7.1 of~\cite{den2012probability}]
If a real random variable $Y_1$ stochastically dominates another real random variable $Y_2$, then there exists a coupling $(\hat{Y_1}, \hat{Y_2})$ of $Y_1$ and $Y_2$ such that
$$
\Proba (\hat{Y_1}\ge \hat{Y_2})=1
$$ 
\end{theorem}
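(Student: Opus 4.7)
The plan is to construct the coupling explicitly via the \emph{quantile transform}, which drives both marginals from a single uniform random variable. Write $F_i(x) := \Proba(Y_i \le x)$ for the CDF of $Y_i$. The first step is to translate the hypothesis of stochastic dominance into a pointwise inequality on CDFs: the assumption $\Proba(Y_1 \ge x) \ge \Proba(Y_2 \ge x)$ for every $x \in \R$ rearranges to $\Proba(Y_1 < x) \le \Proba(Y_2 < x)$, and then, using right-continuity of CDFs to take a right limit in $x$, to the pointwise bound $F_1(x) \le F_2(x)$ for every $x \in \R$.

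The second step is to define the generalized inverse $F_i^{-1}(u) := \inf\{x \in \R : F_i(x) \ge u\}$ for $u \in (0,1)$, draw $U$ uniformly on $(0,1)$ on some auxiliary probability space, and set $\hat Y_1 := F_1^{-1}(U)$ and $\hat Y_2 := F_2^{-1}(U)$. The standard identity $F_i^{-1}(u) \le x \iff u \le F_i(x)$ (valid for any CDF and its generalized inverse, even in the presence of jumps or flat pieces) yields $\Proba(\hat Y_i \le x) = \Proba(U \le F_i(x)) = F_i(x)$, confirming that $\hat Y_i$ has the same law as $Y_i$, so that $(\hat Y_1, \hat Y_2)$ is indeed a coupling of $(Y_1, Y_2)$.

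The final step is to deduce $\hat Y_1 \ge \hat Y_2$ from the CDF inequality. If $F_1 \le F_2$ pointwise, then for every $u \in (0,1)$ the set inclusion $\{x : F_2(x) \ge u\} \subseteq \{x : F_1(x) \ge u\}$ holds (any $x$ in the left-hand set satisfies $F_1(x) \le F_2(x)$... wait, in the other direction: $F_2(x) \ge u$ combined with $F_1 \le F_2$ gives no direct bound, so I instead argue $F_1(x) \ge u \Rightarrow F_2(x) \ge u$, yielding $\{x : F_1(x) \ge u\} \subseteq \{x : F_2(x) \ge u\}$). Taking infima reverses the inclusion into the inequality $F_1^{-1}(u) \ge F_2^{-1}(u)$. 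Applying this with $u = U(\omega)$ gives $\hat Y_1(\omega) \ge \hat Y_2(\omega)$ pointwise on the sample space, so $\Proba(\hat Y_1 \ge \hat Y_2) = 1$.

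The conceptual core of the proof -- a single uniform driving both marginals monotonically -- is entirely elementary; the only real obstacle is bookkeeping at discontinuities and plateaus of the CDFs, where the quantile function is not a genuine inverse. This is handled uniformly by the equivalence $F_i^{-1}(u) \le x \iff u \le F_i(x)$, which lets every monotonicity argument pass through $F_i$ and $F_i^{-1}$ without appealing to continuity or strict monotonicity.
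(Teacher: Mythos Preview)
Your proof is correct and is exactly the standard quantile-coupling construction. The paper itself does not prove this theorem; it merely quotes it as Theorem~7.1 of \cite{den2012probability}, so there is no ``paper's own proof'' to compare against. The mid-argument self-correction in Step~3 (where you briefly start with the wrong inclusion before fixing it) should be cleaned up in a final write-up, but the logic as it stands is sound: $F_1 \le F_2$ gives $\{x : F_1(x) \ge u\} \subseteq \{x : F_2(x) \ge u\}$, hence $F_1^{-1}(u) \ge F_2^{-1}(u)$, and the pointwise ordering of $\hat Y_1, \hat Y_2$ follows.
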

\begin{theorem}[Stochastic Dominance and Coupling, Theorem 7.8 of~\cite{den2012probability}]\label{thm:dominancecoupling}
If a random variable $Y_1$ with values in $[n]$ stochastically dominates another random variable $Y_2$ with values in $[n]$, then there exists a coupling $(\hat{Y_1}, \hat{Y_2})$ of $Y_1$ and $Y_2$ such that
$$
\Proba \left(\card{\hat{Y_1}}\ge \card{\hat{Y_2}}\right)=1
$$
\end{theorem}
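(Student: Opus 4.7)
My plan is to reduce this set-valued coupling to the real-valued version of the same result stated just above it, by first coupling only the cardinalities $\card{Y_1}$ and $\card{Y_2}$ and then lifting that scalar coupling to the sets themselves via conditional sampling. All the probabilistic content should be carried by the scalar theorem; what remains is essentially bookkeeping on a product probability space.

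First, I would observe that $\card{Y_1}$ and $\card{Y_2}$ are $\N$-valued random variables and that the hypothesis is precisely that $\card{Y_1}$ stochastically dominates $\card{Y_2}$ in the real-valued sense. Applying the preceding theorem gives a coupling $(A,B)$ with $A \sim \card{Y_1}$, $B \sim \card{Y_2}$, and $A \ge B$ almost surely.

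Next I would lift $(A,B)$ to a coupling $(\hat{Y_1}, \hat{Y_2})$ of the original set-valued variables. Denote by $\nu_i(\cdot \mid a)$ the conditional distribution of $Y_i$ given $\card{Y_i}=a$, well-defined whenever $\Proba(\card{Y_i}=a)>0$. On a suitably enlarged probability space, conditionally on $A=a$, sample $\hat{Y_1}$ from $\nu_1(\cdot \mid a)$, and, independently, conditionally on $B=b$, sample $\hat{Y_2}$ from $\nu_2(\cdot \mid b)$. By construction $\card{\hat{Y_1}} = A \ge B = \card{\hat{Y_2}}$ almost surely, and a one-line application of the law of total probability gives $\Proba(\hat{Y_i}=S) = \Proba(Y_i=S)$ for every $S \in \parts([n])$, so the marginals are preserved.

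The main point to flag is that the two conditional-sampling steps must live on one common probability space also carrying $(A,B)$; in the discrete setting of $\parts([n])$ this is a standard product-space construction, so there is no genuine obstacle beyond bookkeeping. This is presumably why the excerpt cites the result rather than proving it: the real substance is the scalar coupling theorem, and the extension to $\parts([n])$-valued random variables whose stochastic dominance is defined only through cardinalities is a formal corollary.
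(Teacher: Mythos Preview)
Your argument is correct: reducing to the real-valued coupling theorem via $\card{Y_i}$ and then lifting by conditional sampling on the cardinality is exactly the right way to obtain the set-valued statement, and in the discrete setting of $\parts([n])$ the product-space construction you describe poses no difficulty.

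There is nothing to compare against here: the paper does not prove this theorem but simply quotes it as Theorem~7.8 of~\cite{den2012probability}. Your observation in the final paragraph is on point---the content really is the scalar coupling theorem, and the extension to set-valued random variables (with dominance defined through cardinality) is a formal consequence, which is presumably why both the paper and the reference treat it as routine.
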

The next lemma contains a crucial observation: being greedy in each round is an optimal strategy for the adversary.
\begin{restatable}[Distribution Domination]{lemma}{distributiondomination}\label{lem:domination}
    Let $t$ be a round.
    Let $E_1, E_2$ be two sets of edges the adversaries could choose for round $t$. Let $N^{(1)}_t$ (resp.~$I^{(1)}_t$) be the number (resp.~set) of informed nodes after round $t$ if $E_1$ is chosen, and $N^{(2)}_t$  (resp.~$I^{(2)}_t$) if $E_2$ is chosen. Then if $\Proba (N^{(1)}_t \ge m) \ge \Proba (N^{(2)}_t \ge m)$ for every $m \in \N$ (that is, if $N_t^{(1)}$ stochastically dominates $N^{(2)}_t$), then choosing $E_2$ is a better strategy for the adversary than choosing $E_1$. 
\end{restatable}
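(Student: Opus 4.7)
The plan is a forward-in-time coupling argument showing that the broadcast process after round $t$ finishes stochastically sooner when starting from the larger informed set of scenario $(1)$, which is precisely the sense in which $E_2$ is the adversary-preferred choice.

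First, invoking the hypothesis and Theorem~\ref{thm:dominancecoupling}, I would obtain a coupling of $I_t^{(1)}$ and $I_t^{(2)}$ with $\card{I_t^{(1)}} \ge \card{I_t^{(2)}}$ with probability $1$. Because the uniform distribution on $\RT_n$ is permutation-invariant and any adversary strategy may be taken symmetric under relabeling of the non-source nodes (without loss of optimality), I can strengthen this coupling by an additional relabeling to obtain the set inclusion $I_t^{(2)} \subseteq I_t^{(1)}$ with probability $1$.

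Next, I fix any adaptive strategy $\sigma$ used by the adversary from round $t+1$ onward, apply it in both coupled processes, and jointly sample the future random trees so that both processes see the same tree in each round. This is legitimate because the conditional distribution of the round-$s$ tree given $\sigma$'s forced edges is the same in both processes. I would then argue by induction on $s \ge t$ that $I_s^{(2)} \subseteq I_s^{(1)}$ is preserved: for any node uninformed in both processes at round $s$, its parent in the common tree is the same, and if this parent lies in $I_s^{(2)}$ then it also lies in $I_s^{(1)}$; nodes in $I_s^{(1)} \setminus I_s^{(2)}$ stay informed in process $(1)$ regardless. Hence, letting $\tau(I, \sigma)$ denote the Broadcast completion time from state $I$ under strategy $\sigma$, we have $\tau(I_t^{(1)}, \sigma) \le \tau(I_t^{(2)}, \sigma)$ with probability $1$ under this coupling.

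Since this pointwise inequality holds for every continuation strategy $\sigma$, taking the supremum over $\sigma$ on both sides shows that the optimal Broadcast completion time following $E_1$ is stochastically dominated by the one following $E_2$, which is the desired conclusion. The main subtlety I expect to encounter is the strengthening of the round-$t$ coupling from a cardinality dominance into a genuine set inclusion; this rests on the permutation invariance of both the tree distribution and the continuation value, and on being able to restrict attention to symmetric optimal strategies. Once this is in place, the inductive propagation and the final supremum step are conceptually routine.
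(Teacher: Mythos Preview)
Your approach is essentially the paper's: couple via Theorem~\ref{thm:dominancecoupling}, pass from cardinality dominance to a set inclusion by a node bijection, and propagate the inclusion forward round by round under a common tree sequence. The paper makes one point sharper than your write-up: rather than applying ``the same'' adaptive $\sigma$ in both processes (which is ill-posed if $\sigma$ reads the current informed set, since those sets differ and then the forced edges, hence the conditional tree laws, do not agree), it takes the optimal continuation $\sigma_1$ after $E_1$, \emph{defines} the continuation after $E_2$ as $\sigma_2:=\beta^{-1}\circ\sigma_1$, and couples the trees by $\hat T_1^{t'}=\beta(\hat T_2^{t'})$, so that the forced-edge sets match under $\beta$ by construction and the induction $\beta(\hat I_2^{t'})\subseteq \hat I_1^{t'}$ goes through directly.
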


Intuitively, the way to prove this is to build, for any strategy the adversary might use after choosing $E_1$, another strategy that would work better if used after choosing $E_2$. To prove that it is indeed the case, we couple these two strategies to prove that after any round, the number of informed nodes in one strategy stochastically dominates the number of informed nodes in the other one. The full details of the proof can be found in Appendix~\ref{appendix:adversary}.

The next step is to show that the adversary will never force an edge from an informed node to an uninformed one. 
Indeed, intuitively, this means the adversary forces a node to be informed, which is against its interests. 
To do so, we introduce the notions of \emph{non-increasing} and \emph{increasing} trees, and show that $A'$ will never choose an increasing tree.
An illustration is given in Figure~\ref{fig:correction}.

\begin{definition}
    A rooted tree $U$ at round $t$ is said to be \emph{non-increasing in round $t$}  if all edges in $U$ whose source is in $I_{t-1}$ have their target in $I_{t-1}$ as well. Otherwise a tree is \emph{(information)-increasing in round $t$}.
\end{definition}

To show that the worst-case adversary never uses an increasing tree, we introduce the notion of a \emph{correction} of an increasing tree, which will be non-increasing, and show that choosing the correction is a better strategy for the adversary than choosing the increasing tree.

\begin{definition}[Isomorphism]
    We say that a rooted tree $U$ on $n$ nodes is isomorphic to a rooted tree $U'$ on $n$ nodes if there exists a bijection $b$ from $[n]$ to $[n]$ such that for every (directed) edge $(u,v) \in U$, we have that $(b(u), b(v)) \in U'$, and for every (directed) edge $(u, v) \in U'$, we have that $(b^{-1}(u), b^{-1}(v)) \in U$.
\end{definition}

In particular, if $r$ is the root of $U$, then $b(r)$ is the root of $U'$.

\begin{definition}
    A \emph{correction} of a tree $U$ that is increasing at round $t$ is a tree $U'$ over the same nodes as $U$ that (1) is isomorphic to $U$, (2) is non-increasing in round $t$, and (3) whose root is a node $s \in S_{t-1}$ such that $P_U(s) \in I_{t-1}$.
\end{definition}

Intuitively, if a tree is increasing, one can correct it by putting all the informed nodes at the bottom of the tree.

\begin{restatable}{lemma}{lembij}
\label{lem:bij}
    For any increasing tree $U$, there exists a correction $U'$.
\end{restatable}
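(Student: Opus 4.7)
The plan is to build $U'$ by relabelling the nodes of $U$ via a carefully chosen bijection $b:[n]\to[n]$, placing all informed labels in the ``bottom'' part of the tree and all uninformed labels in the ``top''. Since $U$ is increasing in round $t$, there exists an edge $(u_0,v_0)\in U$ with $u_0\in I_{t-1}$ and $v_0\in S_{t-1}$; I set $r':=v_0$, so that $r'\in S_{t-1}$ and $P_U(r')=u_0\in I_{t-1}$. This choice will secure condition~(3) once $r'$ is made the root of $U'$.

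Next, I select a subset $V\subseteq[n]$ of size exactly $|S_{t-1}|$ that induces a connected subtree of $U$ containing $\mathrm{root}(U)$. Such a $V$ can be grown greedily, starting from $\{\mathrm{root}(U)\}$ and repeatedly including some child-in-$U$ of a vertex already in $V$ until the target size is reached; this terminates because $U$ is a tree on $n\ge|S_{t-1}|$ nodes. The crucial property of $V$ is that its complement $[n]\setminus V$ is closed under taking children in $U$: if some $v\notin V$ had a child in $V$, then the root-to-child path in $U$ would exit $V$ at $v$ and re-enter, contradicting the connectedness of $V$ (since $V$ already contains the root).

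I then define $b$ so that it bijects $V$ onto $S_{t-1}$ with $b(\mathrm{root}(U))=r'$ (possible because $\mathrm{root}(U)\in V$, $r'\in S_{t-1}$, and $|V|=|S_{t-1}|$), and bijects $[n]\setminus V$ onto $I_{t-1}$ (possible because both sets have size $|I_{t-1}|$). Setting $U':=\{(b(u),b(v)):(u,v)\in U\}$ rooted at $b(\mathrm{root}(U))=r'$ immediately yields the isomorphism in~(1) and the root requirement in~(3). For condition~(2), any edge $(x,y)\in U'$ with $x\in I_{t-1}$ arises from an edge $(u,v)\in U$ with $u\in[n]\setminus V$; by the closure property, $v\in[n]\setminus V$, so $y=b(v)\in I_{t-1}$, confirming that $U'$ is non-increasing.

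The argument is essentially a construction, so there is no serious obstacle beyond bookkeeping; the one point to watch is that condition~(3) references the parent $P_U(r')$ in the \emph{original} tree $U$ rather than in $U'$, which forces us to pin down the identity of $r'$ using the structure of $U$ \emph{before} permuting the labels, exactly as the first step does.
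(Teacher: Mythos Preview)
Your approach is essentially the same as the paper's: both construct $U'$ as a relabelling of $U$ via a bijection that sends a connected ``top'' portion of $U$ (containing the root) onto the uninformed nodes and the remaining ``bottom'' portion onto the informed nodes, with the root mapped to the designated $r'$. The paper realises the top as the first $\ell$ nodes in BFS order from the root, whereas you grow it greedily; these are equivalent, since any BFS prefix is a connected root-containing set and conversely any such set can be produced by some traversal.

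One bookkeeping slip: in this section $U$ is a component of the adversary's forest and hence a tree on a proper subset $V(U)\subsetneq[n]$ (of size at most $k+1$), so your bijection should be $b:V(U)\to V(U)$, your set $V$ should have size $|S_{t-1}\cap V(U)|$, and $b$ should biject $V$ onto $S_{t-1}\cap V(U)$ and $V(U)\setminus V$ onto $I_{t-1}\cap V(U)$. As written with $[n]$ and $|S_{t-1}|$, the construction fails whenever $|S_{t-1}|>|V(U)|$, and $U'$ need not even have the same vertex set as $U$. With this correction the argument goes through unchanged.
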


The following lemma proves that the worst-case adversary will never choose a set of edges such that one (or more) component is increasing. Indeed, if such components existed, then the adversary would have replaced all of them  with non-increasing ones, as this will lead to no fewer and potentially more rounds. Therefore, we can assume in the following that  all components are non-increasing.

\begin{restatable}{lemma}{correction}
    Let $t$ be a round and $N_{t-1}$ be the number of informed nodes after round $t-1$. Let $E_1, E_2$ be two sets of edges that the adversary could choose for round $t$ such that
    \begin{enumerate}
    \item $E_1$ is a collection of rooted trees such that at least one tree $U$ is information-increasing, and 
    \item $E_2$ is obtained from $E_1$ by replacing $U$ with a correction $U'$ of $U$. 
    \end{enumerate}
    Let $N^{(1)}_t$ be the number of informed nodes after round $t$ if $E_1$ is chosen, and let $N^{(2)}_t$  be that number if $E_2$ is chosen. Then choosing $E_2$ is a better strategy for the adversary than choosing $E_1$.
\end{restatable}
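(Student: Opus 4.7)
The plan is to apply Lemma~\ref{lem:domination}: it suffices to show that $N_t^{(1)}$ stochastically dominates $N_t^{(2)}$. I will establish this by constructing an explicit bijection $\phi:\RT_n(E_1)\to\RT_n(E_2)$ for which the coupling $T_t^{(2)} := \phi(T_t^{(1)})$ satisfies $N_t^{(1)}\ge N_t^{(2)}$ deterministically. Since $\card{E_1}=\card{E_2}$, Theorem~\ref{thm:treecount} gives $\card{\RT_n(E_1)}=\card{\RT_n(E_2)}$, so such a bijection automatically couples the uniform distributions on the two sides.

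Let $V_U$ be the common vertex set of $U$ and $U'$, and let $r$ and $s^*$ be their respective roots; by the definition of a correction, $s^*\in S_{t-1}$ and $P_U(s^*)\in I_{t-1}$. Given $T\in\RT_n(E_1)$, I define $\phi(T)$ by (i) replacing the edges of $U$ inside $V_U$ by the edges of $U'$, (ii) leaving untouched every edge whose endpoints both lie outside $V_U$ and every edge from $V_U$ to the outside, and (iii) handling the interface from outside into $V_U$ as follows: if $r$ has a parent $p_r\notin V_U$ in $T$, then replace the edge $(p_r,r)$ by $(p_r,s^*)$ and keep the same tree-root; otherwise $r$ is the root of $T$, in which case declare $s^*$ to be the root of $\phi(T)$. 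I will verify that $\phi(T)$ is a rooted tree containing $E_2$, and that the symmetric procedure (exchanging the roles of $r$ and $s^*$) provides a well-defined inverse, so $\phi$ is indeed a bijection.

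Once the bijection is in place, I will show $N_t^{(1)}\ge N_t^{(2)}$ pointwise. For any node outside $V_U$, its parent is unaffected by $\phi$, so such nodes contribute identically to $I_t^{(1)}$ and $I_t^{(2)}$. Inside $V_U$, the non-increasing property of $U'$ implies that every uninformed $v\in V_U\setminus\{s^*\}$ has an uninformed parent in $U'$, so the only node in $V_U$ that can become newly informed under $T_t^{(2)}$ is $s^*$, and precisely when $r$ had an informed parent from outside in $T_t^{(1)}$. Under $T_t^{(1)}$, the set $A := \{v\in V_U\cap S_{t-1}\setminus\{r\} : P_U(v)\in I_{t-1}\}$ becomes newly informed; since $s^*\in A$ we have $\card{A}\ge 1$, and additionally $r$ itself may become informed in exactly the case just described. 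A short case split on whether $r\in I_{t-1}$ then gives $\card{I_t^{(1)}\cap V_U}-\card{I_t^{(2)}\cap V_U}\ge \card{A}-1\ge 0$.

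The main obstacle I anticipate is setting up $\phi$ cleanly, in particular handling the two sub-cases for the interface edge and verifying acyclicity and connectedness of $\phi(T)$ in each case (the subtlety being that when $r$ is the root of $T$, one must reassign the whole-tree root to $s^*$ rather than pick any parent for it from outside). Once the bijection is established and checked to be well-defined on both sides, the per-node comparison is mechanical, and Lemma~\ref{lem:domination} then closes the argument.
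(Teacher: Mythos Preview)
Your proposal is correct and takes essentially the same approach as the paper. Both arguments construct the same bijection between $\{T\in\RT_n:E_1\subseteq T\}$ and $\{T\in\RT_n:E_2\subseteq T\}$ (yours is the inverse of the paper's $\pi$): swap $U$ for $U'$ on $V_U$, redirect the unique interface edge between the two component roots $r\leftrightarrow s^*$, and leave all other edges unchanged; the paper then shows the set inclusion $\{s\in S_{t-1}:P_T(s)\in I_{t-1}\}\subseteq\{s\in S_{t-1}:P_{\pi(T)}(s)\in I_{t-1}\}$ directly, while you obtain the cardinality inequality by a short case split on whether $r\in I_{t-1}$, and both routes feed into Lemma~\ref{lem:domination}.
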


\begin{figure}
    \centering
    \includegraphics[width=0.15\linewidth]{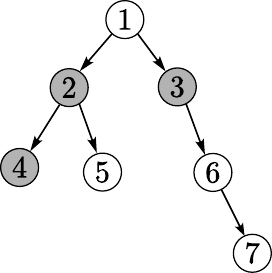}\hspace{0.1\linewidth}\includegraphics[width=0.15\linewidth]{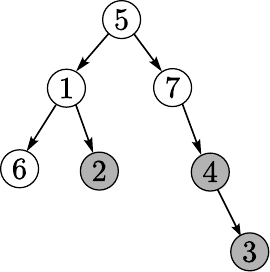}
    \caption{Shaded nodes are informed nodes. Left: A tree $U$ that is information increasing. Right: A tree $U'$ that is a correction of~$U$.  
    }
    \label{fig:correction}
\end{figure}

The next step is to show that if the adversary chooses a forest, all edges will be used in one component. For that, we introduce the notion of \emph{merging trees}, and show that if the adversary chooses a forest with 2 or more non-trivial components, then merging two of those non-trivial components will yield a better strategy for the adversary. 
We start by computing the probability that a set of roots of the forest given by the adversary get informed:

\begin{figure}
    \centering
    \includegraphics[width=0.3\linewidth]{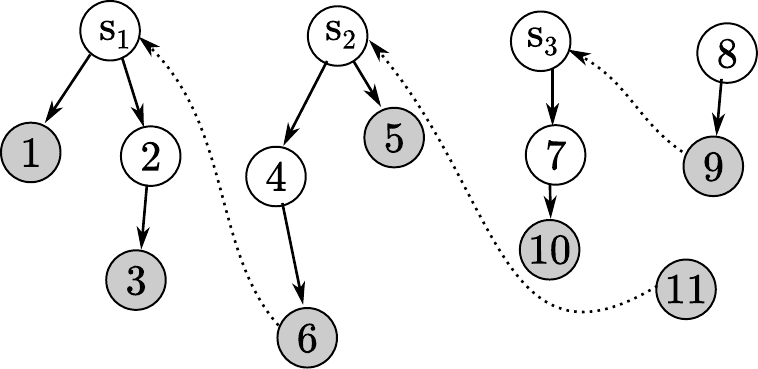}\hspace{0.1\linewidth}\includegraphics[width=0.3\linewidth]{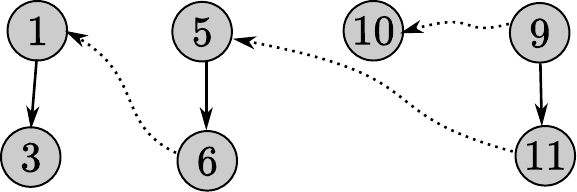}
    \caption{Illustration for the proof of Lemma~\ref{lem:iinformed}, Case 1. Shaded nodes are informed nodes. Left: Solid lines represent $E$. Dotted lines are a suitable choice of $a$. Right: Solid lines represent $F$ associated to $E$. Dotted lines represent $b(a)$. Any tree rooted at $9$ on the right yields a suitable choice of $a$ on the left. 
    }
    \label{fig:whyrootedtrees}
\end{figure}

\begin{lemma}\label{lem:iinformed}
    Let $t$ be a round, let $E$ be the set of $k$ edges forming a directed rooted forest over $[n]$ which the adversary chooses in round $t$ such that each component of $E$ is non-increasing, and let $s_1, \dots, s_x$ be uninformed %
    nodes that are roots of their component (which might have size only 1). Note that $\{s_1, \dots, s_x\}$ needs not be the set of the roots of all components, simply a collection of some of them.  Let $\eta_1, \dots, \eta_x$ be the number of informed nodes in the component of $s_1, \dots, s_x$ respectively, and let $\eta$ be the number of informed nodes outside the components of $s_1, \dots, s_x$. Then we have that:

    $$
    \Proba \left( \inter_{j \in [x]} (P_{t}(s_j)\in I_{t-1})\middle| \F_{t-1}\right)=\frac{\eta(\eta+\sum_{j\in [x]}\eta_j)^{x-1}}{n^x}=\frac{\eta(N_{t-1})^{x-1}}{n^x}
    $$
\end{lemma}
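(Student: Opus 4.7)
The plan is to recast this conditional probability as a weighted count of directed rooted forests, using Theorem~\ref{thm:treecount} for the uniform-measure computation and a weighted Cayley formula to carry out the combinatorial sum. First, I expand the event by conditioning on specific parents and observe that, since $T_t$ is uniform over rooted trees containing $E$,
\[
\Proba\left(\Inter_{j\in [x]} (P_t(s_j) \in I_{t-1}) \middle| \F_{t-1}\right) = \sum_{(v_1, \dots, v_x) \in I_{t-1}^x} \frac{\card{\{T \in \RT_n : T \supseteq E'\}}}{\card{\{T \in \RT_n : T \supseteq E\}}},
\]
where $E' := E \cup \{(v_j, s_j) : j \in [x]\}$. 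Each $s_j$ has in-degree $0$ in $E$ and the $s_j$'s are pairwise distinct, so the $x$ new edges are distinct from each other and from $E$, giving $\card{E'} = k + x$. Applying Theorem~\ref{thm:treecount} to both numerator and denominator, each summand equals $1/n^x$ when $E'$ is itself a directed rooted forest, and $0$ otherwise.

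The second step is to characterize when $E'$ is a valid forest. Adding $(v_j, s_j)$ cannot produce an in-degree conflict, so the only obstruction is a directed cycle. Contracting each component of $E$ to a super-node, the added edges form a super-graph where each target super-node $T_j$ (the component of $s_j$) receives exactly one in-edge, coming from the super-node of $v_j$, while all other super-nodes receive none. Thus $E'$ is a forest iff this super-graph is acyclic. Since passive super-nodes (those containing no $s_\ell$) only appear as sources, they can be merged into one conceptual super-node $\Omega$ of weight $\eta$ (its informed-node count) without altering the acyclicity structure; assigning weight $\eta_k$ to $T_k$, the count of valid tuples becomes $\sum_{g} \prod_{j\in[x]} w(g(j))$, summed over parent functions $g : \{T_1, \dots, T_x\} \to \{\Omega, T_1, \dots, T_x\}$ whose associated graph is a rooted tree with root $\Omega$, where $w(\Omega) = \eta$ and $w(T_k) = \eta_k$.

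Finally, I invoke the weighted Cayley formula, which states that $\sum \prod_{v \neq r} w(\mathrm{parent}(v))$ over rooted trees on $[n]$ with fixed root $r$ equals $w_r \cdot (w_1 + \cdots + w_n)^{n-2}$; this is provable by a weighted Prüfer-sequence bijection. Applied here with $n = x+1$ super-nodes and root $\Omega$, the count becomes $\eta \cdot (\eta + \sum_k \eta_k)^{x-1} = \eta \cdot N_{t-1}^{x-1}$, and multiplying by $1/n^x$ yields the claimed formula. The main obstacle I expect is the second step: cleanly verifying that the only obstruction to $E'$ being a forest is a cycle in the super-graph, and that collapsing all passive components into a single weighted $\Omega$ genuinely preserves the weighted count so that the weighted Cayley identity applies to $\{\Omega, T_1, \dots, T_x\}$.
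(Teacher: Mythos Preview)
Your proof is correct and takes a genuinely different route from the paper's. Both arguments begin identically: expand over parent tuples $(v_1,\dots,v_x)\in I_{t-1}^x$, observe via Theorem~\ref{thm:treecount} that each admissible tuple contributes $n^{-x}$, and reduce the problem to counting admissible tuples. From there the two diverge.

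The paper builds an explicit bijection between admissible tuples and rooted trees on an auxiliary forest of $x+1$ directed line graphs (one line of length $\eta_j$ for each $j$, plus one of length $\eta$), and then applies Theorem~\ref{thm:treecountroot}, the paper's new result counting rooted trees with a prescribed root that extend a given forest. This is where the paper's Section~\ref{sec:crt} machinery (the group-action proof of Theorem~\ref{thm:treecountroot}) is invoked. The paper then handles the degenerate cases $\eta=0$ and $\eta_j=0$ separately.

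You instead contract each component to a weighted super-node, collapse all passive components into a single node $\Omega$ of weight $\eta$, and recognise the resulting count as the weighted Cayley identity $\sum_{T\text{ rooted at }r}\prod_{v\neq r}w_{p(v)}=w_r\bigl(\sum_i w_i\bigr)^{n-2}$ on the $(x+1)$-vertex set $\{\Omega,T_1,\dots,T_x\}$. This is cleaner in two ways: it avoids the paper's three-case split (the weighted formula automatically gives $0$ when $\eta=0$ and correctly absorbs any $\eta_j=0$), and it shows that Lemma~\ref{lem:iinformed} does not actually depend on Theorem~\ref{thm:treecountroot}, only on the classical weighted Cayley formula. The paper's route, on the other hand, makes explicit that Theorem~\ref{thm:treecountroot} is doing the same work as weighted Cayley after contraction, which ties the lemma to the paper's combinatorial contribution.

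Your ``main obstacle'' is not an obstacle: the verification that $E'$ is a directed rooted forest iff the super-graph is acyclic is straightforward (in-degrees stay at most $1$, and any directed cycle in $E'$ must thread through roots $s_j$ via the new edges, projecting to a super-cycle; conversely a super-cycle lifts since each $s_j$ reaches every node of its component), and the merge of passive components is sound because they have in-degree $0$ in the super-graph and hence cannot participate in any cycle.
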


\begin{proof}
We have that:
$$
         \Proba \left( \inter_{j \in [x]} (P_{t}(s_j)\in I_{t-1})\middle| \F_{t-1}\right)= \sum_{a\in (I_{t-1})^{x}}\Proba \left( \inter_{j \in [x]} (P_{t}(s_j)= a_j)\middle| \F_{t-1}\right)
$$

However, many terms of that sum are equal to $0$. Indeed, for example, if $a_1$ is one of the $\eta_1$ informed nodes in the component of $s_1$, then $\Proba (P_t(s_1)=a_1)=0$. More generally, if the choice of $a$ is such that $E\cup \Union_{j \in [i]} (a_j, s_j)$ contains an (undirected cycle), in other words, is incompatible with a rooted tree, then $\Proba (P_t(s_1)=a_1)=0$. If, on the other hand, the choice of $a$ is compatible with a rooted tree, then, applying Theorem~\ref{thm:treecount}, we have:

$$
\Proba \left( \inter_{j \in [x]} (P_{t}(s_j)= a_j)\middle| \F_{t-1}\right)=\frac{\card{T \in \RT_n: (E\Union_{j \in [x]} (a_j, s_j)) \subset T }}{\card{T \in \RT_n: E \subset T }}=\frac{n^{n-1-\card E-x}}{{n^{n-1-\card E}}}=n^{-x}
$$

We now have to count how many choices of $a$ are compatible with a rooted tree. %
Let us call these the \emph{suitable} choices of $a$. To do so we create a bijection between the set of all suitable choices of $a$ and a simple set of forests consisting only of trees that are line graphs. This basically says that for counting the number of suitable choices, we can ignore the internal structure of each tree.

\emph{Case 1:} A figure for that case is given in Figure~\ref{fig:whyrootedtrees}. Let us first assume that none of the $\eta_j$ nor $\eta$ is equal to $0$. Let $\alpha$ denote the set of all such values of $a$, and define $\beta$ as follows: create a forest $F$ with $x+1$ (directed) line graphs, each line having respectively $\eta_1, \dots, \eta_x, \eta$ nodes. Then $\beta$ is the set of all rooted trees that are compatible with $F$, and whose root is the root of the last tree of $F$. 

To determine $|\alpha|$, we show that there is a bijection between $\alpha$ and $\beta$ and determine $|\beta|$. To create the bijection first take an arbitrary but fixed bijection $b$ that maps every informed node from $I_{t-1}$ to a node from $F$, such that an informed node from the component of $s_j$ is mapped to a node of the $j$-th line of $F$. Recall that each $a \in \alpha$ assigned a parent $a_j$ to each node $s_j$ with $1 \le j \le x$. We can map a choice of $a\in \alpha$ to a tree $T\in \beta$ by setting the parent  in $T$ of the root of the $j$-th line to be $b(a_j)$ for every $j$. Note that this uniquely identifies a tree of $\beta$. Conversely, to find a choice $a\in \alpha$ from a tree $T\in B$, set $a_j=b^{-1}(p_j)$ where $p_j$ is the parent of the root of the $j$-th line of $F$ in $T$. Now note that $\beta$ is the set of all rooted trees that are compatible with $F$, and whose root is the root of the last tree of $F$. By Theorem~\ref{thm:treecountroot}, $\card \beta = \eta (\eta+\sum_{j\in [x]}\eta_j)^{x-1}$, which concludes the proof for this case.%

\emph{Case 2:} If $\eta = 0$, it is easy to see that no choice of $a$ is compatible with a rooted tree, as $a$ assigns a parent to each root $s_j$ for $1 \le j \le x$.

\emph{Case 3:} If there exists some values of $j$ such that $\eta_j=0$, then assume wlog that $\eta_1=\dots=\eta_\ell=0$, and $\eta_j >0$ for every $j > \ell$. By the same arguments as in Case 1, there will be $\eta (\eta+\sum_{j\in [x]}\eta_j)^{x-\ell-1}$ suitable choices for $(a_{\ell+1}, \dots, a_x)$. Once this choice is made, for every $1\le j\le \ell$, $a_j$ can take any value in $I_{t-1}$, where $\card{I_{t-1}}=\eta+\sum_{j\in [x]}\eta_j$. Thus, the total number of choices for $a$ is $\eta (\eta+\sum_{j\in [x]}\eta_j)^{x-1}$.
\end{proof}
The following merge operation combines two trees such as to make a uninformed root the root of the merged tree, if at least one of the roots is uninformed.

\begin{definition}
    We say that we \emph{merge} two non-trivial trees $U$ and $U'$ with respective roots $r$ and $r'$ in round $t$ when we apply the following operation:
    \begin{itemize}
        \item If $r \in I_{t-1}$, then for every $p \in U$ with $(r,p)\in U$, replace edge $(r, p)$ with the edge $(r',p)$.
        \item If $r \notin I_{t-1}$, then for every $p \in U'$ with $(r',p)\in U'$, replace edge $(r', p)$ with the edge $(r,p)$. %
    \end{itemize}
\end{definition}

\begin{figure}
    \centering
    \includegraphics[width=0.5 \linewidth]{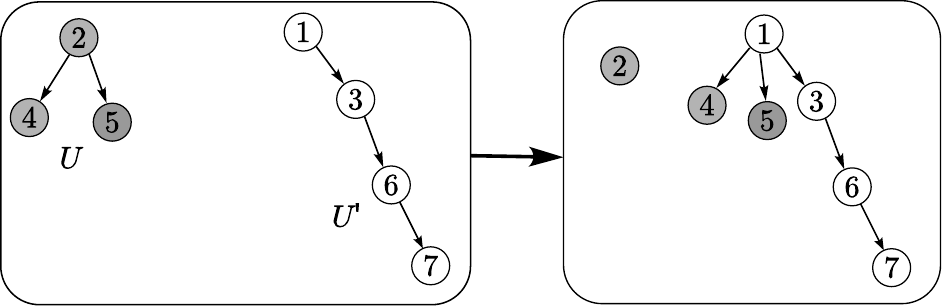}
    \includegraphics[width=0.5 \linewidth]{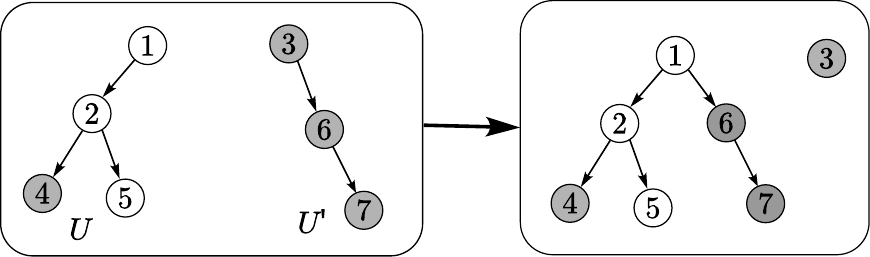}
    \caption{Merging examples}
    \label{fig:merging}
\end{figure}

\begin{restatable}{lemma}{merging}
    Let $t$ be a round and $N_{t-1}$ be the number of informed nodes after round $t-1$. Let $E_1, E_2$ be two sets of edges that the adversary could choose for round $t$, as follows: let $E_1$ be a collection of rooted trees such that every tree is non-increasing, with at least two non-trivial components $U$ with root $r$ and $U'$ with root $r'$, and let $E_2$ be obtained from $E_1$ by merging $U$ and $U'$. %
    Let $N^{(1)}_t$ %
    be the number of informed nodes after round $t$ if $E_1$ is chosen, and $N^{(2)}_t$ if $E_2$ is chosen. Then choosing $E_2$ is a better strategy for the adversary than choosing $E_1$.
\end{restatable}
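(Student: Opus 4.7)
My plan is to invoke Lemma~\ref{lem:domination}, which reduces the claim to showing that $N_t^{(1)}$ stochastically dominates $N_t^{(2)}$. I first observe that because every component of $E$ is non-increasing, every uninformed node that has an in-edge in $E$ keeps an uninformed parent in $T_t$ and therefore stays uninformed this round; consequently, the only nodes that can change status are the uninformed super-roots, i.e.\ the uninformed nodes with no in-edge in $E$, which I denote by $\bar S_{t-1}$. Thus $N_t - N_{t-1} = |Z|$ for some $Z \subseteq \bar S_{t-1}$, and it suffices to compare the distributions of $|Z^{(1)}|$ and $|Z^{(2)}|$.

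The central computation will be to substitute the formula $\Proba(Q \subseteq Z) = (N_{t-1} - \sum_{s \in Q} \eta_s)\, N_{t-1}^{|Q|-1}/n^{|Q|}$ from Lemma~\ref{lem:iinformed} into the identity $\expect[\lambda^{|Z|}] = \sum_{Q \subseteq \bar S_{t-1}} (\lambda - 1)^{|Q|}\Proba(Q \subseteq Z)$ and to recognize, after pulling the sum apart into its $N_{t-1}$ and $-\sum \eta_s$ pieces, a pair of binomial expansions that collapse into
\[
\expect\!\left[\lambda^{|Z|}\right] \;=\; \left(1 + (\lambda - 1)\tfrac{N_{t-1}}{n}\right)^{|\bar S_{t-1}|-1}\!\left(1 + (\lambda - 1)\tfrac{N_{t-1}-\mu}{n}\right),
\]
where $\mu := \sum_{s \in \bar S_{t-1}} \eta_s$ is the total number of informed nodes lying in uninformed-rooted components. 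This is the PGF of an independent sum $B + C$ with $B \sim \mathrm{Bin}(|\bar S_{t-1}|-1,\, N_{t-1}/n)$ and $C \sim \mathrm{Bern}((N_{t-1}-\mu)/n)$, so the full distribution of $|Z|$ is controlled by the triple $(|\bar S_{t-1}|, N_{t-1}, \mu)$.

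I will then case-split on whether $r$ and $r'$ are informed. If both are informed, merging merely reshuffles already-informed components and leaves $\bar S_{t-1}$ and every $\eta_s$ (and hence $\mu$) untouched. If both are uninformed, the merge folds $V(U')\setminus\{r'\}$ into $r$'s component and isolates $r'$; the individual values $\eta_r, \eta_{r'}$ trade but their sum, and hence $\mu$, is preserved. In the mixed case, say $r$ informed and $r'$ uninformed, non-increasingness forces all of $V(U)$ to be informed, so the merged tree rooted at $r'$ absorbs $f_U - 1 \ge 1$ additional informed nodes; this strictly increases $\mu$ while leaving $|\bar S_{t-1}|$ and $N_{t-1}$ unchanged, so the Bernoulli parameter of $C$ strictly decreases. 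In every case $|Z^{(1)}|$ stochastically dominates $|Z^{(2)}|$, hence $N_t^{(1)}$ stochastically dominates $N_t^{(2)}$, and Lemma~\ref{lem:domination} concludes.

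The main obstacle I foresee is extracting the clean factored form of the PGF: it would be tempting to try to deduce stochastic dominance of $|Z|$ directly from the pointwise bounds $\Proba(Q \subseteq Z^{(1)}) \ge \Proba(Q \subseteq Z^{(2)})$, but that implication is false in general, so the factorization is what converts the local $\mu$-bookkeeping into a true distributional ordering. A secondary point of care is to verify that the merged forest remains non-increasing so that Lemma~\ref{lem:iinformed} applies to $E_2$ as well, and to keep track of the newly isolated vertex ($r$ or $r'$, depending on the case) as a super-root of a trivial component.
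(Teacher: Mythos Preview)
Your proof is correct and takes a genuinely different route from the paper's.

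The paper compares the joint probabilities $\Proba\bigl(\bigcap_{s\in S}P_t(s)\in I_{t-1}\bigr)$ for $E_1$ versus $E_2$ directly, case-splitting on whether $r,r'\in I_{t-1}$ and on whether $r,r'\in S$; it then invokes two bespoke probability lemmas from the appendix (Lemma~\ref{prob:hard} for the both-uninformed case, where the sums $\sum_{|S|=\ell}\Proba(\cdot)$ match exactly, and Lemma~\ref{prob:easydom} for the mixed case, where the joint probabilities differ by a constant factor $\alpha$ whenever $r'\in S$) to turn these pointwise comparisons into distributional ones. Your approach instead plugs the formula of Lemma~\ref{lem:iinformed} into the PGF identity $\expect[\lambda^{|Z|}]=\sum_Q(\lambda-1)^{|Q|}\Proba(Q\subseteq Z)$ and collapses it to the factored form
\[
\bigl(1+(\lambda-1)\tfrac{N_{t-1}}{n}\bigr)^{|\bar S_{t-1}|-1}\bigl(1+(\lambda-1)\tfrac{N_{t-1}-\mu}{n}\bigr),
\]
thereby identifying the full law of $N_t-N_{t-1}$ as $\mathrm{Bin}(|\bar S_{t-1}|-1,\,N_{t-1}/n)+\mathrm{Bern}((N_{t-1}-\mu)/n)$ for an \emph{arbitrary} non-increasing forest. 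The merging lemma then reduces to the elementary observation that merging preserves $|\bar S_{t-1}|$ and weakly increases $\mu$. In effect you have proved a multi-component generalization of Lemma~\ref{lem:binomialadv} \emph{first} and obtained the merging lemma as a corollary, whereas the paper proceeds in the opposite order: it proves the merging lemma by case analysis, uses it to reduce to a single component, and only then states Lemma~\ref{lem:binomialadv}. Your route is cleaner and sidesteps Lemmata~\ref{prob:hard} and~\ref{prob:easydom} entirely; the paper's route is more elementary in that it avoids generating functions but pays for it with a heavier case analysis and the two auxiliary lemmas.
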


This lemma implies that the adversary will never choose a set of edges with more than one non-trivial component, i.e., the adversary will choose \emph{one} tree with $k+1$ nodes.
We already showed that the adversary will only choose non-increasing components. 
Therefore, we  are left with analyzing the case where the adversary  chooses one non-trivial non-increasing tree with $k+1$ nodes.

\begin{restatable}{lemma}{binomial}\label{lem:binomialadv}
    Let $t$ be a round and $N_{t}$ be the number of informed nodes after round $t$. Let $U$ be a non-increasing tree over $k+1$ nodes in round $t+1$. Let $\sigma$ be the number of uninformed nodes in $U$ and $\eta$ the number of informed nodes in $U$. Then the distribution of $N_{t+1}-N_t$ %
    equals the  sum of  of $n-N_t-\sigma$ independent Bernoulli random variables of parameter $ \frac {N_t} n$ plus one Bernoulli random variable of parameter $\frac{N_t-\eta} n$.
\end{restatable}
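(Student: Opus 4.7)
The plan is to split the uninformed nodes according to their role in the adversary's fixed tree $U$, argue that uninformed non-root nodes of $U$ cannot change status, and extract mutual independence for the remaining uninformed nodes via Lemma~\ref{lem:iinformed}.

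Since $T_{t+1}$ must contain $U$, every non-root $v$ of $U$ satisfies $P_{t+1}(v) = P_U(v)$. When $v \in S_t$, non-increasingness of $U$ forbids $P_U(v) \in I_t$, so $P_U(v) \in S_t$ and $v$ stays uninformed almost surely. This freezes $\sigma-1$ of the $\sigma$ uninformed-in-$U$ nodes, under the regime the lemma targets, namely $r \in S_t$ (so $\sigma \ge 1$); the remaining case $r \in I_t$ forces every descendant of $r$ in $U$ to also lie in $I_t$ by non-increasingness, so $\sigma = 0$ and the statement reduces to the straightforward $n - N_t$ independent Bernoulli($N_t/n$) outside-of-$U$ terms.

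Next, for the remaining uninformed candidates, namely $r$ and the $n-N_t-\sigma$ uninformed nodes outside $U$, I would view the adversary's forest as $U$ together with $n-k-1$ singleton components and apply Lemma~\ref{lem:iinformed} to an arbitrary subset $J$ of these uninformed component-roots. When $r \notin J$, every component represented in $J$ is a singleton, so informed-outside equals $N_t$ and the lemma gives
\[
\Proba\!\left[\bigcap_{s \in J}(P_{t+1}(s) \in I_t) \,\middle|\, \F_t\right] = \left(\frac{N_t}{n}\right)^{|J|}.
\]
When $r \in J$, the component of $r$ is $U$ with $\eta$ informed nodes, so informed-outside equals $N_t-\eta$, and Lemma~\ref{lem:iinformed} yields $\tfrac{N_t-\eta}{n}\bigl(\tfrac{N_t}{n}\bigr)^{|J|-1}$. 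In both cases the joint probability factors as the product of per-node marginals $N_t/n$ for outside nodes and $(N_t-\eta)/n$ for $r$.

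Finally, as in the proof of Lemma~\ref{lem:binom}, the product form of all joint probabilities yields mutual independence (via Lemma~\ref{lem:mutually}) of the events $\{P_{t+1}(s) \in I_t\}_{s \in \{r\} \cup (S_t\setminus U)}$ with the advertised Bernoulli parameters. Writing $N_{t+1}-N_t = \sum_{s \in S_t}\1[P_{t+1}(s) \in I_t]$ and zeroing the $\sigma-1$ deterministic contributions identifies this difference with the stated sum of $n-N_t-\sigma$ independent Bernoullis of parameter $N_t/n$ plus one Bernoulli of parameter $(N_t-\eta)/n$. The main obstacle is the application of Lemma~\ref{lem:iinformed} to a forest that genuinely mixes the nontrivial component $U$ with singleton components and verifying the factorization uniformly across the $r \in J$ and $r \notin J$ branches; once this bookkeeping is in hand, the mutual independence and Bernoulli decomposition follow the same route used in Section~\ref{sec:broadcast}.
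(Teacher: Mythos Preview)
Your proof is correct and follows essentially the same architecture as the paper's: freeze the $\sigma-1$ non-root uninformed nodes of $U$, then compute the joint probabilities $\Proba\bigl[\bigcap_{s\in J}(P_{t+1}(s)\in I_t)\bigr]$ over subsets $J$ of $\{r\}\cup(S_t\setminus U)$, check they factor as $\bigl(\tfrac{N_t}{n}\bigr)^{|J|}$ or $\tfrac{N_t-\eta}{n}\bigl(\tfrac{N_t}{n}\bigr)^{|J|-1}$ according to whether $r\in J$, and invoke mutual independence.

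The one genuine difference is the tool used for that joint probability. You route the computation through Lemma~\ref{lem:iinformed}, which in turn rests on the paper's new rooted-tree count (Theorem~\ref{thm:treecountroot}). The paper instead redoes the count from scratch using only Pitman's formula (Theorem~\ref{thm:treecount}): since every $s\in J\setminus\{r\}$ is an isolated singleton, assigning it a parent $i_{a_s}\in I_t$ always yields a forest, and the number of compatible trees is $n^{n-1-k-|J|}$ directly. So the paper's argument is slightly more elementary here (it does not need the rooted count), while yours reuses existing machinery and avoids repeating the enumeration; both land on the same factorization. Your remark on the degenerate case $r\in I_t$ (hence $\sigma=0$) is correct, though note that both the lemma statement and the paper's own proof tacitly assume $\sigma\ge 1$.
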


\begin{corollary}\label{cor:eta}
    Let $t$ be a round and $N_{t}$ be the number of informed nodes after round $t$. Let $U$  be a non-increasing tree over $k+1$ nodes in round $t+1$ and let $\eta$ be its number of informed nodes in $U$.
    The optimal strategy for the adversary is to minimize $\eta$ in every round.
\end{corollary}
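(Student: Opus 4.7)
My plan is to deduce the corollary from Lemma~\ref{lem:binomialadv} together with Lemma~\ref{lem:domination} by showing that a smaller $\eta$ yields a stochastically smaller $N_{t+1}$. Write $N_{t+1}^{(\eta)}$ for the conditional distribution of $N_{t+1}$ given $\F_t$ when the adversary selects a non-increasing tree $U$ on $k+1$ nodes with $\eta$ informed nodes (so $\sigma = k + 1 - \eta$). By Lemma~\ref{lem:binomialadv}, $N_{t+1}^{(\eta)} - N_t$ is the sum of $n - N_t - k - 1 + \eta$ i.i.d.\ $\text{Bernoulli}(N_t/n)$ variables and one independent $\text{Bernoulli}((N_t-\eta)/n)$. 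It suffices to prove, for every $\eta \ge 1$, that $N_{t+1}^{(\eta)}$ stochastically dominates $N_{t+1}^{(\eta-1)}$; iterating this single-step comparison then gives that the smallest available $\eta$ is optimal, which is precisely the content of the corollary.

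To prove the single-step dominance, I would couple the $n - N_t - k - 2 + \eta$ $\text{Bernoulli}(N_t/n)$ summands that are common to both distributions, reducing the question to comparing $A + C$ with $D$, where $A \sim \text{Bernoulli}(N_t/n)$, $C \sim \text{Bernoulli}((N_t - \eta)/n)$ are independent and $D \sim \text{Bernoulli}((N_t - \eta + 1)/n)$. Since $A + C \in \{0, 1, 2\}$ and $D \in \{0, 1\}$, the only nontrivial inequality to check is $\Proba(A + C \ge 1) \ge \Proba(D \ge 1)$. Expanding and clearing denominators, this is equivalent to $N_t (n - N_t + \eta) \ge n$, which I would rewrite in the factored form
\[
(N_t - 1)(n - N_t) + N_t(\eta - 1) \ge 0.
\]
Both summands are nonnegative whenever $1 \le N_t \le n$ and $\eta \ge 1$, so the inequality holds. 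Since convolving with an independent random variable preserves stochastic dominance, the dominance of $A + C$ over $D$ lifts to $N_{t+1}^{(\eta)}$ dominating $N_{t+1}^{(\eta - 1)}$.

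The final step is to invoke Lemma~\ref{lem:domination}: stochastic dominance of $N_{t+1}^{(\eta)}$ over $N_{t+1}^{(\eta-1)}$ means that the strategy with $\eta - 1$ informed nodes in $U$ is at least as good for the adversary as the one with $\eta$, so the adversary prefers to minimize $\eta$ in every round. The main obstacle is the algebraic verification in the middle paragraph; once one recognises that the problem collapses to a two-Bernoulli-versus-one-Bernoulli inequality and spots the clean factorisation into two manifestly nonnegative pieces, the rest is routine bookkeeping with the coupling and dominance machinery developed earlier.
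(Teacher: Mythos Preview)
Your argument is correct. Both your proof and the paper's arrive at the same conclusion via Lemma~\ref{lem:binomialadv} and the Distribution Domination Lemma, but the routes differ in how the stochastic comparison is established.

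The paper does not compare adjacent values of $\eta$. Instead, given $\eta_1 > \eta_2$ (so $\sigma_1 < \sigma_2$), it bounds $N_{t+1}^{(1)} - N_t$ from below by dropping the extra Bernoulli of parameter $(N_t-\eta_1)/n$, leaving a sum of $n-N_t-\sigma_1$ i.i.d.\ Bernoulli$(N_t/n)$, and bounds $N_{t+1}^{(2)} - N_t$ from above by replacing its extra Bernoulli of parameter $(N_t-\eta_2)/n$ with one of parameter $N_t/n$, giving at most $n-N_t-\sigma_2+1$ i.i.d.\ Bernoulli$(N_t/n)$. Since $\sigma_1 \le \sigma_2 - 1$, the lower bound has at least as many summands as the upper bound, so dominance is immediate with no algebra needed. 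Your approach keeps the exact distributions, couples the common summands, and reduces to the two-versus-one Bernoulli inequality, which you then verify by the neat factorisation $(N_t-1)(n-N_t)+N_t(\eta-1)\ge 0$. Your method is more explicit and yields a sharp one-step comparison, at the cost of the algebraic check; the paper's sandwich argument is coarser but sidesteps any computation entirely.
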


\begin{proof}
    Recall $\sigma$ is the number of uninformed nodes in $U$.
    Note that we always have $\sigma + \eta = k+1$. 
    Let us consider two non-increasing trees $U$ and $U'$ over $k+1$ nodes. Let $\eta_1$ (resp.~$\sigma_1$) be the number of informed (resp.~uninformed) nodes in $U$, and $\eta_2$ (resp.~$\sigma_2$) be the number of informed (resp.~uninformed) nodes in $U'$. Assume wlog that $\eta_1 >\eta_2 \ge 0$. Then $\sigma_1 < \sigma_2$. Let $N^{(1)}_{t+1}$ and $N^{(2)}_{t+1}$ be the number of informed nodes after round $t+1$ if the adversary chooses respectively tree $U$ or $U'$. The distribution of $N^{(1)}_{t+1}-N_t$ is %
    the sum of at least $n-N_t-\sigma_1$ independent Bernoulli variables of parameter $\frac{N_t}{n}$, while $N^{(2)}_{t+1}-N_t$ is %
    the sum of at most $n-N_t-\sigma_2+1$ independent Bernoulli variables of parameter at most $\frac{N_t}{n}$. The first distribution clearly dominates the second, and by the Distribution Domination Lemma (Lemma~\ref{lem:domination}), the result holds.
\end{proof}

This shows that the optimal strategy for the adversary is always to choose the number $\sigma$ of uninformed nodes in the tree  $U$ chosen by the adversary equal to $k+1$, unless the number of informed nodes $N_{t-1}$ is so large that $\sigma$
is smaller than $k+1$, in which case $\sigma=n-N_{t-1}$, which is the number of uninformed nodes. As the number  $N_t$ of informed nodes never decreases, this leads to the following partitioning of the rounds into  two phases:  one phase  which contains all rounds $t$ where the number of uninformed nodes is at least $k+1$, i.e.,~$n-N_{t-1} \ge k+1$%
, in which case $\sigma=k+1$, and another phase which contains all rounds $t$ with $n-N_{t-1} < k+1$, in which case $\sigma=n-N_{t-1}$. We will show that the first phase takes $O(\log n)$ rounds, while the second one takes $O(k+\log n)$ rounds.

\subsection{Phase 1}
In phase 1, we have that $N_{t+1}-N_t$ follows a binomial distribution of parameters $(n-k-N_t, \frac {N_t} n)$. This is exactly the same evolution as the case detailed in Section~\ref{sec:Byzantine}, or more precisely the result of Lemma~\ref{sec:5.lem:binom}. We hence get the same result as in Theorem~\ref{sec.5.thm:Broadcast}:

\begin{restatable}{lemma}{phaseone}\label{lem:phase1}
    If $k \le \frac 2 3 n -1$ then, for any $c \ge 1$,
    Phase 1 ends within $32 \cdot c \cdot \tau$ rounds with probability $p \ge 1-n^{-c}$, where $\tau=\frac {\log n}{ \log \left( 1 + \frac{n-k}{2n}\right)}$.
\end{restatable}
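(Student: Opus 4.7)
The plan is to transplant the analysis of Section~\ref{sec:broadcast} (and Theorem~\ref{sec.5.thm:Broadcast}), replacing $n$ by $n-k$ wherever the \emph{target} population appears. By Corollary~\ref{cor:eta} the adversary's Phase~1 optimum is $\sigma=k+1$, $\eta=0$, so by Lemma~\ref{lem:binomialadv} the increment $N_{t+1}-N_t$ is Binomial with parameters $(n-k-N_t,\,N_t/n)$, in perfect analogy with Lemma~\ref{sec:5.lem:binom}. Hence the skeleton of the Section~\ref{sec:broadcast} argument goes through; what follows is the short recalibration that brings out the factor $\tau$.

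I first introduce a lower-bound process $X_t$, defined by $X_0=1$, $X_{t+1}=X_t+(n-k-X_t)\,X_t/n$ whenever $N_{t+1}-N_t$ meets its conditional expectation, and $X_{t+1}=X_t$ otherwise. A coupling argument identical to the one behind Lemma~\ref{lem:valueofX} yields $X_t\le N_t$ and $X_t\le n-k$ throughout Phase~1, so Phase~1 ends as soon as $X_t>n-k-1$. Writing $u_s$ for the $s$-th round in which $X$ strictly increases, the deterministic recurrence $X_{u_{s+1}}=X_{u_s}\bigl(1+(n-k-X_{u_s})/n\bigr)$ splits into two regimes: while $X_{u_s}\le (n-k)/2$, each increase multiplies $X_{u_s}$ by a factor of at least $1+(n-k)/(2n)$, so at most $\tau$ such steps suffice to push $X$ above $(n-k)/2$; once $X_{u_s}>(n-k)/2$, the gap $D_s:=n-k-X_{u_s}$ obeys $D_{s+1}=D_s(k+D_s)/n$, whose ratio is bounded by $k/n+1/2<1$ using $k\le 2n/3-1$, so a further $O(\tau)$ increases drive $D_s$ below $1$. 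Overall $2\tau$ increases suffice; this is the analogue of Lemma~\ref{lem:Nequalsn}.

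The probabilistic tail is then copy-paste from Section~\ref{sec:broadcast}. Theorem~\ref{thm:binestimate} (extended in Lemma~\ref{prob:bin}) gives $\Proba(X_{t+1}>X_t\mid\F_t)\ge 1/4$, so the number of increases in $T:=32c\tau$ rounds stochastically dominates $\mathrm{Bin}(T,1/4)$ by Corollary~\ref{cor:upbound}. A Hoeffding bound identical in form to Lemma~\ref{lem:manyincreases}, namely
\[
\Proba(\text{fewer than }2\tau\text{ increases})\le\exp\bigl(-2T(1/4-2\tau/T)^2\bigr)=\exp\bigl(-64c\tau\,(1/4-1/(16c))^2\bigr),
\]
combined with the estimate $\tau\ge 2\log n$ (itself obtained from $\log(1+x)\le x$ applied at $x=(n-k)/(2n)\le 1/2$), yields the target bound $n^{-c}$ and completes the proof.

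The main obstacle I anticipate is the deterministic bookkeeping in the two regimes: verifying that the growth and shrinking regimes together fit inside $2\tau$ increases, rather than some larger $c'\tau$. Every probabilistic ingredient already exists in Sections~\ref{sec:broadcast} and~\ref{sec:Byzantine}, so the new work is confined to this numerical check and to restating the coupling lemmas for the shifted binomial parameter $(n-k-N_t,\,N_t/n)$.
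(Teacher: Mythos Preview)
Your approach is the same as the paper's: the paper simply observes that in Phase~1 the increment $N_{t+1}-N_t$ is $\mathrm{Bin}(n-k-N_t,N_t/n)$, which is verbatim the Byzantine evolution of Lemma~\ref{sec:5.lem:binom} with $f$ replaced by $k$, and then invokes the entire machinery of Appendix~\ref{app:Byzantine} (the lower-bound process $X_t$, the two-regime argument of Lemma~\ref{sec:5.lem:Nequalsn}, and the Hoeffding step of Lemma~\ref{sec:5.lem:manyincreases}) to obtain the $32c\tau$ bound.

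One genuine slip to fix: in regime~2 you have $D_s\le(n-k)/2$, so the contraction ratio is
\[
\frac{k+D_s}{n}\;\le\;\frac{k+(n-k)/2}{n}\;=\;\frac{1}{2}+\frac{k}{2n},
\]
not $k/n+1/2$ as you wrote; your stated bound exceeds $1$ once $k\ge n/2$ and would not give any contraction. With the correct ratio $1-\tfrac{n-k}{2n}$, the verification that regime~2 fits inside $\tau$ (so that the total is exactly $2\tau$, not merely $O(\tau)$) reduces to $\tfrac{2n}{n+k}\ge 1+\tfrac{n-k}{2n}$, i.e.\ $(n-k)^2\ge 0$, exactly as in the paper's proof of Lemma~\ref{sec:5.lem:Nequalsn}.
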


\subsection{Phase 2}

Phase two starts when there are only $k$ uninformed nodes. This essentially means that the adversary can protect all uninformed nodes but one, as the trees they will choose will have an uninformed root, which might get informed in this round. Note that all uninformed nodes below it will not become informed in the current round.

\begin{lemma}\label{lem:phase2}
     If $k \le \frac 2 3 n -1$, for any $c \ge 1$, Phase 2 ends within $8c\cdot \min\{\ln n,  \frac {kn}{n-k-1}\} \leq 12 c\cdot \min\{\ln n, k\}$  rounds with probability $p \ge 1-n^{-c}$.
\end{lemma}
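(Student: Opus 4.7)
The plan is to reduce Phase 2 to a simple Bernoulli process via Corollary~\ref{cor:eta} and Lemma~\ref{lem:binomialadv}, and then apply a multiplicative Chernoff bound. First, I would combine those two results to pin down the per-round law of $N_{t+1}-N_t$ in Phase 2. By definition, Phase 2 consists of rounds with $n - N_{t} < k+1$, so by Corollary~\ref{cor:eta} the adversary's best strategy places $\sigma = n - N_t$ uninformed nodes in its tree $U$ (the maximum feasible, since fewer than $k+1$ remain), forcing $\eta = k+1-(n-N_t)$. Plugging this into Lemma~\ref{lem:binomialadv}, the first summand has $n-N_t-\sigma = 0$ Bernoullis, and the remaining single Bernoulli has parameter $(N_t-\eta)/n = (n-k-1)/n =: p$. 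Thus in every Phase 2 round, exactly one uninformed node (the root of $U$) becomes informed with probability $p$, independently across rounds; and because $k \le \tfrac{2n}{3}-1$, we have $p \ge \tfrac{1}{3}$.

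Second, I would argue about the number of successes required. At the moment Phase 2 begins, the number of uninformed nodes is some $k' \le k$, and Phase 2 ends precisely when a total of $k'$ Bernoulli$(p)$ successes have been accumulated. Letting $T$ denote the length of Phase 2 and $S_R$ a Binomial$(R,p)$ random variable, we have the clean identity $\Pr[T > R] = \Pr[S_R < k']$.

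The main step is then to pick $R$ equal to the stated bound and apply the multiplicative Chernoff bound (e.g.\ Appendix~\ref{appendix:prob}). Setting $R = 8c\cdot \min\{\ln n,\; kn/(n-k-1)\}$ and $\mu = Rp$, I would verify in each of the two regimes (small $k$ with $kn/(n-k-1)\le \ln n$ versus large $k$) that $\mu$ is large enough relative to both $k'$ and $\ln n$ for the bound $\Pr[S_R \le (1-\delta)\mu] \le \exp(-\delta^2 \mu/2)$ with $\delta = 1 - k'/\mu$ to yield $\Pr[T>R] \le n^{-c}$. Concretely, in the small-$k$ regime one gets $\mu = 8ck \ge 8ck'$, hence $\delta \ge 7/8$ and the exponent is at least $\tfrac{49}{16}ck \ge c\ln n$ once one exploits that $k$ drives the regime. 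In the complementary regime one reads $\mu \ge \tfrac{8c\ln n}{3}$ and combines it with the fact that only the relevant $k'$ accessible through the stochastic dominance of Lemma~\ref{lem:domination} participates. Finally, the deterministic inequality $8c\cdot \min\{\ln n, kn/(n-k-1)\} \le 12c\cdot \min\{\ln n, k\}$ follows because whenever the min is attained by $kn/(n-k-1)$ the parameter $k$ is so small that $n/(n-k-1)\le 3/2$, giving the factor $12/8 = 3/2$.

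The hardest part will be the bookkeeping between the two regimes, in particular making sure the same constant $8c$ works uniformly in $k$ and $k'$, and ensuring that $\delta$ stays bounded away from $0$ so that the Chernoff exponent absorbs $c\ln n$ as required; the independence of the Bernoulli trials across rounds, which falls out for free from the independent tree samples, is what makes any such Chernoff argument available in the first place.
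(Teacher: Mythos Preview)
Your first two paragraphs are correct and match the paper: in Phase~2 each round is a single Bernoulli$(p)$ trial with $p=(n-k-1)/n\ge\tfrac13$, and the phase ends after $k'\le k$ successes. The paper's proof starts from exactly this observation and then applies Hoeffding's inequality; your multiplicative Chernoff is an equivalent route.

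The gap is in your third paragraph, and it stems from a typo in the lemma: the two occurrences of $\min$ should read $\max$ (the paper's own proof switches to $\max$ in the Hoeffding calculation, and Theorem~\ref{thm:smallk}, which quotes this lemma, also uses $\max$). With $\min$ the statement is simply false. Take $k=\tfrac{2n}{3}-1$: then $p=\tfrac13$, the $\min$ equals $\ln n$, and you would be claiming that $8c\ln n$ Bernoulli$(\tfrac13)$ trials produce $k'=\Theta(n)$ heads with probability $1-n^{-c}$, which fails for large $n$. Your large-$k$ branch accordingly cannot close: $\mu\approx\tfrac{8c\ln n}{3}$ while the target $k'$ may be $\Theta(n)$, and the appeal to Lemma~\ref{lem:domination} does nothing to reduce $k'$. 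Your small-$k$ branch also breaks: there the Chernoff exponent is $\tfrac{49}{16}ck$, and the claim that this is $\ge c\ln n$ ``once one exploits that $k$ drives the regime'' has the inequality backwards---in that regime $k$ is \emph{below} roughly $\ln n$, so for $k=1$ the exponent is a constant, not $c\ln n$.

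Replacing $\min$ by $\max$ repairs everything: with $R=8c\max\{\ln n,\,kn/(n-k-1)\}$ one gets $Rp\ge 8ck$ (so the target $k'\le k$ is well below the mean) and simultaneously $R\ge 8c\ln n$ (so the tail bound yields the $n^{-c}$). Either Hoeffding, as the paper does, or your Chernoff with the corrected $R$ then finishes the argument.
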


\begin{proof}
 Recall that in this phase every uninformed node belong to $U$. Thus, there is only one node, namely the root of $U$ that can be informed through the randomly chosen edges.
    In each round, by Lemma~\ref{lem:binomialadv} with $\sigma = n - N_t$,  exactly one node, which as discussed must be the root, gets informed with probability $\frac {n-k-1} {n}$. Assimilating this to a flip of a coin where the coin has probability $\frac{N_t-((k+1) - |S_t|))} {n} = \frac {n-k-1} {n}$ of landing on heads, and flipping the coin $8c\cdot \min\{\ln n,  \frac {kn}{n-k-1}\}$ times, we are asking what the probability $p$ of the coin landing on heads at least $k$ times is. Again, using Hoeffding's inequality (Lemma~\ref{lem:hoeffding})%
    , we have that:%
\begin{multline*}
1-p \leq \exp \left(-2\times 8c\cdot \max\{\ln n , \frac {kn}{n-k-1}\}\left(\frac {n-k-1} {n}-{\frac {k}{8c\cdot \max\{\ln n , \frac {kn}{n-k-1}\}}}\right)^{2}\right)
\\ \le\exp \left(-2\times 8c\cdot \ln n \cdot \left(\frac {n-k-1} {n}\right) ^2\left(1-\frac {1}{8c}\right)^{2}\right)\\
\le\exp \left(-2\times 8c\cdot \ln n \cdot \frac {1}{9}\left(1-\frac {1}{4}\right)^{2}\right)
\le \exp \left(-c\ln n\right)\le n^{-c}
\end{multline*} 
Where we use that $k(n-k-1) \ge n-2 \ge 2n$ if $n \ge 4$.
\end{proof}

\subsection{Combining Phase 1 and 2}

We now just have to combine the results for Phases 1 and 2 to show that Broadcast completes in $O(\ln n + k)$ rounds:

\begin{theorem}\label{thm:smallk}
    If the adversary can control $k$ edges in each round, with $k \le \frac 2 3 n -1$, Broadcast completes within $32\cdot \tau \cdot c + 12c\cdot \max\{\ln n, k\}=O(c \cdot (\ln n + k))$ rounds  with probability $p \ge 1-2n^{-c}$.
\end{theorem}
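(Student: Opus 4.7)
The plan is simply to combine Lemma~\ref{lem:phase1} and Lemma~\ref{lem:phase2} by a union bound. The two phases partition the execution into two consecutive, disjoint time intervals: because $N_t$ is non-decreasing, once $n - N_t < k+1$ holds (the criterion for leaving Phase~1) the process stays in Phase~2 forever. Hence the total number of rounds until Broadcast completes is the sum of the lengths of the two phases.

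First I would invoke Lemma~\ref{lem:phase1}, which guarantees that Phase~1 terminates within $32 \cdot c \cdot \tau$ rounds with probability at least $1 - n^{-c}$, where $\tau = \frac{\log n}{\log(1 + (n-k)/(2n))}$. Since $k \le \tfrac{2}{3}n - 1$, we have $(n-k)/(2n) \ge \tfrac{1}{6}$, so $\tau = \Theta(\log n)$. Next, conditionally on the state reached at the end of Phase~1 (which is a stopping time, so the strong Markov property applies to the sequence of independently sampled trees), Lemma~\ref{lem:phase2} gives that Phase~2 terminates within $12 c \cdot \min\{\ln n, k\}$ further rounds with conditional probability at least $1 - n^{-c}$.

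A union bound over the two failure events then yields that, with probability at least $1 - 2 n^{-c}$, the entire process completes within
\[
32 \cdot c \cdot \tau \;+\; 12 c \cdot \min\{\ln n, k\}
\]
rounds. Absorbing $\tau = \Theta(\log n)$ and bounding $\min\{\ln n, k\} \le \ln n + k$ gives the stated $O(c(\ln n + k))$ bound.

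There is no substantive obstacle here beyond ensuring that Phase~2 indeed inherits the bound of Lemma~\ref{lem:phase2} when entered at a random time. This is automatic: the randomness in round $t+1$ (the uniformly chosen tree conditioned on the adversary's edges) is independent of $\mathcal{F}_t$, so conditioning on the history up to the end of Phase~1 does not affect the probability estimate in Lemma~\ref{lem:phase2}, which depends only on $n$, $k$, and the number of uninformed nodes at the start of Phase~2.
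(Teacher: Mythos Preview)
Your proposal is correct and takes exactly the same approach as the paper, whose proof is the single line ``This is a direct result of Lemmata~\ref{lem:phase1} and~\ref{lem:phase2}''; you have simply made the union bound and the conditional application of Lemma~\ref{lem:phase2} explicit.

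One small discrepancy worth flagging: you quote the Phase~2 bound as $12c\cdot\min\{\ln n,k\}$, matching the printed statement of Lemma~\ref{lem:phase2}, but the theorem statement you are proving has $\max\{\ln n,k\}$. This is a typo in the paper's lemma statement---the proof of Lemma~\ref{lem:phase2} actually uses $\max$ throughout (which is the correct quantity, since one needs $k$ successes and the Hoeffding tail requires $\Omega(\ln n)$ trials). It does not affect your $O(c(\ln n+k))$ conclusion, but for the exact bound in the theorem you should write $\max$.
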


\begin{proof}
    This is a direct result of Lemmata~\ref{lem:phase1} and~\ref{lem:phase2}
\end{proof}

In order to understand how tight this bound is, we give a lower bound on how many rounds the adversary can delay Broadcast:

\begin{restatable}{theorem}{lowerboundk}
If $n\ge 2$, and the adversary controls $k$ edges in each round, then there exists a strategy for the adversary that delays Broadcast with a Randomized Oblivious Message Adversary to at least $\frac{kn}{2(n-k-1)} \ge \frac k 2$ rounds with probability at least $\frac 1 4 $.
\end{restatable}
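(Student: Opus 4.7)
My plan is to fix a specific adversary strategy and lower-bound the running time of the resulting stochastic process by a direct Markov argument. Following Corollary~\ref{cor:eta}, I let the adversary always pick one non-increasing tree $U$ of $k+1$ nodes that maximizes the number $\sigma$ of uninformed nodes in $U$ and places an uninformed root whenever possible. Concretely, while $n-N_t\ge k+1$ (call this Phase~1), $U$ is chosen as any tree on $k+1$ uninformed nodes (vacuously non-increasing). Once $n-N_t\le k$ (Phase~2), $U$ contains all $n-N_t$ uninformed nodes together with $k+1-(n-N_t)$ informed ones, arranged so that the uninformed nodes form a subtree containing the root while the informed nodes hang as leaves of that subtree, keeping $U$ non-increasing.

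The central observation I will establish is that Phase~1 transitions into Phase~2 with \emph{exactly} $k$ uninformed nodes remaining. In any Phase~1 round, each of the $k$ non-root nodes of $U$ has, inside $U$, a fixed uninformed parent, which is forced to also be its parent in the random tree $T_t$; so none of these $k$ nodes can become informed that round. Hence $N_{t+1}-N_t\le n-N_t-k$, i.e.\ $n-N_{t+1}\ge k$ throughout Phase~1, so the first round $\tau$ with $n-N_\tau\le k$ must in fact satisfy $n-N_\tau=k$.

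From there, applying Lemma~\ref{lem:binomialadv} to each Phase~2 round with $\sigma=n-N_t$ and $\eta=k+1-\sigma$ makes the binomial part degenerate, so $N_{t+1}-N_t$ is a single Bernoulli of parameter $p:=(N_t-\eta)/n=(n-k-1)/n$. Starting from $k$ uninformed nodes, Phase~2 therefore lasts for a $\mathrm{NegBin}(k,p)$ number of rounds. Setting $T^{\ast}=\ceil{kn/(2(n-k-1))}=\ceil{k/(2p)}$ and $X\sim\mathrm{Bin}(T^{\ast}-1,p)$, Markov's inequality yields $\Proba(X\ge k)\le (T^{\ast}-1)p/k<1/2$, so Phase~2 alone (and hence the whole Broadcast process) takes at least $T^{\ast}$ rounds with probability strictly greater than $1/2\ge 1/4$. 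The only substantive step is the first one, namely pinning down the size of the uninformed set at the start of Phase~2; everything else is a routine Markov calculation on a negative binomial variable, with comfortable slack in the constant $1/4$.
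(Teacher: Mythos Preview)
Your argument is correct and in fact yields a stronger probability bound ($>1/2$ rather than $\ge 1/4$). The paper, however, takes a shorter path: instead of following the optimal adversary through two phases, it fixes the adversary's tree to live on the \emph{same} $k+1$ nodes $\{1,\ldots,k+1\}$ in every round (always rooted at the least-ID uninformed one among them) and tracks only $N'_t$, the number of informed nodes inside this fixed set. Since non-root uninformed nodes of this tree have forced uninformed parents, only the root can newly join $I_t\cap[k+1]$ in a round, and it does so with probability at most $(n-k-1)/n$ (the number of informed nodes outside $[k+1]$ is at most $n-k-1$). Hence $\expect[N'_t]\le 1+t(n-k-1)/n$, and a single application of Markov's inequality at $t=\frac{kn}{2(n-k-1)}$ gives $\Proba(N'_t\ge k+1)\le 3/4$.

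So the paper sidesteps the Phase~1/Phase~2 split and your ``exactly $k$ uninformed at the transition'' observation entirely, at the price of a deliberately suboptimal adversary and a weaker constant. Your route reuses the upper-bound machinery (Lemma~\ref{lem:binomialadv}, the phase decomposition) and drops out naturally once those are in hand; the paper's route is self-contained in three lines. One minor remark: your final step implicitly conditions on Phase~1 terminating; if it does not, Broadcast never completes, which only strengthens the lower bound, so this is harmless.
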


\begin{proof}
   Let us look at an adversary that only uses a non-increasing tree over nodes $1$ to $k+1$, where the root is always chosen to be the one with the smallest ID among those that are still uninformed. Let $N'_t$ be the number of informed nodes after $t$ rounds among $[k+1]$. Clearly $N'_0 \le 1$. By Lemma~\ref{cor:eta}, the root of the tree has probability at most $\frac {n-k-1}{n}$ of being informed in each round, and thus in expectation, $\expect[N'_t] \le 1+ \frac {t(n-k-1)} n$. Therefore, using Markov's inequality, we have that, for $t=\frac{kn}{2(n-k-1)}$:
   $$
   \Proba (N_t =n) \le \Proba(N'_t \ge k+1) \le \frac{1+\frac{kn(n-k-1)}{2(n-k-1)n}}{k+1} = \frac{1}{k+1} + \frac{k}{2k + 2}\le \frac 3 4 .
   $$
\end{proof}

Applying a union-bound on the result for Broadcast, we get a result for All-to-All Broadcast:

\begin{theorem}\label{thm:alltoalladv}
    For any $c \ge 1$ and $n\ge 5$, All-to-All Broadcast on Uniformly Random Trees with adversarial edges completes within $O(c \cdot (\ln n + k))$ rounds with probability $p>1-\frac 1 {n^{c-1}}$.    
\end{theorem}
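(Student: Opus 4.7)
The plan is to reduce All-to-All Broadcast to $n$ simultaneous single-source Broadcast instances and apply a union bound. In the All-to-All setting each node $v$ holds its own initial message $m_v$, and the protocol completes when every $m_v$ has reached every node. Crucially, the communication graph in each round is drawn from the same distribution regardless of which messages are being forwarded, and since each honest node forwards all of its received messages to all of its out-neighbors in every round, the propagation of $m_v$ through the network is distributionally identical to a single-source Broadcast from $v$.

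First, I would invoke Theorem~\ref{thm:smallk} for each fixed source $v \in [n]$: starting from the one informed node $v$, a single-source Broadcast on Uniformly Random Trees with $k$ adversarial edges completes within $R := 32 \tau c' + 12 c' \max\{\ln n, k\} = O(c'(\ln n + k))$ rounds with probability at least $1 - 2 n^{-c'}$. Note that Theorem~\ref{thm:smallk} was stated for source node $1$, but its proof never used that identity — only the worst-case behavior of the adversary against whichever node initially holds the message — so the same bound holds uniformly for any source $v$, regardless of the adversary's (possibly joint) strategy against all messages simultaneously. Indeed, any strategy the adversary deploys in the All-to-All setting is, from the perspective of $m_v$ alone, just some feasible strategy of a single-source adversary and is therefore captured by the worst case analyzed in Theorem~\ref{thm:smallk}.

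Next, I take a union bound over the $n$ sources. Letting $\mathcal{E}_v$ denote the event that $m_v$ fails to reach all nodes within $R$ rounds, we have $\Proba(\mathcal{E}_v) \le 2 n^{-c'}$, so
\[
\Proba\!\left(\bigcup_{v \in [n]} \mathcal{E}_v\right) \;\le\; n \cdot 2 n^{-c'} \;=\; 2 n^{1-c'}.
\]
Choosing $c' = c + \log_n 2$ (which only changes the hidden constant in the $O(\cdot)$ bound) yields failure probability at most $n^{1-c}$, establishing the stated bound of $O(c(\ln n + k))$ rounds with success probability at least $1 - n^{1-c}$.

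The only conceptual subtlety, and not really an obstacle, is confirming that Theorem~\ref{thm:smallk} applies to an arbitrary source $v$ even when the adversary is coordinating an attack against all $n$ messages jointly. This is immediate from the fact that the adversary's single-round edge budget $k$ is the same regardless of how many broadcasts are in flight, and for each fixed $v$ the adversary's behavior is no worse than that of the single-source worst case. Hence the proof is essentially a routine union bound on top of the single-source result.
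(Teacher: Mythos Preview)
Your proposal is correct and takes essentially the same approach as the paper: apply the single-source bound (Theorem~\ref{thm:smallk}) to each of the $n$ sources and take a union bound. Your discussion of why the adversary's joint strategy is still dominated by the single-source worst case, and your adjustment $c' = c + \log_n 2$ to absorb the factor of $2$, are more careful than what the paper spells out, but the argument is the same.
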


\subsection{Consensus}
Finally, we see that a direct application of Theorem~\ref{thm:alltoalladv} gives us a reliable algorithm for Consensus with a Randomized Oblivious Message Adversary of parameter $k$:

\begin{restatable}{theorem}{Consensusadv}\label{thm:Consensusadv}
There exists a protocol for Consensus with a Randomized Oblivious Message Adversary that satisfies Agreement and Validity, and terminates in $O(c\cdot( \ln n +k))$ rounds with probability $p \ge 1-\frac  {2}{n^c}$, and only requires messages of 1 bit over each edge in each round, as long as $k\le \frac  2 3 n-1$. 
\end{restatable}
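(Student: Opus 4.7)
The plan is to mirror the structure of the proof of Theorem~\ref{thm:Consensus} (which used Algorithm~\ref{alg:Consensus}) and reduce the Consensus task to Broadcast in the adversarial edges model, invoking Theorem~\ref{thm:smallk} as the only probabilistic ingredient.

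First, I would define the protocol explicitly: fix $R := 32\cdot \tau \cdot c + 12c\cdot \max\{\ln n, k\} = O(c(\ln n + k))$, which is the number of rounds after which Broadcast has completed with high probability in this model by Theorem~\ref{thm:smallk}. Every node runs for exactly $R$ rounds. Node $1$ immediately starts forwarding its own input $v_1 \in \{0,1\}$ to its out-neighbors in each round. Any other node that has received $v_1$ (directly or through forwarding by an already-informed node) begins forwarding $v_1$ in all subsequent rounds. After the $R$-th round, each honest node outputs $v_1$ if it ever received it, and outputs $\bot$ otherwise.

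Next, I would verify the required properties. Validity is immediate: the only non-$\bot$ value that any node ever forwards or outputs is $v_1$, which is one of the input values, so any agreement on $v_1$ automatically satisfies Validity. Termination by construction holds after $R$ rounds. The bit-per-edge-per-round bound is clear since in each round each node either forwards the single bit $v_1$ or sends nothing. For Agreement, I would directly invoke Theorem~\ref{thm:smallk} applied to the source node $1$: with probability at least $1 - 2n^{-c}$, Broadcast of $v_1$ completes within $R$ rounds, and conditioned on that event, every node has received $v_1$ by round $R$ and therefore outputs $v_1$. Since $k \le \tfrac{2}{3}n - 1$ is exactly the hypothesis of Theorem~\ref{thm:smallk}, the failure probability $2/n^c$ in the theorem statement is recovered verbatim.

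There is no real obstacle here: the proof is a one-line reduction from Consensus to Broadcast, with the factor $2$ in the failure probability inherited directly from Theorem~\ref{thm:smallk}. As with the remark after Algorithm~\ref{alg:Consensus}, the statement "Consensus terminates within $R$ rounds with probability $p$" is to be read as the event that Agreement, Validity, and Termination all hold simultaneously within $R$ rounds with probability $p$; outside this probability-$2/n^c$ bad event some nodes could output $\bot$ while others output $v_1$. Minor variants of the protocol (e.g.~allowing nodes to decide as soon as they receive $v_1$ and continue relaying it) can be handled by the same argument without changing the round complexity or the probability bound.
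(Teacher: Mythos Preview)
Your proposal is correct and follows essentially the same approach as the paper: adapt Algorithm~\ref{alg:Consensus} to run for $R = O(c(\ln n + k))$ rounds and invoke Theorem~\ref{thm:smallk} to guarantee that node~$1$'s Broadcast completes within those rounds with probability at least $1 - 2n^{-c}$. The paper's proof is terser (it simply points back to the proof of Theorem~\ref{thm:Consensus} with the round count replaced), but the content is identical to what you spell out.
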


\begin{proof}
    By Theorem~\ref{thm:smallk}, node 1 Broadcasts within $O(c \cdot (\ln n + k))$ rounds with probability $p \ge 1-2n^{-c}$. Therefore, using the same arguments as in the proof of Theorem~\ref{thm:Consensus} it follows that Algorithm~\ref{alg:Consensus} achieves Consensus within $O(c \cdot (\ln n + k))$ rounds with probability $p \ge 1-2n^{-c}$, as long as we let the for loop run for $O(c \cdot (\ln n + k))$ rounds instead of $32\cdot c\cdot\ln n$ rounds.
\end{proof}
\section{Beyond Trees: Broadcast and Consensus in directed Erdős–Rényi graphs}
\label{sec:erdos}

Directed Erdős–Rényi graphs consist of $m$ edges chosen uniformly at random among the $n^2$ potential edges. Intuitively they have less structure than uniformly random trees, which makes the analysis of Broadcast simpler. We present the main ideas below. Note that we also analyze Byzantine nodes and adversarial edges in that model in Section~\ref{sec:erdos}, but omit these extensions in this overview.

Sampling a directed Erdős–Rényi graph is equivalent to choosing $m$ edges \emph{without} replacement from the set of all possible edges. We call that Scheme~1. Then we observe, using a coupling argument, that Scheme~1 requires no more rounds than Scheme~2, where in each round $m$ edges are chosen \emph{with} replacement. Finally, to analyze Scheme~2, we basically partition the sequence of rounds of Scheme~2 into $2 \ceil{(\log n) /2}$ phases, such that for each of the first $\ceil{(\log n)/2}$ phases the number of informed nodes doubles in each phase and for each of the last  $\ceil{(\log n)/2}$ phases the number of uninformed nodes halves in each phase.  Note that Broadcast completes after the last phase.
Using Hoeffding's inequality for binomial distributions we show that phase $i$ for  $1 \le i \le \ceil{\log n/2}$ requires with high probability at most $O(\max\{ \log n, 2^{i-1} \} n/2^{i-1})$ sampled edges, and, thus,  $O(\ceil{\max\{ \log n, 2^{i-1} \} /(2^{i-1} m/n )})$ rounds,
and for
$\ceil{\log n/2} + 1 \le i \le 2\ceil{\log n/2}$
 phase $i$ requires with high probability at most 
$O(\max\{ \log n, 2^{j-2} \} n/2^{j-1})$ sampled edges with with $j := 2 \ceil{\log n/2} - i$,
and, thus, $O(\ceil{\max\{ \log n, 2^{j-1} \} /(2^{j-1} m/n )})$ rounds.
Summed over all phases this shows that with high probability $O(\ceil{n/m} \log n)$ rounds suffice for Scheme~2 to reach Broadcast. 
We thus get the result:

\begin{restatable}{theorem}{thmerdos}
    For any $c \ge 1$, in scheme 2, and therefore scheme 1, Broadcast completes within $O\left(\ceil{\frac{cn}{m}} \log n\right)$ rounds with probability $p \ge 1 -  n^{-c}\log n$.
\end{restatable}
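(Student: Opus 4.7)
The plan is to first use a coupling argument to reduce Scheme~1 to Scheme~2. In each round, the $m$ sampled-with-replacement edges of Scheme~2 form a distinct-edge set of size at most $m$, which I can couple to be contained in Scheme~1's distinct-edge set of exact size $m$; an induction on rounds then shows Scheme~1's informed set dominates Scheme~2's at every round, so any high-probability round bound proved for Scheme~2 transfers to Scheme~1.

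For Scheme~2, view the process as a stream of i.i.d.\ uniform edges on $[n]^2$, with every $m$ consecutive edges forming a round. Partition the stream into $2\ceil{\log n/2}$ phases: for $1\le i\le \ceil{\log n/2}$, phase~$i$ (a \emph{doubling} phase) is the stretch during which $N_t$ first grows from $\ge 2^{i-1}$ to $\ge 2^i$; the remaining \emph{halving} phases symmetrically halve the uninformed set. Assign each phase a budget $M_i = \Theta(c\cdot \max\{\log n,2^{i-1}\}\,n/2^{i-1})$ of sampled edges (and the symmetric expression for halving phases). The theorem follows from (i)~each phase completing within its budget except with probability $\le n^{-(c+1)}$; (ii)~a union bound over the $O(\log n)$ phases pushing overall failure to $\le n^{-c}\log n$; and (iii)~$\sum_i \ceil{M_i/m}=O(\ceil{cn/m}\log n)$, evaluated by splitting into the regime $2^{i-1}\le \log n$ (a geometric series in $n/m$) and $2^{i-1}>\log n$ ($O(\log n)$ identical terms of order $\ceil{cn/m}$).

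The heart of the argument is the single-phase estimate. In a doubling phase starting at $N$ informed nodes, consider the next $M=M_i$ i.i.d.\ edges. Let $W$ be the number of these edges whose source lies in $I$ and whose target lies in $S$; then $W\sim \mathrm{Bin}(M, N(n-N)/n^2)$ with $\expect[W]=\Omega(c\max\{\log n,N\})$, so Hoeffding/Chernoff yields $W\ge 2N$ with probability $\ge 1-n^{-(c+1)}$. Conditionally on $W$, each such edge has target distributed uniformly on $S$, so the number of newly informed nodes equals the number of occupied bins when throwing $W$ balls into $|S|=n-N$ bins; standard balls-and-bins concentration produces at least $N$ distinct targets with probability $\ge 1-n^{-(c+1)}$, which is exactly the doubling. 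Halving phases are handled by the same two-step decomposition with the roles of $I$ and $S$ swapped: $W\sim\mathrm{Bin}(M, N(n-N)/n^2)$ is again concentrated, and then a balls-and-bins calculation shows that at most half of the remaining uninformed nodes survive.

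The main obstacle is the small-$N$ regime $N<\log n$, where $\expect[W]$ drops to $\Theta(c\log n)$ and the Chernoff deviation must still beat $n^{-(c+1)}$; tracking this forces the constant inside $M_i$ to scale linearly in $c$, which is exactly why the round bound grows linearly in $c$. A secondary subtlety arises in the endgame of the halving phases when only a sub-logarithmic number of uninformed nodes remain: here the balls-and-bins argument must ensure every single surviving uninformed node is hit, but since $M_i = \Theta(c n\log n/u)$ makes $(1-N/n^2)^{M_i}\le n^{-\Omega(c)}$, a direct union bound over the at most $u<\log n$ survivors suffices.
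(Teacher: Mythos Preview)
Your overall architecture---couple Scheme~1 to Scheme~2, split into $O(\log n)$ doubling/halving phases with edge budgets $M_i=\Theta\bigl(c\max\{\log n,2^{i-1}\}\,n/2^{i-1}\bigr)$, bound each phase's failure by $n^{-\Theta(c)}$, union-bound over phases, and sum $\lceil M_i/m\rceil$---matches the paper exactly. What differs is the per-phase estimate. The paper does not split into ``count the $I\to S$ edges $W$, then do balls-and-bins on distinct targets.'' Instead it groups the sampled edges into buckets of size $\Theta(n/2^{i-1})$, observes that throughout a doubling phase there are always at least $2^{i-2}n$ edges from $I_i$ into $[n]\setminus(I_i\cup\Out_E(I_i))$, so each bucket strictly enlarges the informed-neighbour set with probability at least $1/2$, and applies a single Hoeffding bound to the number of successful buckets. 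This sidesteps the balls-and-bins layer entirely.

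Your proposal contains one concrete error. In the halving endgame you claim that for $u<\log n$ the per-node miss probability satisfies $(1-N/n^2)^{M_i}\le n^{-\Omega(c)}$ and then union-bound over the $u$ survivors. But $N/n^2\le 1/n$ and $M_i=\Theta(cn\log n/u)$ give $(1-N/n^2)^{M_i}=e^{-\Theta(c\log n/u)}$, which for $u$ close to $\log n$ is only $e^{-\Theta(c)}$, a constant in $n$; the union bound then yields $(\log n)\,e^{-\Theta(c)}$, nowhere near $n^{-c}$. A related looseness shows up in the doubling step: carrying only $W\ge 2N$ into balls-and-bins gives, for constant $N$ and large $c$, a failure probability of order $n^{-(N+1)}$ rather than $n^{-(c+1)}$. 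Both are repaired by using what Chernoff actually delivers, namely $W\ge \Omega(\expect[W])=\Omega(c\max\{\log n,N\})$, and then bounding the number of empty bins via a tail like $(e\mu/a)^a$ uniformly across all phases, rather than a per-node union bound in the endgame. The paper's bucket argument never separates ``how many $I\to S$ edges'' from ``how many distinct targets,'' and so avoids both subtleties.
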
 
Note that the analysis extends to the setting when the graph in each round contains \emph{at least} $m$ edges. We also show that a lower bound that implies that this upper bound is tight for $m \le n$. 

\begin{restatable}{theorem}{erdoslowerbound}
    In scheme 1, and thus in scheme 2, Broadcast fails to complete within $\frac {\log (n) -1} {\log (1+m/n)}$ rounds with probability at least $  \frac 1 2$.
\end{restatable}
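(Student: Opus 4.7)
The plan is a one-moment computation followed by Markov's inequality. Let $N_t$ denote the number of informed nodes after $t$ rounds of Scheme~1, with $N_0 = 1$ and filtration $\F_t$. The key step is bounding $\expect[N_{t+1} \mid \F_t]$: for a fixed uninformed node $s \in S_t$, it becomes informed in round $t+1$ iff at least one of the $N_t$ edges $\{(i,s) : i \in I_t\}$ lies in the uniformly chosen $m$-edge sample. Since each fixed edge appears in the sample with probability exactly $m/n^2$, a union bound yields $\Proba(s \text{ becomes informed} \mid \F_t) \le N_t m/n^2$. Summing over $S_t$ and using $|S_t| \le n$,
\[
\expect[N_{t+1} - N_t \mid \F_t] \;\le\; (n - N_t)\, N_t\, m/n^2 \;\le\; N_t \, m/n.
\]

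Iterating via the tower property gives $\expect[N_{t+1}] \le (1 + m/n)\,\expect[N_t]$, and so by induction $\expect[N_t] \le (1 + m/n)^t$. Plugging in $t^\star = (\log n - 1)/\log(1 + m/n)$ makes the right-hand side exactly $n/2$, and Markov's inequality gives
\[
\Proba(N_{t^\star} \ge n) \;\le\; \frac{\expect[N_{t^\star}]}{n} \;\le\; \frac{1}{2},
\]
which is the desired statement for Scheme~1: Broadcast fails to complete within $t^\star$ rounds with probability at least $1/2$.

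To transfer the bound to Scheme~2, I would reuse the coupling already invoked for the corresponding upper bound in this section: sampling $m$ edges with replacement yields at most as many distinct edges per round as sampling without replacement, so Scheme~1 stochastically dominates Scheme~2 in the number of informed nodes at every round. Consequently $\Proba(\text{complete by } t^\star \text{ in Scheme 2}) \le \Proba(\text{complete by } t^\star \text{ in Scheme 1}) \le 1/2$. The argument has no real obstacle: it requires neither concentration nor refined inclusion--exclusion, only a union bound followed by Markov; the constant lost to the union bound is exactly what forces $\log n - 1$ rather than $\log n$ in the numerator.
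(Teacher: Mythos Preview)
Your proposal is correct and follows essentially the same approach as the paper: a per-node union bound giving $\Proba(s\text{ informed}\mid \F_t)\le N_t m/n^2$, summation to obtain $\expect[N_{t+1}\mid \F_t]\le (1+m/n)N_t$, induction via the tower property to get $\expect[N_t]\le(1+m/n)^t$, and Markov at $t^\star$. The transfer to Scheme~2 via the coupling of Lemma~\ref{lem:scheme2&1} is exactly what the paper relies on for the ``and thus in scheme~2'' clause.
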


We also give somewhat different analysis where the number of informed resp. uninformed nodes does not double, but increases by $(1 + m/n)$ that is tight for $m \ge n \ln n$.

\begin{restatable}{theorem}{thmerdostwo}\label{thm:largerthanlog}
    For any $c \ge 1$ and $m \in [n^2] $ such that $m/n \ge \ln n$, in scheme 2 and in scheme 1, Broadcast completes within $O\left(\frac{c\cdot \log n}{\log (1+m/n)}\right)$ rounds with probability $p \ge 1 -  n^{-c}\log n$.
\end{restatable}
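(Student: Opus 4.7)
The plan is to follow the two-phase (growth/shrink) strategy used for the previous theorem, but to track progress in the growth phase with a single monotone surrogate $X_t$ analogous to the one in Section~\ref{sec:broadcast}, refined so that each successful round multiplies $X_t$ by a factor $1+\Theta(\rho)$ rather than by $2$, where $\rho:=m/n$. First, the coupling between Scheme~1 and Scheme~2 already established for the previous theorem reduces everything to Scheme~2. In Scheme~2, conditional on $\F_t$ and writing $p_t := 1 - (1-N_t/n^2)^m$, the events ``node $v \in S_t$ receives an incoming sampled edge with source in $I_t$ in round $t+1$'' are mutually independent (because edges are sampled with replacement), each a Bernoulli of parameter $p_t$, so $(N_{t+1}-N_t)\mid\F_t \sim \mathrm{Bin}(n-N_t,\,p_t)$. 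This independence is the structural fact that drives the whole argument.

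Using $(1-x)^m \le e^{-mx}$ together with $1-e^{-y} \ge y/2$ on $y \in [0,1]$, I would derive regime-dependent bounds on the conditional mean. In the growth phase ($N_t \le n/2$),
\[
\expect[N_{t+1}\mid\F_t] \ge N_t(1+\rho/4) \text{ when } N_t \le n/\rho, \quad \expect[N_{t+1}-N_t\mid\F_t] \ge n/4 \text{ when } n/\rho < N_t \le n/2,
\]
and in the shrink phase ($N_t > n/2$), setting $S_t:=n-N_t$, the dual bound yields $\expect[S_{t+1}\mid\F_t] \le S_t\,e^{-\rho/2}$; the hypothesis $\rho \ge \ln n$ then makes $e^{-\rho/2} \le 1/\sqrt n$.

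For the growth phase I would mirror the proof of Theorem~\ref{thm:Broadcast}: define a surrogate $X_t$ that advances by the appropriate regime-dependent amount whenever the sampled binomial is at least its conditional mean, and stays unchanged otherwise. By Theorem~\ref{thm:binestimate} each such advance occurs with probability at least $1/4$, independently of the past. Since the number of advances needed to push $X_t$ past $n/2$ is $O(\log n/\log(1+\rho))$ (combining $\lceil \log_{1+\rho/4}(n/\rho)\rceil$ multiplicative steps and $O(1)$ additive steps), Hoeffding's inequality applied to the Bernoulli$(1/4)$ indicators (exactly as in Lemma~\ref{lem:manyincreases}) shows that $T = O(c\log n/\log(1+\rho))$ rounds suffice with probability at least $1 - n^{-c}$. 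For the shrink phase, iterated expectation plus Markov's inequality directly yields $\Proba(S_k \ge 1) \le (n/2)\,e^{-k\rho/2}$, which drops below $n^{-c}$ within $k = O(c\log n/\rho)$ rounds, and since $\ln(1+\rho) \le \rho$ this is also $O(c\log n/\log(1+\rho))$. A union bound across the two phases then gives the claimed failure probability $O(n^{-c}\log n)$.

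The main obstacle I expect is setting up the surrogate cleanly so that the two distinct update rules in the growth phase (multiplicative when $N_t \le n/\rho$, additive when $n/\rho < N_t \le n/2$) both amount to one step of progress per successful round; once that is done, the growth-phase analysis is a direct adaptation of Lemmas~\ref{lem:valueofX}--\ref{lem:manyincreases}, and the shrink phase needs no new ideas beyond Markov's inequality combined with the tower property.
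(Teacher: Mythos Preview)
Your growth-phase argument has two genuine gaps.

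\textbf{First, independence fails.} In Scheme~2 the $m$ edge samples are i.i.d., but the indicators ``$v$ receives an incoming edge from $I_t$'' for different $v\in S_t$ are \emph{not} mutually independent: they are occupancy events for disjoint bin-sets under the same $m$ ball throws. Already with $m=1$ one has $\Proba(\bar A_{v_1}\cap \bar A_{v_2})=1-2N_t/n^2\neq(1-N_t/n^2)^2$. Hence $(N_{t+1}-N_t)\mid\F_t$ is \emph{not} $\mathrm{Bin}(n-N_t,p_t)$, and Theorem~\ref{thm:binestimate} does not apply as stated. The indicators are negatively associated, so Chernoff-type tail bounds survive, but the specific ``$\Proba(\text{sum}\ge\text{mean})\ge 1/4$'' conclusion of Greenberg--Mohri is a statement about binomials and does not transfer automatically.

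\textbf{Second, even granting a $1/4$ advance probability, Hoeffding on Bernoulli$(1/4)$ indicators cannot yield failure $\le n^{-c}$ within the round budget you claim.} The argument of Lemma~\ref{lem:manyincreases} needs $\Omega(\log n)$ trials to produce an $n^{-c}$ tail; but under the hypothesis $\rho\ge\ln n$ your target number of advances is $A=O(\log n/\log(1+\rho))=o(\log n)$, and your round budget $T=O(cA)$ is likewise $o(\log n)$. Plugging into Hoeffding gives only $\exp(-\Theta(cA))$, not $n^{-c}$. Your shrink phase (expectation plus Markov) is fine and does not need independence, but it only kicks in once $N_t>n/2$, so it cannot rescue the growth phase.

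The paper avoids both issues by staying at the level of the \emph{edge samples}, which are genuinely i.i.d.\ in Scheme~2. It reruns the Scheme~3 phase argument with the per-phase growth factor changed from $2$ to $1+m/n$: there are $O(\log n/\log(1+m/n))$ phases, and the bucketing computation of Lemma~\ref{lem:phasesbound} (now with buckets of size $\Theta(n/(1+m/n)^{i-1})$ and $\Theta(c\cdot(m/n)(1+m/n)^{i-1})$ buckets) shows that each phase needs at most $O(cm)$ sampled edges, i.e.\ $O(c)$ rounds, with failure $\le n^{-c}$. The hypothesis $m/n\ge\ln n$ is exactly what makes the Hoeffding step go through per phase. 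If you want to salvage your surrogate approach, you would need to replace the $1/4$ bound by a per-round Chernoff estimate strong enough that a union bound over $O(\log n/\log(1+\rho))$ rounds still gives $n^{-c}$; this is doable (the conditional mean is always $\ge\rho/4\ge(\ln n)/4$), but it is not the argument you sketched.
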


We also extend those results to Consensus, Byzantine nodes and with adversarial edges. All details are delayed to Appendix~\ref{app:erdos}.

\section{Related Work}\label{sec:relwork}

Information dissemination in general and Broadcasting and Consensus in particular are fundamental topics in distributed computing. In contrast to this paper, most classic literature on network Broadcast as well as on related tasks such as gossiping and Consensus, considers a static setting, e.g., where in each round each node can send information to one neighbor~\cite{hromkovivc1996dissemination,fraigniaud1994methods}. 

Especially the Byzantine setting has received much attention in the literature. Important results include Dolev and Strong~\cite{reliable} on reliable Broadcast which is robust to $f$ Byzantine nodes, and runs in $T=f+1$ rounds, or Berman, Garay and Perry~\cite{king} on King's algorithm that solves reliable Broadcast, is robust to $f$ Byzantine nodes, and runs in $T=3(f+1)$ rounds. To just name a few.

In terms of dynamic networks, Kuhn, Lynch and Oshman~\cite{kuhn2010distributed} explore the all-to-all data dissemination problem (gossiping) in an undirected setting, where nodes do not know beforehand the total number of nodes and must decide on that number.
Dutta, Pandurangan, Rajaraman, Sun and Viola~\cite{DBLP:conf/soda/DuttaPRSV13} generalize the model to when not all nodes need to forward their message, but only $k$ tokens must be forwarded.
Augustine, Pandurangan, Robinson and Upfal~\cite{DBLP:conf/soda/AugustinePRU12} show that if the graph is an expander in every round, broadcast is complete within $O(\log n)$ rounds, even if a small enough constant fraction of nodes get churned in each round.
Ahmadi, Kuhn, Kutten, Molla and Pandurangan~\cite{DBLP:conf/icdcs/AhmadiKKMP19} study the message complexity of Broadcast also in an undirected dynamic setting, where the adversary pays up a cost for changing the network. 

In dynamic networks, the oblivious message adversary is a commonly considered model, especially for Broadcast and Consensus problems, first introduced by Charron-Bost and Schiper~\cite{charron2009heard}. The Broadcast problem under oblivious message adversaries has been studied for many years. A first key result for this problem was the $n\log n$ upper bound  by Zeiner, Schwarz, and Schmid~\cite{schwarz2017linear} who also gave a $\ceil{\frac{3n-1}{2}}-2$ lower bound. Another important result is by   F{\"u}gger, Nowak, and Winkler~\cite{fugger2020radius} who presented an $O( \log\log n)$ upper bound if the adversary can only choose nonsplit graphs; combined with the result of Charron-Bost, F{\"u}gger, and Nowak~\cite{charron2015approximate} that states that one can simulate $n-1$ rounds of rooted trees with a round of a nonsplit graph, this gives the previous $O(n\log\log n)$ upper bound for Broadcasting on trees. Dobrev and Vrto~\cite{dobrev2002optimal, dobrev1999optimal} give specific results when the adversary is restricted to hypercubic and tori graphs with some missing edges.
El-Hayek, Henzinger, and Schmid~\cite{podc22,itcs23broadcast}  recently settled the question about the asymptotic time complexity of Broadcast by giving a tight $O(n)$ upper bound, also showing the upper bound still holds in more general models. 
Regarding Consensus, Coulouma, Godard and Peters in~\cite{coulouma2015characterization} presented a general characterization on which dynamic graphs Consensus is solvable, based on Broadcastability. Winkler, Rincon Galeana, Paz, Schmid, and Schmid~\cite{itcs23Consensus} recently presented an explicit decision procedure to determine if Consensus is possible under a given adversary, enabling a time complexity analysis of Consensus under oblivious message adversaries, both for a centralized decision procedure as well as for solving distributed Consensus. They also showed that reaching Consensus under an oblivious message adversary can take exponentially longer than Broadcasting.

In contrast to the above works,  in this paper  we study a more randomized message adversary, considering a stochastic model where adversarial graphs are partially chosen uniformly at random. While a randomized perspective on dynamic networks is natural and has been considered in many different settings already, existing works on random dynamic communication networks, e.g., on the radio network model~\cite{ellen2021constant}, 
on rumor spreading~\cite{clementi2013rumor}, as well as on epidemics \cite{durrett2022susceptible}, do not consider oblivious message adversaries.
Note, however, that  the information dissemination considered in this paper is similar to the SI model for virus propagation, with results having implications in both directions \cite{eugster2004epidemic}. 
For example, Doerr and Fouz \cite{doerr2011asymptotically}  introduced an information dissemination protocol inspired by epidemics. 
More generally, randomized information dissemination protocols can be well-understood from an epidemiological point-of-view, and are very similar to the SI model which has been very extensively studied.
In contrast to the typical SI models considered in the literature \cite{murray2002mathematical}, however, our model in this paper revolves around tree communication structures which introduce additional technical challenges. 
Furthermore, existing literature often provides results in expectation, while we in this paper provide tail bounds. 

Many papers have tried to bridge the gap between the deterministic and random case, using smoothed analysis. In~\cite{modelsofsmoothing}, Meir, Paz and Schwartzman study the broadcast problem in noisy networks, under different definitions on noise. 
In particular, if in each round the graph given by the adversary is replaced by a graph chosen uniformly at random among graphs at hamming distance at most $k$ from the original graph, in the case where the adversary can suggest any connected graph, then Broadcast is reduced from $n$ rounds to $O(\min\{n, n\sqrt{\frac {\log n} k}\})$ rounds, in the case of an adaptive adversary. If the adversary is oblivious, then Dinitz, Fineman, Gilbert and Newport~\cite{smoothedanalysis} showed that it is further reduced to $O(n^{2/3}/k^{1/3} \times \log n)$.

\bibliographystyle{plain}
\bibliography{broadcast}

\newpage
\appendix
\section{Lower Bound for Deterministic Broadcast in Constant Height Trees}
\label{appendix:lowerbound}

In this section, we consider a very similar model to~\cite{itcs23broadcast}, the only difference being that the adversary is restricted to choosing trees of height at most $2$.

\paragraph*{Model} We are given $n$ nodes, and these nodes can communicate in synchronous rounds. Each node has a distinct I.D., and aims to share this I.D. with as many nodes as possible. In the beginning, each node only knows its own I.D.. An adversary chooses for each round a directed network along which nodes can communicate, among a set $A$ of allowed networks. In each round, each node sends all I.D.s it has received in previous rounds to each one of its out-neighbors. The adversary's goal it to maximize the number of rounds until broadcast, that is, until one I.D. has been received by everyone. The question is: how many rounds can the adversary delay broadcast, depending on $A$?

Authors in~\cite{itcs23broadcast} have shown that if $A$ is the set of rooted trees, then the adversary can delay broadcast for a linear number of rounds. Since a linear number of rounds is easily achievable by the adversary simply by taking a line graph $L$, and using $L$ as the communication network in each round, one would think that the height of the trees allowed play an important role to determine broadcast time. We give in Figure~\ref{fig:height} a counter example, where $A$ is the set of rooted trees of height at most $2$, and where broadcast needs at least a linear number of rounds.

\begin{figure}[h]
    \centering
    \includegraphics{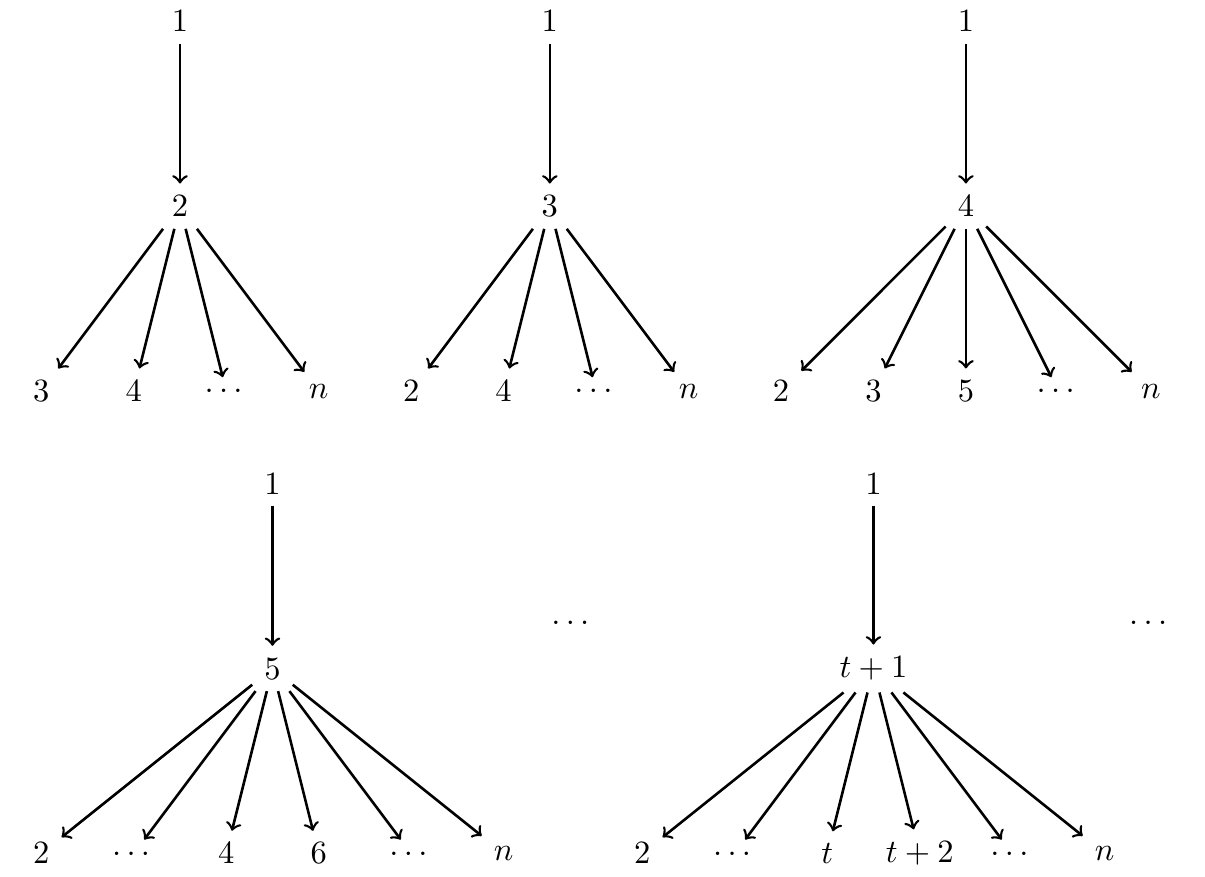}
    \caption{Lower Bound for (deterministic) Broadcast when the adversary is restricted to trees of height at most $2$.}
    \label{fig:height}
\end{figure}

In this example, in round $t$, for $t < n-2$, the adversary chooses the tree rooted at node $1$, with edges $(1,t+1)$ and $(t+1, i)$ for every $i \in [n]\setminus \{1,t+1\}$. Since node $1$ never has an in-neighbor, broadcast completes when the I.D. of node $1$ is shared to every node. It is easy to see that this only happens after round $n-2$.
\section{Counting Trees}\label{sec:tree}

In this section, we will present previously known and new results on the number of rooted trees that satisfy given properties. This will be helpful for computing probabilities in later sections. Namely, we are particularly interested in the two following results:

\begin{restatable}[Lemma 1 of \cite{pitman1999coalescent}]{theorem}{treecount}\label{thm:treecount}
Let us be given a directed rooted forest $F$ on $n$ vertices, and let $\card E$ be the number of edges in $F$. Then, the number of directed rooted trees $T$ over $n$ vertices, such that $F \subseteq T$, is $n^{n-1-\card E}$.
\end{restatable}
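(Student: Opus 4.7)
The strategy is a double-counting argument on pairs $(T,\sigma)$, where $T$ is a directed rooted tree on $[n]$ containing $F$ and $\sigma$ is an ordering of the $n-1-\card E$ edges of $T\setminus F$. Let $N$ denote the number we want to compute. I would first show that any ordering of $T\setminus F$ gives a valid ``build-up sequence'' (one in which every prefix, together with $F$, is a directed rooted forest), so that the total number of such pairs equals $N\cdot (n-1-\card E)!$.

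Next I would compute the same quantity by building the tree from $F$ edge by edge. The starting forest $F$ has $n-\card E$ connected components, and the final object is a tree (one component), so every build-up sequence has length exactly $n-1-\card E$ and each added edge reduces the component count by $1$. The key enumerative claim is that whenever the current forest has $c$ components, the number of directed edges $(u,v)$ one may add while preserving the rooted-forest structure equals $n(c-1)$. Indeed, $v$ must be the root of some component (since every non-root already has an in-edge) and $u$ must lie in a different component; summing $n-s_i$ over the $c$ roots, where $s_i$ is the size of the $i$-th component, gives $cn-n=n(c-1)$.

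Multiplying these choices as $c$ runs from $n-\card E$ down to $2$ yields
\[
\prod_{c=2}^{n-\card E} n(c-1) \;=\; n^{n-1-\card E}\,(n-1-\card E)!,
\]
so equating the two counts gives $N = n^{n-1-\card E}$.

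The main obstacle, and the place I would be careful, is the claim that \emph{every} linear ordering of $T\setminus F$ is a valid build-up sequence. This is where one must use both that $T$ is a tree (so any subset of its edges is acyclic, hence adding any edge strictly decreases the component count) and that $F\subseteq T$ (so no vertex ever receives a second in-edge along the way, since each vertex has a unique parent in $T$). Once this is verified, both counts are immediate and the formula follows. A minor sanity check is the case $\card E=0$, which recovers the classical Cayley-type count $n^{n-1}$ of rooted trees on $[n]$.
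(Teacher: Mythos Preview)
Your argument is correct and is essentially Pitman's original double-counting proof from~\cite{pitman1999coalescent}. Note that the paper does not give its own proof of this statement: it is quoted as Lemma~1 of~\cite{pitman1999coalescent} and used as a black box, while the paper's new contribution is the refined count in Theorem~\ref{thm:treecountroot} (trees with a prescribed root), which is proved by a genuinely different method based on group actions on the set of undirected trees compatible with $F$. So there is nothing to compare against here beyond Pitman's own argument, which you have reproduced accurately; the one step you flagged as the potential obstacle (that every ordering of $T\setminus F$ is a valid build-up sequence) is indeed fine for exactly the reasons you gave.
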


\treecountroot*

We will also prove Theorem~\ref{thm:treecountroot}. To do so, we start by recalling Cayley's formula~\cite{cayley1889theorem}:

\begin{theorem}[Cayley's formula]
    The number of undirected trees on $n$ vertices is $n^{n-2}$.
\end{theorem}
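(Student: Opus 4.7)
The plan is to prove Cayley's formula by establishing the Prüfer sequence bijection between labeled trees on $[n]$ and sequences in $[n]^{n-2}$. Since there are exactly $n^{n-2}$ such sequences, exhibiting such a bijection immediately yields the count.

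First I would describe the encoding $\Phi$ from trees to sequences. Given a labeled tree $T$ on $[n]$ with $n \ge 2$, repeat the following procedure $n-2$ times, starting from $T_1 := T$: at step $i$, let $\ell_i$ be the leaf of $T_i$ with smallest label, set $a_i$ to be the unique neighbor of $\ell_i$ in $T_i$, and let $T_{i+1}$ be $T_i$ with $\ell_i$ deleted. After $n-2$ steps exactly two vertices remain, and we output the sequence $\Phi(T) := (a_1, \ldots, a_{n-2}) \in [n]^{n-2}$. A short check shows this is well-defined (each $T_i$ is still a tree, hence has a leaf whenever it has at least two vertices).

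Next I would describe the decoding $\Psi$ from sequences to trees. The crucial observation driving the inverse is that for every vertex $v \in [n]$, the number of occurrences of $v$ in $\Phi(T)$ equals $\deg_T(v)-1$; in particular, the leaves of $T$ are exactly the labels absent from $\Phi(T)$. Given $(a_1,\ldots,a_{n-2})\in[n]^{n-2}$, I would define $\Psi$ recursively: let $\ell_1$ be the smallest label in $[n]$ that does not appear in $(a_1,\ldots,a_{n-2})$, draw the edge $\{\ell_1,a_1\}$, then recurse on the sequence $(a_2,\ldots,a_{n-2})$ viewed as a Prüfer-like sequence on the vertex set $[n]\setminus\{\ell_1\}$, finally adding an edge between the two remaining vertices at the end.

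The main obstacle, and the main content of the proof, is showing that $\Psi \circ \Phi = \mathrm{id}$ and $\Phi \circ \Psi = \mathrm{id}$, and that $\Psi$ always produces an actual tree. For $\Psi \circ \Phi$, one verifies by induction that the leaf $\ell_1$ identified by $\Psi$ from $\Phi(T)$ is exactly the smallest-label leaf of $T$ stripped at the first encoding step, using the degree/occurrence identity. For $\Phi \circ \Psi$, one checks inductively that at every step of $\Psi$ the vertex $\ell_i$ is indeed a leaf of the partially constructed graph and that the encoding of $\Psi(a_1,\ldots,a_{n-2})$ re-strips the leaves in the same order. Once these are established, $|\{\text{labeled trees on }[n]\}| = |[n]^{n-2}| = n^{n-2}$. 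The $n=1$ case (a single vertex, one tree, empty sequence) is handled as a boundary convention.
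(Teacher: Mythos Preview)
Your proposal is correct: the Prüfer sequence bijection is the standard, complete proof of Cayley's formula, and your outline hits all the necessary points (well-definedness of encoding, the degree/occurrence identity, and the two inverse checks).

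The paper, however, does not prove this theorem at all. It merely recalls Cayley's formula as a classical result (with a citation) and uses it as a stepping stone toward the results it actually cares about, namely Theorem~\ref{thm:undirectedFtoT} and ultimately Theorem~\ref{thm:treecountroot}. So there is no ``paper's own proof'' to compare against: you have supplied a genuine proof where the paper simply quotes the statement.
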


As a corollary of this theorem, we can compute the number of rooted trees on $n$ vertices, as choosing a rooted tree is equivalent to choosing an undirected tree, and then choosing a root:

\begin{corollary}
    The number of rooted trees on $n$ vertices is $n^{n-1}$.
\end{corollary}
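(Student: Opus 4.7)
The plan is to derive the count of rooted trees directly from Cayley's formula via a simple bijective/counting argument. First, I would recall that by Cayley's formula the number of undirected (unrooted, unoriented) trees on the vertex set $[n]$ is $n^{n-2}$. A rooted tree in the sense used in this paper is an undirected tree together with a distinguished vertex (the root), with the convention that all edges are then oriented away from that root; this orientation is forced, so no additional choice is involved beyond the choice of root.

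Next, I would set up a straightforward product-counting step: the data of a rooted tree on $[n]$ is equivalent to a pair consisting of (i) an undirected tree on $[n]$, and (ii) a choice of a vertex in $[n]$ to serve as the root. The two choices are independent and any pair gives rise to a distinct rooted tree, since two rooted trees coincide iff they have the same underlying undirected tree and the same root. Hence the count factors as the product of the two counts.

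Finally I would conclude by multiplying: there are $n^{n-2}$ choices for the undirected tree and $n$ choices for the root, giving $n \cdot n^{n-2} = n^{n-1}$ rooted trees on $n$ vertices.

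There is no real obstacle here; the only thing to be careful about is the bookkeeping convention that orienting all edges away from the chosen root is uniquely determined by the root choice, so we do not overcount by also picking an orientation. Once that convention is recorded, the proof is a one-line multiplication.
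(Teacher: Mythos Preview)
Your proposal is correct and follows exactly the same approach as the paper: observe that a rooted tree is equivalent to an undirected tree together with a choice of root, and multiply the Cayley count $n^{n-2}$ by $n$. The paper states this in a single sentence; your version is just more explicit about why the orientation contributes no further choice.
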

Throughout this section we use $F$ to denote an \emph{undirected or directed} forest  and
 $C_1, C_2, \hdots , C_m$ of $f_1, \hdots, f_m$ vertices with integer $m \ge 1$ to denote the connected components of (the undirected version of) $F$.
The next theorem on undirected trees gives
the number of undirected trees 
which respect a set of fixed edges. It was shown by Lu, Mohr and Sz{\'e}kely~\cite{lu2012quest}.

\begin{theorem}[Lemma 6 of \cite{lu2012quest}]\label{thm:undirectedFtoT}

Let us be given an undirected forest $F$ on $n$ vertices, with connected components $C_1, C_2, \hdots , C_m$ of $f_1, \hdots, f_m$ vertices with integer $m \ge 1$. Let $\card E$ be the number of edges in $F$. Then, the number of undirected trees $T$ on $n$ vertices, such that $F \subseteq T$, is:

$$
\left(\prod_{i \in [m]}f_i\right) n^{n-2- \card E}
$$
\end{theorem}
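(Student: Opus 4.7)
The plan is to reduce the problem to a weighted version of Cayley's formula via a natural bijection that ``contracts'' each connected component of $F$ to a single super-vertex. First, I would observe that since $F$ is a forest with $m$ components and $|E|$ edges, we have $|E| = n - m$, so the target count can be rewritten as $\bigl(\prod_i f_i\bigr)\, n^{m-2}$. This reformulation makes it clear that we are really counting in terms of the $m$ components, with each component contributing a multiplicative weight equal to its size.

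Next I would establish a bijection. Let $\mathcal{T}(F)$ denote the set of spanning trees of $K_n$ containing $F$, and let $\mathcal{S}$ denote the set of pairs $(T', \varphi)$ where $T'$ is a tree on the vertex set $\{C_1, \ldots, C_m\}$ and $\varphi$ assigns, to every edge $\{C_i, C_j\} \in T'$, a pair $(u, v)$ with $u \in C_i$ and $v \in C_j$. Given $T \in \mathcal{T}(F)$, the $m-1$ edges of $T \setminus F$ each join two distinct components (otherwise a cycle would form inside a component), so contracting each $C_i$ yields a tree $T'$ on $m$ vertices, and $\varphi$ simply records the endpoints in each component. Conversely, $F$ together with the edges specified by $(T', \varphi)$ is connected and has exactly $n-1$ edges, hence is a spanning tree containing $F$. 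This gives the identity
\[
|\mathcal{T}(F)| \;=\; \sum_{T' \text{ tree on } [m]} \prod_{i=1}^m f_i^{\deg_{T'}(C_i)}.
\]

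The final step is to evaluate the right-hand sum using the Prüfer-sequence bijection. The weighted Cayley identity states that for any variables $x_1, \ldots, x_m$,
\[
\sum_{T' \text{ tree on } [m]} \prod_{i=1}^m x_i^{\deg_{T'}(i)} \;=\; x_1 x_2 \cdots x_m \,(x_1 + \cdots + x_m)^{m-2},
\]
which I would justify in one line by recalling that each tree on $[m]$ corresponds to a Prüfer code of length $m-2$ in which vertex $i$ appears exactly $\deg_{T'}(i)-1$ times, so the sum factors as a product of $\prod_i x_i$ (shifting the exponents) and $(\sum_i x_i)^{m-2}$ (unrestricted sequences of length $m-2$). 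Setting $x_i = f_i$ and using $\sum_i f_i = n$ gives $|\mathcal{T}(F)| = \bigl(\prod_i f_i\bigr) n^{m-2} = \bigl(\prod_i f_i\bigr) n^{n-2-|E|}$, as claimed.

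The main obstacle is really the bijection-with-endpoint-choices step: one must check that every tree on the contracted vertex set, together with arbitrary endpoint choices, lifts back to a \emph{valid} spanning tree containing $F$ (no accidental cycles inside a component, and the resulting graph is connected with $n-1$ edges). This is straightforward once one notes that edges of $T'$ only span \emph{between} components, so no intra-component cycle can arise; connectivity and the edge count then force a spanning tree. Everything else is just Prüfer-code bookkeeping.
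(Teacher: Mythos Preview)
Your proof is correct. The contraction bijection combined with the weighted Cayley identity via Pr\"ufer sequences is a clean and standard route to this result, and each step you outline is sound: the edges of $T\setminus F$ must cross between distinct components (else a cycle would appear inside a component, which is already a tree), the inverse map yields a connected graph on $n$ vertices with $n-1$ edges and hence a tree, and the degree-weighted sum over trees on $[m]$ evaluates exactly as you say. One minor caveat is the boundary case $m=1$, where the Pr\"ufer argument degenerates (sequences of length $-1$); there the statement is trivial since $F$ is already the unique spanning tree, and the formula $\bigl(\prod_i f_i\bigr)n^{m-2}=n\cdot n^{-1}=1$ agrees.

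As for comparison with the paper: there is nothing to compare, because the paper does not prove this theorem at all. It is quoted verbatim as Lemma~6 of Lu, Mohr and Sz\'ekely~\cite{lu2012quest} and used as a black box in the proof of Theorem~\ref{thm:treecountroot}. Your argument therefore supplies something the paper does not contain.
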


We also recall the definition of a directed rooted forest, illustrated in Figure~\ref{fig:forestsandtrees}:

\begin{definition}[Directed Rooted Forest]
    A directed rooted forest is a collection of disjoint directed rooted trees.
\end{definition}

For simplicity, we will always require that $\sum_{i \in [m]}f_i =n$, which is always achievable by putting isolated vertices in trivial components. We will also assume that $v \in C_1$. For any directed graph $G$, $u(G)$ will represent its undirected version (Illustrated in Figure~\ref{fig:forestsandtrees}). For any directed rooted tree $T$, its root is denoted by $r(T)$. We will also use the following bijection.  Recall that $\RT_n$ is the set of all directed rooted trees on $n$ vertices. We use $T_n$ to denote the set of all undirected trees on $n$ vertices.

\begin{definition}
    Let $T_n$ be the set of all undirected trees on $n$ vertices. We define $\pi$ to be the following bijection:
\begin{align*}
  \pi \colon \RT_n &\to T_n\times [n]\\
  T &\mapsto \left(u(T), r(T)\right)
\end{align*}
\end{definition}

\begin{figure}
    \centering
    \includegraphics[width=.25\linewidth]{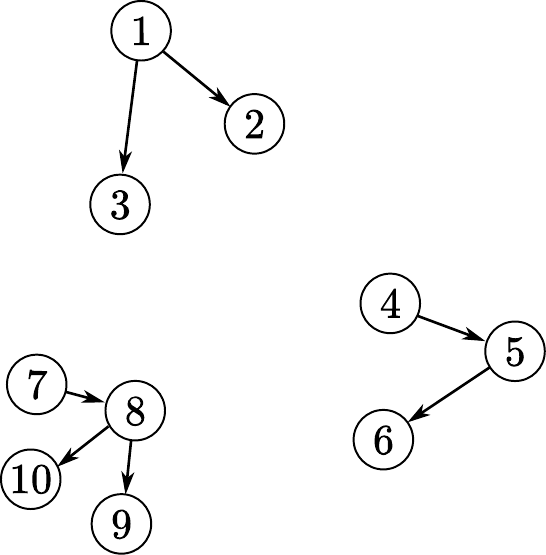} \hspace{.1\linewidth} \includegraphics[width=.25\linewidth]{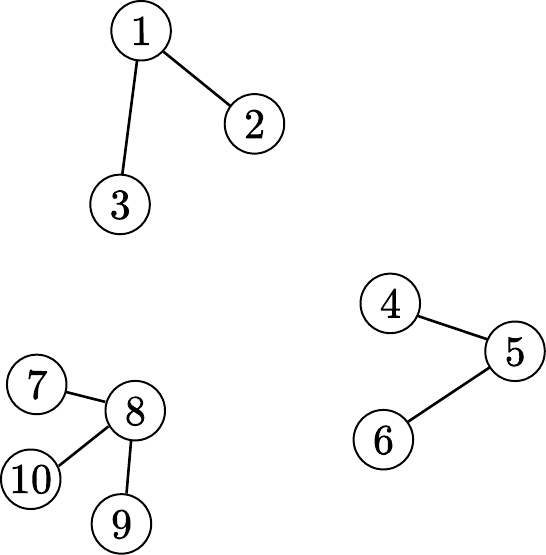} \hspace{.1\linewidth} \includegraphics[width=.25\linewidth]{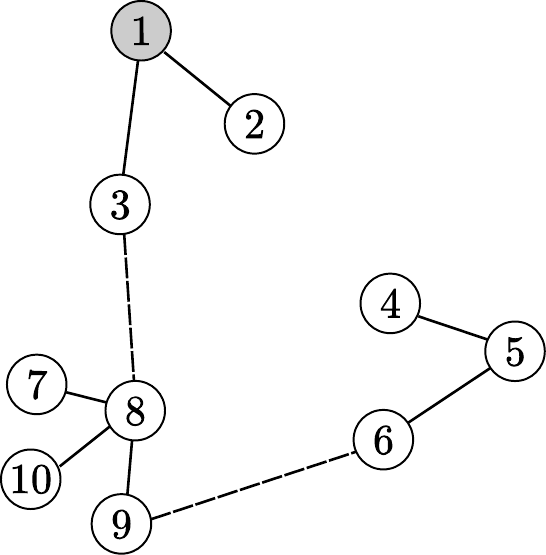}
    \caption{Left: A directed rooted forest $F$. Middle: $u(F)$. Right: A directed rooted tree $T$ in $A_F$, as seen as an undirected tree with a choice of a root. Note that $F \not\subset T$.}
    \label{fig:forestsandtrees}
\end{figure}

To prove Theorem~\ref{thm:treecountroot}, we will first look at all the trees rooted at $v$ that agree with $F$ if edge directions are ignored. Choosing such a tree is equivalent to choosing an undirected tree that contains $F$, then choosing $v$ as the root. This results in $\left(\prod_{i \in [m]}f_i\right) n^{n-2-E}$ trees. However, while all of them agree with $F$ on the undirected edges, the direction of those edges will not correspond for a majority of them. We will then partition this set of trees such that only one element of each set of the partition agrees with $F$ on the directed edges, and counting the number of sets in the partition will yield the desired result. To do so, we will use group actions.

\begin{definition}[Group action]
If $G$ is a group with identity element $e$, and $X$ is a set, then a (left) group action $\alpha$ of $G$ on $X$ is a function
\begin{align*}
  \alpha \colon G \times X &\to X
\end{align*}
that satisfies the following two axioms:

\begin{itemize}
    \item  Identity: 	$\alpha( e , x ) = x,  \forall x \in X$, where $e$ is the identity element of $G$.
    \item Compatibility: 	$\alpha ( g , \alpha ( h , x ) ) = \alpha ( g h , x ), \forall g,h \in G, \forall x \in X$
\end{itemize}   
\end{definition}

\begin{definition}[Rotations]
Let $k >0$ be an integer and let  $R_k$ be the group of all rotations of $[k]$, that is, the set of functions:
\begin{align*}
  \sigma_i^k \colon \Z/k\Z &\to \Z/k\Z\\
  x &\mapsto (x+i) \mod k
\end{align*}
\end{definition}

\begin{definition}
    Let $F$ be a forest with vertices in $[n]$ (rooted and directed or undirected), and $T$ a tree with vertices in $[n]$ (rooted and directed or undirected). We say that they are \emph{undirected-compatible} if $u(F)\subseteq u(T)$, where $u(G)$ represents the undirected version of graph $G$. If $F$ and $T$ are both rooted and directed or both undirected, we say that they are \emph{compatible} if $F \subseteq T$. 
\end{definition}

\begin{definition}\label{def:A}
    Let us be given a directed rooted forest $F$ with vertices in $[n]$.  $A_F$ is the set of directed rooted trees on $n$ vertices, rooted at $v$, that are undirected-compatible with $F$. 
\end{definition}
An example of a tree in $A_F$ is given in Figure~\ref{fig:forestsandtrees}. The following lemma follows almost immediately from Theorem~\ref{thm:undirectedFtoT}.
\begin{lemma}\label{lem:sizeA}
Let $F$ be a directed rooted forest with $n$ vertices and $E$ edges. Then
    $\card{A_F} = \left(\prod_{i \in [m]}f_i\right) n^{n-2-\card E}$.
\end{lemma}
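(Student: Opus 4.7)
The plan is to reduce this counting problem almost immediately to the undirected analogue in Theorem~\ref{thm:undirectedFtoT}, using the fact that once the root of a tree is fixed, the orientation of every edge is forced (all edges point away from the root). So the key observation is that specifying a directed rooted tree on $[n]$ with root $v$ is the same as specifying an undirected tree on $[n]$ together with the designation of $v$ as the root.

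Concretely, I would first consider the restriction of the bijection $\pi$ from Definition to rooted trees whose root is $v$: this yields a bijection between the set of directed rooted trees on $[n]$ with root $v$ and the set $T_n \times \{v\}$, i.e.\ simply $T_n$. Under this bijection, a directed rooted tree $T$ with $r(T)=v$ maps to $u(T)$, and by definition $T \in A_F$ iff $u(F) \subseteq u(T)$. Hence $|A_F|$ equals the number of undirected trees on $[n]$ containing the undirected forest $u(F)$.

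Now I would invoke Theorem~\ref{thm:undirectedFtoT} applied to the undirected forest $u(F)$. Its components are exactly $u(C_1),\dots,u(C_m)$ with sizes $f_1,\dots,f_m$, and its number of edges equals $\card E$ (since $u$ does not change the edge count). Therefore the number of undirected trees on $[n]$ containing $u(F)$ is $\left(\prod_{i\in[m]} f_i\right) n^{n-2-\card E}$, which by the bijection above equals $\card{A_F}$, yielding the lemma.

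There is really no obstacle here; the only thing to be mildly careful about is verifying that the map $T \mapsto u(T)$ is indeed a bijection on the subset of rooted trees with root fixed to $v$ (surjectivity: given any undirected tree, orient all edges away from $v$; injectivity: two different directed trees with the same root and same underlying undirected tree would have to agree on edge orientations, hence coincide). Once that is noted, the statement follows directly from Theorem~\ref{thm:undirectedFtoT}.
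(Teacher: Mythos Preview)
Your proposal is correct and follows essentially the same approach as the paper: both restrict the bijection $\pi$ to trees rooted at $v$ to identify $A_F$ with the set of undirected trees containing $u(F)$, and then apply Theorem~\ref{thm:undirectedFtoT}. The paper phrases this as a bijection between $A_F$ and $B_F\times\{v\}$ where $B_F$ is the set of undirected trees undirected-compatible with $F$, which is exactly your observation.
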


\begin{proof}
    Let $B_F$ be the set of all undirected rooted trees that are undirected-compatible with $F$. $\pi$ induces a bijection between $A_F$ and $B_F \times \{v\}$. Therefore, $\card{A_F} = \card{B_F} \cdot 1$. By Theorem~\ref{thm:undirectedFtoT}, $\card{B_F} = \left(\prod_{i \in [m]}f_i\right) n^{n-2-\card E}$.    
\end{proof}

\begin{definition}
    For any $i \in [m]$, there exists a bijection between $\Z/f_i\Z$ and $C_i$. Let $b_i$ be that bijection.
\end{definition}

Let $R=R_{f_2} \times \hdots \times R_{f_k}$, an illustration of an element of $R$ is given in Figure~\ref{fig:action}. Note that $R$ is a group as a cartesian product of groups. 
We now define a group action of $R$ on $A_F$. This group action will allow us to partition $A_F$ as desired.

\begin{definition}[Group Action of $R$ on $A_F$]\label{def:action}
Given a forest $F$ with connected components $C_i$ with $1  \le i \le m$ and corresponding bijections $b_i$,
let $\alpha$ be the group action of $R$ on $A_F$ defined as follows:
Let $\sigma = (\sigma_{a_2}^{f_2}, \dots, \sigma_{a_m}^{f_m})$ for some $(a_2, \dots, a_m) \in \Z/ f_2\Z\times \dots \times \Z/f_m\Z$ be an element of $R$ and let $T \in A_F$. Then
$\alpha(\sigma, T)$ is obtained from $T$ by making the following modifications to $\pi(T)=(u(T), v)$: 

 For every $i \in \{2, \dots m\}$, there is one (and only one) path from $v$ to $C_i$ in $u(T)$. Let $(x,y)$ be the only edge on that path such that $x \notin C_i, y\in C_i$. Replace edge $(x,y)$ with edge $(x, b_i \sigma_{a_i}^{f_i} b_i^{-1} (y))$.
\end{definition}

To prove that this is indeed a group action, we need to verify (1) that $\alpha(\sigma, T)$ is indeed in $A_F$, (2) that the identity element $e=(\sigma_{0}^{f_1}, \dots, \sigma_{0}^{f_m})$ of $R$ verifies $\alpha(e, T)=T$ for any $T \in A_F$, and (3) that for any two $\sigma, \tau \in R$, for any $T\in A_F$, we have $\alpha(\sigma, \alpha(\tau, T))=\alpha(\sigma\tau, T)$. The second condition being trivial as $\sigma^{f_i}_0$ is the identity function for any value of $f_i$, we only prove the other two.

\begin{figure}
    \centering
    \includegraphics[width=0.25\linewidth]{third.pdf}\hspace{0.1\linewidth}\includegraphics[width=0.25\linewidth]{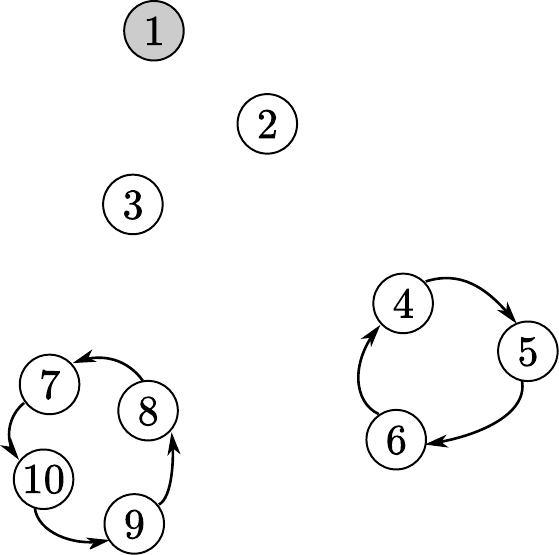}\hspace{0.1\linewidth}\includegraphics[width=0.25\linewidth]{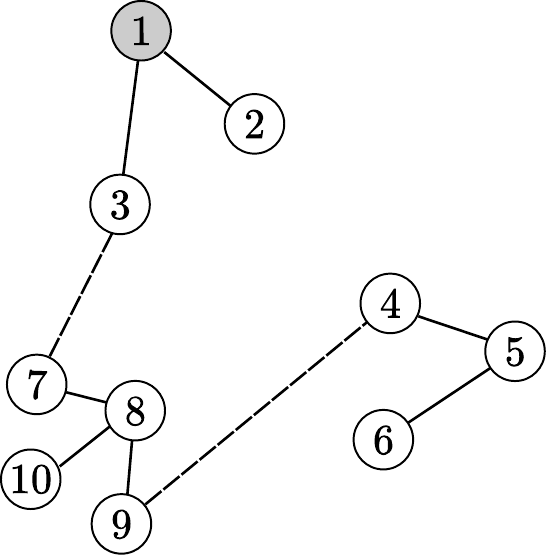}
    \caption{Left: An element $T$ in $A_F$. Middle: An element $\sigma$ of $R$. Right: The result $\alpha(\sigma, T)$ of the action $\alpha$ of $\sigma$ on $T$. Note that $F \subseteq \alpha(\sigma, T)$, where $F$ is the example from Figure~\ref{fig:forestsandtrees}, with this particular choice of $\sigma$.}
    \label{fig:action}
\end{figure}

\begin{lemma} 
\label{lem:inA}
    $\alpha(\sigma, T)\in A_F$.
\end{lemma}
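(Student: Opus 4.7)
The plan is to verify three properties of $\alpha(\sigma, T)$, which together with Definition~\ref{def:A} yield $\alpha(\sigma,T) \in A_F$: that it is a directed rooted tree on $[n]$, that it is rooted at $v$, and that $u(F) \subseteq u(\alpha(\sigma,T))$. The second property is immediate from the construction, since the action only modifies the edges of $u(T)$ while keeping $v$ as the second coordinate of $\pi(T)$, and $\pi^{-1}$ returns a directed rooted tree whose root is exactly that second coordinate. The third is also straightforward: the only edges touched are the entry edges $(x,y)$, which have $x \notin C_i$ and $y \in C_i$ and therefore do not belong to $u(F)$ (whose edges lie entirely inside components of $F$), while all intra-component edges are untouched.

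The actual work is in checking that $u(\alpha(\sigma,T))$ is still a spanning tree on $[n]$. The key structural observation I would establish first is that, because $u(T)$ is a tree containing the spanning forest $u(F)$, any additional edge of $u(T)$ joining two vertices of the same $C_i$ would close a cycle together with the spanning tree of $C_i$ already present in $u(F)$. Hence the subgraph of $u(T)$ induced on $C_i$ is exactly $u(F|_{C_i})$, and for every $i \ge 2$ there is exactly one edge of $u(T)$ between $C_i$ and $[n]\setminus C_i$, namely the entry edge singled out in Definition~\ref{def:action}. This simultaneously justifies the ``only one'' clauses in the definition and yields the following useful global picture: contracting each $C_i$ to a single vertex turns $u(T)$ into a tree on $m$ vertices whose $m-1$ edges are precisely the entry edges.

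From here the argument is quick. The action replaces each entry edge $(x,y)$ by $(x,y')$ with $y' = b_i \sigma_{a_i}^{f_i} b_i^{-1}(y) \in C_i$, so it preserves which pair of components is joined; after contraction the resulting multigraph is the same tree on $m$ vertices, while each $C_i$ still carries $u(F|_{C_i})$ as its internal spanning tree. Adding the edges up gives $(n-m)+(m-1)=n-1$, and connectedness of $u(\alpha(\sigma,T))$ follows since both the internal structure of each $C_i$ and the contracted tree on components are connected. Thus $u(\alpha(\sigma,T))$ is a spanning tree on $[n]$ and $\alpha(\sigma,T) \in A_F$. The main obstacle is really the uniqueness-of-the-entry-edge observation, because it underpins both the well-definedness of Definition~\ref{def:action} and the tree-preservation step; once that structural fact is in place, all the simultaneous replacements for different $i$ can be handled in one shot rather than inductively, since they involve disjoint entry edges and preserve the inter-component tree structure.
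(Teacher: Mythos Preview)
Your contraction argument is a clean alternative to the paper's proof, but one intermediate claim is incorrect as stated: it is \emph{not} true that for every $i \ge 2$ there is exactly one edge of $u(T)$ between $C_i$ and $[n]\setminus C_i$. Think of the contracted tree as a path $C_1 \text{--} C_2 \text{--} C_3$; then $C_2$ has two inter-component edges (its entry edge from $C_1$, and the entry edge of $C_3$). What \emph{is} true, and what your argument actually needs, is that the contracted graph is a tree on $m$ vertices (it is connected and has $m-1$ edges by your edge count), and that in this tree rooted at $C_1$ each non-root $C_i$ has exactly one \emph{parent} edge; this parent edge is precisely the entry edge of Definition~\ref{def:action}. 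With that correction your conclusion that the $m-1$ inter-component edges are exactly the entry edges still holds, and the rest of your proof goes through unchanged: the action only swaps each entry edge's $C_i$-endpoint within $C_i$, so the contracted graph is unchanged and $u(\alpha(\sigma,T))$ remains a spanning tree.

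For comparison, the paper argues connectedness by explicit path surgery: given any vertex $u$, it decomposes the $v$-to-$u$ path in $u(T)$ into maximal segments lying in single components, and then shows how to replace each segment by a path in $u(\alpha(\sigma,T))$ using that intra-component edges are untouched. Your contraction viewpoint is more global and arguably cleaner, handling all $m-1$ edge replacements simultaneously by observing that the quotient tree is invariant under the action; the paper's approach is more hands-on but has to track how the moved entry points propagate along a path.
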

\begin{proof}
   Let us first show that $u(\alpha(\sigma, T))$ is an undirected tree. As it has $n-1$ edges, we only need to show that it is connected. Let $u$ be a vertex. We need to show that it can be reached from $v$. Let $P$ be the (only) path from $v$ to $u$ in $T$, written as a sequence of vertices. Then we can split up $P$ into $P=P_1P_2\dots P_z$, where each $P_j$ is a sequence of vertices that all belong to the same $C_i$ for some $i\in [m]$. We will now replace each of the $P_j$ by another path to make a path from $v$ to $u$ in $u(\alpha(\sigma, T))$.

   Consider every edge $(x,y)$ where $x$ is the last vertex of $P_j$ for some $j$, and $y$ is the first vertex of $P_{j+1}$. There exists some $k$ such that $y \in C_k$. Then $P_1P_2\dots P_{j}y$ is the path from $r(T)$ to $C_k$ in $u(T)$. Then $(x, b_k^{-1}\sigma_{a_k}^{f_k} b_k(y)) \in u(\alpha(\sigma, T))$. Replace $y$ by $b_k^{-1}\sigma_{a_k}^{f_k} b_k(y)$ in $P$.

Let us now look at a particular $P_j$, and let $i$ be such that all of the vertices of $P_j$ belong to $C_i$, then its first vertex has been changed to another vertex of $C_i$, while all others are unchanged. Hence, the first and last vertex still belong to $C_i$.
As $C_i$  is connected in $u(\alpha(\sigma,T))$ since no edge inside $C_i$ has been modified, there exists a path $P'_j$ in $u(\alpha(\sigma,T))$ that connects that first and last vertex of $P_j$. We can thus replace $P_j$ by $P'_j$.

The new path now correctly connects $v$ and $u$ in $u(\alpha(\sigma, T))$, which shows that it is connected.
Hence $\alpha(\sigma, T)$ is a tree. Since no edge in any particular $C_i$ has been modified, $\alpha(\sigma, T)$ is compatible with $F$.    
\end{proof}

\begin{lemma}
    For any $T\in A_F$, and $\sigma, \tau \in R$ we have that $\alpha(\sigma, \alpha(\tau, T))=\alpha(\sigma\tau, T)$.
\end{lemma}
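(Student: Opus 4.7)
The plan is to unwind Definition~\ref{def:action} on both sides and show that, for each component $C_i$ with $i\ge 2$, the same single edge gets replaced by the same new edge. Write $\sigma = (\sigma_2,\dots,\sigma_m)$ and $\tau=(\tau_2,\dots,\tau_m)$ with $\sigma_i, \tau_i \in R_{f_i}$, so that, by the definition of the product group $R=R_{f_2}\times\dots\times R_{f_m}$, we have $\sigma\tau=(\sigma_2\tau_2,\dots,\sigma_m\tau_m)$. Fix $i\in\{2,\dots,m\}$ and let $(x_i,y_i)$ be the unique edge on the path from $v$ to $C_i$ in $T$ with $x_i\notin C_i$ and $y_i\in C_i$.

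The central observation I would prove first is that the entry vertex $x_i$ into $C_i$ is unchanged when going from $T$ to $\alpha(\tau, T)$. This follows because the only edges modified by $\alpha(\tau,\cdot)$ are the entry edges into each component $C_j$ ($j\ge 2$), and each such modification replaces an edge $(x_j,y_j)$ by an edge $(x_j,y'_j)$ with $y'_j=b_j\tau_j b_j^{-1}(y_j)\in C_j$; in particular, the source $x_j$ is preserved and the new target is still in $C_j$. Since each $C_j$ remains connected in $\alpha(\tau,T)$ (no edge inside $C_j$ was touched), the unique path from $v$ to $C_i$ in $\alpha(\tau,T)$ must still enter $C_i$ through the same vertex $x_i$; only its endpoint in $C_i$ has been moved to $y'_i=b_i\tau_i b_i^{-1}(y_i)$. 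This is the step that needs the most care, and I expect it to be the main obstacle: it essentially repeats the path-tracking argument used in Lemma~\ref{lem:inA}, so I would reuse that reasoning.

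With $x_i$ pinned down, the rest is a direct computation. Applying $\alpha(\sigma,\cdot)$ to $\alpha(\tau, T)$ replaces the entry edge $(x_i,y'_i)$ of $C_i$ by
\[
\bigl(x_i,\; b_i\sigma_i b_i^{-1}(y'_i)\bigr) \;=\; \bigl(x_i,\; b_i\sigma_i b_i^{-1}\, b_i\tau_i b_i^{-1}(y_i)\bigr) \;=\; \bigl(x_i,\; b_i(\sigma_i\tau_i)b_i^{-1}(y_i)\bigr),
\]
while applying $\alpha(\sigma\tau,\cdot)$ directly to $T$ replaces $(x_i,y_i)$ by $(x_i,\,b_i(\sigma_i\tau_i)b_i^{-1}(y_i))$ by the definition of the product $\sigma\tau$. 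These are equal, and all other edges are left untouched in both procedures.

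Since this equality holds for every $i\in\{2,\dots,m\}$, and no other edges are altered by either side, we conclude $\alpha(\sigma,\alpha(\tau,T))=\alpha(\sigma\tau,T)$. Combined with $\alpha(e,T)=T$ (which holds because each $\sigma_0^{f_i}$ is the identity rotation) and Lemma~\ref{lem:inA} (which guarantees $\alpha(\sigma,T)\in A_F$ so that the composition is well-defined), this completes the verification that $\alpha$ is a group action of $R$ on $A_F$.
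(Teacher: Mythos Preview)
Your proposal is correct and follows essentially the same approach as the paper: identify the entry edge $(x_i,y_i)$ into each $C_i$, track how it changes under $\tau$ then $\sigma$, and compare with the direct action of $\sigma\tau$. If anything, you are more careful than the paper, which simply asserts that the entry edge in $\alpha(\tau,T)$ has the same source $x$ without justifying why the path from $v$ to $C_i$ in $\alpha(\tau,T)$ enters through the same vertex; your explicit argument for this (reusing the path-tracking idea from Lemma~\ref{lem:inA}) fills that gap.
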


\begin{proof}

Let $\sigma=(\sigma_{a_2}^{f_2}, \dots,\sigma_{a_m}^{f_m})$ and $\tau=(\sigma_{c_2}^{f_2}, \dots,\sigma_{c_m}^{f_m})$. 
And then, for every $i \in \{2, \dots, m\}$, the path from $v$ to $C_i$ in $T$ will include the edge $(x,y)$ such that $x \notin C_i, y \in C_i$, then the corresponding edge is $(x, b_i\sigma_{c_i}^{f_i}b_i^{-1}(y))$ in $\alpha(\tau, T)$, and $(x, b_i\sigma_{a_i}^{f_i}\sigma_{c_i}^{f_i}b_i^{-1}(y))$ in $\alpha(\sigma\tau, T)$. Hence, it is $(x, b_i\sigma_{a_i}^{f_i}b_i^{-1}b_i\sigma_{c_i}^{f_i}b_i^{-1}(y))$ in $\alpha(\sigma,\alpha(\tau, T))$. We thus have $\alpha(\sigma,\alpha(\tau, T))=\alpha(\sigma\tau, T)$.
\end{proof}

As we plan to use Lagrange's theorem for group actions, we now compute the \emph{stabilizer} of a tree $T$, which is the set of all rotations that do not modify the tree:

\begin{lemma}
    $R_T := \{\sigma \in R : \alpha(\sigma, T)=T\} = \{e\}$, for every $T \in A_F$.%
\end{lemma}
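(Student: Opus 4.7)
The plan is to show that only the identity rotation fixes $T$, by arguing that every non-identity component of $\sigma$ is forced to visibly modify the unique ``entry edge'' into its component $C_i$. Fix $T \in A_F$ and let $\sigma = (\sigma_{a_2}^{f_2}, \dots, \sigma_{a_m}^{f_m}) \in R_T$. The task reduces to proving $a_i \equiv 0 \pmod{f_i}$ for every $i \in \{2, \dots, m\}$.

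\noindent\textbf{Step 1: Identify the edges that can possibly be affected.} For each $i \ge 2$, let $(x_i, y_i)$ denote the unique edge on the $v$-to-$C_i$ path in $u(T)$ with $x_i \notin C_i$ and $y_i \in C_i$; existence and uniqueness follow because $T$ is a tree, $v \in C_1 \neq C_i$, and the path must cross from outside $C_i$ to inside it exactly once (otherwise we would obtain a cycle in $T$). By Definition~\ref{def:action}, applying $\sigma$ to $T$ leaves every edge of $T$ intact except it replaces each $(x_i, y_i)$ with $(x_i, b_i \sigma_{a_i}^{f_i} b_i^{-1}(y_i))$; in particular no edge entirely inside any $C_i$ is touched (so $F$-compatibility is preserved, consistent with Lemma~\ref{lem:inA}).

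\noindent\textbf{Step 2: Match removed and added edges.} The action removes the $m-1$ edges $(x_i, y_i)$ and inserts the $m-1$ edges $(x_i, b_i \sigma_{a_i}^{f_i} b_i^{-1}(y_i))$. Crucially, for each $i$, both the removed edge $(x_i, y_i)$ and the corresponding new edge have their target in $C_i$ (since $\sigma_{a_i}^{f_i}$ acts on $\Z/f_i\Z$ and $b_i$ maps to $C_i$), while no other removed or inserted edge has an endpoint in $C_i \setminus \{x_j\text{'s}\}$. The assumption $\alpha(\sigma,T)=T$ means the multiset of edges is unchanged; comparing edges with target in $C_i$ forces
\[
(x_i, y_i) = (x_i, b_i \sigma_{a_i}^{f_i} b_i^{-1}(y_i)) \qquad \text{for every } i \in \{2, \dots, m\}.
\]

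\noindent\textbf{Step 3: Conclude that each rotation is trivial.} The equality above gives $\sigma_{a_i}^{f_i}(b_i^{-1}(y_i)) = b_i^{-1}(y_i)$, and since $\sigma_{a_i}^{f_i}$ is the translation $x \mapsto x + a_i \bmod f_i$ on $\Z/f_i\Z$, a single fixed point is enough to force $a_i \equiv 0 \pmod{f_i}$. Hence every component of $\sigma$ is the identity of $R_{f_i}$, so $\sigma = e$ and $R_T = \{e\}$.

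\noindent\textbf{Main obstacle.} The only subtlety is justifying that the $m-1$ removed edges must be paired with the $m-1$ added edges according to component index (rather than some permutation that could allow cancellations between different $C_i$'s). This is handled in Step~2 by using the partition of targets across the disjoint $C_i$'s, which makes the pairing forced. Trivial components $C_i$ with $f_i = 1$ cause no issue since $R_{f_i}$ is then already trivial.
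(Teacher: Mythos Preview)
Your proof is correct and follows essentially the same approach as the paper's. Both arguments identify the unique entry edge $(x_i,y_i)$ into each component $C_i$ and deduce that it must be fixed under $\alpha(\sigma,\cdot)$, whence $a_i\equiv 0$; the paper phrases this as ``the path from $v$ to $C_i$ must be identical in $T$ and $\alpha(\sigma,T)$, so its first vertex in $C_i$ is unchanged,'' while you phrase it as ``the removed and added edges must agree as sets, and the disjointness of the $C_i$'s forces the pairing by component index.'' These are two views of the same observation.
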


\begin{proof}
    Let $\sigma\in R$ be a rotation such that $\alpha(\sigma, T)=T$. For every $i \in \{2, \dots m\}$, we look at the path from $v$ to $C_i$ in both $T$ and $\alpha(\sigma, T)$. These two paths must be the same. However, if the first element of that path in $T$ that is in $C_i$ is some vertex $y$, then in $\alpha(\sigma, T)$, it is $b_i\sigma_{a_i}^{f_i} b_i^{-1} (y)$. We conclude that $b_i\sigma_{a_i}^{f_i} b_i^{-1} (x)=x$ and thus $a_i=0$.  

We therefore have that $a_i=0$ for every $i \in \{2, \dots , m \}$, which proves that $\sigma = (\sigma^{f_2}_0, \dots, \sigma^{f_m}_0) = e$.
\end{proof}

We now take a look at the orbit $R\cdot T$ of a tree $T \in A_F$. The group action ensures that the orbits in $A_F$ form a partition of $A_F$.

\begin{theorem}[Lagrange's Theorem, Corollary 10.23 of~\cite{smith2015introduction}]\label{thm:orbits}
Let $G$ be a group, $X$ a set and $\alpha$ a group action of $G$ on $X$. Let $x$ be an element of $X$, $G_x:=\{g \in G: \alpha(g,x)=x\}$ and $G.x := \{y \in X: \exists g \in G, y=\alpha(g,x)\}$. Then we have that:
$$
\card{G.x}=\frac{\card{G}}{\card{G_x}}
$$
\end{theorem}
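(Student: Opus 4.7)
The plan is to prove the orbit--stabilizer identity by exhibiting an explicit bijection between the left cosets $G/G_x$ and the orbit $G.x$, and then invoking the standard counting fact $|G/G_x| = |G|/|G_x|$ for the finite groups arising here. First I would verify that $G_x$ is actually a subgroup of $G$ (closure follows from the compatibility axiom $\alpha(gh,x) = \alpha(g, \alpha(h,x))$ and associativity; the identity axiom of $\alpha$ gives $e \in G_x$; and $\alpha(g,x)=x$ together with the group action axioms yields $\alpha(g^{-1},x)=x$). This is needed so that the coset space $G/G_x$ makes sense.

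Next I would define the map
\[
\phi \colon G/G_x \to G.x, \qquad \phi(gG_x) := \alpha(g,x).
\]
The crucial step is to show that $\phi$ is well defined and injective simultaneously, since both amount to the same equivalence: for $g,h \in G$,
\[
gG_x = hG_x \iff h^{-1}g \in G_x \iff \alpha(h^{-1}g, x) = x \iff \alpha(g,x) = \alpha(h,x),
\]
where the last step uses the compatibility and identity axioms of the action to pass $h^{-1}$ across. Surjectivity of $\phi$ is immediate from the definition of $G.x$: every element has the form $\alpha(g,x)$ for some $g \in G$, hence equals $\phi(gG_x)$.

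Having established that $\phi$ is a bijection, I would conclude $|G.x| = |G/G_x|$. The final step is the standard observation (true whenever $G$ is finite, which is the setting in which we are applying the corollary) that the left cosets of $G_x$ partition $G$ into blocks of equal size $|G_x|$, so $|G/G_x| = |G|/|G_x|$. Combining the two equalities gives the claimed identity $\card{G.x} = \card{G}/\card{G_x}$.

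The only mildly delicate part is the well-definedness/injectivity equivalence, which hinges on correctly chaining the two action axioms; the rest is bookkeeping. No serious obstacle is anticipated, and since the statement is quoted as a standard textbook corollary, a citation-style proof of a few lines along these bijection lines is both sufficient and in line with the paper's usage.
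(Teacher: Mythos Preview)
Your proof is correct and is the standard orbit--stabilizer argument. Note, however, that the paper does not actually prove this statement: it is quoted verbatim as Corollary~10.23 of~\cite{smith2015introduction} and invoked as a black box, so there is no ``paper's own proof'' to compare against. Your write-up is exactly the textbook argument one would expect behind that citation.
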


\begin{lemma}\label{lem:orbitsize}
     Let, for every $T \in A_F$, $R\cdot T := \{T' \in A_F : \exists \sigma \in R, \alpha(\sigma, T)=T'\}$. Then $\card {R\cdot T}=\prod_{i \in \{2, \dots, m \}}f_i$.
\end{lemma}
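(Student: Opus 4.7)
The plan is to apply Lagrange's theorem for group actions (Theorem~\ref{thm:orbits}) directly, since all the necessary ingredients have already been established in the preceding lemmas.

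First, I would observe that $R \cdot T$ as defined in the lemma is exactly the orbit of $T$ under the group action $\alpha$ of $R$ on $A_F$ introduced in Definition~\ref{def:action}; this action is well-defined by Lemma~\ref{lem:inA} and the subsequent compatibility lemma. Next, I would compute $|R|$. Since $R = R_{f_2} \times \cdots \times R_{f_m}$ is a Cartesian product of rotation groups, and $|R_{f_i}| = f_i$, we have $|R| = \prod_{i \in \{2, \dots, m\}} f_i$.

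Then I would invoke the preceding lemma, which shows that the stabilizer $R_T = \{e\}$ is trivial, so $|R_T| = 1$. Applying Theorem~\ref{thm:orbits} with $G = R$, $X = A_F$, and $x = T$ gives
\[
|R \cdot T| = \frac{|R|}{|R_T|} = \frac{\prod_{i \in \{2, \dots, m\}} f_i}{1} = \prod_{i \in \{2, \dots, m\}} f_i,
\]
which is the desired conclusion. There is no real obstacle here since the hard work, namely verifying that $\alpha$ is a bona fide group action and that the stabilizer is trivial, has already been carried out; this lemma is essentially a bookkeeping corollary that packages those facts into the statement needed for the partition argument leading to Theorem~\ref{thm:treecountroot}.
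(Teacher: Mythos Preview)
Your proposal is correct and follows essentially the same approach as the paper: apply the orbit--stabilizer theorem (Theorem~\ref{thm:orbits}) with $|R| = \prod_{i=2}^m f_i$ and the previously established fact that the stabilizer $R_T$ is trivial. The paper's proof is just the one-line computation $\card{R\cdot T} = |R|/|R_T| = \prod_{i=2}^m f_i / 1$, which is exactly what you wrote out in more detail.
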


\begin{proof}
    By Theorem~\ref{thm:orbits}, we have that $\card {R\cdot T}=\frac {\card R}{R_T}=\frac {\prod_{i \in \{2, \dots, m\}}f_i}{1}$.
\end{proof}

We now show that exactly one tree in each orbit is compatible with $F$.

\begin{lemma}\label{lem:uniqueinorbit}
    Let $T\in A_F$. Then there exists exactly one $T' \in R\cdot T$ such that $T'$ is compatible with $F$.
\end{lemma}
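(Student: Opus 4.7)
The plan is to prove existence and uniqueness by tracking, for each component $C_i$ with $i\ge 2$, a single distinguished vertex that I will call the \emph{entry point}: namely, the unique vertex $y_i \in C_i$ such that the edge of $u(T)$ on the path from $v$ to $C_i$ that leaves $[n]\setminus C_i$ enters $C_i$ at $y_i$. The first preparatory observation is that because $u(T)\supseteq u(F)$ and $u(F)$ restricted to $C_i$ is already a spanning tree of $C_i$, the restriction $u(T)\big|_{C_i}$ must coincide with $u(F)\big|_{C_i}$ (any additional edge would create a cycle in the tree $u(T)$). Since $T$ is rooted at $v$, the edges of $T$ inside $C_i$ are then forced to be directed away from $y_i$. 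As $F$ restricted to $C_i$ is directed away from the root $r_i$ of that component, a tree $T'\in A_F$ is compatible with $F$ on $C_i$ if and only if its entry point to $C_i$ is exactly $r_i$ (and compatibility on $C_1$ is automatic because $v$ is the root of $C_1$ in $F$ and of every $T'\in A_F$).

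For existence, I will construct $\sigma$ directly. Given $T\in A_F$, for each $i\in\{2,\dots,m\}$ let $y_i$ be the entry point to $C_i$ and set
\[
a_i \;:=\; b_i^{-1}(r_i)-b_i^{-1}(y_i)\pmod{f_i},
\]
and let $\sigma=(\sigma^{f_2}_{a_2},\dots,\sigma^{f_m}_{a_m})\in R$. By the definition of $\alpha$, the edge entering $C_i$ in $T'=\alpha(\sigma,T)$ terminates at $b_i\sigma^{f_i}_{a_i}b_i^{-1}(y_i)=r_i$, so the entry point of $T'$ to $C_i$ is $r_i$. Since $\alpha$ leaves all edges inside each $C_i$ unchanged (and leaves $C_1$ untouched), and by Lemma~\ref{lem:inA} $T'\in A_F$, the preparatory observation above gives that the directed edges of $T'$ inside $C_i$ are directed away from $r_i$, matching $F$. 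Thus every edge of $F$ lies in $T'$, i.e., $F\subseteq T'$.

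For uniqueness, suppose $\tau\in R$ is another element such that $\alpha(\tau,T)$ is compatible with $F$. Writing $\tau=(\sigma^{f_2}_{c_2},\dots,\sigma^{f_m}_{c_m})$, the same argument forces the entry point of $\alpha(\tau,T)$ to $C_i$ to equal $r_i$, which by the definition of $\alpha$ requires $b_i\sigma^{f_i}_{c_i}b_i^{-1}(y_i)=r_i$, determining $c_i=a_i$ uniquely in $\Z/f_i\Z$. Hence $\tau=\sigma$ and $\alpha(\tau,T)=\alpha(\sigma,T)$. Alternatively, one can finish by showing $\alpha(\sigma,T)=\alpha(\tau,T)$ directly from the entry-point condition and then invoking the triviality of the stabilizer $R_T=\{e\}$ via $\alpha(\sigma\tau^{-1},T)=T$.

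The only step that needs a small care is the \emph{preparatory observation}, which links the undirected agreement $u(T)\big|_{C_i}=u(F)\big|_{C_i}$ with the fact that the directions of the edges inside $C_i$ in $T$ are completely determined by the entry point $y_i$. Once this is in place, existence is just a choice of rotation that translates $y_i$ to $r_i$, and uniqueness is the fact that a rotation of $\Z/f_i\Z$ is determined by the image of a single element.
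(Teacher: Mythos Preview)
Your proof is correct and follows essentially the same approach as the paper's: both arguments identify, for each $C_i$ with $i\ge 2$, the entry vertex $y_i$ on the path from $v$ into $C_i$, observe that compatibility with $F$ forces this entry vertex to be $r_i$, and then solve $b_i\sigma_{a_i}^{f_i}b_i^{-1}(y_i)=r_i$ to pin down $a_i$ uniquely. Your preparatory observation (that $u(T)|_{C_i}=u(F)|_{C_i}$ and hence the orientation inside $C_i$ is determined by the entry point) makes explicit a step the paper leaves implicit, but the core argument is the same.
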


\begin{proof}
    Let $T' \in R\cdot T$ be a tree such that $T'$ is compatible with $F$, and let $\sigma$ be the rotation such that $T'=\alpha(\sigma, T)$. Let, for every $i \in [m]$, $r_i$ be the root of $C_i$ in $F$. 
    
    For every $i \in \{2, \dots, m\}$, look at the path from $v$ to $C_i$ in $T$, and its corresponding path in $T'$, computed similarly to the proof of Lemma~\ref{lem:inA}. In $T'$, the first vertex of that path in $C_i$ must be $r_i$, but it also is $b_i\sigma_{a_i}^{f_i}b_i^{-1}(y)$, where $y$ is the first vertex of the path in $T$. Hence $a_i = b_i^{-1}r_i-b_i^{-1}(y)$.

    These conditions uniquely determine $\sigma$, and, thus, $T'$. Conversely, setting $\sigma$ with each $a_i$ defined as above gives a tree $T'$ that is compatible with $F$.
\end{proof}

We can now prove Theorem~\ref{thm:treecountroot}, which we recall below:

\treecountroot*

\begin{proof}
    Consider set $A_F$ as defined in Definition~\ref{def:A}. We know that every directed rooted spanning tree $T$ in $K_n$ such that $F$ is contained by $T$, and such that $v$ is the root of $T$, is in $A_F$. We can partition $A_F$ in orbits of the group action defined in Definition~\ref{def:action}. By Lemma~\ref{lem:orbitsize}, each orbit has $\prod_{i \in \{2, \dots, m\}}f_i$ elements, and thus we have $\frac{\card{A_F}}{\prod_{i \in \{2, \dots, m\}}f_i}$ orbits, which is equal to $f_1 n^{n-2-\card E}$ by Lemma~\ref{lem:sizeA}. Lemma~\ref{lem:uniqueinorbit} ensures that exactly one element in each orbit is a directed rooted spanning tree $T$ in $K_n$ such that $F$ is contained by $T$. Note that $f_1=f$ to conclude the proof.
\end{proof}

\section{Probabilities tools}
\label{appendix:prob}
\begin{lemma}\label{prob:easy}
    Let $X_1, \dots , X_m$ and $Y_1, \dots Y_m$ be binary random variables such that for every $I \subseteq [m]$ we have that $\Proba (\inter_{i \in I} (X_i=1)\inter_{i \notin I} (X_i=0)) = \Proba \left(\inter_{i \in I} (Y_i=1)\inter_{i \notin I} (Y_i=0)\right)$, then the probability distribution of $\sum_{i \in [m]} X_i$ is equal to the probability distribution of $\sum_{i \in [m]} Y_i$.
\end{lemma}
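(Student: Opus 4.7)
The plan is to observe that the hypothesis equates the probabilities of every ``atom'' of the joint distribution of the binary vector, and then simply decompose each event $\{\sum_i X_i = k\}$ into a disjoint union of such atoms.

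Concretely, I would first fix any integer $k \in \{0,1,\dots,m\}$ and write
\[
\left\{\sum_{i \in [m]} X_i = k\right\} \;=\; \bigsqcup_{\substack{I \subseteq [m]\\ \card{I}=k}} \left(\Inter_{i \in I} (X_i=1) \inter \Inter_{i \notin I} (X_i=0)\right),
\]
where the union is disjoint because the events on the right are pairwise incompatible: they correspond to different realizations of the binary vector $(X_1,\dots,X_m)$. By countable additivity this gives
\[
\Proba\!\left(\sum_{i \in [m]} X_i = k\right) = \sum_{\substack{I \subseteq [m]\\ \card{I}=k}} \Proba\!\left(\Inter_{i \in I} (X_i=1) \inter \Inter_{i \notin I} (X_i=0)\right).
\]
The same identity holds with $Y$ in place of $X$. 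Applying the hypothesis term by term on the right-hand sides shows that the two probabilities are equal for every $k$, which is precisely the statement that $\sum_i X_i$ and $\sum_i Y_i$ have the same distribution.

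There is no real obstacle here; the only thing to be careful about is that the atoms genuinely partition the event $\{\sum_i X_i = k\}$, which holds because each $X_i$ takes values in $\{0,1\}$ and hence the vector $(X_1,\dots,X_m)$ takes values in the finite set $\{0,1\}^m$ whose elements are indexed by subsets $I \subseteq [m]$. Thus the proof is essentially a one-line consequence of the law of total probability applied to the partition of $\{0,1\}^m$ by Hamming weight.
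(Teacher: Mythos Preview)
Your proof is correct and follows essentially the same approach as the paper: decompose $\{\sum_i X_i = k\}$ as a disjoint union over subsets $I$ of size $k$, apply the hypothesis termwise, and conclude. The paper's argument is identical, just written more tersely.
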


\begin{proof}
We have, for every $k \in [m]$:
    \begin{align*}
        \Proba\left(\sum_{i\in [m]}X_i=k\right)&= \sum_{\card I = k}\Proba \left(\inter_{i \in I} (X_i=1)\inter_{i \notin I} (X_i=0)\right)\\
        &=\sum_{\card I = k}\Proba \left(\inter_{i \in I} (Y_i=1)\inter_{i \notin I} (Y_i=0)\right)\\
        &=\Proba\left(\sum_{i\in [m]}Y_i=k\right)
    \end{align*}
\end{proof}

\begin{lemma}\label{prob:easy1}
    Let $X_1, \dots , X_m$ and $Y_1, \dots Y_m$ be binary random variables such that for every $\ell \in \N$, $\sum_{\card I = \ell}\Proba (\inter_{i \in I} (X_i=1)\inter_{i \notin I} (X_i=0)) = \sum_{\card I = \ell}\Proba (\inter_{i \in I} (Y_i=1)\inter_{i \notin I} (Y_i=0))$, then the probability distribution of $\sum_{i \in [m]} X_i$ is equal to the probability distribution of $\sum_{i \in [m]} Y_i$.
\end{lemma}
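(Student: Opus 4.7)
The plan is to mirror the proof of Lemma~\ref{prob:easy} almost verbatim, with the only change being that we no longer need a term-by-term equality on each subset $I$, but only on the aggregated sums over all subsets of a given size. First I would fix an arbitrary $k \in \N$ and decompose the event $\{\sum_{i\in[m]} X_i = k\}$ as the disjoint union, over all $I \subseteq [m]$ with $\card I = k$, of the events $\bigcap_{i \in I}(X_i = 1) \cap \bigcap_{i \notin I}(X_i = 0)$. Since these events are pairwise disjoint (they disagree on at least one coordinate), additivity of probability gives
\[
\Proba\!\left(\sum_{i \in [m]} X_i = k\right) = \sum_{\card I = k} \Proba\!\left(\Inter_{i \in I}(X_i=1) \Inter_{i \notin I}(X_i=0)\right).
\]

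Next I would apply the hypothesis with $\ell = k$, which replaces the right-hand side by the corresponding sum for $Y_1, \ldots, Y_m$. Performing the same disjoint decomposition in reverse for the $Y_i$'s then identifies this with $\Proba(\sum_{i \in [m]} Y_i = k)$. Since $k$ was arbitrary, this yields equality of the full distributions of $\sum_i X_i$ and $\sum_i Y_i$.

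There is no real obstacle here: the content of the lemma is essentially the observation that only the size-$\ell$ aggregated probabilities are needed to recover the distribution of the sum, since the event $\{\sum_i X_i = \ell\}$ depends only on the multiset of values $(X_1, \ldots, X_m)$, not on which coordinates take value $1$. The hypothesis is tailored precisely to this aggregation, so the proof reduces to two applications of finite additivity sandwiching one invocation of the hypothesis.
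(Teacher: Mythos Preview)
Your proposal is correct and follows essentially the same approach as the paper: decompose $\{\sum_i X_i = k\}$ into the disjoint union over size-$k$ subsets $I$, apply the hypothesis with $\ell=k$, and reassemble for the $Y_i$'s.
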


\begin{proof}
We have, for every $k \in [m]$:
    \begin{align*}
        \Proba\left(\sum_{i\in [m]}X_i=k\right)&= \sum_{\card I = k}\Proba (\inter_{i \in I} (X_i=1)\inter_{i \notin I} (X_i=0))\\
        &=\sum_{\card I = k}\Proba (\inter_{i \in I} (Y_i=1)\inter_{i \notin I} (Y_i=0))\\
        &=\Proba\left(\sum_{i\in [m]}Y_i=k\right)
    \end{align*}
\end{proof}

\begin{lemma}
    Let $X_1, \dots , X_m$ and $Y_1, \dots Y_m$ be binary random variables such that for every $I \subset [m], \Proba (\inter_{i \in I} (X_i=1)) = \Proba (\inter_{i \in I} (Y_i=1))$, then the probability distribution of $\sum_{i \in [m]} X_i$ is equal to the probability distribution of $\sum_{i \in [m]} Y_i$.
\end{lemma}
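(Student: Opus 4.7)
The plan is to reduce this to Lemma~\ref{prob:easy} via inclusion--exclusion. The hypothesis gives us equality of the ``upward-closed'' joint probabilities $\Proba(\inter_{i \in I}(X_i=1))$ for every $I \subseteq [m]$, while Lemma~\ref{prob:easy} needs equality of the ``exactly-$I$'' probabilities $\Proba(\inter_{i \in I}(X_i=1)\cap\inter_{i\notin I}(X_i=0))$. So the task is to bridge these two kinds of events.

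First I would fix an arbitrary $I \subseteq [m]$ and set $A_i := \{X_i = 1\}$, so that
\[
\Proba\!\left(\inter_{i \in I}(X_i=1)\inter_{i \notin I}(X_i=0)\right) = \Proba\!\left(\inter_{i \in I} A_i \cap \inter_{i \notin I} A_i^c\right).
\]
Then I would apply the standard inclusion--exclusion identity to the complemented part, keeping the $A_i$ with $i \in I$ as a fixed intersection. This yields
\[
\Proba\!\left(\inter_{i \in I} A_i \cap \inter_{i \notin I} A_i^c\right) = \sum_{K \subseteq [m]\setminus I} (-1)^{|K|}\, \Proba\!\left(\inter_{i \in I \cup K} A_i\right),
\]
so the ``exactly-$I$'' probability is expressed as a signed sum of probabilities of the form $\Proba(\inter_{j \in J}(X_j = 1))$ with $J = I \cup K$ ranging over supersets of $I$.

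Next, the identical inclusion--exclusion expansion applies to the $Y_i$'s, yielding the same formula with $X$ replaced by $Y$. By the hypothesis of the lemma, each term $\Proba(\inter_{j \in J}(X_j=1))$ equals $\Proba(\inter_{j \in J}(Y_j=1))$, so the two signed sums agree term by term. Consequently, for every $I \subseteq [m]$,
\[
\Proba\!\left(\inter_{i \in I}(X_i=1)\inter_{i \notin I}(X_i=0)\right) = \Proba\!\left(\inter_{i \in I}(Y_i=1)\inter_{i \notin I}(Y_i=0)\right).
\]
Finally, I would invoke Lemma~\ref{prob:easy} with these equalities to conclude that $\sum_{i\in [m]} X_i$ and $\sum_{i \in [m]} Y_i$ have the same distribution. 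There is no real obstacle here; the only thing to be slightly careful about is making sure inclusion--exclusion is applied on the correct index set (only over $[m] \setminus I$, not over all of $[m]$), and that the hypothesis is invoked for the supersets $I \cup K$, which are themselves subsets of $[m]$ and hence covered.
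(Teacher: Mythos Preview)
Your proposal is correct and follows essentially the same approach as the paper: both reduce to Lemma~\ref{prob:easy} by showing that the ``exactly-$I$'' probabilities agree, using inclusion--exclusion over the complementary index set. The only cosmetic difference is that the paper unfolds inclusion--exclusion as an explicit induction on $|J|$ (proving the slightly more general statement for all disjoint pairs $I,J$), whereas you invoke the identity directly.
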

\begin{proof}
    We start by proving by induction on the size of $J$, $\Proba (\inter_{i \in I} (X_i=1) \inter _{j \in J} (X_j=0)) = \Proba (\inter_{i \in I} (Y_i=1)\inter _{j \in J} (Y_j=0))$ for any $I, J \subseteq [n]$ such that $I \inter J = \varnothing$. This is clear for $\card J = 0$. 

    Let $I, J \subseteq [n]$ such that $I \inter J = \varnothing$ and $\card J > 0$. Let $a$ be an element of $J$. Then we have:
    \begin{multline*}
        \Proba (\inter_{i \in I} (X_i=1) \inter _{j \in J\setminus \{a\}} (X_j=0))\\= \Proba (\inter_{i \in I} (X_i=1) \inter _{j \in J} (X_j=0))+ \Proba (\inter_{i \in I\union \{a\}} (X_i=1) \inter _{j \in J \setminus \{a\}}(X_j=0))        
    \end{multline*}
    Similarly:
    \begin{multline*}
        \Proba (\inter_{i \in I} (Y_i=1) \inter _{j \in J\setminus \{a\}} (Y_j=0))\\= \Proba (\inter_{i \in I} (Y_i=1) \inter _{j \in J} (Y_j=0))+ \Proba (\inter_{i \in I\union \{a\}} (Y_i=1) \inter _{j \in J \setminus \{a\}}(Y_j=0))        
    \end{multline*}
    By induction hypothesis, we have:
\begin{align*}
    \Proba (\inter_{i \in I} (X_i=1) \inter _{j \in J\setminus \{a\}} (X_j=0))&=\Proba (\inter_{i \in I} (Y_i=1) \inter _{j \in J\setminus \{a\}} (Y_j=0))\\
    \Proba (\inter_{i \in I\union \{a\}} (X_i=1) \inter _{j \in J \setminus \{a\}}(X_j=0)) &=\Proba (\inter_{i \in I\union \{a\}} (Y_i=1) \inter _{j \in J \setminus \{a\}}(Y_j=0))
\end{align*}
Hence:
$$
\Proba (\inter_{i \in I} (X_i=1) \inter _{j \in J} (X_j=0))=\Proba (\inter_{i \in I} (Y_i=1) \inter _{j \in J} (Y_j=0))
$$    
The result follows from Lemma~\ref{prob:easy}, when we take $J=[m]\setminus I$
\end{proof}

\begin{lemma}\label{prob:hard}
    Let $X_1, \dots , X_m$ and $Y_1, \dots Y_m$ be binary random variables such that $\sum_{\card I = \ell}\Proba (\inter_{i \in I} (X_i=1)) = \sum_{\card I = \ell} \Proba (\inter_{i \in I} (Y_i=1))$ for every $\ell \in \N$, then the probability distribution of $\sum_{i \in [m]} X_i$ is equal to the probability distribution of $\sum_{i \in [m]} Y_i$.
\end{lemma}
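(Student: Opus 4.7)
The plan is to reduce this lemma to the preceding Lemma~\ref{prob:easy1} via inclusion--exclusion. Lemma~\ref{prob:easy1} gives the conclusion under the stronger-looking hypothesis that the ``exactly on $I$'' probabilities $\Proba(\inter_{i\in I}(X_i=1)\inter_{j\notin I}(X_j=0))$ and the corresponding ones for $Y$ agree when summed over all $I$ of each fixed size $\ell$. So my task is to show that, from the hypothesis about the ``at least $I$'' probabilities $\Proba(\inter_{i\in I}(X_i=1))$, one can derive the corresponding hypothesis about ``exactly on $I$'' probabilities.

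First, I would write the standard inclusion--exclusion identity
\[
\Proba\!\left(\inter_{i\in I}(X_i=1)\inter_{j\in [m]\setminus I}(X_j=0)\right)=\sum_{K\subseteq [m]\setminus I}(-1)^{|K|}\Proba\!\left(\inter_{i\in I\cup K}(X_i=1)\right),
\]
which follows by writing each event $(X_j=0)$ for $j\notin I$ as the complement of $(X_j=1)$ and expanding. Re-indexing by $J=I\cup K$ gives
\[
\Proba\!\left(\inter_{i\in I}(X_i=1)\inter_{j\notin I}(X_j=0)\right)=\sum_{J\supseteq I}(-1)^{|J|-|I|}\Proba\!\left(\inter_{i\in J}(X_i=1)\right).
\]

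Next I would sum over all $I$ with $|I|=\ell$ and swap the order of summation. For each $J$ with $|J|=k\geq \ell$, the number of $I\subseteq J$ with $|I|=\ell$ is $\binom{k}{\ell}$, so
\[
\sum_{|I|=\ell}\Proba\!\left(\inter_{i\in I}(X_i=1)\inter_{j\notin I}(X_j=0)\right)=\sum_{k\ge \ell}(-1)^{k-\ell}\binom{k}{\ell}\sum_{|J|=k}\Proba\!\left(\inter_{i\in J}(X_i=1)\right).
\]
The exact same identity holds for the $Y_i$'s. By the hypothesis of the lemma, the inner sums $\sum_{|J|=k}\Proba(\inter_{i\in J}(X_i=1))$ and $\sum_{|J|=k}\Proba(\inter_{i\in J}(Y_i=1))$ coincide for every $k$, hence the outer right-hand sides coincide, so
\[
\sum_{|I|=\ell}\Proba\!\left(\inter_{i\in I}(X_i=1)\inter_{j\notin I}(X_j=0)\right)=\sum_{|I|=\ell}\Proba\!\left(\inter_{i\in I}(Y_i=1)\inter_{j\notin I}(Y_j=0)\right)
\]
for every $\ell\in\N$. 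This is exactly the hypothesis of Lemma~\ref{prob:easy1}, and applying it yields the desired equality of distributions of $\sum_i X_i$ and $\sum_i Y_i$.

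There is essentially no obstacle; the argument is purely combinatorial bookkeeping. The only mild care needed is to make sure inclusion--exclusion is being applied to a finite collection of events (which it is, since $m$ is finite), and to verify the swap-of-summation/counting step $\binom{k}{\ell}$ correctly. An alternative, equivalent viewpoint that could be used if preferred is to observe that $\sum_{|I|=\ell}\Proba(\inter_{i\in I}(X_i=1))=\expect\!\left[\binom{S_X}{\ell}\right]$ with $S_X=\sum_i X_i$, so the hypothesis says $\expect[\binom{S_X}{\ell}]=\expect[\binom{S_Y}{\ell}]$ for all $\ell\in\{0,\dots,m\}$; since $\{\binom{s}{\ell}\}_{\ell=0}^m$ is a basis of the space of polynomials of degree $\leq m$ and $S_X,S_Y$ take values in $\{0,\dots,m\}$, this determines the distribution uniquely. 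I would, however, present the inclusion--exclusion route, since it plugs directly into Lemma~\ref{prob:easy1} and keeps the proof self-contained within the paper's framework.
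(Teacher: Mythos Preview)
Your proof is correct. Both your argument and the paper's reduce the lemma to Lemma~\ref{prob:easy1} by showing that the ``exactly on $I$'' probabilities, summed over $\card{I}=\ell$, agree for $X$ and $Y$. The difference is in how that reduction is carried out: the paper proceeds by induction on the number $k$ of negated variables, peeling off one $(X_j=0)$ at a time and using the decomposition
\[
\Proba\!\left(\Inter_{i\in I}(X_i{=}1)\Inter_{j\in J\setminus\{a\}}(X_j{=}0)\right)
=\Proba\!\left(\Inter_{i\in I}(X_i{=}1)\Inter_{j\in J}(X_j{=}0)\right)
+\Proba\!\left(\Inter_{i\in I\cup\{a\}}(X_i{=}1)\Inter_{j\in J\setminus\{a\}}(X_j{=}0)\right),
\]
whereas you apply inclusion--exclusion in one shot and then swap the order of summation to obtain the closed form
\[
\sum_{\card I=\ell}\Proba\!\left(\Inter_{i\in I}(X_i{=}1)\Inter_{j\notin I}(X_j{=}0)\right)
=\sum_{k\ge \ell}(-1)^{k-\ell}\binom{k}{\ell}\sum_{\card J=k}\Proba\!\left(\Inter_{i\in J}(X_i{=}1)\right).
\]
Your route is shorter and makes the dependence on the hypothesis transparent; the paper's inductive route is essentially the same inclusion--exclusion unrolled step by step. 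Your aside that $\sum_{\card I=\ell}\Proba(\Inter_{i\in I}(X_i{=}1))=\expect\!\left[\binom{S_X}{\ell}\right]$ and that matching all binomial moments on $\{0,\dots,m\}$ determines the distribution is also a perfectly valid alternative, not used in the paper.
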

\begin{proof}
    We start by proving by induction on $k$, that for every $\ell, k \in \N, \sum_{\card I = \ell}\Proba (\inter_{i \in I} (X_i=1) \inter _{j \in J_I} (X_j=0)) =\sum_{\card I = \ell} \Proba (\inter_{i \in I} (Y_i=1)\inter _{j \in J_I} (Y_j=0))$ for any choice of $J_I \subseteq [n]$ such that $I \inter J_I = \varnothing$ and $\card{J_I}=k$. This is clear for $k = 0$. 

    For the induction case, let us assume, that for $k>1$, we have that for every $\ell \in \N, \sum_{\card I = \ell}\Proba (\inter_{i \in I} (X_i=1) \inter _{j \in J_I} (X_j=0)) =\sum_{\card I = \ell} \Proba (\inter_{i \in I} (Y_i=1)\inter _{j \in J_I} (Y_j=0))$ for any choice of $J_I \subseteq [n]$ such that $I \inter J_I = \varnothing$ and $\card{J_I}=k-1$.

    Let us fix $\ell$, and for every $I \subseteq [m]$ such that $\card I = \ell$, let $J_I \subseteq [m]$ be such that $I \inter J_I = \varnothing$ and $\card {J_I}=k> 0$. Let $a_I$ be an element of $J_I$. Then we have:
    \begin{multline*}
         \sum_{\card I = \ell}\Proba (\inter_{i \in I} (X_i=1) \inter _{j \in J_I\setminus \{a_I\}} (X_j=0))\\= \sum_{\card I = \ell} \Proba (\inter_{i \in I} (X_i=1) \inter _{j \in J_I} (X_j=0))+ \sum_{\card I = \ell} \Proba (\inter_{i \in I\union \{a_I\}} (X_i=1) \inter _{j \in J_I \setminus \{a\}}(X_j=0))        
    \end{multline*}
    Similarly:
    \begin{multline*}
         \sum_{\card I = \ell}\Proba (\inter_{i \in I} (Y_i=1) \inter _{j \in J_I\setminus \{a_I\}} (Y_j=0))\\=  \sum_{\card I = \ell}\Proba (\inter_{i \in I} (Y_i=1) \inter _{j \in J_I} (Y_j=0))+  \sum_{\card I = \ell}\Proba (\inter_{i \in I\union \{a_I\}} (Y_i=1) \inter _{j \in J_I \setminus \{a_I\}}(Y_j=0))        
    \end{multline*}
    By induction hypothesis, we have:
\begin{align*}
     \sum_{\card I = \ell}\Proba (\inter_{i \in I} (X_i=1) \inter _{j \in J_I\setminus \{a_I\}} (X_j=0))&= \sum_{\card I = \ell}\Proba (\inter_{i \in I} (Y_i=1) \inter _{j \in J_I\setminus \{a_I\}} (Y_j=0))\\
     \sum_{\card I = \ell}\Proba (\inter_{i \in I\union \{a_I\}} (X_i=1) \inter _{j \in J_I \setminus \{a_I\}}(X_j=0)) &= \sum_{\card I = \ell}\Proba (\inter_{i \in I\union \{a_I\}} (Y_i=1) \inter _{j \in J \setminus \{a_I\}}(Y_j=0))
\end{align*}
Hence:
$$
 \sum_{\card I = \ell}\Proba (\inter_{i \in I} (X_i=1) \inter _{j \in J_I} (X_j=0))= \sum_{\card I = \ell}\Proba (\inter_{i \in I} (Y_i=1) \inter _{j \in J_I} (Y_j=0))
$$
This concludes the induction step.

The result follows from Lemma~\ref{prob:easy1}, when we take for every $I$, $J_I=[m]\setminus I$, for $k=m-\ell$.
\end{proof}

\begin{lemma}\label{prob:easydom}
    Let $X_1, \dots , X_m$ and $Y_1, \dots Y_m$ be binary random variables, $\alpha \in \R, \alpha \ge 1$ and $r\in \N $ such that for any $I \subseteq [m]\setminus \{r\},  \Proba (\inter_{i \in I} (X_i=1)) = \Proba (\inter_{i \in I} (Y_i=1))$, and $ \Proba (\inter_{i \in I\union\{r\}} (X_i=1)) = \alpha \Proba (\inter_{i \in I\union \{r\}} (Y_i=1))$ then $\sum_{i \in [m]} X_i$ stochastically dominates $\sum_{i \in [m]} Y_i$.
\end{lemma}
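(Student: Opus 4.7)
The plan is to construct a coupling $(\hat X_i, \hat Y_i)_{i \in [m]}$ of $(X_i)_{i \in [m]}$ and $(Y_i)_{i \in [m]}$ with the property that $\hat X_i \ge \hat Y_i$ almost surely for every $i$. Such a coupling immediately gives $\sum_i \hat X_i \ge \sum_i \hat Y_i$ almost surely, which, combined with the fact that the marginals of $\hat X$ and $\hat Y$ match those of $X$ and $Y$, yields the desired stochastic dominance of $\sum_i X_i$ over $\sum_i Y_i$.

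The construction proceeds in three steps. First, I would use the first hypothesis to argue that the joint distributions of $(X_i)_{i \ne r}$ and $(Y_i)_{i \ne r}$ are equal. Indeed, knowing $\Proba(\bigcap_{i \in I}(X_i = 1))$ for every $I \subseteq [m]\setminus\{r\}$ determines the full joint law of $(X_i)_{i \ne r}$ by inclusion--exclusion, exactly as in the ``middle lemma'' immediately preceding Lemma~\ref{prob:hard}.

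Second, for any $z \in \{0,1\}^{m-1}$ with $\Proba((Y_i)_{i \ne r} = z) > 0$, I would show that the conditional probabilities
\[
q_X(z) := \Proba\bigl(X_r = 1 \,\big|\, (X_i)_{i \ne r} = z\bigr), \qquad q_Y(z) := \Proba\bigl(Y_r = 1 \,\big|\, (Y_i)_{i \ne r} = z\bigr)
\]
satisfy $q_X(z) = \alpha\, q_Y(z)$. The key observation is that writing $B := \{i \ne r : z_i = 1\}$ and $D := \{i \ne r : z_i = 0\}$, inclusion--exclusion yields
\[
\Proba\bigl(X_r = 1,\ (X_i)_{i \ne r} = z\bigr) \;=\; \sum_{C \subseteq D} (-1)^{\card C}\, \Proba\Bigl(\bigcap_{i \in \{r\} \cup B \cup C}(X_i = 1)\Bigr),
\]
and every index set in the sum contains $r$, so by the second hypothesis each term is exactly $\alpha$ times its $Y$-counterpart. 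Combined with the equality of the marginals from the first step, this yields $q_X(z) = \alpha q_Y(z)$ (and in particular $\alpha q_Y(z) \le 1$, since $q_X(z)$ is a probability).

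Third, I would perform the coupling: sample $(Z_i)_{i \ne r}$ from the common marginal of $(X_i)_{i\ne r}$ and $(Y_i)_{i\ne r}$, and independently sample $U$ uniform on $[0,1]$; then set $\hat X_i := \hat Y_i := Z_i$ for $i \ne r$, $\hat Y_r := \mathbbm{1}(U \le q_Y(Z))$, and $\hat X_r := \mathbbm{1}(U \le q_X(Z))$. Since $\alpha \ge 1$ forces $q_X(Z) \ge q_Y(Z)$, we get $\hat X_r \ge \hat Y_r$ almost surely, while a short check confirms that $(\hat X_i) \stackrel{d}{=} (X_i)$ and $(\hat Y_i) \stackrel{d}{=} (Y_i)$. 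The main technical point is the inclusion--exclusion bookkeeping in the second step, but this is essentially identical in spirit to the computations already carried out in the earlier lemmas of Appendix~\ref{appendix:prob}, so I do not expect it to be a genuine obstacle.
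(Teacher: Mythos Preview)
Your coupling proof is correct and slightly different in spirit from the paper's argument, so a short comparison is warranted.

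Both proofs first establish the same two facts: that the joint law of $(X_i)_{i\ne r}$ coincides with that of $(Y_i)_{i\ne r}$, and that every atom (full configuration) in which coordinate $r$ equals $1$ has $X$-probability exactly $\alpha$ times its $Y$-probability. The paper does this by an induction on the number of ``$=0$'' constraints (adding one index at a time), while you do it by a one-shot inclusion--exclusion; these are equivalent computations.

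The difference is in how the stochastic dominance is then extracted. The paper works directly with tail probabilities: it decomposes $\Proba(\sum X_i\ge x)$ according to whether $r$ lies in the set of $1$-coordinates, observes that the terms with $r$ included get scaled by $\alpha\ge 1$ while the remaining contribution (summing the $X_r=0$ and $X_r=1$ atoms for a fixed pattern on $[m]\setminus\{r\}$) is unchanged, and concludes $\Proba(\sum X_i\ge x)\ge\Proba(\sum Y_i\ge x)$. You instead pass through the conditional probabilities $q_X(z)=\alpha q_Y(z)$ and build an explicit monotone coupling with a single uniform $U$. Your route is a bit more conceptual and in fact proves more (a coordinatewise coupling of the whole vectors, not just dominance of the sums), at the modest cost of handling the null-probability configurations of $z$; as you note, these never arise under the coupling and cause no trouble. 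The paper's route is more elementary bookkeeping and avoids defining conditional probabilities altogether. Either approach is perfectly adequate here.
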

\begin{proof}
    We start by proving by induction on the size of $J$, for every $I,J \subset [m] \setminus \{r\}$ such that $I \inter J = \varnothing, \Proba (\inter_{i \in I} (X_i=1) \inter _{j \in J} (X_j=0)) =\Proba (\inter_{i \in I} (Y_i=1)\inter _{j \in J} (Y_j=0))$. This is clear for $\card J = 0$. 

    Let $I, J \subseteq [n]$ such that $I \inter J = \varnothing$ and $\card J > 0$. Let $a$ be an element of $J$. Then we have:
    \begin{multline*}
        \Proba (\inter_{i \in I} (X_i=1) \inter _{j \in J\setminus \{a\}} (X_j=0))\\= \Proba (\inter_{i \in I} (X_i=1) \inter _{j \in J} (X_j=0))+ \Proba (\inter_{i \in I\union \{a\}} (X_i=1) \inter _{j \in J \setminus \{a\}}(X_j=0))        
    \end{multline*}
    Similarly:
    \begin{multline*}
         \Proba (\inter_{i \in I} (Y_i=1) \inter _{j \in J\setminus \{a\}} (Y_j=0))\\=  \Proba (\inter_{i \in I} (Y_i=1) \inter _{j \in J} (Y_j=0))+  \Proba (\inter_{i \in I\union \{a\}} (Y_i=1) \inter _{j \in J \setminus \{a\}}(Y_j=0))        
    \end{multline*}
    By induction hypothesis, we have:
\begin{align*}
     \Proba (\inter_{i \in I} (X_i=1) \inter _{j \in J\setminus \{a\}} (X_j=0))&= \Proba (\inter_{i \in I} (Y_i=1) \inter _{j \in J\setminus \{a\}} (Y_j=0))\\
     \Proba (\inter_{i \in I\union \{a\}} (X_i=1) \inter _{j \in J \setminus \{a\}}(X_j=0)) &= \Proba (\inter_{i \in I\union \{a\}} (Y_i=1) \inter _{j \in J \setminus \{a\}}(Y_j=0))
\end{align*}
Hence:
$$
 \Proba (\inter_{i \in I} (X_i=1) \inter _{j \in J} (X_j=0))= \Proba (\inter_{i \in I} (Y_i=1) \inter _{j \in J} (Y_j=0))
$$    

We then show by induction on the size of $J$, that for every $I,J \subset [m]$ such that $r \in I$, $I \inter J = \varnothing, \Proba (\inter_{i \in I} (X_i=1) \inter _{j \in J} (X_j=0)) =\alpha\Proba (\inter_{i \in I} (Y_i=1)\inter _{j \in J} (Y_j=0))$ for any $I, J \subseteq [n]$ . This is clear for $\card J = 0$. 

    Let $I, J \subseteq [n]$ such that $r \in I$, $I \inter J = \varnothing$ and $\card J > 0$. Let $a$ be an element of $J$. Then we have:
    \begin{multline*}
        \Proba (\inter_{i \in I} (X_i=1) \inter _{j \in J\setminus \{a\}} (X_j=0))\\= \Proba (\inter_{i \in I} (X_i=1) \inter _{j \in J} (X_j=0))+ \Proba (\inter_{i \in I\union \{a\}} (X_i=1) \inter _{j \in J \setminus \{a\}}(X_j=0))        
    \end{multline*}
    Similarly:
    \begin{multline*}
         \Proba (\inter_{i \in I} (Y_i=1) \inter _{j \in J\setminus \{a\}} (Y_j=0))\\=  \Proba (\inter_{i \in I} (Y_i=1) \inter _{j \in J} (Y_j=0))+  \Proba (\inter_{i \in I\union \{a\}} (Y_i=1) \inter _{j \in J \setminus \{a\}}(Y_j=0))        
    \end{multline*}
    By induction hypothesis, we have:
\begin{align*}
     \Proba (\inter_{i \in I} (X_i=1) \inter _{j \in J\setminus \{a\}} (X_j=0))&=\alpha \Proba (\inter_{i \in I} (Y_i=1) \inter _{j \in J\setminus \{a\}} (Y_j=0))\\
     \Proba (\inter_{i \in I\union \{a\}} (X_i=1) \inter _{j \in J \setminus \{a\}}(X_j=0)) &= \alpha \Proba (\inter_{i \in I\union \{a\}} (Y_i=1) \inter _{j \in J \setminus \{a\}}(Y_j=0))
\end{align*}
Hence:
$$
 \Proba (\inter_{i \in I} (X_i=1) \inter _{j \in J} (X_j=0))=\alpha \Proba (\inter_{i \in I} (Y_i=1) \inter _{j \in J} (Y_j=0))
$$    

We now show that for any $x \in \N$, we have that $\Proba\left(\sum_{i \in [m]}X_i \ge x\right)\ge \Proba\left(\sum_{i \in [m]}Y_i \ge x\right)$. Indeed, we have:

\begin{align*}
    \Proba\left(\sum _{i \in [m]} X_i\ge x \right)
    &= \sum_{\card I = x: r \in I}\Proba\left(\inter_{i \in I} (X_i=1) \inter _{j \in [m]\setminus I} (X_j=0)\right)\\
    &+\sum_{I \subset [m]: r \notin I, \card I \ge x}\Proba\left(\inter_{i \in I} (X_i=1) \inter _{j \in [m] \setminus I} (X_j=0)\right)\\
    &+\Proba\left(\inter_{i \in I\union \{r\}} (X_i=1) \inter _{j \in [m] \setminus (I\union \{r\})} (X_j=0)\right)\\
    &= \alpha \sum_{\card I = x: r \in I}\Proba\left(\inter_{i \in I} (Y_i=1) \inter _{j \in [m]\setminus I} (Y_j=0)\right)\\&+\sum_{I \subset [m]: r \notin I, \card I \ge x}\Proba\left(\inter_{i \in I} (X_i=1) \inter _{j \in [m] \setminus (I\union\{r\})} (X_j=0)\right)\\
    &\ge \sum_{\card I = x: r \in I}\Proba\left(\inter_{i \in I} (Y_i=1) \inter _{j \in [m]\setminus I} (Y_j=0)\right)\\&+\sum_{I \subset [m]: r \notin I, \card I \ge x}\Proba\left(\inter_{i \in I} (Y_i=1) \inter _{j \in [m] \setminus (I\union\{r\})} (Y_j=0)\right)\\
    &=\Proba\left(\sum _{i \in [m]} Y_i\ge x \right)
\end{align*}

\end{proof}

\begin{lemma}\label{prob:bin}
    Let $X$ be a random variable that has a binomial distribution of parameters $(m, p)$. Then if $0 < p \le \frac 1 m$, we have that $    \Proba(X\ge mp) \ge \frac 1 4$ as long as $p \ge 1-\left(\frac 3 4 \right)^{\frac 1 m}$.

    In particular, it suffices for $p$ to be larger than $\frac 1 {3m}$.
\end{lemma}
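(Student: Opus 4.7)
The plan is to reduce the inequality to a computation about $\Proba(X \ge 1)$. Since $p \le \frac{1}{m}$, we have $mp \le 1$. Because $X$ only takes nonnegative integer values and $mp > 0$, the event $\{X \ge mp\}$ coincides with the event $\{X \ge 1\}$. Hence
\[
\Proba(X \ge mp) = \Proba(X \ge 1) = 1 - (1-p)^m.
\]

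The first claim then follows by an elementary algebraic manipulation: demanding $1 - (1-p)^m \ge \tfrac{1}{4}$ is equivalent to $(1-p)^m \le \tfrac{3}{4}$, which in turn rearranges to $p \ge 1 - (3/4)^{1/m}$, the precise condition stated in the lemma.

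For the ``in particular'' part, I would show that the threshold $1 - (3/4)^{1/m}$ is dominated by $\tfrac{1}{3m}$, so that $p \ge \tfrac{1}{3m}$ already suffices. Concretely, I want to verify
\[
1 - \left(\tfrac{3}{4}\right)^{1/m} \le \tfrac{1}{3m},
\quad\text{i.e.,}\quad
\left(1 - \tfrac{1}{3m}\right)^m \le \tfrac{3}{4}.
\]
Applying the standard inequality $(1 - x/m)^m \le e^{-x}$ with $x = \tfrac{1}{3}$ gives $\left(1 - \tfrac{1}{3m}\right)^m \le e^{-1/3}$. Since $e^{-1/3} \approx 0.7165 < 0.75$, this completes the argument.

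There is no real obstacle here: the only subtlety is recognizing that when $mp \le 1$ the condition $X \ge mp$ degenerates to $X \ge 1$, which turns the tail bound into a direct computation, and then the closing numerical step reduces to checking $e^{-1/3} < 3/4$, which follows from the Taylor bound $(1-x/m)^m \le e^{-x}$.
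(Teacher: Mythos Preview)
Your proof is correct and follows essentially the same approach as the paper: both reduce $\{X \ge mp\}$ to $\{X \ge 1\}$ via $mp \le 1$, compute $\Proba(X \ge 1) = 1 - (1-p)^m$, and rearrange to obtain the threshold $p \ge 1 - (3/4)^{1/m}$. The only difference is in the ``in particular'' step: the paper argues that $g(m) = \tfrac{1}{3m} - 1 + (3/4)^{1/m}$ is positive at $m=1$ and strictly decreasing to $0$, hence always positive, whereas you use the cleaner bound $(1 - \tfrac{1}{3m})^m \le e^{-1/3} < \tfrac{3}{4}$; your version is arguably more direct since it avoids having to justify the monotonicity claim.
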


\begin{proof}
    If $p\le \frac 1 m$ then $0 < mp \le 1$ which means that the events $X\ge mp$ and $X\ge 1$ are the same since the binomial distribution takes only integer values. Hence:

    \begin{align*}
        \Proba(X\ge mp)&=\Proba(X\ge 1)=1-\Proba(X=0)=1-(1-p)^m\\
        &\ge 1-\left(1-1+\left(\frac 3 4\right)^{\frac 1 m} \right)^m
        \ge \frac 1 4
    \end{align*}

    As the function $\frac 1 {3m}-1+\left(\frac 3 4 \right)^{\frac 1 m}$ is positive for $m=1$ and strictly decreasing towards $0$ with increasing $m$, it is always positive and thus the second claim holds.
\end{proof}

\begin{definition}[Mutually independent events]\label{def:mutually}
Let $A_1, \dots, A_n$ be events. They are said to be mutually independent if and only if, for every subset $I \subseteq [n]$, we have that:

$$
\Proba \left(\inter_{i \in I}A_i\right) = \prod_{i \in I}\Proba(A_i)
$$
\end{definition}

\begin{lemma}\label{lem:mutually}
    If $A_1, \dots, A_n$ are mutually independent events, then we have that, for every subsets $I,J \subseteq [n]$, $I\inter J = \varnothing$:

    $$
    \Proba\left(\inter_{i \in I}A_i\inter_{j\in J}\overline{A_j}\right)=\prod_{i \in I}\Proba(A_i)\prod_{j \in J}(1-\Proba(A_j))
    $$
\end{lemma}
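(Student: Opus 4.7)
The plan is to prove this by induction on $\card J$, using the partition $A_{j_0} \uplus \overline{A_{j_0}} = \Omega$ to peel off one complemented event at a time.

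The base case $\card J = 0$ is exactly the definition of mutual independence (Definition~\ref{def:mutually}), so nothing needs to be done there. For the inductive step, I would fix an arbitrary $j_0 \in J$ and use the fact that, for any event $B$ (here $B := \inter_{i \in I} A_i \inter_{j \in J \setminus \{j_0\}} \overline{A_j}$), we have $B = (B \inter A_{j_0}) \uplus (B \inter \overline{A_{j_0}})$ as a disjoint union. This gives the identity
$$
\Proba\left(\inter_{i \in I}A_i\inter_{j\in J\setminus\{j_0\}}\overline{A_j}\right)=\Proba\left(\inter_{i \in I\cup\{j_0\}}A_i\inter_{j\in J\setminus\{j_0\}}\overline{A_j}\right)+\Proba\left(\inter_{i \in I}A_i\inter_{j\in J}\overline{A_j}\right).
$$
Both sets of events on the left-hand side and in the first term on the right have strictly fewer than $\card J$ complemented events, so the induction hypothesis applies to each.

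Applying the induction hypothesis and solving for the quantity we want yields
$$
\Proba\left(\inter_{i \in I}A_i\inter_{j\in J}\overline{A_j}\right)=\prod_{i \in I}\Proba(A_i)\prod_{j \in J\setminus\{j_0\}}(1-\Proba(A_j))-\Proba(A_{j_0})\prod_{i \in I}\Proba(A_i)\prod_{j \in J\setminus\{j_0\}}(1-\Proba(A_j)),
$$
and factoring out the common product gives the factor $(1-\Proba(A_{j_0}))$, completing the product over all of $J$.

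There is no real obstacle here: the only thing to be careful about is the bookkeeping of the index sets (keeping $I \inter J = \varnothing$ throughout the induction, which is preserved since $j_0 \in J$ moves to the role of an ``added'' index that is no longer complemented in the intermediate expression, or is simply removed in the other term). The whole argument is a standard inclusion-exclusion-style induction and mirrors the arguments already used in Lemma~\ref{prob:easy} and Lemma~\ref{prob:hard} earlier in Appendix~\ref{appendix:prob}.
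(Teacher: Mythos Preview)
Your proposal is correct and follows essentially the same approach as the paper: induction on $\card J$, peeling off one element $j_0\in J$ via the disjoint decomposition $B=(B\cap A_{j_0})\uplus(B\cap\overline{A_{j_0}})$, applying the induction hypothesis to both resulting terms, and factoring out $(1-\Proba(A_{j_0}))$. The paper's proof is identical up to naming the chosen element $a$ instead of $j_0$.
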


\begin{proof}
    We will show by induction on the size of $J$ that this holds for every $I \subseteq [n]$ such that $I \inter J = \varnothing$. It is clear for the case $\card J = 0$.

    Let $J$ be a nonempty subset of $[n]$ and $I$ a subset of $[n]$ such that $I \inter J = \varnothing$. Let $a\in J$. Then we have that, by the induction hypothesis:
    $$
    \Proba\left(\inter_{i \in I}A_i\inter_{j\in J\setminus \{a\}}\overline{A_j}\right)=\prod_{i \in I}\Proba(A_i)\prod_{j \in J\setminus \{a\}}(1-\Proba(A_j))
    $$

    However, we also have that:
    $$
    \Proba\left(\inter_{i \in I}A_i\inter_{j\in J\setminus \{a\}}\overline{A_j}\right) = \Proba\left(\inter_{i \in I}A_i\inter_{j\in J}\overline{A_j}\right)+\Proba\left(\inter_{i \in I\union\{a\}}A_i\inter_{j\in J\setminus \{a\}}\overline{A_j}\right)
    $$
    Again, by the induction hypothesis, we have that 

    $$
    \Proba\left(\inter_{i \in I\union\{a\}}A_i\inter_{j\in J\setminus \{a\}}\overline{A_j}\right)=\prod_{i \in I\union\{a\}}\Proba(A_i)\prod_{j \in J\setminus \{a\}}(1-\Proba(A_j))
    $$

Piecing everything together, we get that:

\begin{align*}
    \Proba\left(\inter_{i \in I}A_i\inter_{j\in J}\overline{A_j}\right)&=\Proba\left(\inter_{i \in I}A_i\inter_{j\in J\setminus \{a\}}\overline{A_j}\right)-\Proba\left(\inter_{i \in I\union\{a\}}A_i\inter_{j\in J\setminus \{a\}}\overline{A_j}\right)\\
    &=\prod_{i \in I}\Proba(A_i)\prod_{j \in J\setminus \{a\}}(1-\Proba(A_j))-\prod_{i \in I\union\{a\}}\Proba(A_i)\prod_{j \in J\setminus \{a\}}(1-\Proba(A_j))\\
    &=\prod_{i \in I}\Proba(A_i)\prod_{j \in J}(1-\Proba(A_j))
\end{align*}
\end{proof}

\begin{lemma}[Hoeffding's inequality for binomial distributions~~\cite{hoeffding1963probability}]\label{lem:hoeffding}
Let $Y$ be a binomial random variable with parameters $(t,p)$. We then have, for any $x \le tp$:

$$
\Proba(Y \le x) \le \exp\left(-2 t \left(p-\frac x t\right)^2\right)
$$
    
\end{lemma}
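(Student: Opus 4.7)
The plan is to derive this as a standard Chernoff-type tail bound using the moment generating function of the Bernoulli, i.e., the usual proof of Hoeffding's inequality specialized to binomial variables. First, I would write $Y = \sum_{i=1}^{t} X_i$ where $X_1, \dots, X_t$ are independent Bernoulli random variables of parameter $p$, and set $s := tp - x \ge 0$, so the inequality to prove becomes $\Proba(Y - \expect[Y] \le -s) \le \exp(-2s^2/t)$.

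The key technical lemma is Hoeffding's lemma: for any centered random variable $Z$ with $a \le Z \le b$ almost surely, $\expect[e^{\lambda Z}] \le \exp(\lambda^2 (b-a)^2 / 8)$ for every $\lambda \in \R$. I would prove this by using convexity of $z \mapsto e^{\lambda z}$ to bound $e^{\lambda Z} \le \frac{b-Z}{b-a} e^{\lambda a} + \frac{Z-a}{b-a} e^{\lambda b}$, taking expectations (using $\expect[Z]=0$), and then showing that the resulting function $\varphi(\lambda) := \log\bigl(\tfrac{b}{b-a} e^{\lambda a} + \tfrac{-a}{b-a} e^{\lambda b}\bigr)$ satisfies $\varphi(0)=\varphi'(0)=0$ and $\varphi''(\lambda) \le (b-a)^2/4$ (by recognizing the second derivative as the variance of a Bernoulli-like variable, bounded by $(b-a)^2/4$), so that Taylor's theorem gives $\varphi(\lambda) \le \lambda^2(b-a)^2/8$.

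Next, I would apply the classical Chernoff bound: for any $\lambda > 0$,
\begin{align*}
\Proba\bigl(Y - \expect[Y] \le -s\bigr) &= \Proba\bigl(e^{-\lambda(Y-\expect[Y])} \ge e^{\lambda s}\bigr) \\
&\le e^{-\lambda s}\, \expect\bigl[e^{-\lambda(Y-\expect[Y])}\bigr] \\
&= e^{-\lambda s}\, \prod_{i=1}^{t} \expect\bigl[e^{-\lambda(X_i - p)}\bigr].
\end{align*}
Applying Hoeffding's lemma to each centered Bernoulli $X_i - p \in [-p, 1-p]$ (so $b-a = 1$) yields $\expect[e^{-\lambda(X_i-p)}] \le e^{\lambda^2/8}$, and hence $\Proba(Y - \expect[Y] \le -s) \le \exp(-\lambda s + \lambda^2 t / 8)$.

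Finally, I would optimize over $\lambda > 0$ by taking $\lambda = 4s/t$, which gives the bound $\exp(-2s^2/t) = \exp\bigl(-2t(p - x/t)^2\bigr)$, as desired. The main (and really only) obstacle is the proof of Hoeffding's lemma itself, i.e., the variance-bounding step $\varphi''(\lambda) \le (b-a)^2/4$; everything else is a routine Chernoff calculation with an explicit optimal $\lambda$.
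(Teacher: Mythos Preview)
Your proposal is correct and is the standard textbook proof of Hoeffding's inequality. The paper does not actually supply a proof of this lemma; it simply states the result with a citation to Hoeffding's original 1963 paper~\cite{hoeffding1963probability}. So there is nothing to compare against beyond noting that your Chernoff/moment-generating-function argument via Hoeffding's lemma is exactly the classical derivation and is sound as written.
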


\section{Ommitted Lemmas and Proofs from Section~\ref{sec:broadcast}}\label{sec:apptrees}

\begin{lemma}\label{lem:couple}
    For every $t\in \N$, we have that $n\ge N_t \ge X_t$.
\end{lemma}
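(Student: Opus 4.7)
The plan is to prove the slightly stronger invariant $0 \le X_t \le N_t \le n$ for every $t\in\N$ by induction on $t$. The upper bound $N_t \le n$ is immediate since there are only $n$ nodes, and $X_t \ge 0$ follows from $X_0 = 1$ together with the observation that $X_{t+1} \ge X_t$ in both cases of the recursive definition. The substantive part is the middle inequality $X_t \le N_t$.

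The base case is trivial: $X_0 = 1 = N_0$. For the inductive step, assume $0 \le X_t \le N_t \le n$ and split into the two cases of the definition of $X_{t+1}$. In the case where $N_{t+1}-N_t < (n-N_t)\tfrac{N_t}{n}$, we have $X_{t+1} = X_t$, and since informed nodes never become uninformed we get $N_{t+1} \ge N_t \ge X_t = X_{t+1}$.

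In the other case, $X_{t+1} = X_t + (n-X_t)\tfrac{X_t}{n}$ while $N_{t+1} \ge N_t + (n-N_t)\tfrac{N_t}{n}$. Introduce the auxiliary function $f(x) = x + (n-x)\tfrac{x}{n} = 2x - \tfrac{x^2}{n}$, whose derivative $f'(x) = 2 - \tfrac{2x}{n}$ is nonnegative on $[0,n]$, so $f$ is nondecreasing there. By the induction hypothesis $0 \le X_t \le N_t \le n$, so $f(X_t) \le f(N_t)$, i.e.\ $X_{t+1} \le N_t + (n-N_t)\tfrac{N_t}{n} \le N_{t+1}$, completing the step.

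There is no real obstacle here; the only point requiring mild care is ensuring $X_t$ stays in the interval $[0,n]$ on which $f$ is monotone, which is precisely why the invariant is strengthened to $0 \le X_t \le N_t \le n$ rather than just $X_t \le N_t$.
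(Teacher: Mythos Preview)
Your proof is correct and follows essentially the same approach as the paper: induction on $t$, splitting on the two cases of the definition of $X_{t+1}$, and in the nontrivial case using monotonicity of $f(x)=2x-x^2/n$ on $[0,n]$ together with $X_t\le N_t$ to conclude $X_{t+1}=f(X_t)\le f(N_t)\le N_{t+1}$. The only cosmetic difference is that you fold the bound $X_t\ge 0$ into the same induction, whereas the paper records the slightly sharper $X_t\ge 1$ as a separate lemma; in fact $f$ is increasing on all of $(-\infty,n]$, so the lower bound on $X_t$ is not even needed for the monotonicity step.
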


\begin{proof}
    Note that $N_{t}$ cannot go higher than $n$ because it is the number of nodes informed after round $t$, which is at most $n$. 
    
    We will prove the rest by induction on $t$. For the induction basis note that by definition $N_0=1= X_0$. For the induction step let us assume that $n\ge N_t \ge X_t$ for some $t \in \N$. Consider first the case that $N_{t+1} - N_t < (n-N_t)\cdot \frac {N_t} n$. Since no informed node can become uninformed, we have that $N_{t+1} \ge N_t \ge X_t = X_{t+1}$, as desired. 
    Next consider the case that $N_{t+1} - N_t \ge (n-N_t)\cdot \frac {N_t} n$. Then $N_{t+1} \ge N_t + (n-N_t)\cdot \frac {N_t} n$ and $X_{t+1}=X_t + (n-X_t) \cdot \frac{X_t} n$. As the function $x \mapsto x+(n-x)\frac{x}{n}$ is strictly increasing for $x\le n$, this proves that $N_{t+1} \ge X_{t+1}$, as desired.
\end{proof}

\begin{lemma}\label{lem:couple2}
    For every $t \in \N$, we have that $X_t \ge 1$.
\end{lemma}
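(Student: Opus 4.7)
The plan is a straightforward induction on $t$. The base case is immediate since $X_0 = 1$ by definition. For the inductive step, suppose $X_t \geq 1$ for some $t \in \mathbb{N}$ and consider the two cases in the definition of $X_{t+1}$.

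If $N_{t+1} - N_t < (n - N_t) \cdot \frac{N_t}{n}$, then $X_{t+1} = X_t$ and the inequality $X_{t+1} \geq 1$ follows directly from the induction hypothesis. Otherwise, $X_{t+1} = X_t + (n - X_t) \cdot \frac{X_t}{n}$. By Lemma~\ref{lem:couple}, we have $X_t \leq n$, so $n - X_t \geq 0$, and since $X_t \geq 1 > 0$ by the induction hypothesis, the correction term $(n - X_t) \cdot \frac{X_t}{n}$ is nonnegative. Therefore $X_{t+1} \geq X_t \geq 1$, completing the induction.

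The only subtle point — and really the only thing to verify — is that we are allowed to invoke Lemma~\ref{lem:couple} (which gives $X_t \leq n$) without circularity. Lemma~\ref{lem:couple} is proved independently (its proof uses only the definitions of $N_t$ and $X_t$ and the monotonicity of $x \mapsto x + (n-x)\frac{x}{n}$ on $[0,n]$, not the lower bound $X_t \geq 1$), so there is no circular dependence. Thus the induction goes through, and no further work is required.
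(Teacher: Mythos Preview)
The proposal is correct and essentially identical to the paper's own proof: both proceed by induction on $t$, handle the two cases in the recursive definition of $X_{t+1}$, and in the nontrivial case use $1 \le X_t \le n$ (the upper bound coming from Lemma~\ref{lem:couple}) to conclude that the increment $(n-X_t)\frac{X_t}{n}$ is nonnegative. Your remark about the absence of circularity with Lemma~\ref{lem:couple} is a helpful clarification that the paper leaves implicit.
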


\begin{proof}
    We will again show this by induction. For the induction basis note that by definition $1= X_0$. For the induction step let us assume that $X_t \ge 1$ for some $t \in \N$. We then have two cases, either $X_{t+1}=X_t$ and the result holds trivially, or $X_{t+1} = X_t + (n-X_t) \cdot \frac{X_t} n$. Since $1\le X_t \le n$, we have that $X_{t+1} \ge X_t \ge 1$.
\end{proof}

\begin{lemma}\label{lem:Xincreases}
    For every $t\in \N$, we have that $n> X_t$, if $n>1$.
\end{lemma}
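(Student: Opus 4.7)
\medskip

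The plan is to proceed by induction on $t$, using the key algebraic identity $n - X_{t+1} = \frac{(n-X_t)^2}{n}$ whenever $X_t$ strictly increases.

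For the base case, $X_0 = 1 < n$ by assumption that $n > 1$. For the inductive step, assume $X_t < n$. If $X_{t+1} = X_t$, then $X_{t+1} < n$ holds immediately. Otherwise, by definition $X_{t+1} = X_t + (n - X_t)\cdot \frac{X_t}{n}$, and a direct computation gives
\[
n - X_{t+1} \;=\; n - X_t - (n - X_t)\cdot \frac{X_t}{n} \;=\; (n - X_t)\left(1 - \frac{X_t}{n}\right) \;=\; \frac{(n - X_t)^2}{n}.
\]
By the induction hypothesis $X_t < n$, so $n - X_t > 0$, hence $(n - X_t)^2 > 0$ and therefore $n - X_{t+1} > 0$, i.e., $X_{t+1} < n$, completing the induction.

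There is no substantial obstacle here; the argument is purely algebraic and relies only on the recursive definition of $X_t$ and the assumption $n > 1$. The same identity will later be useful to derive the closed-form expression in Lemma~\ref{lem:valueofX}, since it shows that $n - X_{u_{t+1}} = (n - X_{u_t})^2/n$, which iterates to $n - X_{u_t} = n\left(\frac{n-1}{n}\right)^{2^t}$.
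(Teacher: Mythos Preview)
The proposal is correct and follows the same inductive approach as the paper. Your identity $n - X_{t+1} = (n-X_t)^2/n$ is a slightly cleaner computation than the paper's convex-combination argument (which writes $X_{t+1} = n\bigl(\tfrac{X_t}{n} + \tfrac{n-X_t}{n}\cdot\tfrac{X_t}{n}\bigr)$ and observes the bracket is a strict convex combination of $1$ and $X_t/n$), and as you note it immediately yields the squaring recursion used in Lemma~\ref{lem:valueofX}.
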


\begin{proof}
    We show this claim by induction on $t$. As $n>1$ and $X_1 = 1$, it is trivially true for $t=1$. Assume it is true for $t \in \N$. Then $X_{t+1} \le X_t + (n-X_t) \cdot \frac {X_t} n = n(\frac {X_t} n + \frac{n-X_t} n \frac {X_t} n) <n$, where the last inequality holds by noting that $(\frac {X_t} n + \frac{n-X_t} n \frac {X_t} n)$ is a convex combination of $1$ and $\frac {X_t} n$,  the latter of which being strictly smaller than $1$.
\end{proof}

Essentially, this means that $X_t$ never reaches $n$, and thus that $X_{t+1}$ is always strictly larger than $X_t$ if $N_{t+1}-N_t \geq (n-N_t) \cdot \frac {N_t} n$:

\begin{corollary}\label{cor:Xincreases}
    We have that $X_{t+1}>X_t$ if and only if $N_{t+1}-N_t \geq (n-N_t) \cdot \frac {N_t} n$.
\end{corollary}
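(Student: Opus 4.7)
The plan is to derive this as an immediate consequence of the definition of $X_t$ combined with the two preceding lemmas, namely Lemma~\ref{lem:couple2} ($X_t \ge 1$ for all $t$) and Lemma~\ref{lem:Xincreases} ($X_t < n$ for all $t$, assuming $n>1$). Both directions of the ``if and only if'' will be handled separately, though the only substantive content is in the forward direction; the reverse direction is just the contrapositive of one branch of the definition.

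For the forward direction, I would assume $N_{t+1}-N_t \ge (n-N_t)\cdot \frac{N_t}{n}$. By the defining equation of $X_t$, this places us in the first case, so $X_{t+1} = X_t + (n-X_t)\cdot \frac{X_t}{n}$. It then suffices to observe that the increment $(n-X_t)\cdot \frac{X_t}{n}$ is strictly positive. This is where the two preceding lemmas enter: Lemma~\ref{lem:couple2} gives $X_t \ge 1 > 0$, and Lemma~\ref{lem:Xincreases} gives $X_t < n$, so $n - X_t > 0$. Consequently $X_{t+1} > X_t$.

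For the reverse direction, I would argue by contrapositive. Suppose instead that $N_{t+1}-N_t < (n-N_t)\cdot \frac{N_t}{n}$. Then we are in the second branch of the definition of $X_t$, which sets $X_{t+1} = X_t$, so $X_{t+1} > X_t$ fails. Hence $X_{t+1} > X_t$ forces $N_{t+1}-N_t \ge (n-N_t)\cdot \frac{N_t}{n}$.

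There is no real obstacle here; the only thing worth flagging is that Lemma~\ref{lem:Xincreases} is stated under the hypothesis $n > 1$, so technically the corollary inherits that hypothesis (which is harmless since the $n=1$ case is degenerate for Broadcast anyway). The whole proof is essentially two lines once one unpacks the piecewise definition of $X_t$.
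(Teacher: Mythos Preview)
Your proposal is correct and is exactly the argument the paper has in mind: the corollary is stated there immediately after Lemma~\ref{lem:Xincreases} with only the remark that $X_t$ never reaches $n$, so the increment $(n-X_t)\cdot X_t/n$ is strictly positive in the first branch of the definition. Your write-up simply makes this explicit, invoking Lemma~\ref{lem:couple2} and Lemma~\ref{lem:Xincreases} for $0 < X_t < n$ and reading off the second branch for the contrapositive.
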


\valuesofX*

\begin{proof}
    We show the claim by induction on $t$.
    By definition of $u_0$ we have that $X_{u_0} =1$. Thus the induction basis $X_{u_0}=1 = n-n\left(\frac {n-1} n\right)^{2^0}$ follows. 
    
    For the induction step assume next the result is true for some $t\in \N$. 
    Note that for every $t \in \N$, it holds that $X_{u_{t+1}}=X_{u_t}+(n-X_{u_t})\cdot \frac {X_{u_t}} n$. Indeed, we have that $X_{u_{t+1}}=X_{u_{t+1}-1}=\dots=X_{u_{t}+1}=X_{u_t}+(n-X_{u_t})\cdot \frac {X_{u_t}} n$. Thus,
    
    \begin{align*}
        X_{u_{t+1}}&=X_{u_t}+(n-X_{u_t})\cdot \frac {X_{u_t}} n=n-n\left(\frac {n-1} n\right)^{2^t} \\
        &\qquad+\left(n-n+n\left(\frac {n-1} n\right)^{2^t}\right)\frac{n-n\left(\frac {n-1} n\right)^{2^t}} n\\
        &=n-n\left(\frac {n-1} n\right)^{2^t} +\left(\frac {n-1} n\right)^{2^t}\left({n-n\left(\frac {n-1} n\right)^{2^t}}\right)\\
        &=n-n\left(\frac {n-1} n\right)^{2^{t+1}}
    \end{align*}
\end{proof}

\Nequalsn*
\begin{proof}
Let $Y_{u_i} = \frac{n-X_{u_i}} {n}$. Then
$X_{u_t} > n-1$ iff $Y_{u_t} < 1/n$.
Further 
$Y_{u_0} = \frac{n-1}{n}$ and
$Y_{u_{t+1}} = \frac{n-X_{u_{t+1}}} {n} = \left(\frac {n-1} n\right)^{2^{t+1}} = (Y_{u_t})^2$.
Now note that $Y_{u_t} < 1/n$ iff
$\left(\frac {n-1} n\right)^{2^{t}} < 1/n$ iff
$2^t < \frac{\log (1/n)} {\log (\frac {n-1} n)} = 
\frac{\log n}{\log n - \log (n-1)}$. Now note that ${\log n - \log (n-1)} > 1/n$ and, thus, 
$\frac{\log n}{\log n - \log (n-1)} < n \log n \le n^2$. 
It follows that for $t \le 2 \log (n)$ it holds that
$X_{u_t} > n-1$ and, hence, $N_{u_t} \ge X_{u_t} > n-1.$
As $N_{u_t}$ is an integer, the result follows.\end{proof}

\xbinomial*

\begin{proof}
By Corollary~\ref{sec:5.cor:Xincreases}, we have that $X_{t+1} > X_t$ if and only if $N_{t+1} - N_t \ge (n-N_t)\cdot \frac {N_t} n$. Thus,
$\Proba\left(X_{t+1} > X_{t}\right) = \sum_{\ell\in [n]}
\Proba\left(N_{t+1}-N_t \geq (n-N_t) \cdot \frac {N_t} n\middle| N_t =\ell\right) \Proba(N_t = \ell)$.
Hence we only have to show that $\Proba\left(N_{t+1}-N_t \geq (n-N_t) \cdot \frac {N_t} n\middle| N_t =\ell\right) \ge \frac 1 4$ for every $\ell=N_t$.
Recall that $N_{t+1}-N_t$ follows a binomial distribution with parameters $m = n-N_t$ and $p = \frac{N_t}{n}$. 

If $N_t=n$, the result holds trivially. Otherwise, if $pm \ge 1$, then the result holds by applying Theorem~\ref{thm:binestimate}. If on the other hand $pm \le 1$, then we note that $pm = \frac{(n-N_t) N_t}{n}$ which is at its minimum for $N_t=1$. Therefore $pm \ge \frac{n-1}{n}\ge \frac 1 3$ if $n \ge 2$. The result then holds by applying Lemma~\ref{prob:bin}.\end{proof}

\alltoallBroadcast*
\begin{proof}
    Let $N_t^{(i)}$ be the random variable that represents the number of nodes that are informed after round $t$ of the message given to node $i$. By Theorem~\ref{thm:Broadcast}, we know that $\Proba \left( N_{32\cdot c\cdot \ln n}^{(i)}<n\right) \le n^{-c}$ for every $i \in [n]$. Using a union-bound, we get that:
    $$
    \Proba \left(\Union_{i \in [n]} N_{32\cdot c\cdot \ln n}^{(i)}<n\right) \le n^{-c+1}
    $$
    And thus:
    $$
    \Proba \left(\Inter_{i \in [n]} N_{32\cdot c\cdot \ln n}^{(i)}=n\right) = 1-\Proba \left(\Union_{i \in [n]} N_{32\cdot c\cdot \ln n}^{(i)}<n\right) \ge 1- n^{-c+1}
    $$
\end{proof}
\section{Adversarial Nodes: Trees with Byzantine Nodes}\label{app:Byzantine}
In this section, we will discuss the case where some nodes are \emph{Byzantine}, that is, nodes that can arbitrarily deviate from the protocol. 
These nodes can stop functioning, send wrong messages, and coordinate to make the protocol fail. We will rely on cryptographic tool so that each node can sign and encrypt the message it sends. Then nodes can be confident about the sender of each message and its content and can forward the message along with its unchanged signature to other nodes. We will assume that there are up to $f$ Byzantine nodes, out of a total of $n$ nodes. We require that $f \le \frac 2 3 n-1$. Nodes that are not Byzantine are called honest.

We begin by analyzing Broadcast in this setting. We first give a message to a fixed honest node, and ask the node to forward it to all other honest nodes.
Note the difference between this model and the reliable Broadcast model, where the initial message could be from an honest node or a Byzantine node, and where if the initial message is from a Byzantine node, then the message accepted by each honest node must be the same. 

In our
setting, the best strategy for the Byzantine node is not to forward any message at all. Indeed, they cannot modify the content of a message because they cannot forge any signature, and, thus, their power is limited. Hence, we will  analyze this problem as if Byzantine nodes are just defunct
but the process that chooses the communication network, i.e., the random tree, does not know which nodes are Byzantine and, thus, they are part of the network as before, i.e., the tree still consists of $n$ nodes.

As in the previous section, $I_t$ and $S_t$ will, respectively, be the set of informed and uninformed nodes after round $t$. 
We set $I_0=\{v\}$ and $S_0=[n-f]-\{v\}$, where $v$ is the node that initially holds the message, $N_t=\card{I_t}$ to be the number of informed and honest nodes after $t$ rounds, and $T_t$ to be the tree chosen at random in round $t$. 
For a tree $T$, for each node $p$, $P_T(p)$ is the (unique) parent of node $p$ in $T$, unless $p$ is the root of $T$, in which case $P_T(p)=p$. 
Simplifying the notation, we also use $P_t(p)$ to denote $P_{T_t}(p)$. We use
${A(S, x)}$ where $S$ is a set and $x$ an integer to represent the set of subsets of $S$ of size $x$.

For the rest of the paper, we will denote $\tau = \frac {\log n}{\log \left({1+\frac{n-f}{2n}}\right)}$. Note that $\log n \le \tau \le 4.5 \log n$.

Again, the central lemma will characterize how many new nodes get informed in each round, depending on how many were informed after the previous round. This lemma shows that uninformed nodes get informed independently from each other.
\begin{lemma} \label{sec:5.lem:binom}
    For any $t > 0$,
    $N_{t+1}-N_t$ follows a binomial distribution with parameters $\left(n-f-N_t,  \frac {N_t} n\right)$.
\end{lemma}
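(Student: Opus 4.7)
The plan is to mirror the proof of Lemma~\ref{lem:binom} almost verbatim, with only two adjustments that account for the Byzantine nodes. First, since Byzantine nodes never forward the message, a node gets informed in round $t+1$ iff it is honest and its parent in $T_{t+1}$ is in $I_t$ (the honest informed set). Thus only the $n-f-N_t$ honest uninformed nodes can possibly contribute to $N_{t+1}-N_t$. Second, the random tree still spans all $n$ nodes, so the tree-counting argument from Theorem~\ref{thm:treecount} applies unchanged: the sample space consists of all directed rooted trees on $n$ vertices.

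Concretely, let $H_t \subseteq S_t$ denote the set of honest uninformed nodes, with $|H_t|=n-f-N_t$. Following the template of Lemma~\ref{lem:binom}, I would first write
\[
\Proba(N_{t+1}-N_t=x \mid \F_t) = \sum_{J \in A(H_t,x)} \Proba\!\left( \Inter_{y \in J}(P_{t+1}(y) \in I_t) \Inter_{y \in H_t \setminus J}(P_{t+1}(y) \notin I_t)\,\middle|\, \F_t\right),
\]
and then prove that the events $\{P_{t+1}(y) \in I_t\}_{y \in H_t}$ are mutually independent, each with probability $N_t/n$. For any $J \subseteq H_t$ and any assignment $a \in (I_t)^{|J|}$, the set of rooted trees with $P_T(y)=a_y$ for all $y \in J$ corresponds exactly to rooted trees containing the forest $\{(a_y,y) : y \in J\}$ of $|J|$ disjoint edges (they are disjoint because each $y \in J$ receives a unique parent, and distinct $y$'s are different sinks). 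By Theorem~\ref{thm:treecount}, this count is $n^{n-1-|J|}$, so
\[
\Proba\!\left(\Inter_{y \in J}(P_{t+1}(y) \in I_t) \,\middle|\, \F_t\right) = \sum_{a \in (I_t)^{|J|}} \frac{n^{n-1-|J|}}{n^{n-1}} = \left(\frac{N_t}{n}\right)^{|J|}.
\]
This gives mutual independence (Definition~\ref{def:mutually}) with marginal probability $N_t/n$.

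Applying Lemma~\ref{lem:mutually} to the first display and summing over all $J \in A(H_t,x)$ then yields
\[
\Proba(N_{t+1}-N_t=x \mid \F_t) = \binom{n-f-N_t}{x}\left(\frac{N_t}{n}\right)^{x}\left(1-\frac{N_t}{n}\right)^{n-f-N_t-x},
\]
which is exactly the claimed binomial distribution. The only conceptual point worth flagging is that although Byzantine nodes are ``invisible'' to the protocol, they are not invisible to the random process choosing $T_{t+1}$: they still occupy vertices of the tree and they still contribute to the denominator $n$ in the marginal probability $N_t/n$. Recognizing this is the main (and really the only) subtlety; once it is stated, the argument is a direct transcription of the previous proof with $n-N_t$ replaced by $n-f-N_t$ in the outer sum while leaving the ratio $N_t/n$ untouched.
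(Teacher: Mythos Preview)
Your proof is correct and follows essentially the same approach as the paper's: decompose over subsets of the honest uninformed nodes, use Theorem~\ref{thm:treecount} to count trees containing the star forest $\{(a_y,y):y\in J\}$, deduce mutual independence of the events $\{P_{t+1}(y)\in I_t\}$ with marginal probability $N_t/n$, and conclude via Lemma~\ref{lem:mutually}. The only cosmetic difference is that the paper already defines $S_t$ in this section to contain exactly the honest uninformed nodes (so $|S_t|=n-f-N_t$), making your auxiliary set $H_t$ redundant.
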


This lemma proves that every uninformed node has probability $\frac {N_t} n$ of having an informed parent in round $t+1$, regardless of whether the other uninformed nodes have an uninformed parent.

\begin{proof}
    Let $I_t=\{i_1, \dots , i_{N_t}\}$ and $S_t=\{s_1, \dots , s_{n-f-N_t}\}$. We then have, for any integer $x$:

$$
        \Proba(N_{t+1}-N_t=x | \F_t) =\sum_{J\in {A(S_t, x)}}\Proba\left(\Inter_{y \in J} (P_{t+1}(y) \in I_t) \Inter_{y \in S_t\setminus J}(P_{t+1}(y) \notin I_t) \middle| \F_t\right)
$$

Our goal is to show that the events $P_t(y) \in I_t$ for different 
$y \in S_t$ are mutually independent.
Let us look at the event $\Inter_{y \in J} (P_t(y) \in I_t)$ for any $J\subseteq S_t$ (note that we do not require that $J$ has a specific size here). 
We can then write, indexing $a$ on $J$:

\begin{align*}
\Proba\left(\Inter_{y \in J} (P_{t+1}(y) \in I_t)\middle| \F_t\right)&=\sum_{a\in[N_t]^{\card J}} \Proba\left(\Inter_{y \in J} (P_{t+1}(y) =i_{a_y})\middle| \F_t\right)\\
&= \sum_{a\in[N_t]^{\card J}} \frac{\card{\{ T\in \RT_n:P_T(y) =i_{a_y}, \forall y \in J\}}}{\card{\RT_n}}
\end{align*}

Now consider the forest that is composed of stars whose centers are the $i_{a_y}$ and whose leaves are the nodes $y$. More specifically, consider the forest that contains the edges
 $(i_{a_y}, y), \forall y \in J$. Note that $\card{\{ T\in \RT_n:P_T(y) =i_{a_y}, \forall y \in J\}}$ equals the number of rooted trees that are compatible with this forest.
By Theorem~\ref{thm:treecount}, we have that $\card{\{ T\in \RT_n:P_T(y) =i_{a_y}, \forall y \in J\}}=n^{n-1-\card{J}}$.
This allows us to compute the above probability as follows:

$$
\Proba\left(\Inter_{y \in J} (P_{t+1}(y) \in I_t)\middle| \F_t\right)= \sum_{a\in[N_t]^{\card J}}\frac{n^{n-1-\card{J}}}{n^{n-1}}=\left(\frac {N_t} n\right)^{\card J}
$$

This proves that the events $P_{t+1}(y) \in I_t$ for any two $y \in S_t$ are mutually independent (Definition~\ref{def:mutually}),
each having probability $\frac {N_t} n$. Going back to the first equation of this proof, we can now compute, using Lemma~\ref{lem:mutually}:

\begin{align*}
        \Proba(N_{t+1}-N_t=x| \F_t) &=\sum_{J\in {A(S_t, x)}}\prod_{y \in J}\Proba\left(P_{t+1}(y) \in I_t\middle| \F_t\right)\prod_{y \in S_t\setminus J}\Proba \left(P_{t+1}(y) \notin I_t) \middle| \F_t\right)\\&={{n-f-N_t} \choose x} \left(\frac {N_t} n\right)^{x}\left(1-\frac {N_t} n\right)^{n-f-N_t-x}
\end{align*}
\end{proof}
Our next goal is to show that $N_t = n-f$ with high probability for all $t \ge 32\cdot \tau \cdot c$, for any $c \ge 1$, and $\tau = \frac {\log n} {\log (1+\frac{n-f}{2n})}$. 
To do so we introduce a random variable $X_t$ that we use to lower bound $N_t$ and we show right below that $X_t \le N_t$ for all $t$.
\begin{definition}
    Let $X_t$ be the random variable that is defined as follows:

    \begin{align*}
    X_0&=1\\
        X_{t+1}&=X_t+(n-f-X_t) \cdot \frac {X_t} n &\text{if} \quad N_{t+1}-N_t \geq (n-f-N_t) \cdot \frac {N_t} n\\
        X_{t+1}&=X_t & \text{if} \quad N_{t+1}-N_t < (n-f-N_t) \cdot \frac {N_t} n
    \end{align*}
\end{definition}

\begin{lemma}\label{sec:5.lem:couple}
    For every $t\in \N$, we have that $n-f\ge N_t \ge X_t$.
\end{lemma}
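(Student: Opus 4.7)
The plan is to prove both inequalities by induction on $t$, running essentially the same argument as in Lemma~\ref{lem:couple} but with $n$ replaced by $n-f$ in the appropriate places. The upper bound $N_t \le n-f$ is immediate since $N_t$ counts honest informed nodes and there are only $n-f$ honest nodes in total, so I will focus on $N_t \ge X_t$. For the base case, $N_0 = 1 = X_0$ holds by definition, and the upper bound is fine as long as $f \le 2n/3 - 1$, which is assumed.

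For the inductive step, assume $n-f \ge N_t \ge X_t$ and split on the defining dichotomy of $X_{t+1}$. If $N_{t+1} - N_t < (n-f-N_t)\cdot N_t/n$, then $X_{t+1} = X_t$, and since informed honest nodes never become uninformed we have $N_{t+1} \ge N_t \ge X_t = X_{t+1}$. In the other case, $X_{t+1} = X_t + (n-f-X_t)\cdot X_t/n$ and $N_{t+1} \ge N_t + (n-f-N_t)\cdot N_t/n$, so it suffices to show that the map $\varphi(x) := x + (n-f-x)\cdot x/n$ is monotone non-decreasing on the interval containing both $X_t$ and $N_t$. A short calculation gives $\varphi(x) = x(2n-f-x)/n$, whose derivative is nonnegative for $x \le n - f/2$; since both $X_t$ and $N_t$ lie in $[0,n-f] \subseteq [0, n-f/2]$, monotonicity yields $N_{t+1} \ge \varphi(N_t) \ge \varphi(X_t) = X_{t+1}$.

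The only part that requires auxiliary justification is that $X_t$ itself stays within $[0, n-f]$, so that the monotonicity argument applies. I would fold this into the same induction: $X_0 = 1 \le n-f$, and in the strictly increasing case $X_{t+1} = X_t + (n-f-X_t)\cdot X_t/n \le X_t + (n-f-X_t) = n-f$ since $X_t/n \le 1$; in the other case $X_{t+1} = X_t \le n-f$ trivially. Nothing here is surprising, and the main (minor) obstacle is simply to be careful that all the occurrences of $n$ in the original tree-counting computation of Lemma~\ref{sec:5.lem:binom} really do stay as $n$, while the parameter of the binomial is $n-f-N_t$, so that the ``cap'' for $N_t$ and $X_t$ is $n-f$ rather than $n$ — this is exactly why the definition of $X_t$ in this section uses $(n-f-X_t)$ in the drift term.
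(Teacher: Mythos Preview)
Your proof is correct and follows essentially the same approach as the paper's own proof: both argue the upper bound directly from the definition of $N_t$, then prove $N_t \ge X_t$ by induction with the same case split and the same monotonicity observation for $x \mapsto x + (n-f-x)x/n$ on $[0,n-f]$. Your explicit verification that $X_t \le n-f$ is not strictly needed as a separate step, since the induction hypothesis $X_t \le N_t \le n-f$ already places $X_t$ in the required interval, but including it does no harm.
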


\begin{proof}
    Note that $N_{t}$ cannot go higher than $n-f$ because it is the number of honest nodes informed after round $t$, which is at most $n-f$. 
    
    We will prove the rest by induction on $t$. For the induction basis note that by definition $N_0=1= X_0$. For the induction step let us assume that $n-f\ge N_t \ge X_t$ for some $t \in \N$. Consider first the case that $N_{t+1} - N_t < (n-f-N_t)\cdot \frac {N_t} n$. Since no informed node can become uninformed, we have that $N_{t+1} \ge N_t \ge X_t = X_{t+1}$, as desired. 
    Next consider the case that $N_{t+1} - N_t \ge (n-f-N_t)\cdot \frac {N_t} n$. Then $N_{t+1} \ge N_t + (n-f-N_t)\cdot \frac {N_t} n$ and $X_{t+1}=X_t + (n-f-X_t) \cdot \frac{X_t} n$. As the function $x \mapsto x+(n-f-x)\frac{x}{n}$ is strictly increasing for $x\le n-f$, this proves that $N_{t+1} \ge X_{t+1}$, as desired.
\end{proof}

\begin{lemma}\label{sec:5.lem:couple2}
    For every $t \in \N$, we have that $X_t \ge 1$.
\end{lemma}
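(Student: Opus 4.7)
The plan is to prove the claim by straightforward induction on $t$, exactly mirroring the argument used for Lemma~\ref{lem:couple2} in the non-Byzantine setting. The induction basis is immediate since by definition $X_0 = 1 \ge 1$.

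For the induction step, assume $X_t \ge 1$ for some $t \in \N$. By the definition of $X_{t+1}$, there are two cases. If $N_{t+1} - N_t < (n-f-N_t) \cdot \frac{N_t}{n}$, then $X_{t+1} = X_t \ge 1$ and we are done. Otherwise, we have
\[
X_{t+1} = X_t + (n-f-X_t) \cdot \frac{X_t}{n}.
\]
The induction hypothesis gives $X_t \ge 1 > 0$, and Lemma~\ref{sec:5.lem:couple} gives $X_t \le N_t \le n-f$, so $n - f - X_t \ge 0$. Hence the added term $(n-f-X_t)\cdot \frac{X_t}{n}$ is nonnegative, which yields $X_{t+1} \ge X_t \ge 1$.

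There is no significant obstacle here: the only subtlety is that the increment formula could in principle produce something smaller than $X_t$ if $n-f-X_t$ were negative, but this is ruled out by the previously established upper bound $X_t \le n - f$ from Lemma~\ref{sec:5.lem:couple}. Thus the two lemmata are used together, with Lemma~\ref{sec:5.lem:couple} providing the upper bound needed to ensure monotonicity of $X_t$, and the induction providing the lower bound.
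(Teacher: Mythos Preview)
Your proof is correct and follows essentially the same approach as the paper: induction on $t$, with the induction step split into the two cases of the definition of $X_{t+1}$, and the nontrivial case handled by observing that $1 \le X_t \le n-f$ (the upper bound coming from Lemma~\ref{sec:5.lem:couple}) so that the increment $(n-f-X_t)\cdot \frac{X_t}{n}$ is nonnegative.
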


\begin{proof}
    We will again show this by induction. For the induction basis note that by definition $1= X_0$. For the induction step let us assume that $X_t \ge 1$ for some $t \in \N$. We then have two cases, either $X_{t+1}=X_t$ and the result holds trivially, or $X_{t+1} = X_t + (n-f-X_t) \cdot \frac{X_t} n$. Since $1\le X_t \le n-f$, we have that $X_{t+1} \ge X_t \ge 1$.
\end{proof}
Next we show that $X_t$ never reaches $n-f$ if there are at least 2 honest nodes.
\begin{lemma}\label{sec:5.lem:Xincreases}
    For every $t\in \N$, we have that $n-f> X_t$, if $n-f>1$.
\end{lemma}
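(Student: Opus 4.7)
The plan is to mirror the argument used for Lemma~\ref{lem:Xincreases} in the no-Byzantine setting, with $n-f$ playing the role previously played by $n$. I would prove the statement by induction on $t$.

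For the base case, by definition $X_0 = 1$, so the hypothesis $n-f > 1$ gives $X_0 < n-f$ directly. For the inductive step, assume $X_t < n-f$. If $X_{t+1} = X_t$, the conclusion is immediate. Otherwise, $X_{t+1} = X_t + (n-f-X_t)\cdot \frac{X_t}{n}$, and I would rewrite this (after dividing by $n-f$) as
\[
\frac{X_{t+1}}{n-f} \;=\; \frac{X_t}{n-f}\cdot 1 \;+\; \frac{n-f-X_t}{n-f}\cdot \frac{X_t}{n},
\]
which exhibits $X_{t+1}/(n-f)$ as a convex combination of $1$ and $X_t/n$. By the induction hypothesis together with $f \le \tfrac{2}{3}n - 1$ we have $0 < X_t < n-f \le n$, so the weight $\frac{n-f-X_t}{n-f}$ on $X_t/n$ is strictly positive, and $X_t/n < 1$. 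Hence the convex combination is strictly less than $1$, giving $X_{t+1} < n-f$, as required.

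The only conceptual subtlety—really the only place the Byzantine hypothesis enters—is ensuring $n - f \ge 1$ so that dividing by $n-f$ and treating it as the ``cap'' makes sense; this is guaranteed by the assumption $n-f > 1$ in the statement (and more generously by $f \le \tfrac{2}{3}n-1$ used elsewhere). I do not expect any real obstacle: the proof is a word-for-word adaptation of the earlier lemma, with $n$ replaced by $n-f$ in the upper-bound role while $n$ is retained in the probability denominator $X_t/n$ (since that factor comes from the probability that a uniformly random parent lands in $I_t$, which is over all $n$ vertices including Byzantine ones).

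As an immediate consequence, the analogue of Corollary~\ref{cor:Xincreases} follows, namely $X_{t+1} > X_t$ if and only if $N_{t+1} - N_t \ge (n-f-N_t)\cdot \frac{N_t}{n}$, because $X_t$ cannot be ``stuck'' at the ceiling $n-f$.
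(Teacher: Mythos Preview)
The proposal is correct and follows essentially the same approach as the paper: induction on $t$, with the inductive step rewriting $X_{t+1}/(n-f)$ as a convex combination of $1$ and $X_t/n$ and observing that the latter is strictly less than $1$. The only (harmless) extraneous detail is your appeal to $f \le \tfrac{2}{3}n-1$ to get $n-f \le n$; in fact $f \ge 0$ already suffices there.
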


\begin{proof}
    We show this claim by induction on $t$. As $n-f>1$ and $X_1 = 1$, it is trivially true for $t=1$. Assume it is true for $t \in \N$. Then $X_{t+1} \le X_t + (n-f-X_t) \cdot \frac {X_t} n = (n-f)(\frac {X_t} {n-f} + \frac{n-f-X_t} {n-f} \frac {X_t} n) <n-f$, where the last inequality holds by noting that $(\frac {X_t} {n-f} + \frac{n-f-X_t} {n-f} \frac {X_t} n)$ is a convex combination of $1$ and $\frac {X_t} n$,  the latter of which being strictly smaller than $1$.
\end{proof}

It follows  that $X_{t+1}$ is always strictly larger than $X_t$ if $N_{t+1}-N_t \geq (n-f-N_t) \cdot \frac {N_t} n$:

\begin{corollary}\label{sec:5.cor:Xincreases}
    We have that $X_{t+1}>X_t$ if and only if $N_{t+1}-N_t \geq (n-f-N_t) \cdot \frac {N_t} n$.
\end{corollary}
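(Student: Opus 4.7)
The plan is to prove the corollary by a direct case analysis straight from the definition of $X_t$, leaning on Lemmas~\ref{sec:5.lem:couple2} and~\ref{sec:5.lem:Xincreases} to get the strict inequality in the right place. I would split the biconditional into its two directions and treat each separately.

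For the forward direction, I assume $N_{t+1}-N_t \geq (n-f-N_t)\cdot \frac{N_t}{n}$. By the recursive definition of $X_t$, this places us in the first branch, so $X_{t+1} = X_t + (n-f-X_t)\cdot \frac{X_t}{n}$, and it suffices to show that the increment $(n-f-X_t)\cdot \frac{X_t}{n}$ is strictly positive. I would observe that $X_t\ge 1$ by Lemma~\ref{sec:5.lem:couple2}, and $X_t<n-f$ by Lemma~\ref{sec:5.lem:Xincreases} (which applies provided $n-f>1$, i.e.\ when there is at least one honest node besides the source; the case $n-f\le 1$ is degenerate since broadcast is then vacuously complete). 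Both factors are strictly positive, so $X_{t+1}>X_t$.

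For the reverse direction I would argue by contraposition: if the inequality $N_{t+1}-N_t \geq (n-f-N_t)\cdot \frac{N_t}{n}$ fails, then the second branch of the definition applies and $X_{t+1}=X_t$, which contradicts $X_{t+1}>X_t$. Combining the two directions gives the claimed equivalence.

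There is no real obstacle here; the only subtle point is making sure the strict inequality $X_t < n-f$ is available, which is exactly the content of Lemma~\ref{sec:5.lem:Xincreases}, valid under the harmless assumption $n-f>1$.
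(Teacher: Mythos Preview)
Your proposal is correct and follows exactly the approach the paper intends: the corollary is stated without proof in the paper, as an immediate consequence of the definition of $X_t$ together with Lemmas~\ref{sec:5.lem:couple2} and~\ref{sec:5.lem:Xincreases}, and your case analysis makes that reasoning explicit. The only addition you make is the remark about the degenerate case $n-f\le 1$, which the paper implicitly assumes away.
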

Next we introduce a variable that remembers the rounds that increase $X_t$. 
\begin{definition}\label{sec:5.lem:valueofX}
    Let $u_t \in \N$ be the $t$-th round such that $X_{u_t+1}>X_{u_t}$ and let $u_0 = 0$.
\end{definition}

We now show that once $X_t$ has been increased in $\Theta(\log n)$ rounds, it reaches $n-f$, i.e., all honest nodes have received the message.
Recall that $\tau=\frac{\log n}{\log\left({1+\frac{n-f}{2n}}\right)}$. 

\begin{lemma}\label{sec:5.lem:Nequalsn}
For all $t \ge u_{2\tau}$, $N_t = n-f$.
\end{lemma}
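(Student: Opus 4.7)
By Lemma~\ref{sec:5.lem:couple}, $N_t \ge X_t$ and $N_t$ is a non-decreasing integer with $N_t \le n-f$. Thus it suffices to show that $X_{u_{2\tau}} > n-f-1$: this forces $N_{u_{2\tau}} = n-f$, and monotonicity of $N_t$ gives $N_t = n-f$ for all later $t$. So the whole task reduces to tracking the deterministic recursion $X_{u_{t+1}} = X_{u_t} + (n-f-X_{u_t})\cdot X_{u_t}/n$, with $X_{u_0}=1$, for $2\tau$ steps.

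The plan is to split these $2\tau$ steps into two phases of length $\tau$ each. \emph{Phase~1} (the ``doubling'' phase) covers the steps where $X_{u_t} \le (n-f)/2$. There $n-f-X_{u_t} \ge (n-f)/2$, so
\[
X_{u_{t+1}} \ge X_{u_t}\left(1+\tfrac{n-f}{2n}\right).
\]
Starting from $X_{u_0}=1$, after $t$ phase-1 steps we have $X_{u_t} \ge (1+\tfrac{n-f}{2n})^t$, which exceeds $(n-f)/2 \le n$ as soon as $t \ge \tau = \log n / \log(1+\tfrac{n-f}{2n})$. Thus at most $\tau$ steps suffice to exit Phase~1. \emph{Phase~2} (the ``completion'' phase) covers the steps where $X_{u_t} > (n-f)/2$. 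Writing $Y_t := n-f-X_{u_t}$, the recursion rearranges to
\[
Y_{t+1} = Y_t\Bigl(1-\tfrac{X_{u_t}}{n}\Bigr) \le Y_t\cdot\tfrac{n+f}{2n}.
\]
Entering Phase~2 with $Y \le (n-f)/2 \le n$, after $s$ further steps $Y \le n\bigl(\tfrac{n+f}{2n}\bigr)^s$, which becomes $<1$ as soon as $s \ge \log n / \log(\tfrac{2n}{n+f})$.

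The crux is a short comparison of the two base rates: I claim $\log\bigl(\tfrac{2n}{n+f}\bigr) \ge \log\bigl(1+\tfrac{n-f}{2n}\bigr)$, i.e.\ $\tfrac{2n}{n+f} \ge \tfrac{3n-f}{2n}$. Cross-multiplying, this is equivalent to $4n^2 \ge (3n-f)(n+f) = 3n^2+2nf-f^2$, i.e.\ $(n-f)^2 \ge 0$, which always holds. Therefore Phase~2 also needs at most $\tau$ steps, for a total of $2\tau$. Hence $Y_{2\tau} < 1$, i.e.\ $X_{u_{2\tau}} > n-f-1$, and by the opening reduction $N_t = n-f$ for every $t \ge u_{2\tau}$.

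The only real obstacle is the mismatch between the Phase~1 growth rate $1+\tfrac{n-f}{2n}$ and the Phase~2 shrink rate $\tfrac{n+f}{2n}$; without the elementary inequality above, the natural bookkeeping would give $\tau_1 + \tau_2$ steps with $\tau_2$ controlled by a different log base, and one would not cleanly recover the stated $2\tau$. Everything else is routine: strict monotonicity of $x \mapsto x+(n-f-x)x/n$ on $[0,n-f]$ (used already in Lemma~\ref{sec:5.lem:couple}) and the integrality of $N_t$.
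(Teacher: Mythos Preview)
Your proof is correct and follows essentially the same approach as the paper: both split the recursion $X_{u_{t+1}} = X_{u_t} + (n-f-X_{u_t})X_{u_t}/n$ into a growth phase (while $X_{u_t}\le (n-f)/2$) with multiplicative rate $1+\tfrac{n-f}{2n}$ and a completion phase tracking $Y_t = n-f-X_{u_t}$ with shrink rate $1-\tfrac{X_{u_t}}{n}\le \tfrac{n+f}{2n}$, and both reduce the comparison $\tfrac{2n}{n+f}\ge \tfrac{3n-f}{2n}$ to $(n-f)^2\ge 0$. The only cosmetic difference is that you upper-bound the entry value of $Y$ by $n$ rather than $(n-f)/2$; the strictness of $Y_{2\tau}<1$ is still secured because for $f<n$ the rate inequality is strict.
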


\begin{proof}
    Since $N_t$ is non-decreasing and upper-bounded by $n-f$, it suffices to show that $N_{u_{2\tau}} = n-f$. We will do so by using its lower bound $X_{u_{2 \tau}}$. 
    
    We  first show that $X_{u_\tau} \ge \frac {n-f} 2$. Indeed, for any $t$, if $X_{u_t} \le \frac {n-f} 2$, then :  
    $$X_{u_{t+1}}  = X_{u_t} + (n-f-X_{u_t}) \frac {X_{u_t}} n = X_{u_t} \left( 1+ \frac {n-f-X_{u_t}} n\right) \ge X_{u_t} \left( 1+ \frac {n-f} {2n}\right)$$
And $X_{u_t}$ thus only needs at most $\tau$ steps to increase from $1$ to $\frac{n-f}{2}$.

We now show that $X_{u_{2\tau}}> n-f-1$. Indeed, for any $t$, if $X_{u_t}\ge \frac {n-f} 2$, then we have:
\begin{multline*}
n-f-X_{u_{t+1}}  = n-f-X_{u_t} - (n-f-X_{u_t}) \frac {X_{u_t}} n \\= (n-f-X_{u_t}) \left( 1- \frac {X_{u_t}} n\right) \le (n-f-X_{u_t}) \left( 1- \frac {n-f} {2n}\right)    
\end{multline*}
And $n-f-X_{u_t}$ thus only needs at most  $\tau'$ steps to decrease from $\frac {n-f}2$ to $\frac 1 2$, where:
$$
\tau'=\frac{\log (n-f)}{\log\left(\frac 1 {1-\frac{n-f}{2n}}\right)}=\frac{\log (n-f)}{\log\left(\frac {2n}{n+f}\right)}\le \frac{\log (n-f)}{\log\left(\frac {3n-f}{2n}\right)}=\tau
$$
where the inequality holds since $\frac{3n-f}{2n} = \frac{2n+n-f}{n+f+n-f}\le \frac{2n}{n+f}$.

 Overall we have that $X_{u_{2\tau}} \ge X_{u_{\tau+\tau'}}> n-f-1$. Since $n-f\ge N_{u_{2\tau}}\ge X_{u_{2\tau}}$ by Lemma~\ref{sec:5.lem:couple}, and since $N_{u_{2\tau}}\in \N$, we have that $N_{u_{2\tau}}=n-f$.
\end{proof}

\begin{lemma}
    If $f \le \frac 2 3 n -1$, for every $t \in \N$, we have that $\Proba\left(X_{t+1} > X_{t}\right) \geq \frac{1}{4}$
\end{lemma}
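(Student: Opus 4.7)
The plan is to mirror the proof of Lemma~\ref{xbinomial} from Section~\ref{sec:broadcast}, adapted to the Byzantine setting. First I would apply Corollary~\ref{sec:5.cor:Xincreases} to rewrite the event $\{X_{t+1} > X_t\}$ as $\{N_{t+1} - N_t \ge (n-f-N_t) \cdot N_t/n\}$, and then condition on $N_t$. It then suffices to show that for every admissible value $\ell$ of $N_t$,
\[
\Proba\bigl(N_{t+1} - N_t \ge (n-f-\ell)\cdot \ell/n \,\bigm|\, N_t = \ell\bigr) \ge \tfrac{1}{4}.
\]

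If $\ell = n-f$, the threshold $(n-f-\ell)\cdot \ell/n$ is $0$ and the conditional probability is $1$. Otherwise $m := n-f-\ell \ge 1$, and by Lemma~\ref{sec:5.lem:binom}, $N_{t+1}-N_t$ conditioned on $N_t=\ell$ is binomial with parameters $m$ and $p := \ell/n$, with mean exactly $mp$. So the claim becomes: a binomial variable is at least its mean with probability at least $1/4$. I would split into two subcases: if $p \ge 1/m$, apply Theorem~\ref{thm:binestimate} directly; if $p < 1/m$, apply Lemma~\ref{prob:bin}, which yields the desired bound as soon as $p \ge 1/(3m)$, equivalently $mp \ge 1/3$.

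The main (and only) nontrivial step is verifying $mp \ge 1/3$ in the second subcase, and this is where the hypothesis $f \le \tfrac{2}{3}n - 1$ enters. Writing $mp = \ell(n-f-\ell)/n$ and observing that on integer $\ell \in \{1, \dots, n-f-1\}$ this concave parabola attains its minimum at the endpoints with value $(n-f-1)/n$, the assumption $f \le \tfrac{2}{3}n - 1$ gives $n-f-1 \ge n/3$, so $mp \ge 1/3$ as required. Combining the cases completes the proof; the only real obstacle is getting the constant right, which the constraint $f \le \tfrac{2}{3}n - 1$ is precisely tailored to.
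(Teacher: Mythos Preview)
Your proposal is correct and follows essentially the same approach as the paper's proof: apply Corollary~\ref{sec:5.cor:Xincreases}, condition on $N_t=\ell$, handle $\ell=n-f$ trivially, then split according to whether $mp\ge 1$ (Theorem~\ref{thm:binestimate}) or $mp<1$ (Lemma~\ref{prob:bin}), using $f\le\tfrac{2}{3}n-1$ to get $mp\ge(n-f-1)/n\ge 1/3$ at the endpoints of the concave parabola $\ell(n-f-\ell)/n$. The paper's argument is identical up to phrasing.
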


\begin{proof}
By Corollary~\ref{sec:5.cor:Xincreases}, we have that $X_{t+1} > X_t$ if and only if $N_{t+1} - N_t \ge (n-f-N_t)\cdot \frac {N_t} n$. Thus,
$\Proba\left(X_{t+1} > X_{t}\right) = 
\sum_{\ell \in [n]}\Proba\left(N_{t+1}-N_t \geq (n-f-N_t) \cdot \frac {N_t} n|N_t=\ell\right)\Proba(N_t=\ell)$.
It thus suffices to show that for every $\ell$, $\Proba\left(N_{t+1}-N_t \geq (n-f-N_t) \cdot \frac {N_t} n|N_t=\ell\right)\ge \frac 1 4$.
Recall that $N_{t+1}-N_t$ follows a binomial distribution with parameters $m = n-f-N_t$ and $p = \frac{N_t}{n}$. 

If $N_t=n-f$, the result holds trivially. Otherwise, if $pm \ge 1$, then the result holds by applying Theorem~\ref{thm:binestimate}. If on the other hand $pm \le 1$, then we note that $pm = \frac{(n-f-N_t) N_t}{n}$ which is at its minimum for $N_t=1$. Therefore $pm \ge \frac{n-f-1}{n}\ge \frac {1/3 n} n = \frac 1 3$, where we used $f\le \frac 2 3 n -1$. The result then holds by applying Lemma~\ref{prob:bin}.

\end{proof}

    Let $(B_t)_{t \in \N}$ be Bernoulli independent random variables of parameter $\frac{1}{4}$. Let $Z^B_{\le t}=\sum_{z \in [t]} B_z$ and $Z_{\le t}=\sum_{z \in [t]} \1 \left( X_{z+1} > X_z\right)$. 
    
 \begin{corollary}  \label{sec:5.cor:upbound} 
    For any $\ell \in \N$, we have that $\Proba (Z_{\le t} \le \ell) \le \Proba (Z_{\le t}^B \le \ell)$.
\end{corollary}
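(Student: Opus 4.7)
The plan is to establish this by a direct coupling. The preceding lemma shows that for every round $z$, the conditional probability $P_z := \Proba(X_{z+1} > X_z \mid \F_z)$ is at least $\tfrac{1}{4}$, regardless of the history. Thus each indicator $Y_z := \1(X_{z+1} > X_z)$ is, conditionally on $\F_z$, a Bernoulli with parameter at least $\tfrac{1}{4}$, while $B_z$ is an unconditional independent Bernoulli of parameter exactly $\tfrac{1}{4}$. Stochastic dominance of $Z_{\le t}$ over $Z_{\le t}^B$ in the upper tail, equivalently of $Z_{\le t}^B$ over $Z_{\le t}$ in the lower tail, is exactly what the corollary asserts, and it is the classical ``dominated sums'' principle invoked as Lemma~1.8.5 of~\cite{DBLP:series/ncs/Doerr20}.

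To make this concrete I would enlarge the probability space by introducing i.i.d.\ uniform random variables $U_1, U_2, \dots$ on $[0,1]$, independent of the process $(X_z)_{z\in\N}$. On this enlarged space, set $\tilde Y_z := \1(U_z \le P_z)$ and $\tilde B_z := \1(U_z \le \tfrac{1}{4})$. Since $P_z \ge \tfrac{1}{4}$ almost surely, we have $\tilde Y_z \ge \tilde B_z$ pointwise, hence $\sum_{z\le t} \tilde Y_z \ge \sum_{z\le t} \tilde B_z$ pointwise. Moreover a short induction on $z$, using that $U_z$ is independent of $\F_z$ and of the $U_{z'}$ for $z' < z$, shows that $(\tilde Y_1, \dots, \tilde Y_t)$ has the same joint law as $(Y_1, \dots, Y_t)$, and the $\tilde B_z$ are i.i.d.\ Bernoulli$(\tfrac{1}{4})$, matching the law of the $B_z$. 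Therefore $\sum_{z \le t}\tilde Y_z$ and $\sum_{z \le t}\tilde B_z$ are distributional copies of $Z_{\le t}$ and $Z_{\le t}^B$ respectively.

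Finally, the pointwise inequality $\sum_{z\le t}\tilde Y_z \ge \sum_{z\le t}\tilde B_z$ yields the event inclusion $\{\sum_{z\le t}\tilde Y_z \le \ell\} \subseteq \{\sum_{z\le t}\tilde B_z \le \ell\}$, and taking probabilities gives $\Proba(Z_{\le t} \le \ell) \le \Proba(Z_{\le t}^B \le \ell)$, as required. The only real point to check is the joint-law identification in the coupling step; everything else is mechanical, which is why the authors label the corollary ``trivial'' given Doerr's lemma.
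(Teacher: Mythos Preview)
Your overall approach---reducing to a coupling with i.i.d.\ uniforms, which is precisely the content of Lemma~1.8.5 of~\cite{DBLP:series/ncs/Doerr20} that the paper invokes without further argument---is the right one. However, your specific coupling construction has a genuine gap.

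You set $\tilde Y_z := \1(U_z \le P_z)$ where the $U_z$ are fresh uniforms independent of the original process and $P_z = \Proba(Y_z = 1 \mid \F_z)$ is computed from the \emph{original} filtration $\F_z$. The claimed identification of the joint law of $(\tilde Y_1, \dots, \tilde Y_t)$ with that of $(Y_1, \dots, Y_t)$ fails: in your construction $\tilde Y_z$ depends on the past only through the original $\F_z$, not through $(\tilde Y_1, \dots, \tilde Y_{z-1})$, so the $\tilde Y_z$ are conditionally independent given the original process, whereas the $Y_z$ are not. A two-step example with $P_1$ constant and $P_2$ a nontrivial function of $Y_1$ already gives different joint laws (and different laws for the sums).

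The standard fix runs the construction the other way: on the original space, \emph{define} $U_z$ from $Y_z$ by taking $U_z$ conditionally uniform on $[0, P_z]$ when $Y_z = 1$ and on $(P_z, 1]$ when $Y_z = 0$, given $\F_z$. Then $Y_z = \1(U_z \le P_z)$ holds exactly, the $U_z$ are i.i.d.\ uniform and independent of past $\F$'s, and setting $B_z := \1(U_z \le \tfrac{1}{4})$ gives $Y_z \ge B_z$ pointwise since $P_z \ge \tfrac{1}{4}$. Your final paragraph then goes through verbatim. Equivalently, one can rebuild the entire $(N_t)$ process from the uniforms so that its own history drives the indicators; the essential point is that $P_z$ must be measurable with respect to the history of the \emph{same} indicator sequence you are constructing.
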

Using the corollary we can now show that with high probability in the first $32\cdot c\cdot \tau$ $X_t$ will increase at least $2\tau$ often. This then immediately leads to our result.

\begin{lemma}\label{sec:5.lem:manyincreases}
    Let $t=32\cdot c\cdot \tau$ for any $c \ge 1$. Then $\Proba (Z_{\le t} \le 2\tau) \le \frac 1 {n^c}$.
\end{lemma}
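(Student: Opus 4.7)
The plan is to adapt the proof of Lemma~\ref{lem:manyincreases} from Section~\ref{sec:broadcast} almost verbatim, replacing $\ln n$ by $\tau$ and invoking the Byzantine-case inequalities already established. First I would note that $Z^B_{\le t}$ is a binomial random variable with parameters $(t, 1/4)$ where $t = 32 c \tau$, so its mean is $8 c \tau \ge 2\tau$ (using $c \ge 1$), which means Hoeffding's inequality (Lemma~\ref{lem:hoeffding}) applies at the threshold $x = 2\tau$.

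Next I would plug into Hoeffding:
\[
\Proba(Z^B_{\le t} \le 2\tau) \;\le\; \exp\!\left(-2t\left(\tfrac{1}{4} - \tfrac{2\tau}{t}\right)^2\right).
\]
For $c \ge 1$, we have $\tfrac{2\tau}{t} = \tfrac{1}{16c} \le \tfrac{2}{16} = \tfrac{1}{8}$, so $\tfrac{1}{4} - \tfrac{2\tau}{t} \ge \tfrac{1}{8}$, giving
\[
\Proba(Z^B_{\le t} \le 2\tau) \;\le\; \exp\!\left(-2 \cdot 32 c \tau \cdot \tfrac{1}{64}\right) \;=\; \exp(-c\tau).
\]

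At this point I would bring in the key observation that distinguishes this section from Section~\ref{sec:broadcast}: one needs $\tau \ge \ln n$ to convert the bound into the desired $n^{-c}$. This follows from the definition $\tau = \frac{\log n}{\log(1 + (n-f)/(2n))}$ together with the elementary inequality $\log(1+x) \le x$ (natural log), since $(n-f)/(2n) \le 1/2 < 1$, so the denominator is at most $1$. Hence $\exp(-c\tau) \le \exp(-c \ln n) = n^{-c}$. Finally, Corollary~\ref{sec:5.cor:upbound} gives $\Proba(Z_{\le t} \le 2\tau) \le \Proba(Z^B_{\le t} \le 2\tau)$, and the result follows.

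There is no real obstacle here: the argument is a straightforward transport of the $f=0$ case. The only thing worth being careful about is the substitution $\tau$ for $\ln n$ throughout, and in particular verifying the $\tau \ge \ln n$ lower bound, which is what makes the stated round count $32 c \tau$ sufficient to absorb the Byzantine slack introduced by $f \le \tfrac{2}{3}n - 1$.
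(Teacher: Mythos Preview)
Your proof is correct and follows essentially the same route as the paper: apply Hoeffding to the binomial $Z^B_{\le t}$, simplify using $\tfrac{2\tau}{t}=\tfrac{1}{16c}\le \tfrac{1}{8}$, and then invoke Corollary~\ref{sec:5.cor:upbound}. Your explicit justification that $\tau \ge \ln n$ (from $\log(1+x)\le x$) is exactly the missing link the paper glosses over (indeed the paper's line ``we used that $\tau \le \ln n$'' has the inequality reversed), so your write-up is if anything cleaner here.
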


\begin{proof}

Note that $Z_{\le t}^B$ is a binomial distribution of parameters $(t, \frac 1 4)$. Using Hoeffding's inequality, we have that:

$$\Proba (Z_{\le t}^B \le 2\tau)\leq \exp \left(-2t\left(\frac 1 4-{\frac {2\tau}{t}}\right)^{2}\right)\le \exp \left(-2\cdot 32 c\ln n \left(\frac 1 4-{\frac {2}{16}}\right)^{2}\right)=n^{-c}$$

    Where we used that $\tau \le \ln n $. %
Corollary~\ref{sec:5.cor:upbound} then gives the desired result.
\end{proof}

We now have all the tools to prove the main theorem of this section, which we state here:

\secfivethmbroadcast*

\begin{proof}

    By Lemma~\ref{sec:5.lem:manyincreases}, we have that, with probability $p \le 1-\frac 1 {n^c}$, $X_{t+1}>X_t$ for at least $2\tau$ many rounds within the $32\cdot c\cdot \tau$ first rounds. Recall that $u_{2\tau}$ is the $2\tau$-th round where $X_{t+1}>X_t$. We thus have that  $\Proba (Z_{\le 32\cdot c\cdot \tau} > 2 \tau) = \Proba \left( u_{2\tau} \le 32 \cdot c \cdot \tau \right)\ge 1-n^{-c}$. But, by Lemma~\ref{sec:5.lem:Nequalsn} the event $u_{2\tau} \le 32\cdot c\cdot \tau$ 
    implies the event $N_{32\cdot c\cdot \tau}=n-f$, therefore $\Proba \left( N_{32\cdot c\cdot \tau}=n-f\right) \ge 1-n^{-c}$. Upper-bounding $\tau \le 4.5 \log n$ gives the desired result.
\end{proof}

We now use this result to get a similar result for All-to-all Broadcast. Using a union-bound, we obtain:

\byzalltoall*

\begin{proof}
    Let $N_t^{(i)}$ be the random variable that represents the number of nodes that are informed after round $t$ of the message given to node $i$. By Theorem~\ref{sec.5.thm:Broadcast}, we know that %
    $\Proba \left( N_{32\cdot c\cdot \tau}^{(i)}<n-f\right) \le n^{-c}$ for every $i \in [n]$. Using a union-bound, we get that:
    $$
    \Proba \left(\Union_{i \in [n]} N_{32\cdot c\cdot \tau}^{(i)}<n-f\right) \le n^{-c+1}
    $$
    And thus:
    $$
    \Proba \left(\Inter_{i \in [n]} N_{32\cdot c\cdot \tau}^{(i)}=n-f\right) = 1-\Proba \left(\Union_{i \in [n]} N_{32\cdot c\cdot \tau}^{(i)}<n-f\right) \ge 1- n^{-c+1}
    $$
\end{proof}

This allows us, e.g.,~to implement algorithms that run on a clique %
in a synchronous setting in our sparser graph.
Indeed, each round of communication of a clique can be simulated by $32\cdot c \cdot \tau$ rounds of Uniformly Random Trees with high probability, since All-to-All Broadcast needs $32\cdot c \cdot \tau$ rounds to complete with high probability.
Essentially, if an algorithm runs in $T$ time, with $T \le n^{c-1}$, in a clique network, we can implement it with high probability in $32\cdot c \cdot T \cdot \tau$ rounds in the Uniformly Random trees network, which is essentially a logarithmic overhead.
The only caveat is that if $T$ is too large, i.e. $T > n^{c-1}$, the probability of at least one of the $T$ All-to-All Broadcast rounds failing can become close to 1. To circumvent this, we restrict ourselves to the case where $T$ is a small enough polynomial in $n$.

\cliquereduc*

\begin{proof}
    The algorithm $\Alg'$ works as follows: Each round of $\Alg$ will be simulated using $32\cdot c\cdot \tau $ rounds of Uniformly Random Trees. 
    Each round in $\Alg$ can be seen as (1) a \emph{computation step}, where each node decides what message to send to every other node, and (2) a \emph{communication step}, where each node sends the messages and receives the messages the other nodes have sent it. 
    In $\Alg'$ the computation step is unchanged, except that each node uses the PKI to sign and encrypt the messages. 
    The communication step on the other hand is extended over $32\cdot c\cdot \tau $ rounds, in which every (honest) node simply forwards all received messages to all its out-neighbors. 
    Theorem~\ref{sec.5.thm:alltoall} ensures that with probability at most $n^{1-c}$, at least one honest node fails to send a message to all other honest nodes. 
    By a union bound over the $T$ rounds, the probability that at least one message fails to be delivered is at most $\alpha n^{1+x-c}$. By taking its complement, the probability that all messages are delivered in time before the next round's computation step is $p' \ge 1-\alpha n^{x+1-c}$.
\end{proof}

We now give two applications of this theorem, namely Reliable Broadcast and Byzantine Consensus.

\reliableBroadcast*

\begin{proof}
    Dolev and Strong~\cite{reliable} have given an algorithm that solves reliable Broadcast, is robust to $f$ Byzantine nodes, and runs in $T=f+1$ rounds. Since $T\le n$, we can apply Theorem~\ref{sec.5.thm:reduction} with $x=1, \alpha = 1$, and we get the desired result.
\end{proof}

\ConsensusByzantine*

\begin{proof}
   Berman, Garay and Perry~\cite{king} have given an algorithm (known as the King's algorithm) that solves reliable Broadcast, is robust to $f$ Byzantine nodes, and runs in $T=3(f+1)$ rounds. Since $T\le 2n$, we can apply Theorem~\ref{sec.5.thm:reduction} with $x=1, \alpha = 2$, and we get the desired result.
\end{proof}

\section{Ommitted proofs of Section~\ref{sec:adversary}}
\label{appendix:adversary}

\distributiondomination*

\begin{proof}
Saying that $N_t^{(1)}$ stochastically dominates $N^{(2)}_t$ is equivalent to saying that $I_t^{(1)}$ stochastically dominates $I^{(2)}_t$.
    By Theorem~\ref{thm:dominancecoupling}, 
    there exists a coupling $(\hat{I}^{(1)}_t, \hat{I}_t^{(2)})$ of $I_t^{(1)}$ and $I_t^{(2)}$ such that $\Proba \left(\card{\hat{I}^{(1)}_t} \ge \card{\hat{I}_t^{(2)}}\right) =1 $. Consider that coupling and 
    let $\beta$ be a bijection from $[n]$ to $[n]$ such that $\beta(\hat{I}_t^{(2)}) \subseteq \hat{I}^{(1)}_t$. 
    By slight abuse of notation we naturally extend $\beta$ to also give a bijection between edges, by setting $\beta(u,v) = (\beta(u), \beta(v))$.
    For any $t'\ge t$, let $E^{t'}_1$ and $E^{t'}_2$ be respectively the edges chosen by the adversary in round $t'$ after choosing $E_1$, respectively $E_2$, in round $t$. Let $T^{t'}_1$ and $T^{t'}_2$ be respectively the trees in round $t'$ containing $E^{t'}_1$ and $E^{t'}_2$, respectively. 
    
    We introduce a coupling 
    $(\hat T^{t'}_1, \hat T^{t'}_2)$ such that if $E^{t'}_1=\beta(E^{t'}_2)$ then $\Proba(\hat T^{t'}_1= \beta( \hat T^{t'}_2))=1$ as follows: If $E^{t'}_1=\beta(E^{t'}_2)$, then note that $\beta$ induces a bijection between $\RT_n^{(2)}=\{T \in \RT_n:E^{t'}_2 \subseteq T\} $ and $\RT_n^{(1)}=\{T \in \RT_n:E^{t'}_1 \subseteq T\}$. In this case to define the coupling we choose $\hat T^{t'}_{2}$ uniformly at random from $\RT_n^{(2)}$ and set $\hat T^{t'}_1=\beta(\hat T^{t'}_2)$. 
    Thus, $\Proba(\hat T^{t'}_1= \beta( \hat T^{t'}_2))=1$.
      
    Otherwise, i.e., if $E^{t'}_1 \ne \beta(E^{t'}_2)$, to define the coupling we choose $\hat T^{t'}_2$ uniformly at random from $\RT_n^{(2)}$ and, independently, $\hat T^{t'}_1$ uniformly at random from $\RT_n^{(1)}$. 
    
    Let $\hat I_1^{t'}$ and $\hat I_2^{t'}$ be the set of informed nodes we get after round $t'$ if the trees in round $t'$ were $\hat T^{t'}_1,$ respectively $\hat T^{t'}_2$.
    We will show that $\Proba \left(\beta(\hat I_2^{t'}) \subseteq \hat I_1^{t'}\right)=1$ for all $t' \ge t$.

    Let us now assume that the sequence $(E^{t'}_1)_{t' > t}$ is an optimal strategy for the adversary after choosing $E_1$ in round $t$ and let us call this sequence (including $E_1$ in round $t$) $\sigma_1$. Then consider the sequence $\sigma_2$ which chooses $E_2$ in round $t$ then $E_2^{t'} := \beta^{-1}(E^{t'}_1)$ in rounds $t'\ge t$ and compare it with $\sigma_1$.
    Thus, $E^{t'}_1 = \beta(E_2^{t'})$ for all $t' > t$, which implies that $\Proba(\hat T^{t'}_1= \beta( \hat T^{t'}_2))=1$.
    
    Recall that $\sigma_1$ is optimal after choosing $E_1$ in round $t$.
    We show by induction that $\sigma_2$ is a better overall strategy for the adversary 
   than $\sigma_1$. Indeed, we can show by induction on the number of rounds $t' \ge t$ that $\Proba \left(\beta(\hat I_2^{t'}) \subseteq \hat I_1^{t'}\right)=1$. The induction basis is trivial as
   $\beta(\hat I_2^{t}) \subseteq \hat I_1^{t}$.
   For the induction step, note that for any $v \in \hat I_2^{t'}$, either (1) $v \in \hat I_2^{t'-1}$ which, using the induction assumption implies with probability 1 that  $\beta(v) \in \beta(\hat I_2^{t'-1}) \subseteq \hat I_1^{t'-1} \subseteq \hat I_1^{t'}$, or (2) $P_{\hat T_2^{t'}}(v) \in \hat I_2^{t'-1}$. 
   
   As $\Proba(\hat T^{t'}_1= \beta( \hat T^{t'}_2))=1$, Case (2) implies that with probability $1$, $\beta (P_{\hat T_2^{t'}}(v)) = P_{\hat T_1^{t'}}(\beta(v))$. As $P_{\hat T_2^{t'}}(v) \in \hat I_2^{t'-1}$ and by induction, with probability 1, $\beta (\hat I_2^{t'-1}) \subseteq \hat I_1^{t'-1}$, it follows that 
   with probability 1, it holds that $ P_{\hat T_1^{t'}}(\beta(v))=\beta(P_{\hat T_2^{t'}}(v)) \in \hat I_1^{t'-1}$ and, thus, $\beta(v) \in  \hat I_1^{t'}$. Hence,
   in both cases, with probability 1,  $\beta(I_2^{t'}) \in \hat I_1^{t'} $. 
   
   Now note that $\Proba \left(\beta(\hat I_2^{t'}) \subseteq \hat I_1^{t'}\right)=1$ implies that if $t'$ is the smallest round at which Broadcast completes after the adversary chooses $E_2$, that is, if $\card {\hat I_2^{t'-1}}=n$, then Broadcast completes in a no later round if the adversary chooses $E_1$.
\end{proof}

\lembij*

\begin{proof}
    Let $V(U)$ be the set of nodes of $U$ and let $|V(U) \cap S_{t-1}| = \ell$.
    To show the lemma we will give a bijection $b$ that maps the $\ell$ uninformed nodes of $V(U)$ to the $\ell$ first nodes of $U$ in bfs-order (on $U$) and the informed nodes to the remaining nodes of $U$ such that
    one of the nodes $u \in S_{t-1}$ with $P_U(u) \in I_{t-1}$ becomes the root of $U'$. The resulting tree will be the correction $U'$. As a result of this bijection every uninformed node of $U'$ has only uninformed ancestors and, thus, $U'$ is non-increasing. By construction, $U$ and $U'$ are isomorphic.

   More formally,
    if $U$ is increasing, then there exists an edge $(i,s)$ such that $i \in I_{t-1}, s \in S_{t-1}$. Let $\pi$ be a bijection from $[\card{V(U)}]$ to $V(U)$ such that $\pi(1)=s, \{\pi(2), \dots, \pi(\ell)\} \subset S_{t-1}$, and $\{\pi(\ell+1), \dots, \pi(\card{V(U)})\}\subseteq I_{t-1}$. 
    Furthermore, let $\rho$ be a bijection from $[\card{V(U)}]$ to $V(U)$ such that $\rho(j)$ is the $j$-th node encountered in a breadth-first traversal starting at the root of $U$. Then let $b=\pi \circ \rho^{-1}$. We will show next that the tree $U'$, whose set of edges is $\{(b(u), b(v)): (u,v) \in U\}$ is a correction of $U$. We first need to argue that $U'$ is a tree.
    This is the case as $U'$ is a graph over the same nodes as $U$ and 
    for every $(b(u),b(v)) \in U'$, $u$ is encountered in a BFS before $v$ in $U$, thus $\rho^{-1}(u)<\rho^{-1}(v)$. 
    Now we are ready to show that $U'$ fulfills the conditions of being a correction of $U$:
    (1) It follows that $U'$ cannot contain a cycle. As $U'$ is a graph on $|U|$ nodes with $|U|-1$ edges, it follows that $U'$ is a tree. Additionally, $U'$ isomorphic to $U$ as $b$ gives the required bijection between $U$ and $U'$.
    
    (2) Furthermore, $\rho^{-1}(u)<\rho^{-1}(v)$ implies that if $\rho^{-1}(u)\ge \ell$, we also have $\rho^{-1}(v) \ge \ell$ for any $\ell$. 
    Specifically for $\ell = |S_{t-1}|$ this means that if $b(u) \in I_{t-1}$, then $b(v) \in I_{t-1}$. Thus $U'$ is non-increasing in round $t$.

    (3) By construction we made sure that $s \in S_{t-1}$ with $P_U(s) \in I_{t-1}$ is the root of $U'$.
\end{proof}

\correction*
\begin{proof}
    We will build a bijection $\pi$ from $\RT_n^{(2)}=\{T\in \RT_n: E_2 \subseteq T\}$ %
    to $\RT_n^{(1)}=\{T\in \RT_n: E_1 \subseteq T\}$ such that for every
    $s \in S_{t-1}$ and any $ T\in \RT_n^{(2)}$ with $P_T(s) \in I_{t-1}$, we have that $P_{\pi(T)}(s) \in I_{t-1}$. Hence, $\pi(T)$ has more uninformed nodes that become informed than $T$. We will use this property to show that $N_t^{(1)}$ stochastically dominates $N_t^{(2)}$.

    To do so, let $b$ be the bijection that achieves the isomorphism of the proof of Lemma~\ref{lem:bij}
    from $U$ to $U'$. $\pi(T)$ is constructed in a way such that all nodes have the same parents as in $T$, unless they are in $U'$. More specifically, we let $\pi (T) :=\pi_b(T)$ where $\pi_b(T)$ is the tree obtained from $T$ by replacing every edge $(u,v) \in T$ as follows:
    \begin{itemize}
        \item if $u,v \in U'$, then replace it with the edge $(b^{-1}(u), b^{-1}(v))$.
        \item if $u \notin U'$, $v \notin U'$, then keep it the same.
        \item if $u \in U'$, $v \notin U'$, then keep it the same.%
        \item if $u \notin U', v \in U'$, then replace it with $(u, b^{-1}(v))$.
    \end{itemize}

    We clearly have that $U \subseteq E_1 \subset \pi(T)$ and $U' \subseteq E_2 \subset T$. Also, for any node $v$, the path from the root to $v$ in $T$ can be transformed into a path from the root in $\pi(T)$ by replacing the subpath $P=u_0, \dots, u_\ell$ that is in $U'$ with the path from $b^{-1}(u_0)$ to $u_\ell$ in $U$. Hence $\pi (T)\in \RT_n^{(1)}$. 
    Since $\pi_{b^{-1}}$ is clearly an inverse of $\pi_b$, we have that $\pi$ is a bijection.

    Let $s\in S_{t-1}$ be such that $P_T(s) \in I_{t-1}$. If $s \notin U'$, then it has the same parent in $T$ and in $\pi (T)$. If $s \in U'$, which is a non-increasing tree, then by the fact that the parent of $s$ in $T$ belongs to $I_{t-1}$ it follows that $s$ is the root of $U'$, and, thus, that its parent does not belong to $U'$. By the definition of a correction it follows that the parent of $s$ in $U$ is informed. 
    As $U \subseteq \pi(T)$ the parent of $s$ in $\pi(T)$ is a node of $I_{t-1}$.

    We, therefore, have that, for every $x \in \N$:
    \begin{align*}        
    \Proba(N^{(2)}_t-N_{t-1} \ge x | \F_{t-1}) &= \frac {\card {\{T \in \RT_n^{(2)}: \card {\{s \in S_{t-1}: P_T(s) \in I_{t-1}\}} \ge x\}}}{\card {\RT_n^{(2)}}}\\
    &\le \frac {\card {\{T \in \RT_n^{(2)}: \card {\{s \in S_{t-1}: P_{\pi(T)}(s) \in I_{t-1}\}} \ge x\}}}{\card {\RT_n^{(1)}}}\\
    &\le \frac {\card {\{T \in \RT_n^{(1)}: \card {\{s \in S_{t-1}: P_{T}(s) \in I_{t-1}\}} \ge x\}}}{\card {\RT_n^{(1)}}}
    \\&=\Proba(N^{(1)}_t-N_{t-1} \ge x|\F_t{t-1})
    \end{align*}
     The lemma now follows from the Distribution Domination Lemma (Lemma~\ref{lem:domination}).
\end{proof}

\merging*

\begin{proof}
    We will show  that for any $x \in \N$, we have that $\Proba (N^{(1)}_t-N_{t-1} = x| \F_{t-1})\ge\Proba (N^{(2)}_t-N_{t-1} = x| \F_{t-1})$. 
    Then the result will follow from the Distribution Domination Lemma.
    
    In the following let $S$ be a set of uninformed nodes $s_1, \dots, s_{|S|}$, let $\eta_j$ for $1 \le j \le |S|$ be the number of informed nodes in the connected component of $s_j$ in $E_1$ and let $\eta(S)$ be the number of informed nodes that do not belong to the connected component of any $s_j$.

We will analyze the value of $\Proba (\inter_{s\in S}P_{t}(s)\in I_{t-1}| \F_{t-1})$ when the adversary chooses $E_1$, and when it chooses $E_2$. Then two cases can arise: Either the value of $\sum_{\card S = \ell} \Proba (\inter_{s\in S}P_{t}(s)\in I_{t-1}| \F_{t-1})$ is equal whether the adversary chooses $E_1$ or $E_2$, and this for every $\ell$, and the result will follow Lemma~\ref{prob:hard}, or $\Proba (\inter_{s\in S}P_{t}(s)\in I_{t-1}| \F_{t-1})$ will be the same whether the adversary chooses $E_1$ or $E_2$ for every set $S$ except if $S$ includes a particular node, where there will be a constant factor difference between the two values, and the result will then follow from Lemma~\ref{prob:easydom}.

As all trees in $E_1$ (respectively $E_2$) are non-increasing, the parent in $E_1$ (respectively $E_2$) of every non-root node $s \in S$   is uniformed. Thus, if there exists a node  in $S$ that is not a root of $E_1$ (respectively $E_2$) then $\Proba (\inter_{s\in S}P_{t}(s)\in I_{t-1}| \F_{t-1})=0.$
Hence, we only need to analyze the setting where all nodes of $S$ are roots in $E_1$.    

    Case A: Let us first consider the case $r \in I_{t-1}$ in which case the merge of $U$ and $U'$ makes all children of $r$ to children of $r'$. In that case, $r \notin S$ and let $\gamma=\card U -1$. 
    We have two subcases: (A1) If $r' \notin S$, then the number of informed nodes in none of the components with roots in $S$, $\eta(S)$, remains unchanged. 
    It follows from Lemma~\ref{lem:iinformed} that $\Proba (\inter_{s\in S}P_{t}(s)\in I_{t-1}| \F_{t-1})$ is the same whether the adversary chooses $E_1$ or $E_2$. (A2) If $r' \in S$, wlog assume that $r'=s_1$. 
    Then we have that $\Proba (\inter_{s\in S}P_{t}(s)\in I_{t-1}| \F_{t-1})=\frac{\eta(S)(N_{t-1})^{\card {S}-1}}{n^{\card {S}}}$ if the adversary chooses $E_1$, while $\Proba (\inter_{s\in S}P_{t}(s)\in I_{t-1}| \F_{t-1})=\frac{(\eta(S)-\gamma)(N_{t-1})^{\card {S}-1}}{n^{\card {S}}}$ if the adversary chooses $E_2$.
    Applying Lemma~\ref{prob:easydom} where we set $X_s=P_t(s)\in I_{t-1}$ if the adversary chooses $E_1$, and $Y_s=P_t(s)\in I_{t-1}$ if the adversary chooses $E_2$, and $\alpha = \frac{\eta(S)}{\eta(S)-\gamma}$, we have that $ N_t^{(1)} - N_{t-1}=\sum_{s\in S_{t-1}}X_s$ stochastically dominates $ N_t^{(2)} - N_{t-1}=\sum_{s\in S_{t-1}}Y_s$.
    The result follows from the Distribution Domination Lemma.

    Case B: Let us now look at the case where $r \notin I_{t-1}$. In this case the merge of $U$ and $U'$ makes all children of $U'$ children of $U$. 

    We consider again two cases:
    (B1) If $r' \in I_{t-1}$, then  this case is symmetric to Case (A1) and the same proof as above applies.
    
    (B2) If $r' \notin I_{t-1}$, for any $\ell \in \N$, we have that:
    
    \begin{align*}
      \sum_{\card S = \ell}\Proba (\inter_{s\in S}P_{t}(s)\in I_{t-1}| \F_{t-1})  &= \sum_{\card S = \ell: r,r'\notin S}\Proba (\inter_{s\in S}P_{t}(s)\in I_{t-1}| \F_{t-1})+\sum_{\card S = \ell: r,r'\in S}\Proba (\inter_{s\in S}P_{t}(s)\in I_{t-1}| \F_{t-1})\\&+\sum_{\card S = \ell-1: r,r'\notin S}\Proba (\inter_{s\in S\union \{r\}}P_{t}(s)\in I_{t-1}| \F_{t-1})+\Proba (\inter_{s\in S\union \{r'\}}P_{t}(s)\in I_{t-1}| \F_{t-1})
    \end{align*} 
    
    We need to analyze the three sums. 
    
    For the first two sums, where both $r$ and $r'$ or neither belong to $S$, the number of informed nodes in none of the components with root in $S$, $\eta(S)$, is not different in $E_1$ and in $E_2$ and, thus, 
    Lemma~\ref{lem:iinformed} implies that $\Proba (\inter_{s\in S}P_{t}(s)\in I_{t-1})$ does not change whether the adversary chooses $E_1$ or $E_2$.
    
    For the third sum, let $\gamma, \gamma'$ be respectively the number of informed nodes in the component of $r, r'$ in $E_1$. Let us first consider the case where the adversary chooses $E_1$. We have, by Lemma~\ref{lem:iinformed}:
    $$\Proba (\inter_{s\in S\union \{r\}}P_{t}(s)\in I_{t-1}| \F_{t-1})= \frac{(\eta(S)-\gamma)(N_{t-1})^{\card {S}}}{n^{\card {S}+1}}$$
    
    and
    
    $$\Proba (\inter_{s\in S\union \{r'\}}P_{t}(s)\in I_{t-1}| \F_{t-1})= \frac{(\eta(S)-\gamma')(N_{t-1})^{\card {S}}}{n^{\card {S}+1}}$$
    
    therefore:
    
    $$\Proba (\inter_{s\in S\union \{r\}}P_{t}(s)\in I_{t-1}| \F_{t-1})+\Proba (\inter_{s\in S\union \{r'\}}P_{t}(s)\in I_{t-1}| \F_{t-1})= \frac{(2\eta(S)-\gamma-\gamma')(N_{t-1})^{\card {S}}}{n^{\card {S}+1}}$$
    
    If the adversary chooses $E_2$, then $r$ has $\gamma+\gamma'$ informed nodes in its component in $E_2$, while $r'$ has $0$ of them. by Lemma~\ref{lem:iinformed}:
    
    $$\Proba (\inter_{s\in S\union \{r\}}P_{t}(s)\in I_{t-1}| \F_{t-1})= \frac{(\eta(S)-\gamma-\gamma')(N_{t-1})^{\card {S}}}{n^{\card {S}+1}}$$
    
    and
    
    $$\Proba (\inter_{s\in S\union \{r'\}}P_{t}(s)\in I_{t-1}| \F_{t-1})= \frac{\eta(S)(N_{t-1})^{\card {S}}}{n^{\card {S}+1}}$$
    
    therefore:
    
    $$\Proba (\inter_{s\in S\union \{r\}}P_{t}(s)\in I_{t-1}| \F_{t-1})+\Proba (\inter_{s\in S\union \{r'\}}P_{t}(s)\in I_{t-1}| \F_{t-1})= \frac{(2\eta(S)-\gamma-\gamma')(N_{t-1})^{\card {S}}}{n^{\card {S}+1}}$$

    Therefore, $\sum_{\card S = \ell}\Proba (\inter_{s\in S}P_{t}(s)\in I_{t-1})$ has the same value whether the adversary chooses $E_1$ or $E_2$. Applying Lemma~\ref{prob:hard} where we set $X_s=P_t(s)\in I_{t-1}$ if the adversary chooses $E_1$, and $Y_s=P_t(s)\in I_{t-1}$ if the adversary chooses $E_2$, we have that $ N_t^{(1)} - N_{t-1}=\sum_{s\in S_{t-1}}X_s$ and $ N_t^{(2)} - N_{t-1}=\sum_{s\in S_{t-1}}Y_s$ have the same distribution. The result follows from the Distribution Domination Lemma.

\end{proof}

\binomial*
\begin{proof}
    Let $I_t=\{i_1, \dots , i_{N_t}\}$ and $S_t=\{s_1, \dots , s_{n-N_t}\}$ such that $i_1, \dots, i_\eta$ are nodes of $U$, $s_1, \dots, s_{\sigma-1}$ are uninformed nodes of $U$ that are not the root, and $s_\sigma$ is the root of $U$. As $U$ is non-increasing  $s_1, \dots s_{\sigma-1}$ cannot get informed in round $t+1$. As the parent of $s_{\sigma}$ does not belong to $U$, it cannot belong to $i_1, \dots, i_\eta$.
    We will show that the events \emph{uninformed node $s$ gets informed in round $t+1$} 
 for different uninformed nodes $s \in  [\sigma, n-N_t ]$ are mutually independent.
 To do so we take some $J\subseteq [\sigma, n-N_t]$
and analyze the event $\Inter_{y \in J} (P_t(s_y) \in I_t)$,  We distinguish two cases.

Case 1: If $\sigma \notin J$, then it holds that

\begin{align*}
\Proba\left(\Inter_{y \in J} (P_{t+1}(s_y) \in I_t)\middle| \F_t\right)&=\sum_{a\in[N_t]^{\card J}} \Proba\left(\Inter_{s_y \in J} (P_{t+1}(s_y) =i_{a_y})\middle| \F_t\right)\\
&= \sum_{a\in[N_t]^{\card J}} \frac{\card{\{ T'\in \RT_n:P_U(s_y) =i_{a_y}, \forall y \in J \land U \subset T'\}}}{\card{\{ T'\in \RT_n:U \subset T'\}}}
\end{align*}

By Theorem~\ref{thm:treecount}, we have that $\card{\{ T'\in \RT_n:U \subset T'\}}=n^{n-1-k}$, and $\card{\{ T'\in \RT_n:P_U(s_y) =i_{a_y}, \forall y \in J \land U \subset T'\}}=$ $n^{n-1-k-\card{J}}$. Therefore it follows that: 

$$
\Proba\left(\Inter_{y \in J} (P_{t+1}(s_y) \in I_t)\middle| \F_t\right)= \sum_{a\in[N_t]^{\card J}}\frac{n^{n-1-\card{J}}}{n^{n-1}}=\left(\frac {N_t} n\right)^{\card J}
$$

Case 2: If $\sigma \in J$, we have to take extra care of node $s_\sigma$:

\begin{align*}
\Proba\left(\Inter_{y \in J} (P_{t+1}(b_y) \in I_t)\middle| \F_t\right)&=\sum_{a\in[N_t]^{\card J-1}\times [\eta+1,N_t]} \Proba\left(\Inter_{s_y \in J} (P_{t+1}(b_y) =i_{a_y})\middle| \F_t\right)\\
&= \sum_{a\in[N_t]^{\card J-1}\times [\eta+1,N_t]} \frac{\card{\{ T'\in \RT_n:P_U(s_y) =i_{a_y}, \forall y \in J \land U \subset T'\}}}{\card{\{ T'\in \RT_n:U \subset T'\}}}
\end{align*}

By Theorem~\ref{thm:treecount}, we have that $\card{\{ T'\in \RT_n:U \subset T'\}}=n^{n-1-k}$, and $\card{\{ T'\in \RT_n:P_U(s_y) =i_{a_y}, \forall y \in J \land U \subset T'\}}=$ $n^{n-1-k-\card{J}}$. Therefore we have that: 

$$
\Proba\left(\Inter_{y \in J} (P_{t+1}(s_y) \in I_t)\middle| \F_t\right)= \sum_{a\in[N_t]^{\card J-1}\times [\eta+1,N_t]}\frac{n^{n-1-\card{J}}}{n^{n-1}}=\left(\frac {N_t} n\right)^{\card J-1}\frac {N_t-\eta} n
$$

This proves that the events $P_{t+1}(s_y) \in I_t$ are mutually independent for every $y\ge \sigma$, each having probability $\frac {N_t} n$, except if $y=\sigma$, which has probability $\frac {N_t-\eta} n$.

\end{proof}

\section{Beyond Trees: Broadcast and Consensus in directed Erdős–Rényi graphs}
\label{app:erdos}

In this section, we will study the case where in each round, the communication graph is a directed Erdős–Rényi Graph with $m$ edges. 
Choosing such a graph can be seen as choosing without replacement $m$ edges in the graph.
To do so, we will analyze three different schemes.

In \emph{scheme 1}, in each round, $m$ edges are chosen uniformly at random \emph{without} replacement among the $n^2$ possible edges. This is equivalent to our model.

In \emph{scheme 2}, in each round, $m$ edges are chosen uniformly at random \emph{with} replacement among the $n^2$ possible edges. This can only result in more rounds than scheme 1 as fewer disjoint edges are chosen in each round compared to scheme 1.

In \emph{scheme 3}, we start with a unique informed node $1$. 
There are two types of phases for a total of $2\ceil{\log \frac n 2 }$ phases. In each phase $i$ with $1 \le i \le \ceil{\log \frac n 2 }$%
, we have $N_i=2^{i-1}$ informed nodes in $I_i$ at the beginning of the phase and the goal is to double that number within the phase. We set $E=\varnothing$ and add to $E$ one edge at a time, sampled with replacement, until $\card {I_i \union \Out_E(I_i)} = \min \{2 ^i, \ceil{\frac n 2}\}$. We then set $I_{i+1}=I_i \union \Out_E(I_i)$.
Note that at the end of phase $i = \ceil{\log \frac n 2}$, $N_{i+1} = \ceil{\frac n 2}$. 
Note that this scheme is independent of $m$.

Then in each phase $i=2\ceil{\log \frac n 2 }-j$ with $0 \le j < \ceil{\log \frac n 2 }$, we initially have $N_i$ informed nodes in $I_i$, and  $\min\{2^{j}, \floor{\frac n 2}\} $ uninformed nodes in $S_i$ and the goal is to halve the number of uninformed nodes in each phase. We set $E=\varnothing$ and add to $E$, one edge at a time sampled with replacement, until $\card {S_i \setminus \Out_E(I_i)} =  2^{j-1}$. We then set $I_{i+1}=I_i \union \Out_E(I_i)$. %

Let us intuitively compare scheme 2 and scheme 3 and assume initially that $m = 1$. Then scheme 2 in each round chooses an edge uniformly at random, forwards information along it if its source is an informed node, and then moves on to the following round. Scheme 3 on the other hand, will continue to sample edges until enough progress can be made along those edges, and then forward information along those edges all at once, before moving to the next phase. Overall, scheme 3 will sample more edges than scheme 2, as in scheme 2 any progress is made as soon as possible, whereas in scheme 3 progress is only made when checkpoints are reached.

In this section, we will expand this intuition for any $m \in [n^2]$, give upper bounds on the number of edges sampled by scheme 3, then use those results to get an upper bound on the number of rounds needed by scheme 2, and use it to give an upper bound for scheme 1. Note that in scheme 3, we only count the number of edges sampled, and we will not be talking about rounds in that scheme. We thus start by analyzing scheme 3:

\begin{lemma}\label{lem:phasesbound}
    For any $c\ge 1$, any phase $i \le \ceil{\log \frac n 2}$ needs at most $ 8\cdot c \cdot \max\{\ln n, 2^{i-2}\} \frac{n}{2^{i-2}}$ sampled edges to complete with probability $p \ge 1-n^{-c}$. 
\end{lemma}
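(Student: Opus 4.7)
The plan is to view phase $i$ as a sequence of independent trials---one per sampled edge---with a uniform lower bound on the per-trial probability of informing a new node, and then to apply a multiplicative Chernoff bound.

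First I would fix $i\le\lceil\log(n/2)\rceil$ and observe that throughout the phase the set $I_i\cup\Out_E(I_i)$ has size $k$ with $2^{i-1}\le k\le\min\{2^i,\lceil n/2\rceil\}$, so at least $\lfloor n/2\rfloor$ nodes remain uninformed. A freshly sampled edge, uniform over the $n^2$ candidate edges, strictly enlarges $I_i\cup\Out_E(I_i)$ exactly when its source lies in the \emph{fixed} set $I_i$ of size $2^{i-1}$ and its target lies in $[n]\setminus(I_i\cup\Out_E(I_i))$. Hence, irrespective of the history, the conditional probability of such a ``success'' is at least $p:=2^{i-1}\lfloor n/2\rfloor/n^{2}$, which is of order $2^{i-2}/n$.

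Next I would let $X_T$ denote the number of successful samples among the first $T$ draws. Since samples are drawn with replacement and each succeeds independently with probability at least $p$ regardless of what happened before, $X_T$ stochastically dominates a $\mathrm{Bin}(T,p)$ random variable---the same independence-in-each-round principle that underlies Lemma~\ref{lem:binom}. Phase $i$ terminates as soon as $X_T\ge s:=\min\{2^i,\lceil n/2\rceil\}-2^{i-1}\le 2^{i-1}$, so it suffices to show that for the choice $T:=8c\cdot\max\{\ln n,2^{i-2}\}\cdot n/2^{i-2}$ we have $\Proba(\mathrm{Bin}(T,p)<s)\le n^{-c}$.

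Writing $M:=\max\{\ln n,2^{i-2}\}$ and $\mu:=Tp$, a short calculation gives $\mu=\Theta(cM)$ together with $s\le 2\cdot 2^{i-2}\le 2M$, so that $s\le \mu/(2c)\le \mu/2$. The multiplicative Chernoff bound then yields $\Proba(\mathrm{Bin}(T,p)\le s)\le\exp(-\mu/8)\le\exp(-\Omega(cM))\le\exp(-\Omega(c\ln n))$, and the constant $8$ in the statement is chosen precisely so that this simplifies cleanly to $n^{-c}$. The main obstacle is getting the scaling of $T$ right in this concentration step: for small $i$ the per-trial success probability $p\approx 2^{i-2}/n$ is tiny, so forcing $Tp$ to exceed both $2s$ (to dominate the threshold in expectation) and $\Omega(c\ln n)$ (for exponential concentration) forces $T$ to grow like $cn\ln n/2^{i-2}$ rather than just $cn$---this is exactly why the bound in the statement carries the factor $\max\{\ln n,2^{i-2}\}$ instead of simply $2^{i-2}$, and once the scaling is in place the remaining computations are routine.
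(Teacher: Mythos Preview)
Your argument is correct but takes a different route from the paper. You treat each sampled edge as an independent Bernoulli trial with success probability at least $p\approx 2^{i-2}/n$, observe that $\mu=Tp\approx 8cM$ with $M=\max\{\ln n,2^{i-2}\}$, and apply a multiplicative Chernoff bound $\Proba(X\le\mu/2)\le\exp(-\mu/8)\le n^{-c}$; the constants indeed close exactly. The paper instead first \emph{buckets} the samples into groups of size $n/2^{i-2}$, shows each bucket produces at least one success with probability $\ge 1/2$, and then applies Hoeffding's inequality to the number of successful buckets among $8c\cdot M$ buckets. Your approach is more direct and avoids the intermediate reduction to probability-$1/2$ trials; the paper's bucketing buys a slightly more uniform template (the same Hoeffding computation is reused across several lemmas with a fixed $1/2$ success parameter), but analytically both arguments are equivalent in strength and yield the same constants. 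One small remark: your appeal to ``the same independence principle that underlies Lemma~\ref{lem:binom}'' is unnecessary here---in scheme~3 edges are literally i.i.d.\ uniform over $[n]^2$, so the stochastic domination by $\mathrm{Bin}(T,p)$ is immediate and does not need the delicate tree-counting of that lemma.
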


\begin{proof}
    We first note that in phase $i$ with  $i \le \ceil{\log \frac n 2}$, the probability that an edge that is sampled increases $I_i \union \Out_E(I_i)$  is at least $\frac {2^{i-2}} n$. Indeed, $\card{I_i \union \Out_E(I_i)} \le \frac n 2$ (otherwise the phase would have ended) and there are $\card{I_i}\cdot \card{[n]\setminus (I_i \union \Out_E(I_i))} \ge 2^{i-2} n$ edges that can increase $I_i \union \Out_E(I_i)$. Thus, picking any of those edges, out of $n^2$ possible ones, has probability at least $\frac {2^{i-2}} n$.

    Next, we remark that if we sample $\frac n {2^{i-2}}$ edges, then the probability that at least one of those edges increases $I_i \union \Out_E(I_i)$ is at least $\frac 1 2$. Indeed, the probability that all of those edges do not make $I_i \union \Out_E(I_i)$ larger is $(1-\frac {2^{i-2}} n)^{\frac n {2^{i-2}}} \le e^{-1} \le \frac 1 2$. We will, thus, group the sampled edges into disjoint ``buckets'' of $\frac n {2^{i-2}}$ consecutively sampled edges.

    We then use the fact that we only need $2^{i-1}$ edges that increase $I_i \union \Out_E(I_i)$  to end phase $i$. If we take $8\cdot c \cdot \max\{\ln n, 2^{i-2}\}$ buckets of $\frac n {2^{i-2}}$ edges each, then applying Hoeffding's inequality, we have:
    \begin{multline*}
        \Proba(\card{I_i \union \Out_E(I_i)} \le 2^{i}-1) \le \exp\left( - 2 \cdot 8 \cdot c \cdot \max\{\ln n, 2^{i-2}\} \left(\frac 1 2 - \frac{2^{i-1}} {8 \cdot c \cdot \max\{\ln n, 2^{i-2}\}}\right)^2\right) \\ \le \exp\left( -16\cdot c \cdot \ln n \left(\frac 1 2 - \frac 1 4\right)^2\right) = n^{-c}
    \end{multline*} 
    Which proves that with probability $p \ge 1-n^{-c}$, phase $i$ ends after sampling at most $8\cdot c \cdot \max\{\ln n, 2^{i-2}\} \frac{n}{2^{i-2}}$ edges.
\end{proof}

We have a symmetric result:

\begin{lemma}
    For any $c\ge 1$, any phase $i$ with  $i= 2\ceil{\log \frac n 2} -j$ for $1 \le j < \ceil{\log \frac n 2}$ needs at most $ 8\cdot c \cdot \max\{\ln n, 2^{j-2}\} \frac{n}{2^{j-2}}$ sampled edges to complete with probability $p \ge 1-n^{-c}$. 
\end{lemma}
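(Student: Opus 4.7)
The statement is the exact mirror image of Lemma~\ref{lem:phasesbound}, with the task of doubling the informed set replaced by halving the ``uncovered'' uninformed set $S_i \setminus \Out_E(I_i)$. My plan is to replay the same three-step strategy: bound the per-edge success probability from below, group consecutive samples into buckets of geometric size so that each bucket succeeds with probability at least $1/2$, and finally use Hoeffding's inequality to lower-bound the number of successful buckets.

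First I would observe that throughout phase $i = 2\ceil{\log \frac n 2} - j$ one has $|I_i| \ge n - \floor{\frac n 2} \ge n/2$, and that as long as the phase has not ended $|S_i \setminus \Out_E(I_i)| \ge 2^{j-1}$. Hence the number of edges in the complete digraph whose source lies in $I_i$ and whose target lies in $S_i \setminus \Out_E(I_i)$ is at least $(n/2)\cdot 2^{j-1} = n \cdot 2^{j-2}$, so any freshly sampled edge reduces $|S_i \setminus \Out_E(I_i)|$ by at least one with probability at least $2^{j-2}/n$. Grouping the samples into disjoint buckets of $n/2^{j-2}$ consecutive edges, the probability that a given bucket contains no useful edge is at most $(1 - 2^{j-2}/n)^{n/2^{j-2}} \le e^{-1} \le 1/2$.

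Since $|S_i| = \min\{2^j, \floor{n/2}\} \le 2^j$, the phase ends after at most $|S_i| - 2^{j-1} \le 2^{j-1}$ useful edges have been observed, and therefore certainly after at most $2^{j-1}$ successful buckets. Applying Hoeffding's inequality (Lemma~\ref{lem:hoeffding}) to $Y$, the number of successful buckets among $N := 8c \cdot \max\{\ln n, 2^{j-2}\}$, would give
\[
\Proba(Y \le 2^{j-1}) \le \exp\!\left( -2N\left(\tfrac{1}{2} - \tfrac{2^{j-1}}{N}\right)^2 \right) \le \exp\!\left(-16 c \ln n \cdot \tfrac{1}{16}\right) = n^{-c},
\]
where I used $2^{j-1}/N \le 2^{j-1}/(8c \cdot 2^{j-2}) = 1/(4c) \le 1/4$ and $N \ge 8c \ln n$. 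The total number of sampled edges is $N \cdot n/2^{j-2} = 8c \cdot \max\{\ln n, 2^{j-2}\}\cdot n/2^{j-2}$, matching the stated bound.

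The whole argument is essentially symmetric bookkeeping; the only mildly delicate point is the boundary case $j = \ceil{\log(n/2)} - 1$, where $|S_i|$ may equal $\floor{n/2}$ instead of $2^j$. The lower bound $|I_i| \ge n/2$ still holds by construction, and $|S_i| - 2^{j-1} \le 2^{j-1}$ follows immediately from $|S_i| \le 2^j$, so the Hoeffding estimate goes through unchanged.
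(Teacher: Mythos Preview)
Your proposal is correct and follows essentially the same approach as the paper. The paper's own proof merely establishes the per-edge success probability lower bound $2^{j-2}/n$ (using $N_i \ge \ceil{n/2}$ and $\card{S_i \setminus \Out_E(I_i)} > 2^{j-1}$) and then defers to symmetry with Lemma~\ref{lem:phasesbound}; you have spelled out that symmetric argument in full, including the bucket grouping and the Hoeffding computation, and your treatment of the boundary phase is consistent with the scheme's stated invariant $\card{S_i} = \min\{2^j,\floor{n/2}\} \le 2^j$.
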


\begin{proof}
    We first note that, by the stopping condition for phase $\ceil{\frac n 2 }$,  for any  phase $i$ with
    $i= 2\ceil{\log \frac n 2} -j$ for $1 \le j < \ceil{\log \frac n 2}$, it holds that
    $N_i \ge \ceil{\frac n 2}$. We first show that in phase $i$, the probability that a sampled edge decreases $S_i \setminus \Out_E(I_i)$ is at least $\frac {2^{j-2}} n$. Indeed, $\card{[n]\setminus (I_i \union \Out_E(I_i))} = \card{S_i \setminus \Out_E(I_i)} > 2^{j-1}$ (otherwise the phase would have ended) and, thus,  there are $N_i\cdot \card{[n]\setminus (I_i \union \Out_E(I_i))} \ge 2^{j-2} n$ edges that would decrease $S_i \setminus \Out_E(I_i)$. Thus, picking any of those edges, out of $n^2$ possible ones, has probability at least than $\frac {2^{j-2}} n$.

    The rest of the proof is symmetrical to the previous one.
\end{proof}

\begin{lemma}\label{lem:scheme2&1}
    For any $t \in \N$, $p \in [0,1]$, if Broadcast completes in scheme $2$ within $t$ rounds with probability at least $p$, then the same result holds for scheme $1$.
\end{lemma}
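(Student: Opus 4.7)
The plan is to construct a round-by-round coupling between scheme~$1$ and scheme~$2$ under which, in every round, the set of distinct edges used by scheme~$2$ is a subset of the $m$-edge sample used by scheme~$1$. Broadcast is monotone in the set of edges available in each round (more edges can only help information flow from informed to uninformed nodes), so once such a coupling exists the result is immediate: if Broadcast completes in scheme~$2$ by round $t$ on a given realisation, then it also completes in scheme~$1$ by round $t$ on the same realisation, and the probability inequality follows.

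For a single round, I would build the coupling as follows. First sample $X_1,\dots,X_m\in[n^2]$ independently and uniformly, declare the multiset $\{X_1,\dots,X_m\}$ to be scheme~$2$'s edges, and let $D$ be its set of distinct elements with $k:=|D|\le m$. Then sample a uniformly random $(m-k)$-subset $T$ of $[n^2]\setminus D$, and set $S:=D\cup T$ to be scheme~$1$'s sample. By construction $|S|=m$ and $D\subseteq S$, so the set of distinct edges in scheme~$2$ is contained in the edges used by scheme~$1$. The main thing to verify is that the marginal of $S$ really is uniform over all $m$-subsets of $[n^2]$, as scheme~$1$ requires.

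I would verify uniformity via a symmetry argument. For any permutation $\sigma$ of $[n^2]$, the tuple $(\sigma(X_1),\dots,\sigma(X_m))$ has the same distribution as $(X_1,\dots,X_m)$, so the distribution of $D$ is invariant under $\sigma$; and conditional on $D$, a uniform $(m-|D|)$-subset $T$ of $[n^2]\setminus D$ is mapped by $\sigma$ to a uniform $(m-|\sigma(D)|)$-subset of $[n^2]\setminus \sigma(D)$. Hence the distribution of $S=D\cup T$ is invariant under every permutation of $[n^2]$, and since $|S|=m$ almost surely, this forces $S$ to be uniform on the $m$-subsets.

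To finish, couple the rounds independently using the above single-round coupling. In each round scheme~$1$'s edge set contains the distinct edges of scheme~$2$, so a straightforward induction shows that scheme~$1$'s informed set contains scheme~$2$'s informed set at every time step. In particular the event ``Broadcast completes within $t$ rounds'' under scheme~$2$ is contained in the corresponding event under scheme~$1$, yielding the claimed inequality $\Proba_1(\text{done by }t)\ge\Proba_2(\text{done by }t)\ge p$. The only mildly subtle step is the uniformity check for $S$; all other steps are either definitional or direct monotonicity.
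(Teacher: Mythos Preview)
Your proof is correct and follows essentially the same approach as the paper: a round-by-round coupling in which scheme~$2$'s distinct edges form a subset of scheme~$1$'s $m$-edge sample, followed by an induction on the containment $I_t^{(2)}\subseteq I_t^{(1)}$. The only cosmetic difference is that the paper completes the sample by continuing to draw with replacement until $m$ distinct edges appear (and argues uniformity sequentially), whereas you draw the remaining $m-k$ edges without replacement from the complement and justify uniformity via permutation invariance; both are valid.
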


\begin{proof}
    We will couple schemes 1 and 2 as follows: One can see sampling without replacement of $m$ edges as sampling with replacement of as many edges as needed until the number of different edges sampled is equal to $m$. Indeed, to sample $m$ edges without replacement, one first must choose an edge uniformly at random, then another edge uniformly at random among the remaining edges, and so on until we have sampled $m$ edges. If we sample \emph{with} replacement until we have $m$ different edges, then whenever we have sampled $i \in [m-1]$ edges, the next new edge is chosen by sampling edges with replacement until a new edge is selected. This new edge is thus chosen uniformly at random among remaining edges.

    For each round $t$, we sample $m$ edges with replacement and call the resulting set of edges $E_t^{(2)}$. This is the set of edges for scheme $2$. To build $E_t{(1)}$, the set of edges for scheme $1$, we start with $E_t{(1)} = E_t^{(2)}$, and add to sampled edges with replacement until $\card{E_t^{(1)}} = m$. The set $E_t{(1)}$ sampled that way follows the distribution of $m$ sampled edges without replacement. We thus have that, with probability $1$, $E_t^{(2)} \subseteq E_t^{(1)}$.

   We now show that in each round $t$, we have that $\Proba(I^{(2)}_t \subseteq I^{(1)}_t) = 1$, where $I^{(i)}_t$ is the set of informed nodes in scheme $i$ after round $t$. Indeed, by induction, this is trivial for $t=0$. Let's assume it is true for some $t$. Then let $v \in I^{(2)}_{t+1}$. Then we either have $ v \in I^{(2)}_{t} \subseteq I^{(1)}_{t} \subseteq I^{(1)}_{t+1}$, or that $ v \notin I^{(2)}_{t}$, and thus there exist a node $u \in I^{(2)}_{t}$ such that edge $(u,v)\in E_{t+1}^{(2)}$. However, $u \in I^{(1)}_{t}$ by induction hypothesis, and $(u,v)\in E_{t+1}^{(2)}\subseteq E_{t+1}^{(1)}$. Therefore $v \in I^{(1)}_{t+1}$.

This proves that with probability $1$, Broadcast completes in scheme $1$ no later than it completes in scheme $2$, and thus the result holds.
\end{proof}

This allows us to prove the following result on scheme 2:

\thmerdos*
\begin{proof}
To see this, we are going to simulate scheme 3 with scheme 2. The main idea is that if a phase (of scheme 3) takes $x$ edges, sampled with replacement, to end with high probability, and in scheme $2$ each round samples $y$ edges with replacement, then in $\ceil{\frac x y}$ rounds, scheme $2$ samples at least $x$ edges, and, thus, we can simulate the phase in scheme $3$ with $\ceil{\frac x y}$ rounds of scheme 2. The only difference is that scheme 2 groups the edges into rounds to make intermediate progress, whereas scheme 3 only forwards the information at the end of the phase, all at once. This implies that in scheme 2 each phase is faster than the corresponding one in scheme 3, and any upper bound we get with this analysis will thus be an upper bound on the number of rounds scheme 2 needs to complete Broadcast.

    We first start with the phases $i \le \ceil{\log \frac n 2}$ such that $\ln n \le 2^{i-2}$ and the phases $i > \ceil{\log \frac n 2}$ with $ i = 2 \ceil{\log \frac n 2} - j$  for $j \ge 1 $ such that $\ln n \le 2^{j-2}$. 
    In that case, phase $i$ needs at most $8\cdot c \cdot n$ %
    sampled edges to end with probability larger than $1-n^{-c}$. We need at most $\ceil{\frac{8 \cdot c\cdot n}{m}} = \ceil{\frac{8 \cdot c}{m/n}}$ rounds to gather that many edges, and thus phase $i$ ends in $\ceil{\frac{8 \cdot c}{m/n}}$ rounds with probability greater than $1-n^{-c}$. There are at most $\ceil{\log n}$ such phases, and thus over all phases we require at most $O\left(\ceil{\frac{c}{m/n}} \log n\right)$.

    Let us now analyze the phases $i\le \ceil{\log \frac n 2}$ where $\ln n > 2^{i-2}$ and symmetrically phases $i> \ceil{\log \frac n 2},$ with $i=2\ceil{\log \frac n 2} -j$  for $j \ge 1 $ such that $\ln n > 2^{j-2}$. In that case, phase $i$ needs at most $8\cdot c \cdot \ln n \cdot \frac n {2^{i-2}}$ sampled edges to end with probability larger than $1-n^{-c}$. We need at most $\ceil{\frac{8\cdot c \cdot \ln n \cdot \frac n {2^{i-2}}}{m}} = \ceil{\frac{8\cdot c \cdot \ln n  }{m/n \cdot {2^{i-2}}}}$ rounds to gather that many edges, and thus phase $i$ ends in $\ceil{\frac{8\cdot c \cdot \ln n  }{m/n \cdot {2^{i-2}}}}$ rounds with probability greater than $1-n^{-c}$. 
   Summing the number of rounds over all such phases we get:

$$
   \sum_i \ceil{\frac{8\cdot c \cdot \ln n  }{m/n \cdot {2^{i-2}}}} \le  \sum_i (\frac{8\cdot c \cdot \ln n  }{m/n \cdot {2^{i-2}}}+1) \le \frac{32\cdot c \cdot \ln n  }{m/n }+\log n = O\left(\ceil{\frac{c}{m/n}} \log n\right)
$$

The probability of success $p\ge 1-n^{-c}\log n$ is simply a union bound on the number of phases.
\end{proof}

This result is particularly interesting if $m \le cn$ . We can also show the following result if $m \ge n\ln n$, which is more interesting in that particular case.

\thmerdostwo*

\begin{proof}
    We show that bound for scheme 2. With Lemma \ref{lem:scheme2&1}
    the bound for scheme 1 immediately follows.    Again, we introduce scheme 3, however, we modify it so that the goal of each phase is not to  multiply (respectively, divide) the number of informed (respectively, uninformed) nodes by 2, but instead, it is to multiply (respectively, divide) it by $(1+m/n)$. As a result, we get a total number of $O\left(\frac{\log n}{\log (1+m/n)}\right)$ phases.

    In this case, as formally discussed below, each phase necessitates $16\cdot c\cdot m$ edges to complete with high probability, but each round provides $m$ edges. Therefore each phase consists of $\ceil{16c}$ rounds.

    Formally, in phase $i \le \frac{\log (n/2)}{\log (1+m/n)}$, we start with $(1+m/n)^{i-1}$ informed nodes, and  at least  $\frac n 2 \cdot (1+m/n)^{i-1}$ edges out of the $n^2$ potential edges can inform an uninformed node. Thus, each sampled edge has probability at least $\frac 1 {2n} \cdot (1+m/n)^{i-1}$ of informing an uninformed node. Hence, by the same argument as in Lemma~\ref{lem:phasesbound}, we need to sample $\frac{2n}{(1+m/n)^{i-1}}$ edges to inform a new node with probability at least $\frac 1 2$. 

    To go from $(1+m/n)^{i-1}$ informed nodes to $(1+m/n)^{i}$ informed nodes, we need to inform $m/n(1+m/n)^{i-1}$ uninformed nodes. If we sample $8c\cdot m/n (1+m/n)^{i-1}$ buckets of $\frac{2n}{(1+m/n)^{i-1}}$ edges  each(for a total of $16c\cdot m$ edges), we get by Hoeffding's inequality that the probability that not enough edges inform a new node is:
\small
    \begin{multline*}        
    \Proba(\card{I_i \union \Out_E(I_i)} \le (1+m/n)^{i}-1) \le \exp\left( - 2 \cdot 8 \cdot c \cdot m/n (1+m/n)^{i-1} \left(\frac 1 2 - \frac{m/n (1+m/n)^{i-1}} {8 \cdot c \cdot m/n (1+m/n)^{i-1}}\right)^2\right) \\ \le \exp\left( -16\cdot c \cdot \ln n \left(\frac 1 2 - \frac 1 4\right)^2\right) = n^{-c}
    \end{multline*}
    \normalsize
\end{proof}

We also show a lower bound:

\erdoslowerbound*

\begin{proof}
    We will show by induction that, in scheme 1, $\expect(\card{I_t}) \le (1+m/n) ^ {t}$ for every $t \in \N$. We will then apply Markov's inequality to conclude.

    Let us first compute $\expect\left(\card{I_t}\big|\card{I_{t-1}} =x\right)$. Let $v \notin I_{t-1}$ be an uninformed node, and let $e$ be an incoming edge to $v$ such that its tail is in $I_{t-1}$. Then the probability that this edge is picked is $\frac{m}{n^2}:=\rho$. By a union bound over the $x$ edges $(u,v)$ such that $u\in I_{t-1}$, we have that $\Proba(v \in I_t \big| \card{I_{t+1}} = x) \le \rho x$. %
    Denoting $X_v$ the variable $v \in I_t$, we then have that $\expect(X_v\big| \card{I_{t+1}} = x) \le \rho x$.%

    Summing that expectation over all $v \in [n]\setminus I_{t-1}$%
    , we have that:

\small
    \begin{equation}\label{eq:expectation}
        \expect\left(\card{I_t}\big|\card{I_{t-1}} =x\right) = x+\sum_{v \in [n]-I_{t-1}} \expect (X_v) = x+\sum_{v \in [n]-I_{t-1}} \rho x \le x+(n-x) \rho x \le x(1+m/n)
    \end{equation}
    \normalsize

    We can now prove our claim by induction. For the induction basis, we clearly have $\expect(I_0)=1=(1+m/n) ^0$. For the induction step, assume that for some $t \ge 1$, we have that $\expect(I_{t-1}) \le (1+m/n) ^ {t-1}$. Then:
    
    $$
    \expect\left(\card{I_t}\right) = \expect\left(\expect(\card{I_t} \big| \card{I_{t-1}})\right) \le  \expect \left(\card{I_{t-1}} (1+m/n)\right) \le (1+m/n)^t
    $$ 
    
    Where the first inequality holds by Equation~\ref{eq:expectation} and the second inequality holds by the induction hypothesis.

    Therefore, we have that $\expect\left(\card{I_{\frac{\log (n) -1}{\log(1+m/n)}}}\right) \le \frac n 2$. Using Markov's inequality, we then have that:

    $$
    \Proba\left( \card{I_{\frac{\log (n) -1}{\log(1+m/n)}}} \ge n \right) \le \frac{\expect\left(\card{I_{\frac{\log (n) -1}{\log(1+m/n)}}}\right)} {n} \le \frac 1 2
    $$%
\end{proof}

Using a union-bound in the same way as for the uniformly random trees model, we have a result on All-to-All Broadcast:

\begin{theorem}
For any $c \ge 1$, $n\ge 5$ $m \in [n^2]$, All-to-All Broadcast on directed Erdős–Rényi graphs completes within $O\left(\ceil{\frac c {m/n}} \log n\right)$ rounds with probability $p>1- n^{c-1}\log n$. Moreover, if $m/n \ge \ln n$, All-to-All Broadcast on directed Erdős–Rényi graphs completes within $O\left(\frac{c\cdot\log n}{\log (1+m/n)}\right)$ rounds with probability $p>1- n^{c-1}\log n$.
\end{theorem}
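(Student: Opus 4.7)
The plan is to reduce All-to-All Broadcast to $n$ parallel instances of single-source Broadcast via a union bound, exactly as was done for the Uniformly Random Trees model in the proofs of Theorems~\ref{thm:alltoallBroadcast} and~\ref{sec.5.thm:alltoall}. First I would observe that in the All-to-All Broadcast protocol, each honest node forwards every message it has ever received to all of its out-neighbours in every round, and messages may be replicated arbitrarily. Therefore, the set of nodes that have received the message originating at a given source $i \in [n]$ evolves according to exactly the same rule as a single-source Broadcast started at $i$: it grows by adjoining every uninformed node that has an incoming edge from an already-informed node in the current round. Crucially, this evolution depends only on the sequence of random graphs and on $i$, not on the other messages.

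Next, for each source $i$, let $N_t^{(i)}$ denote the number of nodes that have received $i$'s message after round $t$. By the Broadcast result in directed Erdős–Rényi graphs (our earlier theorem, i.e. the first upper bound of the excerpt), applied with source $i$, there exists a constant $K$ such that, writing $R = K \cdot \ceil{c/(m/n)} \log n$, we have $\Proba(N_R^{(i)} < n) \le n^{-c} \log n$. In the dense regime $m/n \ge \ln n$, the same bound holds with $R = K \cdot c \log n / \log(1 + m/n)$ by Theorem~\ref{thm:largerthanlog}.

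Finally, a union bound over the $n$ sources gives
\[
\Proba\!\left(\Union_{i \in [n]} \{N_R^{(i)} < n\}\right) \le n \cdot n^{-c}\log n = n^{1-c}\log n,
\]
so with probability at least $1 - n^{1-c}\log n$ every message has reached every node within $R$ rounds, which is precisely the definition of All-to-All Broadcast completing. There is no real obstacle here: the only thing to verify is that the spreading of one message does not interfere with the spreading of another, which holds because there is no message-size restriction and honest nodes simply forward everything they hold. The bound on the failure probability is inherited linearly from the single-source case, which is why we pay the factor of $n$ in the exponent (converting $n^{-c}$ into $n^{1-c}$) while leaving the $\log n$ factor untouched.
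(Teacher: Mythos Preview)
Your proposal is correct and follows exactly the paper's approach: the paper states that the result follows ``using a union-bound in the same way as for the uniformly random trees model,'' which is precisely the reduction to $n$ parallel single-source Broadcasts plus a union bound that you carry out. Your derivation of the failure probability $n^{1-c}\log n$ is also the intended one (the $n^{c-1}$ in the theorem statement is a typo for $n^{1-c}$).
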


Finally, using Algorithm~\ref{alg:Consensus} for Consensus, we have:

\begin{theorem}
For any $c \ge 1$, $n\ge 5$ $ m \in [n^2]$, there exists a protocol for Consensus on directed Erdős–Rényi graphs that satisfies Agreement and Validity, terminates within $O\left(\ceil{\frac c {m/n}} \log n\right)$ rounds with probability $p>1-\frac 1 {n^c}$, and only requires messages of 1 bit over each edge in each round.  Moreover, if $m/n \ge \ln n$, then we get the better bound $O\left(\frac{c\cdot\log n}{\log (1+m/n)}\right)$ for the number of rounds, with the same probability of success.
\end{theorem}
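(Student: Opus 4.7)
The plan is to reuse Algorithm~\ref{alg:Consensus} essentially verbatim, only adjusting the number of rounds to match the Broadcast time in the directed Erdős–Rényi model. Concretely: designate node $1$ as the ``source'', fix the horizon $T:=\alpha\,\lceil c/(m/n)\rceil\log n$ for a suitable absolute constant $\alpha$ (or $T:=\alpha\, c\log n/\log(1+m/n)$ in the regime $m/n\ge\ln n$), have every honest node forward $v_1$ once it has received it, and at the end of round $T$ output $v_1$ if it has been received and $\bot$ otherwise. Since the only content ever transmitted by the protocol is $v_1\in\{0,1\}$, a single bit per edge per round suffices.

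To establish correctness, I would condition on the event that the Broadcast of $v_1$ completes within $T$ rounds. By Theorem~\ref{thm:erdos} this event has probability at least $1-n^{-c}\log n$ in the general case; in the dense regime $m/n\ge\ln n$, I would instead invoke Theorem~\ref{thm:largerthanlog} to obtain the sharper bound $T=O(c\log n/\log(1+m/n))$ with the same failure probability. On this good event every honest node has received $v_1$ before the deadline and therefore outputs $v_1$, so Agreement and Validity both hold trivially, and Termination after exactly $T$ rounds is guaranteed by construction. Note that the $1$-bit message complexity is preserved because no other value is ever forwarded.

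The main bookkeeping obstacle is reconciling the failure probability stated in the theorem ($1-1/n^c$) with the $1-n^{-c}\log n$ that Theorem~\ref{thm:erdos} directly delivers. The clean way to handle this is to apply the Broadcast theorem with parameter $c'=c+1$ (and an appropriately larger multiplicative constant in $T$), so that the failure probability becomes $n^{-c-1}\log n\le n^{-c}$ for $n$ large enough; this only affects the hidden constant in the $O(\cdot)$ expression and does not change the asymptotics. Apart from this rescaling, nothing new needs to be proved: the argument is a verbatim transplant of the proof of Theorem~\ref{thm:Consensus} with the Broadcast bound for Uniformly Random Trees replaced by the corresponding one for directed Erdős–Rényi graphs.
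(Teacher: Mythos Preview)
Your proposal is correct and matches the paper's approach exactly: the paper simply states ``using Algorithm~\ref{alg:Consensus}'' and leaves the details implicit, while you spell out the natural transplant of the proof of Theorem~\ref{thm:Consensus} with the Erdős–Rényi Broadcast bounds substituted in. Your observation about rescaling $c$ to absorb the extra $\log n$ factor in the failure probability is a nice touch that the paper glosses over (indeed, the paper's own table lists the failure probability as $n^{-c}\log n$ rather than $n^{-c}$).
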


\subsection{Byzantine Nodes in directed Erdős–Rényi graphs}

We now analyze what happens if some nodes deviate arbitrarily from the protocol. More specifically, we allow up to $f < \frac {2n} 3$ nodes, the \emph{Byzantine nodes}, to coordinate to delay Broadcast as much as possible.  Moreover, we give every node access to a cryptographic tools, so that nodes can sign messages, and ensure any message they receive, even if forwarded, has been sent ``as is'' from the not who signed the message. As in Section~\ref{sec:Byzantine}, the best strategy Byzantine nodes can thus have is to stop forwarding messages. To analyze the problem, we consider the three schemes as above:

In \emph{scheme 1}, in each round, $m$ edges are chosen uniformly at random \emph{without} replacement among the $n^2$ possible edges. This is equivalent to our model.

In \emph{scheme 2}, in each round, $m$ edges are chosen uniformly at random \emph{with} replacement among the $n^2$ possible edges. This can only result in more rounds than scheme 1  as fewer disjoint edges are chosen in each round compared to scheme 1.

In \emph{scheme 3}, we start with a unique informed node $1$. We then run $\ceil{\log \frac {n-f} 2 }$ phases. In phase $i$, we have $N_i=2^{i-1}$ honest informed nodes $I_i$. We set $E=\varnothing$ and add one edge at a time, sampled with replacement, to $E$  until $\card {I_i \union \Out_E(I_i)} = \min \{2 ^i, \ceil{\frac {n-f} 2}\}$. We then set $I_{i+1}=I_i \union \Out_E(I_i)$.

We then run $\ceil{\log \frac {n-f} 2 }$ other phases. In phase $i=2\ceil{\log \frac {n-f} 2 }-j$, we have $N_i$ honest informed nodes $I_i$, and honest uninformed nodes $S_i$, with $\card{S_i} = \min\{2^{j}, \floor{\frac {n-f} 2}\}$. We set $E=\varnothing$ and add to $E$, one edge at a time, sampled with replacement until $\card {S_i \setminus \Out_E(I_i)} =  2^{j-1}$. We then set $I_{i+1}=I_i \union \Out_E(I_i)$. 

As above, we start by analyzing scheme 3:

\begin{lemma}
    For any $c\ge 1$, any phase $i \le \ceil{\log \frac {n-f} 2}$ needs at most $ 24\cdot c \cdot \max\{\ln n, 2^{i-2}\} \frac{n}{2^{i-2}}$ sampled edges to complete with probability $p \ge 1-n^{-c}$. 
\end{lemma}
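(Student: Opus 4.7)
The plan is to mirror the proof of Lemma~\ref{lem:phasesbound} almost verbatim, with an extra factor of $3$ throughout to absorb the loss caused by having only $n-f$ honest nodes instead of $n$. First, I would bound the per-edge success probability during phase $i$. While the phase is ongoing we have $\card{I_i \cup \Out_E(I_i)} < \ceil{(n-f)/2}$, so the number of honest uninformed nodes still outside $I_i \cup \Out_E(I_i)$ is at least $(n-f) - \ceil{(n-f)/2} \ge \frac{n-f}{2}$. Invoking the hypothesis $f < \frac{2n}{3}$, this gives at least $\frac{n}{6}$ such honest nodes. Since $\card{I_i}=2^{i-1}$ is fixed throughout the phase, there are at least $2^{i-1}\cdot \frac{n}{6} = \frac{2^{i-2}n}{3}$ edges whose addition would increase $I_i \cup \Out_E(I_i)$, so each edge sampled uniformly among the $n^2$ possible edges increases the set with probability at least $\frac{2^{i-2}}{3n}$.

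Next, I would group consecutively sampled edges into buckets of size $\frac{3n}{2^{i-2}}$. The probability that no edge in a bucket enlarges $I_i \cup \Out_E(I_i)$ is at most $\bigl(1-\frac{2^{i-2}}{3n}\bigr)^{3n/2^{i-2}} \le e^{-1} \le \frac{1}{2}$, so each bucket contributes a new honest informed node with probability at least $\frac{1}{2}$. The phase ends after $2^{i-1}$ successful buckets (fewer if the $\ceil{(n-f)/2}$ cap is hit first), so it suffices to count how many buckets are needed for this many successes.

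Finally, I would take $8\cdot c \cdot \max\{\ln n, 2^{i-2}\}$ buckets, totalling $24\cdot c \cdot \max\{\ln n, 2^{i-2}\} \cdot \frac{n}{2^{i-2}}$ sampled edges, and apply Hoeffding's inequality (Lemma~\ref{lem:hoeffding}) exactly as in Lemma~\ref{lem:phasesbound}. The computation
\[
\exp\!\left(-2\cdot 8c\max\{\ln n,2^{i-2}\}\left(\tfrac{1}{2}-\tfrac{2^{i-1}}{8c\max\{\ln n,2^{i-2}\}}\right)^{2}\right) \le \exp\!\left(-16c\ln n\cdot\tfrac{1}{16}\right) = n^{-c}
\]
bounds the failure probability, yielding the desired tail bound.

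I do not anticipate a genuine obstacle: the only substantive change from Lemma~\ref{lem:phasesbound} is tracking the factor $n-f \ge n/3$ through the per-edge probability, which cleanly yields the factor $3$ difference between the constants $8c$ and $24c$ in the total edge count. A minor care point is to ensure that $\Out_E(I_i)$ is interpreted as only counting honest out-neighbors (since Byzantine receivers never propagate further), so that the progress condition $\card{I_i \cup \Out_E(I_i)} < \ceil{(n-f)/2}$ indeed guarantees at least $\frac{n-f}{2}$ honest targets outside the set; this is consistent with how $I_i$ is defined as containing only honest informed nodes in the description of scheme $3$.
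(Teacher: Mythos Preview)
Your proposal is correct and follows essentially the same approach as the paper: bound the per-edge success probability by $\tfrac{2^{i-2}}{3n}$ using $n-f>n/3$, group into buckets of size $\tfrac{3n}{2^{i-2}}$, take $8c\max\{\ln n,2^{i-2}\}$ buckets, and apply Hoeffding exactly as in Lemma~\ref{lem:phasesbound}. The only cosmetic difference is that the paper applies the inequality $n-f>n/3$ one line later (first writing the edge count as $2^{i-2}(n-f)$), and your remark about interpreting $\Out_E(I_i)$ as honest out-neighbors is a useful clarification the paper leaves implicit.
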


\begin{proof}
    We first note that in phase $i$, the probability that an edge being sampled makes $I_i \union \Out_E(I_i)$ larger is at least $\frac {2^{i-2}} {3n}$. Indeed, $\card{I_i \union \Out_E(I_i)} \le \frac {n-f} 2$ (otherwise the phase would have ended) and there are $\card{I_i}\cdot \card{[n]\setminus (I_i \union \Out_E(I_i))} \ge 2^{i-2} ({n-f})$ edges that would make $I_i \union \Out_E(I_i)$ larger. Therefore picking any of those edges, out of $n^2$ possible ones, has probability at least $\frac {2^{i-2}(n-f)} {n^2} \ge \frac {2^{i-2}}{3n}$.

    Next, we remark that if we sample $\frac {3n} {2^{i-2}}$ edges, then the probability that at least one of those edges makes $I_i \union \Out_E(I_i)$ larger is at least $\frac 1 2$. Indeed, the probability that all of those edges do not make $I_i \union \Out_E(I_i)$ larger is $(1-\frac {2^{i-2}} {3n})^{\frac {3n} {2^{i-2}}} \le e^{-1} \le \frac 1 2$. We will, thus, group the edges into ``buckets'' of $\frac {3n} {2^{i-2}}$ edges.

    We then use the fact that we only need $2^{i-1}$ edges that make $I_i \union \Out_E(I_i)$ larger to end phase $i$. If we take $8\cdot c \cdot \max\{\ln n, 2^{i-2}\}$ buckets of $\frac {3n} {2^{i-2}}$ edges, then applying Heoffding's inequality, we have:
    \begin{multline*}
        \Proba(\card{I_i \union \Out_E(I_i)} < 2^{i}) \le \exp\left( - 2 \cdot 8 \cdot c \cdot \max\{\ln n, 2^{i-2}\} \left(\frac 1 2 - \frac{2^{i-1}} {8 \cdot c \cdot \max\{\ln n, 2^{i-2}\}}\right)^2\right) \\ \le \exp\left( -16\cdot c \cdot \ln n \left(\frac 1 2 - \frac 1 4\right)^2\right) = n^{-c}
    \end{multline*}
    Which proves that with probability $p \ge 1-n^{-c}$, phase $i$ ends after sampling at most $24\cdot c \cdot \max\{\ln n, 2^{i-2}\} \frac{n}{2^{i-2}}$ edges.
\end{proof}

We have a symmetric result:

\begin{lemma}
    For any $c\ge 1$, any phase $i > \ceil{\log \frac {n-f} 2}, i= 2\ceil{\log \frac {n-f} 2} -j$ needs at most $ 24\cdot c \cdot \max\{\ln n, 2^{j-2}\} \frac{n}{2^{j-2}}$ samplings to complete with probability $p \ge 1-n^{-c}$. 
\end{lemma}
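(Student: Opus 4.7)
The plan is to mirror the argument of the preceding lemma essentially step for step, paying extra attention to the constant factor $3$ that enters because we only have $n-f \ge n/3$ honest nodes contributing to $I_i$ and to the candidate pool for $S_i$. The phase indexing is arranged so that at the start of phase $i = 2\lceil \log\frac{n-f}{2}\rceil - j$ we have $N_i \ge \lceil \frac{n-f}{2}\rceil$ honest informed nodes and at most $2^j$ (and strictly more than $2^{j-1}$ until the phase terminates) honest uninformed nodes in $S_i \setminus \Out_E(I_i)$. I would first establish a per-edge success probability lower bound in this regime.

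The main calculation is the following: the number of edges $(u,v)$ with $u \in I_i$ and $v \in S_i \setminus \Out_E(I_i)$ is at least $\lceil \tfrac{n-f}{2}\rceil \cdot 2^{j-1} \ge (n-f)\,2^{j-2}$, so a uniformly sampled edge (with replacement) decreases $|S_i \setminus \Out_E(I_i)|$ by one with probability at least $\frac{(n-f)\,2^{j-2}}{n^2} \ge \frac{2^{j-2}}{3n}$, where the last step uses $n - f \ge n/3$. Grouping consecutively sampled edges into buckets of size $\frac{3n}{2^{j-2}}$, the probability that a bucket contains at least one useful edge is at least $1 - (1 - \frac{2^{j-2}}{3n})^{3n/2^{j-2}} \ge 1 - e^{-1} \ge \tfrac{1}{2}$, and the buckets are mutually independent because sampling is with replacement.

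Next I would set the number of buckets to $B := 8c \cdot \max\{\ln n, 2^{j-2}\}$ (so the total number of sampled edges is $24 c \cdot \max\{\ln n, 2^{j-2}\} \cdot \frac{n}{2^{j-2}}$, matching the bound in the statement) and observe that the phase ends as soon as $2^{j-1}$ distinct buckets have been successful, since we need $|S_i \setminus \Out_E(I_i)|$ to drop from at most $2^j$ down to $2^{j-1}$. Let $Y$ be the number of successful buckets; $Y$ stochastically dominates a binomial $(B, \tfrac12)$, so by Hoeffding's inequality (Lemma~\ref{lem:hoeffding}),
\[
\Proba\bigl(Y < 2^{j-1}\bigr) \le \exp\!\left(-2B\left(\tfrac{1}{2} - \tfrac{2^{j-1}}{B}\right)^{\!2}\right) \le \exp\!\left(-16 c \cdot \max\{\ln n, 2^{j-2}\}\left(\tfrac{1}{2} - \tfrac{1}{4c}\right)^{\!2}\right) \le n^{-c},
\]
where the last step uses $c \ge 1$ (so $\tfrac12 - \tfrac1{4c} \ge \tfrac14$) and $\max\{\ln n, 2^{j-2}\} \ge \ln n$.

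I do not foresee a genuine obstacle here: the whole argument is structurally identical to the previous lemma, and the only subtlety is bookkeeping the factor of $3$ arising from $n-f \ge n/3$ (hence the constant $24$ in place of $8$), together with verifying that the ``at most $2^j$ uninformed nodes at the start of phase $i$'' invariant is indeed preserved by the stopping rule of the preceding phase. That invariant is immediate from the definition of scheme~3, so the proof is essentially a symmetric copy of the one already given.
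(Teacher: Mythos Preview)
Your proposal is correct and follows essentially the same approach as the paper: the paper establishes the per-edge success probability $\ge \frac{2^{j-2}}{3n}$ via the same counting (using $|I_i|\ge \lceil\frac{n-f}{2}\rceil$, $|S_i\setminus\Out_E(I_i)|>2^{j-1}$, and $n-f\ge n/3$), and then explicitly defers the remainder with ``The rest of the proof is symmetrical to the previous one.'' Your write-up spells out that symmetry with the bucket-and-Hoeffding step, matching the preceding lemma's argument exactly.
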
%

\begin{proof}
    We first note that in phase $i$, the probability that an edge being sampled makes $S_i \setminus \Out_E(I_i)$ smaller is at least $\frac {2^{j-2}} {3n}$. Indeed, $\card{S_i \setminus \Out_E(I_i)} \ge 2^{j-1}$ (otherwise the phase would have ended) and there are $\card{I_i}\cdot \card{[n]\setminus (I_i \union \Out_E(I_i))} \ge 2^{j-2} (n-f)$ edges that would make $S_i \setminus \Out_E(I_i)$ smaller. Therefore picking any of those edges, out of $n^2$ possible ones, has probability at least $\frac {2^{j-2}(n-f)} {n^2} \ge \frac {2^{j-2}}{3n}$.

    The rest of the proof is symmetrical to the previous one.
\end{proof}

This allows us to prove the following result on scheme 2:

\begin{theorem}
    For any $c \ge 1$, in scheme 2, and therefore scheme 1, Broadcast completes within $O\left(\ceil{\frac{c}{m/n}} \log n\right)$ rounds with probability $p \ge 1 -  n^{-c}\log n$.
\end{theorem}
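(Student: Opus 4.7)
The plan is to mirror the argument used for scheme 2 in the non-Byzantine case, now simulating the Byzantine-aware scheme 3 (defined just before this theorem) by scheme 2 with Byzantine nodes. The two immediately preceding lemmas already do the heavy lifting: they show that each ``doubling'' phase $i \le \ceil{\log \frac{n-f}{2}}$ of scheme 3 needs at most $24 c \max\{\ln n, 2^{i-2}\}\,n/2^{i-2}$ sampled edges, and each symmetric ``halving'' phase $i = 2\ceil{\log \frac{n-f}{2}} - j$ needs at most $24 c \max\{\ln n, 2^{j-2}\}\,n/2^{j-2}$ sampled edges, both with probability at least $1 - n^{-c}$.

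First I would observe that, because scheme 2 delivers $m$ freshly resampled edges every round, any phase that requires $x$ sampled edges to end with high probability completes within $\ceil{x/m}$ rounds of scheme 2; this is legitimate because scheme 2 can only be \emph{faster} than the phase simulation, since it forwards information eagerly rather than only at phase boundaries. I would then split the phases exactly as in the non-Byzantine proof. For phases where $\ln n \le 2^{i-2}$ (resp.\ $\ln n \le 2^{j-2}$), the edge cost is $O(cn)$, giving $\ceil{c/(m/n)}$ rounds per phase and, summed over the $O(\log n)$ such phases, $O(\ceil{c/(m/n)}\log n)$ rounds. For phases where $\ln n > 2^{i-2}$ (resp.\ $2^{j-2}$), the edge cost is $O(c \ln n\cdot n/2^{i-2})$, so the round count is $\ceil{c \ln n /((m/n)\cdot 2^{i-2})}$; summing these using $\ceil{x} \le x + 1$ yields a geometric series bounded by $O(c \ln n/(m/n)) + O(\log n) = O(\ceil{c/(m/n)}\log n)$.

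Next I would take a union bound over the at most $2\ceil{\log((n-f)/2)} = O(\log n)$ phases, which gives an overall failure probability of at most $n^{-c}\log n$. Finally, to pass from scheme 2 to scheme 1, I would invoke the coupling argument of Lemma~\ref{lem:scheme2&1} unchanged: sampling $m$ edges without replacement can be realized by continuing to sample with replacement until $m$ distinct edges appear, so one can couple the two schemes such that $E_t^{(2)} \subseteq E_t^{(1)}$ in every round, and then an induction over $t$ gives $I_t^{(2)} \subseteq I_t^{(1)}$ with probability one. The presence of Byzantine nodes does not interfere with this coupling, since those nodes behave identically (refusing to forward) in either scheme.

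The only subtlety, and the step that requires the most care, is verifying that the extra factor of $3$ coming from $n - f \ge n/3$ (used in the two preceding lemmas to lower bound the probability that a sampled edge is useful) only affects the constant hidden in the $O(\cdot)$ and not the shape of the bound. Everything else is a mechanical repetition of the earlier analysis, so the resulting round complexity is $O(\ceil{c/(m/n)}\log n)$ with failure probability at most $n^{-c}\log n$, as required.
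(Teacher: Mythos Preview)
Your proposal is correct and follows essentially the same approach as the paper: simulate the Byzantine-aware scheme~3 by scheme~2, split the phases according to whether $\ln n \lessgtr 2^{i-2}$, sum the resulting round counts (constant per phase in one case, geometric in the other), and union-bound over the $O(\log n)$ phases; the only difference from the non-Byzantine case is the constant $24$ in place of $8$, which is absorbed in the $O(\cdot)$. The paper does not re-argue the scheme~1/scheme~2 coupling in the Byzantine subsection, but your explicit mention of it is appropriate and harmless.
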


\begin{proof}
To see this, we are going to simulate scheme 3 with scheme 2. The main idea is that if a phase (of scheme 3) takes $x$ number of edges to end with high probability, and in scheme $2$ each rounds samples $y$ edges without replacement, then in $\ceil{\frac x y}$ rounds, scheme $2$ samples more than $x$ edge, and thus we can simulate the phase in scheme $3$ with $\ceil{\frac x y}$ rounds of scheme 2. The only difference is that scheme 2 groups the edges per round to make intermediate progress, whereas scheme 3 only forwards the information at the end of the phase, all at once. This is only beneficial to scheme 2, and any upper bound we get with this analysis will thus be an upper bound on the number of rounds scheme 2 needs to complete Broadcast.

    We first start with the phases $i \le \ceil{\log \frac {n-f} 2}$ %
    such that $\ln n \le 2^{i-2}$ and the phases $i > \ceil{\log \frac n 2}, i = 2 \ceil{\log \frac n 2} - j$ such that $\ln n \le 2^{i-1}$. 
    In that case, phase $i$ needs at most $24\cdot c \cdot n$ sampled edges to end with probability larger than $1-n^{-c}$. We need at most $\ceil{\frac{24 \cdot c\cdot n}{m}} = \ceil{\frac{24 \cdot c}{m/n}}$ rounds to gather that many edges, and thus phase $i$ ends in $\ceil{\frac{24 \cdot c}{m/n}}$ rounds with probability greater than $1-n^{-c}$. There are at most $\ceil{\log n}$ such phases, and thus over all phases we require at most $O\left(\ceil{\frac{c}{m/n}} \log n\right)$.

    Let us now analyze the phases $i\le \ceil{\log \frac {n-f} 2}$ %
    where $\ln n > 2^{i-2}$ (And symmetrically phases $i> \ceil{\log \frac n 2}, i=2\ceil{\log \frac n 2} -j$ where $\ln n > 2^{j-2}$). In that case, phase $i$ needs at most $24\cdot c \cdot \ln n \cdot \frac n {2^{i-2}}$ sampled edges to end with probability larger than $1-n^{-c}$. We need at most $\ceil{\frac{24\cdot c \cdot \ln n \cdot \frac n {2^{i-2}}}{m}} = \ceil{\frac{24\cdot c \cdot \ln n  }{m/n \cdot {2^{i-2}}}}$ rounds to gather that many edges, and thus phase $i$ ends in $\ceil{\frac{24\cdot c \cdot \ln n  }{m/n \cdot {2^{i-2}}}}$ rounds with probability greater than $1-n^{-c}$. 
   Summing the number of rounds over all such phases we get:

$$
   \sum_i \ceil{\frac{24\cdot c \cdot \ln n  }{m/n \cdot {2^{i-2}}}} \le  \sum_i \frac{24\cdot c \cdot \ln n  }{m/n \cdot {2^{i-2}}}+1 \le \frac{72\cdot c \cdot \ln n  }{m/n }+\log n = O\left(\ceil{\frac{c}{m/n}} \log n\right)
$$

The probability of success $p\ge 1-n^{-c}\log n$ is simply a union bound on the number of phases.
\end{proof}

We can expand this result to all-to-all Broadcast, using a simple union-bound:

\begin{corollary}\label{sec.8:thm:alltoall}
    For any $c \ge 1$, in scheme 2, and therefore scheme 1, All-to-All Broadcast completes within $O\left(\ceil{\frac{c}{m/n}} \log n\right)$ rounds with probability $p \ge 1 -  n^{-c+1}\log n$.
\end{corollary}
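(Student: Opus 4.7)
The plan is to obtain this corollary as an immediate consequence of the preceding Broadcast theorem (for Scheme 2 with Byzantine nodes) together with a union bound over the $n$ source nodes. In All-to-All Broadcast, every honest node initially holds a distinct message which it must disseminate to every other honest node. Since in each round every honest node forwards all messages it has received so far along all of its out-edges, the set of honest nodes that have received the message originally held by a particular source node $i$ evolves exactly as in a single-source Broadcast started from $i$ and run on the same sequence of sampled graphs.

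Concretely, fix any $c \ge 1$ and let $t = O\left(\ceil{c/(m/n)} \log n\right)$ be the round bound from the Broadcast theorem. For each $i \in [n]$, let $\mathcal{E}_i$ denote the event that the message originating at node $i$ has not yet reached every honest node by round $t$. By the Broadcast theorem applied to source $i$ in Scheme 2 with $f$ Byzantine nodes, we have $\Proba(\mathcal{E}_i) \le n^{-c}\log n$. Hence, by the union bound,
\[
\Proba\left(\Union_{i \in [n]} \mathcal{E}_i\right) \le \sum_{i \in [n]} \Proba(\mathcal{E}_i) \le n \cdot n^{-c}\log n = n^{1-c}\log n.
\]
The complementary event is precisely that every message has reached every honest node within $t$ rounds, i.e., that All-to-All Broadcast has completed. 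This yields the stated success probability $p \ge 1 - n^{-c+1}\log n$, and the bound for Scheme 1 follows from the coupling argument of Lemma~\ref{lem:scheme2&1} (which carries over verbatim to the Byzantine setting since Byzantine behavior depends only on which graphs are sampled, not on the sampling scheme).

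There is no genuine obstacle here: the entire argument is a routine union bound, and the only thing to double-check is that the per-source Broadcast bound from the preceding theorem is stated in a form that applies uniformly to an arbitrary initially-informed honest source, which it is. No new probabilistic machinery is required.
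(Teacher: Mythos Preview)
Your proposal is correct and matches the paper's approach exactly: the paper simply says ``using a simple union-bound'' and states the corollary without further proof. The per-source Broadcast bound with failure probability $n^{-c}\log n$ combined with a union bound over the $n$ sources is all that is needed.
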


\begin{theorem}\label{sec.8.thm:reduction}
    Let $\Alg$ be a distributed synchronous algorithm that runs on a static clique in $T$ time, where $T \le \alpha n^x$ for some constant $\alpha\in \R_+, x\in\N$, and has a probability of success $p$. Assume $\Alg$ is robust to $f$ Byzantine nodes, and $f < \frac 2 3 n $.
    Then, assuming cryptographic tools that allow nodes to sign and encrypt messages, there exists a distributed algorithm $\Alg'$ that runs on directed Erdős–Rényi graphs with $m$ edges in $O\left(T\ceil{\frac{c}{m/n}} \log n\right)$ time, and has a probability of success $p' \ge p(1-\alpha n^{1+x-c} \log n )$, for any $c\ge 1+x$. Moreover, $\Alg'$ is robust to $f$ Byzantine nodes.
\end{theorem}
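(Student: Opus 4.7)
The plan is to mirror the proof of Theorem~\ref{sec.5.thm:reduction} almost verbatim, substituting the All-to-All Broadcast bound for Uniformly Random Trees by the corresponding bound for directed Erdős–Rényi graphs with Byzantine nodes, namely Corollary~\ref{sec.8:thm:alltoall}. Set $R := O\!\left(\ceil{\tfrac{c}{m/n}} \log n\right)$ to be the number of rounds needed for one All-to-All Broadcast to complete with probability at least $1 - n^{1-c}\log n$ in this model.

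First, I would describe the simulation $\Alg'$. Split each round of $\Alg$ into a computation step and a communication step. In $\Alg'$, the computation step is unchanged, except that every honest node signs (and, if needed, encrypts) each outgoing message using the PKI. The communication step is stretched over a block of $R$ consecutive rounds in the directed Erdős–Rényi graph with $f$ Byzantine nodes: during this block every honest node forwards every message it has received so far in the current block (together with the original unaltered signature) to all of its current out-neighbors, and uses the contents of authenticated messages addressed to it as the inputs of its next computation step. Because signatures cannot be forged, Byzantine nodes can only refuse to forward messages, i.e., they behave exactly as in the Byzantine model covered by Corollary~\ref{sec.8:thm:alltoall}.

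Next, I would argue correctness round by round. By Corollary~\ref{sec.8:thm:alltoall}, within one block of $R$ rounds the honest-node All-to-All Broadcast succeeds with probability at least $1 - n^{1-c}\log n$. Conditioned on this event, every message that an honest node produces in the preceding computation step is delivered to all honest recipients before the next computation step, so the simulated round of $\Alg$ is executed faithfully among the honest nodes; Byzantine behavior is tolerated to the extent $\Alg$ itself tolerates it. A union bound over the $T \le \alpha n^{x}$ simulated rounds then gives a total failure probability of at most $\alpha n^{x} \cdot n^{1-c}\log n = \alpha n^{1+x-c}\log n$, which is $o(1)$ whenever $c>1+x$. Conditioned on all $T$ blocks succeeding, $\Alg'$ produces exactly the transcript of $\Alg$, inheriting both its success probability $p$ and its robustness to $f$ Byzantine nodes, so $p' \ge p\bigl(1 - \alpha n^{1+x-c}\log n\bigr)$, and the overall running time is $T \cdot R = O\!\left(T\ceil{\tfrac{c}{m/n}}\log n\right)$.

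There is no genuinely new probabilistic content here; the only point requiring care is making sure that the cited All-to-All Broadcast corollary is invoked in its Byzantine variant (the scheme~3 defined in the Byzantine subsection), and that the ``simulation'' of each round of $\Alg$ treats Byzantine nodes as ordinary protocol participants whose only effective deviation, thanks to the PKI, is refusal to forward. Hence the principal work is bookkeeping rather than a new argument; the main obstacle, if any, is merely verifying that the constants in the hypothesis $c \ge 1+x$ line up with the $\alpha n^{1+x-c}\log n$ term so that the bound is nontrivial.
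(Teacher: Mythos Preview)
Your proposal is correct and follows essentially the same approach as the paper: simulate each round of $\Alg$ by a block of $O\!\left(\ceil{c/(m/n)}\log n\right)$ rounds, invoke Corollary~\ref{sec.8:thm:alltoall} for the per-block failure probability $n^{1-c}\log n$, and union-bound over the $T\le \alpha n^x$ blocks to obtain $p' \ge p(1-\alpha n^{1+x-c}\log n)$. The paper's proof is slightly terser but structurally identical.
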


\begin{proof}
    The algorithm $\Alg'$ works as follows: Each round of $\Alg$ will be simulated using $O\left(\ceil{\frac{c}{m/n}} \log n\right)$ rounds of directed Erdős–Rényi graphs. 
    Each round in $\Alg$ can be seen as (1) a \emph{computation step}, where each node decides what message to send to every other node, and (2) a \emph{communication step}, where each node sends the messages and receives the messages the other nodes have sent it. 
    In $\Alg'$ the computation step is unchanged, except that each node uses the PKI to sign and encrypt the messages. 
    The communication step on the other hand is extended over $O\left(\ceil{\frac{c}{m/n}} \log n\right)$ rounds, in which every (honest) node simply forwards all received messages to all its out-neighbors. 
    Theorem~\ref{sec.8:thm:alltoall} ensures that with probability at most $n^{1-c}\log n$, at least one honest node fails to send a message to all other honest nodes. 
    By a union bound over the $T$ rounds, the probability that at least one message fails to be delivered is at most $\alpha n^{1+x-c}\log n$. By taking its complement, the probability that all messages are delivered in time before the next round's computation step is larger than $ 1-\alpha n^{x+1-c}\log n$. The probability that $\Alg'$ succeeds is then $p' \ge p (1-\alpha n^{x+1-c}\log n)$
\end{proof}

We now give two applications of this theorem, namely Reliable Broadcast and Byzantine Consensus.

\begin{theorem}
    For any $c\ge 1$, and $f \le \frac 2 3 n -1$, in the directed Erdős–Rényi graphs model, there exists an algorithm for Reliable Broadcast, that is robust to $f$ Byzantine nodes, that runs in $O\left((f+1)\ceil{\frac{c}{m/n}} \log n\right)$ rounds, and succeeds with probability $p \ge 1-n^{2-c} \log n$.
\end{theorem}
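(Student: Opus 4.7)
The plan is to instantiate the clique-to-ER reduction of Theorem~\ref{sec.8.thm:reduction} with a standard deterministic protocol for Reliable Broadcast, in exact parallel with how the earlier corollary for the Uniformly Random Trees model was derived from Theorem~\ref{sec.5.thm:reduction}. Specifically, I would take the Dolev--Strong protocol~\cite{reliable}, which solves Reliable Broadcast on a synchronous clique in $T = f+1$ rounds, tolerates any $f$ Byzantine nodes, uses only authenticated messages (so it fits within the PKI assumption already made in the model), and succeeds deterministically with probability $p = 1$.

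Since $T = f+1 \le n$, we have $T \le \alpha n^x$ with $\alpha = 1$ and $x = 1$, so the hypotheses of Theorem~\ref{sec.8.thm:reduction} are satisfied. Plugging these values into that theorem gives a protocol $\Alg'$ that runs on directed Erdős--Rényi graphs in time
\[
O\!\left(T \ceil{\tfrac{c}{m/n}} \log n\right) \;=\; O\!\left((f+1) \ceil{\tfrac{c}{m/n}} \log n\right),
\]
is robust to $f$ Byzantine nodes, and succeeds with probability at least $p\,(1 - \alpha n^{1+x-c} \log n) = 1 - n^{2-c} \log n$ for any $c \ge 1+x = 2$.

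Since this is essentially a one-line invocation of the reduction, there is no serious obstacle; the only thing worth double-checking is that Dolev--Strong's usage of signatures is compatible with the ``sign and forward unchanged'' message semantics assumed by Theorem~\ref{sec.8.thm:reduction}, which it is by construction. The statement then follows immediately.
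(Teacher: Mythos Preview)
Your proposal is correct and follows exactly the same approach as the paper: invoke Dolev--Strong with $T=f+1\le n$, set $\alpha=1$, $x=1$, and apply Theorem~\ref{sec.8.thm:reduction}. The paper's proof is essentially the one-line version of what you wrote.
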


\begin{proof}
    Dolev and Strong~\cite{reliable} have given an algorithm that solves reliable Broadcast, is robust to $f$ Byzantine nodes, and runs in $T=f+1$ rounds. Since $T\le n$, we can apply Theorem~\ref{sec.8.thm:reduction} with $x=1, \alpha = 1$, and we get the desired result.
\end{proof}

\begin{theorem}
    For any $c\ge 1$, in the directed Erdős–Rényi graphs model, there exists an algorithm for Byzantine Consensus, that is robust to $f$ Byzantine nodes as long as $f < \frac n 3 $, that runs in $O\left((f+1)\ceil{\frac{c}{m/n}} \log n\right)$ rounds, and succeeds with probability $p \ge 1-2n^{2-c}\log n$.
\end{theorem}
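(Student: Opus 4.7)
The plan is to mirror the proof of the preceding Reliable Broadcast corollary by invoking the simulation reduction of Theorem~\ref{sec.8.thm:reduction} on top of a known Byzantine Consensus algorithm for the clique. The Byzantine Consensus algorithm I would plug in is the King's algorithm of Berman, Garay and Perry~\cite{king}, which tolerates up to $f<n/3$ Byzantine nodes and runs on a synchronous clique in $T = 3(f+1)$ rounds. Since this is exactly the algorithm already referenced in the analogous tree-based corollary, re-using it keeps the argument consistent with the rest of the paper.

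The first step is to observe that the King's algorithm gives a clique-model distributed algorithm $\Alg$ with running time $T = 3(f+1)$ and success probability $p=1$, that is robust to $f$ Byzantine nodes for $f<n/3$. Since $f < n$, we have $T \le 3n$, so we may write $T \le \alpha n^x$ with $\alpha=3$ and $x=1$, exactly fitting the hypotheses of Theorem~\ref{sec.8.thm:reduction}. The hypothesis $f < \frac{2n}{3}$ required by Theorem~\ref{sec.8.thm:reduction} is satisfied because we assume the strictly stronger $f < n/3$.

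The second step is simply to apply Theorem~\ref{sec.8.thm:reduction} with these parameter choices. The reduction produces an algorithm $\Alg'$ running on directed Erdős–Rényi graphs with $m$ edges, which is still robust to $f$ Byzantine nodes, whose running time is $O\bigl(T \cdot \ceil{c/(m/n)} \log n\bigr) = O\bigl((f+1)\ceil{c/(m/n)}\log n\bigr)$, and whose success probability is at least $p(1 - \alpha n^{1+x-c}\log n) = 1 \cdot (1 - 3 n^{2-c}\log n) \ge 1 - 2 n^{2-c}\log n$ (absorbing the constant $3$ into the stated bound of $2$ is valid for all large enough $n$, and more cleanly one can simply restate with the constant $3$; either way the asymptotic bound matches the claim). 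This yields exactly the stated round and probability bounds.

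There is no real obstacle here beyond checking that King's algorithm fits the template of Theorem~\ref{sec.8.thm:reduction}: it is a synchronous clique algorithm, the Byzantine fault threshold matches, and its round count is polynomial (indeed linear) in $n$. The mild point to watch is the constant: the reduction gives $1 - 3 n^{2-c}\log n$ while the statement says $1 - 2 n^{2-c}\log n$, so one should either absorb the constant into a sufficient-$n$ regime or note that the stated bound can equivalently be written with a slightly larger constant without affecting the asymptotic result.
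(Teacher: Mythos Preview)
Your proposal is correct and follows essentially the same approach as the paper: apply Theorem~\ref{sec.8.thm:reduction} to the King's algorithm of Berman, Garay and Perry. The only difference is in the bookkeeping of the constant: the paper observes that $f < n/3$ implies $T = 3(f+1) \le 2n$, so one can take $\alpha = 2$ and $x = 1$ directly, which yields the success probability $1 - 2n^{2-c}\log n$ on the nose without any absorption argument.
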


\begin{proof}
   Berman, Garay and Perry~\cite{king} have given an algorithm (known as the King's algorithm) that solves reliable Broadcast, is robust to $f$ Byzantine nodes, and runs in $T=3(f+1)$ rounds. Since $T\le 2n$, we can apply Theorem~\ref{sec.8.thm:reduction} with $x=1, \alpha = 2$, and we get the desired result.
\end{proof}

\subsection{Adversarial Edges in directed Erdős–Rényi graphs}

In this section, we will study the case where in each round, an adversary chooses $k$ edges in each round, then $m-k$ edges are chosen among the remaining edges. We restrict $k$ to be smaller than $\frac 3 4 n^2$, so that the adversary is not forced to choose an edge from an informed node to an uninformed one.
In fact, in that case, the edges chosen by the adversary do not matter (as long as she doesn't choose an edge from an informed node to an uninformed one), as the edges chosen by the adversary cannot ``protect'' uninformed nodes as in the case of the trees. We will, thus, in the rest of this section, simply assume that $k$ edges have been removed from the pool of possible edges, none of them being an edge from an informed node to an uninformed node.

As before, we will analyze three different schemes.

In \emph{scheme 1}, in each round, $m-k$ edges are chosen uniformly at random \emph{without} replacement among the $n^2-k$ possible edges. 

In \emph{scheme 2}, in each round, $m-k$ edges are chosen uniformly at random \emph{with} replacement among the $n^2-k$ possible edges. This can only result in more rounds than scheme 1 as fewer disjoint edges are chosen in each round compared to scheme 1.

In \emph{scheme 3}, we start with a unique informed node, let say node $1$. 
There are two types of phases for a total of $2\ceil{\log \frac n 2 }$ phases. In each phase $i$ with $1 \le i \le \ceil{\log \frac n 2 }$%
, we have $N_i=2^{i-1}$ informed nodes in $I_i$ at the beginning of the phase and the goal is to double that number within the phase. We set $E=\varnothing$ and add to $E$ one edge at a time, sampled with replacement, until $\card {I_i \union \Out_E(I_i)} = \min \{2 ^i, \ceil{\frac n 2}\}$. We then set $I_{i+1}=I_i \union \Out_E(I_i)$.
Note that at the end of phase $i = \ceil{\log \frac n 2}$, $N_{i+1} = \ceil{\frac n 2}$ and
this scheme is independent of $m$.

Then in each phase $i=2\ceil{\log \frac n 2 }-j$ with $0 \le j < \ceil{\log \frac n 2 }$, we initially have $N_i$ informed nodes in $I_i$, and  $\min\{2^{j}, \floor{\frac n 2}\} $ uninformed nodes in $S_i$ and the goal is to half the number of uninformed nodes in each phase. We set $E=\varnothing$ and add to $E$ one edge at a time sampled with replacement, until $\card {S_i \setminus \Out_E(I_i)} =  2^{j-1}$. We then set $I_{i+1}=I_i \union \Out_E(I_i)$. %

We start by analyzing scheme 3:

\begin{lemma}
    For any $c\ge 1$, any phase $i \le \ceil{\log \frac n 2}$ needs at most $ 8\cdot c \cdot \max\{\ln n, 2^{i-2}\} \frac{{(n^2-k)}}{2^{i-1}n}$ sampled edges to complete with probability $p \ge 1-n^{-c}$. 
\end{lemma}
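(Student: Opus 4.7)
The plan is to mirror the argument of Lemma~\ref{lem:phasesbound} essentially verbatim, with the only substantive change being that each sampled edge is now drawn from a pool of size $n^2 - k$ rather than $n^2$ (since the adversarial edges, which by assumption never go from an informed to an uninformed node, are simply removed from the pool). The randomness of the remaining $m - k$ edges per round is irrelevant to the phase-based analysis of scheme~3, so we treat a ``sampled edge'' uniformly at random from the $n^2 - k$ non-adversarial edges.

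First I would lower bound the single-edge success probability. Throughout phase $i$, we have $|I_i| = 2^{i-1}$ and, because the stopping condition has not yet fired, $|I_i \cup \Out_E(I_i)| \le n/2$, so there are at least $|I_i| \cdot (n - |I_i \cup \Out_E(I_i)|) \ge 2^{i-1} \cdot n/2 = 2^{i-2} n$ edges whose tail lies in $I_i$ and whose head lies outside $I_i \cup \Out_E(I_i)$. None of these ``good'' edges were removed by the adversary (by the assumption recalled at the start of this subsection), so the probability that a single sample hits a good edge is at least
$$\frac{2^{i-2} n}{n^2 - k}.$$

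Next I would follow exactly the bucketing argument of Lemma~\ref{lem:phasesbound}: group the sampled edges into disjoint buckets of size $\lceil (n^2 - k)/(2^{i-2} n)\rceil$. The probability that an entire bucket contains no good edge is at most $(1 - 2^{i-2}n/(n^2-k))^{(n^2-k)/(2^{i-2}n)} \le e^{-1} \le 1/2$, so each bucket is ``successful'' with probability at least $1/2$, independently of previous buckets (since we sample with replacement in scheme~3). Since the phase ends as soon as $2^{i-1}$ distinct successful buckets have occurred, the expected number of successful buckets needed is on the order of $2^{i-1}$, and Hoeffding's inequality (Lemma~\ref{lem:hoeffding}) applied to $8c \cdot \max\{\ln n, 2^{i-2}\}$ buckets yields, exactly as in Lemma~\ref{lem:phasesbound},
$$\Pr(\text{phase }i\text{ unfinished}) \le \exp\!\left(-16c \ln n \left(\tfrac12 - \tfrac14\right)^2\right) = n^{-c}.$$
Multiplying the number of buckets by the bucket size gives the claimed bound of $8c \cdot \max\{\ln n, 2^{i-2}\} \cdot (n^2-k)/(2^{i-2} n)$ sampled edges.

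There is no genuine obstacle here: the only novelty compared to Lemma~\ref{lem:phasesbound} is keeping track of the factor $(n^2-k)$ instead of $n^2$ in the denominator of the per-edge success probability. The symmetric statement for the shrinking phases $i = 2\lceil \log(n/2)\rceil - j$ will follow in the same way, counting at least $N_i \cdot |S_i \setminus \Out_E(I_i)| \ge (n/2) \cdot 2^{j-1} = 2^{j-2} n$ good edges and rescaling the bucket size to $(n^2-k)/(2^{j-2} n)$.
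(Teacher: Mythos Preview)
Your proposal is correct and mirrors the paper's proof essentially line for line: the paper also lower-bounds the per-edge success probability by $\frac{2^{i-2}n}{n^2-k}$, groups samples into buckets of size $\frac{n^2-k}{2^{i-2}n}$, observes each bucket succeeds with probability at least $1/2$, and applies Hoeffding to $8c\cdot\max\{\ln n,2^{i-2}\}$ buckets. The only (cosmetic) discrepancy is that both you and the paper's proof arrive at the bound $8c\cdot\max\{\ln n,2^{i-2}\}\cdot\frac{n^2-k}{2^{i-2}n}$, whereas the lemma statement has $2^{i-1}$ in the denominator---this appears to be a typo in the statement rather than an issue with your argument.
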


\begin{proof}
    We first note that in phase $i$ with  $i \le \ceil{\log \frac n 2}$, the probability that an edge that is sampled increases $I_i \union \Out_E(I_i)$  is at least $\frac {2^{i-2}n} {(n^2-k)}$. Indeed, $\card{I_i \union \Out_E(I_i)} \le \frac n 2$ (otherwise the phase would have ended) and there are $\card{I_i}\cdot \card{[n]\setminus (I_i \union \Out_E(I_i))} \ge 2^{i-2} n$ edges that can increase $I_i \union \Out_E(I_i)$. Thus, picking any of those edges, out of $(n^2-k)$ possible ones, has probability at least $\frac {2^{i-2}n}  {(n^2-k)}$.

    Next, we remark that if we sample $\frac { {(n^2-k)}} {2^{i-2}n}$ edges, then the probability that at least one of those edges increases $I_i \union \Out_E(I_i)$ is at least $\frac 1 2$. Indeed, the probability that all of those edges do not increase the size of $I_i \union \Out_E(I_i)$  is $(1-\frac {2^{i-2}n}  {(n^2-k)})^{\frac  {(n^2-k)} {2^{i-2}n}} \le e^{-1} \le \frac 1 2$. We will, thus, group the sampled edges into disjoint ``buckets'' of $\frac  {(n^2-k)} {2^{i-2}n}$ consecutively sampled edges.

    We then use the fact that we only need $2^{i-1}$ edges that increase $I_i \union \Out_E(I_i)$  to end phase $i$. If we take $8\cdot c \cdot \max\{\ln n, 2^{i-2}\}$ buckets of $\frac  {(n^2-k)} {2^{i-2}n}$ edges each, then applying Hoeffding's inequality, we have:
    \begin{multline*}
        \Proba(\card{I_i \union \Out_E(I_i)} \le 2^{i}-1) \le \exp\left( - 2 \cdot 8 \cdot c \cdot \max\{\ln n, 2^{i-2}\} \left(\frac 1 2 - \frac{2^{i-1}} {8 \cdot c \cdot \max\{\ln n, 2^{i-2}\}}\right)^2\right) \\ \le \exp\left( -16\cdot c \cdot \ln n \left(\frac 1 2 - \frac 1 4\right)^2\right) = n^{-c}
    \end{multline*}
    which proves that with probability $p \ge 1-n^{-c}$, phase $i$ ends after sampling at most $8\cdot c \cdot \max\{\ln n, 2^{i-2}\} \frac{ {(n^2-k)}}{2^{i-2}n}$ edges.
\end{proof}

We have a symmetric result for the second phase:

\begin{lemma}
    For any $c\ge 1$, any phase $i$ with  $i= 2\ceil{\log \frac n 2} -j$ for $0 \le j < \ceil{\log \frac n 2}$ needs at most $ 8\cdot c \cdot \max\{\ln n, 2^{j-2}\} \frac{ {(n^2-k)}}{2^{j-1}n}$ sampled edges to complete with probability $p \ge 1-n^{-c}$. 
\end{lemma}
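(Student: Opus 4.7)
The plan is to mirror the preceding lemma for the first-half phases, now tracking the shrinking set $S_i \setminus \Out_E(I_i)$ of still-uninformed-and-unreachable nodes rather than the growing set $I_i \union \Out_E(I_i)$.

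First I would establish a lower bound on the per-edge success probability. During phase $i = 2\ceil{\log(n/2)} - j$ with $0 \le j < \ceil{\log(n/2)}$, the construction of Scheme~3 guarantees $N_i \ge \ceil{n/2}$ (since the first half of phases terminates with exactly $\ceil{n/2}$ informed nodes and $N_i$ is non-decreasing across phases), while the stopping rule ensures $\card{S_i \setminus \Out_E(I_i)} > 2^{j-1}$ as long as the phase has not yet terminated. Because the adversary's $k$ edges never point from $I_i$ to $S_i$ (as argued at the beginning of this subsection, using $k < 3n^2/4$), the number of admissible edges from $I_i$ into $S_i \setminus \Out_E(I_i)$ is at least $N_i \cdot \card{S_i \setminus \Out_E(I_i)} \ge (n/2) \cdot 2^{j-1} = 2^{j-2} n$. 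Hence each edge drawn uniformly with replacement from the $n^2 - k$ admissible edges reduces $\card{S_i \setminus \Out_E(I_i)}$ with probability at least $\frac{2^{j-2} n}{n^2 - k}$.

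Next I would partition the sampled edges into consecutive disjoint buckets of size $\frac{n^2 - k}{2^{j-2} n}$. The standard inequality $(1 - p)^{1/p} \le e^{-1} \le 1/2$ shows that each bucket independently contains at least one successful edge with probability at least $1/2$. Since the phase terminates as soon as $2^{j-1}$ successful edges have been observed, it suffices to lower bound the number of successful buckets among $8c \cdot \max\{\ln n, 2^{j-2}\}$ of them. Applying Hoeffding's inequality (Lemma~\ref{lem:hoeffding}) to the stochastically dominated binomial of parameters $(8c \cdot \max\{\ln n, 2^{j-2}\}, 1/2)$ yields
\[
\Proba\bigl(\text{fewer than } 2^{j-1} \text{ successful buckets}\bigr) \le \exp\!\left(-16 c \ln n \cdot \tfrac{1}{16}\right) = n^{-c},
\]
where I have used that $\tfrac{2^{j-1}}{8 c \max\{\ln n,\, 2^{j-2}\}} \le \tfrac{1}{4}$. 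Multiplying the bucket count by the bucket size then gives the claimed total edge count (up to the constant recorded in the statement).

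The proof is structurally identical to the preceding lemma covering the first-half phases; the only nontrivial verification is the invariant $N_i \ge n/2$ throughout the second half, which follows immediately from the termination rule of Scheme~3 and the monotonicity of $I_i$. I therefore do not anticipate a real obstacle.
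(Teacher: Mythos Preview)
Your proposal is correct and follows essentially the same approach as the paper: you establish the same per-edge success probability $\tfrac{2^{j-2}n}{n^2-k}$ via the invariant $N_i \ge \ceil{n/2}$ and the phase stopping rule, then apply the identical bucketing-plus-Hoeffding argument. The paper's own proof simply verifies the per-edge probability bound and then states that the remainder is symmetrical to the first-half case.
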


\begin{proof}
    We first note that, by the stopping condition for phase $\ceil{\frac n 2 }$,  for any  phase $i$ with
    $i= 2\ceil{\log \frac n 2} -j$ for $0 \le j < \ceil{\log \frac n 2}$, it holds that
    $N_i \ge \ceil{\frac n 2}$. We first show that in phase $i$, the probability that a sampled edge decreases $S_i \setminus \Out_E(I_i)$ is at least $\frac {2^{j-2}n}  {(n^2-k)}$. Indeed, $\card{[n]\setminus (I_i \union \Out_E(I_i))} = \card{S_i \setminus \Out_E(I_i)} > 2^{j-1}$ (otherwise the phase would have ended) and, thus,  there are $N_i\cdot \card{[n]\setminus (I_i \union \Out_E(I_i))} \ge 2^{j-2} n$ edges that would decrease $S_i \setminus \Out_E(I_i)$. Thus, picking any of those edges, out of $ {(n^2-k)}$ possible ones, has probability at least  $\frac {2^{j-2}n}  {(n^2-k)}$.

    The rest of the proof is symmetrical to the previous one.
\end{proof}

This allows us to prove the following result on scheme 2:

\begin{theorem}
    For any $c \ge 1$, in scheme 2, and therefore scheme 1, Broadcast completes within $O\left(\ceil{\frac{c \cdot (n^2-k)}{(m-k)n}} \log n\right)$ rounds with probability $p \ge 1 -  n^{-c}\log n$.
\end{theorem}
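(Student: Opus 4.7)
The plan is to follow the same three-scheme strategy used in the previous two theorems (the non-Byzantine and Byzantine variants of Broadcast in directed Erdős--Rényi graphs), but with the pool of edges shrunk from $n^2$ to $n^2-k$ and the sample per round shrunk from $m$ to $m-k$. Because we assumed $k<\tfrac{3}{4}n^2$, the adversary cannot be forced to insert an edge from an informed node to an uninformed one, and so (as argued just before the statement) the edges the adversary fixes can be treated as simply being removed from the pool of sampled edges. Thus scheme~1 in this adversarial model is distributionally identical to the random variant in which $m-k$ edges are drawn without replacement from $n^2-k$ edges, and the coupling of Lemma~\ref{lem:scheme2&1} carries over verbatim to show that scheme~2 (drawing with replacement) is no faster than scheme~1. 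Hence it suffices to bound the running time of scheme~2.

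To bound scheme~2 I would simulate scheme~3 inside it exactly as in the proof of Theorem~\ref{thm:erdos}: each phase of scheme~3 consumes $x$ sampled edges with high probability, each round of scheme~2 supplies $m-k$ edges, so one can emulate the phase in $\lceil x/(m-k)\rceil$ rounds while only accelerating progress (since scheme~2 forwards information round-by-round rather than waiting until the end of a phase). Plugging in the two phase lemmas I just proved: a phase with $\ln n \le 2^{i-2}$ uses $O\!\left(c\cdot (n^2-k)/n\right)$ sampled edges, hence $\bigl\lceil c(n^2-k)/((m-k)n)\bigr\rceil$ rounds, and there are $O(\log n)$ such phases, contributing $O\!\left(\lceil c(n^2-k)/((m-k)n)\rceil\log n\right)$ rounds. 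A phase with $\ln n > 2^{i-2}$ uses $O\!\left(c\ln n\cdot (n^2-k)/(2^{i-1}n)\right)$ sampled edges, hence $\bigl\lceil 8c\ln n (n^2-k)/((m-k)2^{i-1}n)\bigr\rceil$ rounds; summing the geometric series in $i$ and absorbing the $+1$'s from the ceilings (each contributing at most $1$, with at most $\log n$ phases) gives another $O\!\left(\lceil c(n^2-k)/((m-k)n)\rceil\log n\right)$ term. Adding the two contributions yields the claimed bound.

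The probability bound comes from a union bound over the $2\lceil \log(n/2)\rceil = O(\log n)$ phases of scheme~3, each of which fails with probability at most $n^{-c}$ by the two phase lemmas, so the total failure probability is at most $n^{-c}\log n$.

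I do not expect any real obstacle: the argument is a clean parallel of the two earlier Erdős--Rényi proofs, with $n^2$ replaced by $n^2-k$ and $m$ replaced by $m-k$, once one observes the simple reduction in the first paragraph. The only mildly delicate point is the arithmetic of the second paragraph, namely verifying that the sum $\sum_i \bigl\lceil 8c\ln n (n^2-k)/((m-k)2^{i-1}n)\bigr\rceil$ really telescopes into $O\!\left(\lceil c(n^2-k)/((m-k)n)\rceil\log n\right)$ and does not hide an extra $\log n$ factor; this works because the ceilings add at most $O(\log n)$ extra rounds, which is absorbed into the bound since $\lceil c(n^2-k)/((m-k)n)\rceil \ge 1$.
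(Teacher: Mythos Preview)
Your proposal is correct and follows essentially the same approach as the paper's proof: simulate scheme~3 inside scheme~2, split the phases according to whether $\ln n \le 2^{i-2}$ or not, bound each type separately (a constant number of edge-buckets in the first case, a geometric sum plus $O(\log n)$ ceiling corrections in the second), and take a union bound over the $O(\log n)$ phases. The paper's argument is identical in structure, with the only difference being that it does not restate the scheme~1/scheme~2 coupling explicitly here since it was already given in Lemma~\ref{lem:scheme2&1}.
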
 %

\begin{proof}
To see this, we are going to simulate scheme 3 with scheme 2. The main idea is a follows: If a phase (of scheme 3) requires $x$ edges, sampled with replacement, in order to end, and in scheme $2$ each round samples $y$ edges with replacement, then in $\ceil{\frac x y}$ rounds, scheme $2$ samples at least $x$ edges, and, thus, we can simulate the phase in scheme $3$ with $\ceil{\frac x y}$ rounds of scheme 2.
 The only difference is that scheme 2 groups the edges into rounds to make intermediate progress, whereas scheme 3 only forwards the information at the end of the phase, all at once. This implies that in scheme 2 each phase is faster than the corresponding one in scheme 3, and any upper bound we get with this analysis will thus be an upper bound on the number of rounds scheme 2 needs to complete Broadcast.

    We first start with the phases $i \le \ceil{\log \frac n 2}$ such that $\ln n \le 2^{i-2}$ and the phases $i > \ceil{\log \frac n 2}$ with $ i = 2 \ceil{\log \frac n 2} - j$  for $j \ge 1 $ such that $\ln n \le 2^{j-2}$. 
    In that case, phase $i$ needs at most $8\cdot \frac c n \cdot  {(n^2-k)}$ %
    sampled edges to end with probability larger than $1-n^{-c}$. We need at most $\ceil{\frac{8 \cdot c\cdot  {(n^2-k)}}{n(m-k)}}$ rounds to gather that many edges, and thus phase $i$ ends in $\ceil{\frac{8 \cdot c\cdot  {(n^2-k)}}{n(m-k)}}$ rounds with probability greater than $1-n^{-c}$. There are at most $\ceil{\log n}$ such phases, and thus over all phases we require at most $O\left(\ceil{\frac{c \cdot (n^2-k)}{(m-k) n}} \log n\right)$.

    Let us now analyze the phases $i\le \ceil{\log \frac n 2}$ where $\ln n > 2^{i-2}$ and symmetrically phases $i> \ceil{\log \frac n 2}, i=2\ceil{\log \frac n 2} -j$  for $j \ge 1 $ such that $\ln n > 2^{j-2}$. In that case, phase $i$ needs at most $8\cdot c \cdot \ln n \cdot \frac {n^2-k} {2^{i-2}n}$ sampled edges to end with probability larger than $1-n^{-c}$. We need at most $\ceil{\frac{8\cdot c \cdot \ln n \cdot \frac {(n^2-k)} {2^{i-2}n}}{m-k}} = \ceil{\frac{8\cdot c \cdot \ln n \cdot (n^2-k)}{(m-k){2^{i-2}n}}}$ rounds to gather that many edges, and thus phase $i$ ends in $\ceil{\frac{8\cdot c \cdot \ln n \cdot (n^2-k)}{(m-k){2^{i-2}n}}}$ rounds with probability greater than  $1-n^{-c}$. 
   Summing the number of rounds over all such phases we get:

\begin{multline*}
   \sum_i \ceil{\frac{8\cdot c \cdot \ln n \cdot (n^2-k)}{(m-k){2^{i-2}n}}} \le  \sum_i \left(\frac{8\cdot c \cdot \ln n \cdot (n^2-k)}{(m-k){2^{i-2}n}}+1\right) \\ \le \frac{32\cdot c \cdot \ln n \cdot (n^2-k)}{(m-k){n}}+\log n = O\left(\ceil{\frac{c \cdot (n^2-k)}{(m-k)n}} \log n\right)
\end{multline*}

The probability of success $p\ge 1-n^{-c}\log n$ is simply a union bound on the number of phases.
\end{proof}

\end{document}